\documentclass[11pt]{amsart}

\usepackage[T1]{fontenc}
\usepackage[utf8]{inputenc}
\usepackage{lmodern}
\usepackage{microtype}
\usepackage{amsmath,amssymb,amsfonts,amsthm,mathtools}
\usepackage{bm}
\usepackage{mathrsfs}
\usepackage{enumitem}
\usepackage{graphicx}
\usepackage{subcaption}
\usepackage{xcolor}
\usepackage{cite}
\usepackage{hyperref}
\usepackage{aliascnt}
\usepackage[nameinlink,noabbrev]{cleveref}
\setlist[itemize]{leftmargin=2.0em,itemsep=0.15em,topsep=0.25em,parsep=0pt}
\setlist[enumerate]{leftmargin=2.0em,itemsep=0.15em,topsep=0.25em,parsep=0pt}
\numberwithin{equation}{section}

\DeclareMathOperator{\TV}{TV}

\DeclareMathOperator{\dist}{dist}
\DeclareMathOperator{\sgn}{sgn}

\theoremstyle{plain}

\newtheorem{thm}{Theorem}[section]

\newaliascnt{prop}{thm}
\newtheorem{prop}[prop]{Proposition}
\aliascntresetthe{prop}

\newaliascnt{lem}{thm}
\newtheorem{lem}[lem]{Lemma}
\aliascntresetthe{lem}

\newaliascnt{cor}{thm}
\newtheorem{cor}[cor]{Corollary}
\aliascntresetthe{cor}

\theoremstyle{definition}

\newaliascnt{defn}{thm}
\newtheorem{defn}[defn]{Definition}
\aliascntresetthe{defn}

\newaliascnt{assump}{thm}

\aliascntresetthe{assump}

\newaliascnt{example}{thm}
\newtheorem{example}[example]{Example}
\aliascntresetthe{example}

\theoremstyle{remark}

\newaliascnt{rem}{thm}
\newtheorem{rem}[rem]{Remark}
\aliascntresetthe{rem}

\crefname{thm}{Theorem}{Theorems}
\Crefname{thm}{Theorem}{Theorems}

\crefname{prop}{Proposition}{Propositions}
\Crefname{prop}{Proposition}{Propositions}

\crefname{lem}{Lemma}{Lemmas}
\Crefname{lem}{Lemma}{Lemmas}

\crefname{cor}{Corollary}{Corollaries}
\Crefname{cor}{Corollary}{Corollaries}

\crefname{defn}{Definition}{Definitions}
\Crefname{defn}{Definition}{Definitions}

\crefname{assump}{Assumption}{Assumptions}
\Crefname{assump}{Assumption}{Assumptions}

\crefname{example}{Example}{Examples}
\Crefname{example}{Example}{Examples}

\crefname{rem}{Remark}{Remarks}
\Crefname{rem}{Remark}{Remarks}
\newcommand{\C}{\mathbb{C}}
\newcommand{\R}{\mathbb{R}}

\newcommand{\NF}{\mathrm{NF}}
\newcommand{\FR}{\mathrm{Fr}}

\newcommand{\aFR}{a_{\FR}}

\newcommand{\Mset}{\mathcal{M}}

\newcommand{\FNF}{\mathcal F_{\NF}}
\newcommand{\FFR}{\mathcal F_{\FR}}

\newcommand{\Qset}{\mathcal Q}
\newcommand{\Rset}{\mathcal R}
\newcommand{\Thetaset}{\Theta}

\newcommand{\mx}[1]{\mathbf{#1}}
\newcommand{\RS}{\mathrm{RS}}
\newcommand{\LCS}{\mathrm{LCS}}

\newcommand{\Jlc}{J^{\rm lc}}
\newcommand{\Cmax}{C_{\max}}

\newcommand{\bs}[1]{\boldsymbol{#1}}
\title[Near-field super-resolution]{A Mathematical Theory of Near-Field Super-Resolution}

\author{Sajad Daei}
\address{KTH Royal Institute of Technology, Stockholm, Sweden}
\email{sajado@kth.se}

\author{G\'abor Fodor}
\address{KTH Royal Institute of Technology, Stockholm, Sweden,
and Ericsson Research, Stockholm, Sweden}
\email{gaborf@kth.se}
\author{Mikael Skoglund}
\address{KTH Royal Institute of Technology, Stockholm, Sweden}
\email{skoglund@kth.se}
\begin{document}

\begin{abstract}
Finite-aperture near-field sensing leads to a super-resolution geometry
fundamentally different from the translation-invariant Fourier setting.  In the Fresnel regime, wavefront curvature introduces a range-dependent
quadratic aperture phase, so distinguishability is governed by incomplete
quadratic exponential sums of the form
\(
        \sum_{n=0}^{N_r-1}
        a_n e^{i(\omega_1 n+\omega_2 n^2)}
\),
rather than by angular separation alone.  We develop a
deterministic recovery theory for sparse measures with ranges on a finite grid
and continuous angles, and introduce a support-uniform quadratic-phase aperture
criterion replacing classical minimum separation.  Under this criterion,
total-variation minimization exactly recovers every sparse measure in the
admissible support class, uniformly over all nonzero complex amplitudes.  The
proof develops nonasymptotic support-uniform bounds for finite quadratic sums
and combines them with a gauged Hermite dual certificate controlling
interpolation, local curvature, and off-support leakage.  We further construct
a finite-harmonic Bessel-Vandermonde lift with explicit truncation error.  In
the far-field limit, the quadratic phase disappears and the theory reduces to
Fourier-type angular super-resolution.
\end{abstract}
\subjclass[2020]{Primary 94A12; Secondary 11L07, 78A46, 90C25}

\keywords{near-field super-resolution, Fresnel inverse problems,
quadratic exponential sums, total-variation minimization,
dual certificates, atomic norms}
\maketitle

% \tableofcontents

% ============================================================
% ============================================================
% ============================================================
\section{Introduction}
\label{sec:introduction}
% ============================================================

\subsection{What can a finite aperture distinguish?}

A finite aperture does not observe a scene directly.  It observes only the finite trace that the
scene casts on its sensors.
In that trace, geometry can lose its identity.  Distinct configurations of
point sources may become indistinguishable, not because they coincide in space,
but because they coincide at the aperture.  Thus the inverse problem begins
before reconstruction: it begins with distinguishability.  The fundamental
question is not what the scene is, but whether the aperture can tell one scene
from another.

This question is common to many forms of coherent sensing and inverse problems:
diffraction-limited optical imaging, Fourier optics, and Fresnel diffraction
\cite{Goodman2005,BornWolf1999}, coherent wavefield imaging and inverse source
problems \cite{Devaney2012}, inverse acoustic and electromagnetic scattering
\cite{ColtonKress2013}, array processing and direction-of-arrival estimation
\cite{VanTrees2002,JohnsonDudgeon1993,Schmidt1986,RoyKailath1989}, radar and
sonar localization \cite{VanTrees2002,JohnsonDudgeon1993}, spectral estimation
and line-spectrum recovery
\cite{TangBhaskarShahRecht2013,BhaskarTangRecht2013}, and off-the-grid sparse
recovery \cite{Donoho1992,CandesFernandezGranda2014,
deCastroGamboa2012,DuvalPeyre2015}.  In sparse super-resolution, the scene is
a finite collection of point sources, and the aim is to recover this collection
from measurements whose nominal resolution is limited by the aperture.

In the far field, this question has a classical answer.  Sources are
sufficiently distant that their wavefronts are approximately planar across the
aperture.  The aperture response is Fourier-like, and the interaction between
two atoms is governed by a translation-invariant low-pass kernel.  Exact
recovery by total-variation minimization is certified by a dual polynomial that
interpolates the source signs and remains strictly bounded by one away from
the support.  In the Fourier setting, a standard sufficient hypothesis for exact
recovery is minimum separation in the angular spatial-frequency variable
\cite{Donoho1992,CandesFernandezGranda2014}.
For this introductory comparison, let \(N_r\) denote the number of aperture
samples, let \(d>0\) denote the inter-sensor spacing, let \(\lambda>0\) denote
the wavelength, and set
\(
        k_\lambda:=\frac{2\pi}{\lambda}.
\)
The comparison with Fourier super-resolution is structural rather than
literal.  In the standard finite-band Fourier super-resolution model, one
observes finitely many Fourier coefficients of an unknown measure,
\[
        y_k
        =
        \int e^{-ikx}\,d\mu(x),
        ~
        |k|\le k_{\max},
\]
where \(k_{\max}\in\mathbb N\) is the largest observed Fourier index.
Thus the observation domain is a finite frequency band.  In the present array
problem, the measurements are instead spatial aperture samples,
\[
        y_n=\sum_i\int a(r_i,\theta)[n]\,d\mu_i(\theta),
        ~ n=0,\ldots,N_r-1 .
\]
In the far-field approximation the steering vector has the trigonometric form
\(
        a_{\rm FF}(\theta)[n]
        =
        e^{ink_\lambda d\cos\theta}.
\)
Thus the sensor index \(n\) plays the algebraic role of a Fourier index, while
the physical limitation is aperture length rather than bandwidth.
\begin{figure}[t]
    \centering
    \IfFileExists{far-near.pdf}
    {\includegraphics[scale=0.5]{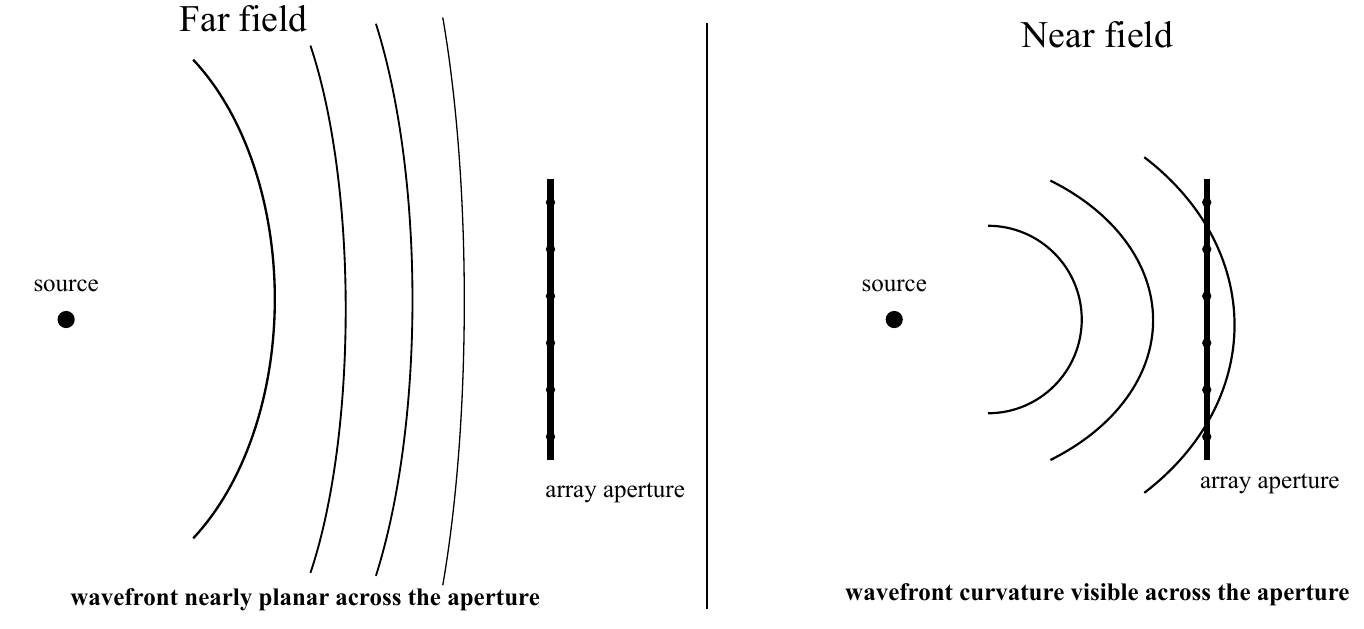}}
    {\fbox{\parbox{0.75\linewidth}{\centering Placeholder for far-near aperture figure.}}}
    \caption{Far-field and near-field aperture geometry.  In the far field,
    wavefronts are approximately planar and the response depends only on angle.
    In the Fresnel near field, wavefront curvature is visible across the
    aperture and couples angle with range.}
    \label{fig:near_far_image}
\end{figure}
Near-field sensing changes the question.  When wavefront curvature is visible
across the aperture, the response is no longer determined by angle alone.
Range and angle are coupled through curvature
\cite{Goodman2005,BornWolf1999,VanTrees2002,JohnsonDudgeon1993}; see
\Cref{fig:near_far_image}.  The kernel is
not translation invariant in the range-angle domain, and the one-dimensional Fourier separation principle
no longer describes what the aperture can distinguish.
The central problem of this paper is therefore the following:
\begin{quote}
What replaces far-field minimum separation when the sensing kernel is curved
and non-translation-invariant?
\end{quote}
Our answer is \emph{quadratic-phase aperture certification} (QPAC), a
support-uniform sufficient condition based on the finite-aperture quadratic
phase. Rather than imposing a universal scalar distance threshold, QPAC
controls the value, tangent, and higher angular derivative interactions
required by the Fresnel dual certificate through deterministic
quadratic-phase cancellation.

\subsection{Range blindness and Fresnel curvature}

The distinction already appears in the elementary narrowband spatial model.
Consider a one-dimensional aperture with samples \(n=0,\dots,N_r-1\), spacing
\(d\), wavelength \(\lambda\), and wavenumber \(k_\lambda:=\tfrac{2\pi}{\lambda}\).
In the ideal far-field model, the response of a source at range \(r\) and angle
\(\theta\) is
\cite{VanTrees2002,JohnsonDudgeon1993}
\(
        a_{\rm FF}(r,\theta)[n]
        =
        e^{i k_\lambda d n\cos\theta}.
\)
% Equivalently, with
% \(\kappa(\theta):=k_\lambda d\cos\theta\), the far-field atom is
% \(a_{\rm FF}(\theta)[n]=e^{in\kappa(\theta)}\).
Thus the far-field array response is Fourier-type in the angular spatial
frequency \(\kappa(\theta):=k_\lambda d\cos\theta\).  However, these samples
are not finite-band Fourier measurements of an unknown measure.  They are spatial measurements collected
across a finite aperture.  The finite index set \(n=0,\ldots,N_r-1\) is the
set of sensor samples, and the aperture length, not a frequency cutoff, is the
physical resolution-limiting resource.
The range variable does not appear: for any two ranges \(r_1,r_2\),
\(a_{\rm FF}(r_1,\theta)=a_{\rm FF}(r_2,\theta)\).  Thus, at a single
narrowband carrier, the ideal far-field spatial aperture is blind to range
along a fixed ray.
This is not a statement about wideband time delay or propagation-time
estimation; it is a statement about the narrowband spatial response.

The Fresnel approximation retains the first curvature correction in the
spherical-wave expansion \cite{Goodman2005,BornWolf1999}:
\[
        a_{\rm Fr}(r,\theta)[n]
        :=
        \exp\!\left(
        i k_\lambda d n\cos\theta
        -
        i\tfrac{k_\lambda d^2 n^2}{2r}\sin^2\theta
        \right).
\]
The second term is quadratic in the aperture index and depends on range.
Thus range enters the spatial data through wavefront curvature.  In this sense, the ideal far-field model forgets range, while the Fresnel
near-field model remembers it through curvature.

For an ordered pair consisting of an evaluation point
\(q=(r,\theta)\) and a source point \(p=(r',\theta')\), the relative Fresnel
interaction is
\[
        \overline{a_{\rm Fr}(q)[n]}\,a_{\rm Fr}(p)[n]
        =
        \exp\!\left(
        i\omega_1(q,p)n+i\omega_2(q,p)n^2
        \right),
\]
where
\(
        \omega_1(q,p)
        =
        k_\lambda d(\cos\theta'-\cos\theta),
\)
and
\(
        \omega_2(q,p)
        =
        -\tfrac{k_\lambda d^2}{2r'}\sin^2\theta'
        +
        \tfrac{k_\lambda d^2}{2r}\sin^2\theta .
\)
The coefficient \(\omega_1\) is the angular, or linear-phase, mismatch, while
\(\omega_2\) is the curvature mismatch.
In the far-field limit \(\omega_2\to0\), the phase becomes linear and range
disappears from the narrowband spatial response.  In the Fresnel near field,
the aperture therefore sees not only bearing but also curvature, encoded by the
ordered phase coordinates \((\omega_1,\omega_2)\). For \(p=(r,\theta)\) and \(p'=(r',\theta')\), we quantify the visual
distinction in \Cref{fig:matched_pairs} by the normalized aperture coherence
\[
       \mathcal C_{\rm model}(p,p')
:=
\frac{
\left|
\left\langle
\mathbf a_{\rm model}(p),
\mathbf a_{\rm model}(p')
\right\rangle_{\C^{N_r}}
\right|
}{
\|\mathbf a_{\rm model}(p)\|_2
\|\mathbf a_{\rm model}(p')\|_2
},
        ~
        {\rm model}\in\{{\rm FF},{\rm Fr}\},
\]
where the aperture-response vector is defined by
\[
        \mathbf a_{\rm model}(p)
        :=
        \bigl(
        a_{\rm model}(p)[0],\ldots,
        a_{\rm model}(p)[N_r-1]
        \bigr)^T
        \in\mathbb C^{N_r}.
\]
Thus \(\mathcal C_{\rm model}=1\) means that the two finite-aperture responses
are identical up to a global phase, while smaller values indicate greater
distinguishability at the aperture.  The degeneracy and its Fresnel resolution
are illustrated in \Cref{fig:matched_pairs}.

\begin{figure}[t]
    \centering
    \IfFileExists{cpam_distingushibility_fig.png}
    {\includegraphics[scale=0.3]{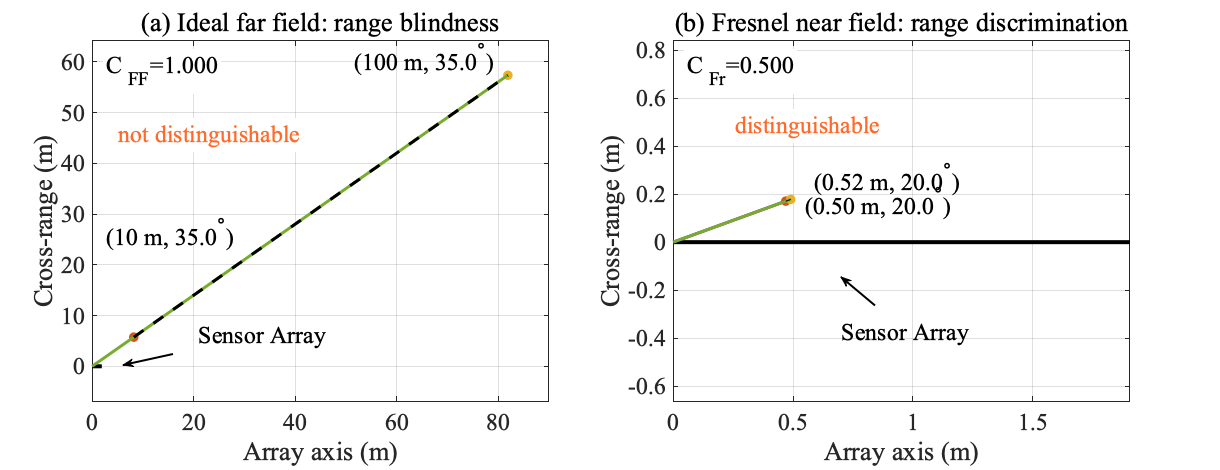}}
   
    \caption{Range blindness in the ideal far field and range discrimination in the
Fresnel near field.  The displayed quantities \(\mathcal C_{\rm FF}\) and
\(\mathcal C_{\rm Fr}\) are normalized aperture coherences between the two
plotted steering vectors.  In the ideal far-field model, same-bearing sources
at different ranges have identical spatial steering vectors, so
\(\mathcal C_{\rm FF}=1\).  In the Fresnel model, the quadratic aperture phase
depends on range; in the example shown, the coherence drops to
\(\mathcal C_{\rm Fr}=0.5\), indicating that curvature has made the two
same-bearing responses distinguishable at the finite aperture.}
    \label{fig:matched_pairs}
\end{figure}

\subsection{Why angular separation is not the right invariant}

In the far-field model, the interaction between two atoms
depends only on the difference in angular spatial frequency,
\[
\omega_1
=
k_\lambda d(\cos\theta'-\cos\theta).
\]
In the Fresnel model, the interaction also depends on the
curvature mismatch \(\omega_2\). For fixed normalized
nonnegative weights \(b_n\), it has the form
\[
\sum_{n=0}^{N_r-1}
b_n\overline{a_{\rm Fr}(q)[n]}a_{\rm Fr}(p)[n]
=
\sum_{n=0}^{N_r-1}
b_n e^{i(\omega_1(q,p)n+\omega_2(q,p)n^2)}.
\]
Thus the ungauged kernel is a difference kernel in the two
phase coordinates, although it is not translation invariant
in physical range-angle coordinates.

Angular separation alone does not control this quadratic sum.
Cancellation depends on the joint behavior of its linear and
quadratic phases across the finite aperture. Moreover, the
angular tangent and higher derivative channels depend on the
physical parameterization of the individual points.

We therefore bound the value, tangent, and derivative
interactions directly. The resulting estimates provide the
interpolation, curvature, and leakage bounds needed for a
Fresnel dual certificate.

\subsection{Quadratic-phase aperture certification}

The recovery theorem is driven by a finite oscillatory question: when does
\[
        n\mapsto \omega_1 n+\omega_2 n^2,
        ~ n=0,\ldots,N_r-1,
\]
fail to add coherently?  For the Fresnel kernel, the normalized angular
tangent channels, and the higher angular derivative channels, the relevant
objects are finite quadratic sums
\[
        \sum_{n=0}^{N_r-1}
        a_n e^{i\omega_1 n+i\omega_2 n^2}.
\]
The coefficient sequence \(a=(a_n)\) contains the fixed certificate taper
\(\rho\), through its normalized version, together with the channel-dependent
tangent or derivative multiplier.  Thus the coefficient sequence changes from
one certificate channel to another, but the taper and the phase geometry are
fixed.
QPAC combines three complementary bounds.  The derivative branch
controls genuinely nonstationary phases.  The lag-correlation branch controls
constructive alignment after squaring the sum and grouping residue lag terms.
The residue-linear branch controls near-rational curvature by reducing the
remaining oscillation on residue classes to a linear one.
QPAC is not formulated as a universal scalar minimum-separation condition.
Instead, it directly certifies that the finite Fresnel phase exhibits
sufficient cancellation in the kernel, normalized tangent, and higher angular
derivative channels entering the TV dual certificate.  Its constants form the
three recovery budgets controlling support interpolation, near-support
curvature, and far-region leakage.
\subsection{Dual certificates: interpolation, curvature, leakage}

The recovery proof follows the dual-certificate method, but the certificate is
adapted to the Fresnel geometry
\cite{CandesFernandezGranda2014,deCastroGamboa2012,
DuvalPeyre2015,AzaisDeCastroGamboa2015}.  For a support
\(S=\{p_\ell\}_{\ell=1}^L\), we construct a gauged Hermite certificate
\(P_{S,\widetilde {\mathbf v}}\) satisfying
\(
        P_{S,\widetilde {\mathbf v}}(p_\ell)=\widetilde v_\ell,
        ~
        \partial_\theta P_{S,\widetilde {\mathbf v}}(p_\ell)=0.
\)
The range coordinate is discrete in the exact Fresnel theorem, so the only
continuous local direction is angular.  The gauge makes the value atom
orthogonal to the angular tangent atom and gives a Hermite interpolation system
with identity diagonal blocks.

Strict off-support control is obtained by decomposing the parameter domain into
near and far regions.  Near a support point, stationarity and negative angular
curvature force the certificate modulus below one.  Away from the support,
QPAC estimates control the leakage generated by all support atoms.  The
certificate must do three things: interpolate on the support, curve downward
nearby, and remain small far away.

These tasks are summarized in \Cref{subsec:explicit_qpac_number_sec2} by a
QPAC recovery number combining the support-interpolation, near-support, and
far-region budgets.  The condition that this recovery number is strictly less
than one controls Hermite interpolation, Taylor near-support decay, and
far-region leakage.

The theorem is deterministic and amplitude independent.  The QPAC constants
are defined uniformly over the prescribed support class; they do not depend
on the source magnitudes or phases, or on the particular support locations
within that class.
\subsection{Main recovery result}

The precise recovery theorem is stated in
\Cref{thm:main_exact_recovery_sec2}, after the QPAC budgets have been
defined.  Its content may be summarized as follows.  Fix a finite admissible
range grid, a closed angular interval \(\Theta\subset(0,\pi)\), and a prescribed nonempty
class of \(L\)-point supports.  If the
support-uniform QPAC estimates satisfy
\[
        \eta_{\rm SS}<1,
        ~
        \eta_{\rm near}(\varpi_{\rm loc})<1,
        ~
        \eta_{\rm far}(\varpi_{\rm loc})<1
\]
for some \(\varpi_{\rm loc}>0\), then every sparse measure in the prescribed
support class is the unique minimizer of the noiseless semi-discrete Fresnel
TV problem.  The conclusion holds uniformly over all nonzero complex
amplitudes.  This is a sufficient recovery criterion; no converse is
claimed.
\subsection{Computation and the finite-harmonic lift}

The exact theorem in \Cref{thm:main_exact_recovery_sec2} concerns the
semi-discrete Fresnel TV problem and does not use the lift. For computation,
we truncate the Bessel-Vandermonde expansion of each Fresnel atom. The
resulting finite-dimensional model has an explicit uniform error bound and
admits a semidefinite formulation whose dual polynomial can be used for
localization. The lift is therefore a numerical approximation of the
Fresnel TV problem, not an additional exact-recovery theorem.

\subsection{Comparison with related work and sparse recovery}
\label{subsec:related-work}
\label{subsec:sparse-recovery-connection}

This paper belongs to the sparse-recovery field in which convex optimization
is used to recover structured sparse objects from incomplete or indirect
measurements \cite{CandesRombergTao2006}.  It is closest in spirit to sparse
super-resolution by total-variation minimization, but the sensing geometry is
different from the standard translation-invariant Fourier setting.
In the classical Fourier model, one observes finitely many low-frequency
Fourier coefficients of an unknown measure, and exact recovery by
total-variation minimization follows under a minimum-separation condition
\cite{Donoho1992,CandesFernandezGranda2014}.  Related work developed broader
measure-space formulations, noisy and stable recovery, support localization,
and multidimensional extensions
\cite{deCastroGamboa2012,BrediesPikkarainen2013,
CandesFernandezGranda2013Noisy,DuvalPeyre2015,
AzaisDeCastroGamboa2015,ValiulahiDaeiHaddadiParvaresh2019}.  In that theory,
the dual certificate is a low-frequency trigonometric polynomial, and the
principal obstruction is the interaction of nearby points through a
translation-invariant low-pass kernel. Among general theories for off-the-grid recovery with
non-translation-invariant sensing operators, the Fisher-geometric framework
of Poon, Keriven, and Peyr\'e \cite{PoonKerivenPeyre2023} provides an
important point of comparison.  Their theory introduces an intrinsic metric
induced by the sensing operator and establishes stable Beurling LASSO (BLASSO) recovery under
geometric separation together with regularity and nondegeneracy assumptions
on the associated kernel.  It therefore supplies a general framework that
is not restricted to translation-invariant Fourier measurements.

The contribution developed here is complementary and operator specific.
We consider a fixed deterministic finite Fresnel aperture and derive explicit
finite-\(N_r\) estimates for the value, normalized-tangent, and higher
angular derivative interactions entering a Hermite dual certificate.  These
interactions reduce to structured sums
\(
        \sum_{n=0}^{N_r-1}
        a_n e^{i(\omega_1 n+\omega_2 n^2)},
\)
and QPAC supplies support-uniform bounds through derivative,
lag-correlation, and residue-linear mechanisms.  The resulting constants
enter directly into interpolation, local-curvature, and far-leakage
inequalities for the deterministic Fresnel TV problem.

Thus the distinction is not that non-translation-invariant sparse recovery
lacks a general geometric theory. Rather, the present work provides an
explicit finite-aperture arithmetic verification for the Fresnel operator.
In regimes where the general geometric hypotheses can also be verified, the
two viewpoints are compatible: the intrinsic metric describes the sensing
geometry, while QPAC gives a deterministic finite-\(N_r\) route for
controlling the concrete oscillatory channels appearing in the Fresnel dual
certificate.

At the level of the physical measurement model, the data considered here
are not Fourier-band samples of an unknown measure, but spatial aperture
samples of a narrowband Fresnel field.  Even though the aperture index plays an algebraic role
analogous to a Fourier index in the far-field approximation, the Fresnel phase
contains a range-dependent quadratic aperture term.  Consequently, the
interaction between two atoms is governed by the ordered finite trajectory
\(
        n\mapsto \omega_1(q,p)n+\omega_2(q,p)n^2,
        ~ n=0,\ldots,N_r-1,
\)
rather than by a scalar angular difference.  Thus, instead of imposing a universal scalar minimum-separation condition,
we derive a support-uniform sufficient recovery criterion directly from
finite quadratic-phase aperture cancellation.

From the viewpoint of sparse recovery, this quadratic-phase certification
provides a natural sufficient alternative to scalar coherence or
minimum-separation conditions. After
discretizing both range and angle, the Fresnel inverse problem can be viewed as
sparse recovery in a finite dictionary of aperture steering vectors.  This
viewpoint is useful but incomplete.  The relevant dictionaries are highly
coherent, and in the far-field limit same-bearing sources at different ranges
generate identical spatial steering vectors.  Therefore ordinary coherence or
restricted-isometry conditions do not capture the continuum Fresnel geometry.
QPAC plays the role of a Fresnel-specific sufficient incoherence certificate:
it controls the kernel, angular tangent, and higher angular derivative
channels entering the TV dual certificate through deterministic
quadratic-phase cancellation estimates, without requiring a scalar
dictionary-incoherence condition.

This also distinguishes the present theorem from polar-domain sparse dictionary
methods.  Polar dictionaries discretize both angle and distance in order to
recover sparsity in a near-field representation
\cite{CuiDai2022FarNear}.  Such discretizations are effective computational
models, but they introduce grid resolution, basis mismatch, and coherence
questions\cite{Chi2020,chi2011sensitivity}.  Gridless and lifted methods reduce this mismatch by replacing the finite polar
grid with atomic-norm or low-dimensional harmonic representations
\cite{DaeiZamaniChatterjeeSkoglundFodor2025ISAC,XiYang2025Gridless,
DaeiFodorSkoglund2026Convexity,DaeiFodorSkoglund2026LivingOffGrid}.
Those works primarily address computational representations, convex
formulations, and estimation procedures for near-field communication and
sensing models.  The principal contribution of the present paper is different:
we prove a support-uniform deterministic exact-recovery theorem for the
semi-discrete Fresnel TV problem by constructing and controlling a gauged
Hermite dual certificate through finite quadratic-phase cancellation.  The
finite-harmonic lift used later in this paper is retained only as a
computational bridge to this exact certificate theory; it is not itself the
basis of the QPAC recovery theorem.

The physical relevance of the Fresnel regime is not merely a matter of
terminology.  Classical Rayleigh-type boundaries give geometric transition
rules, but recent work shows that application-dependent discrepancy criteria
can place the effective near-field/far-field transition substantially farther
away from the aperture
\cite{DaeiFodorSkoglund2025NearFar}.  Thus the Fresnel region is not a thin
intermediate layer between two simpler models.  For large apertures and high
carrier frequencies, wavefront curvature can remain visible over ranges that
are large on the scale of the array.  This is precisely the regime in which the
quadratic aperture phase becomes an intrinsic part of the inverse problem
rather than a perturbation of the far-field Fourier model.

Off-the-grid convex recovery has also appeared beyond ordinary source
localization, for example in blind deconvolution, blind demixing, and message
recovery problems where continuous delay parameters must be recovered jointly
with additional unknown signals or user messages
\cite{DaeiRazavikiaSkoglundFodorFischione2025}.  These works share the
convex-duality and atomic-norm viewpoint.  The Fresnel problem considered here
has a different analytic obstruction: the off-support dual leakage is
controlled by finite quadratic exponential sums generated by aperture
curvature.

Finally, the physical setting connects with diffraction-limited imaging and
Fresnel optics \cite{Goodman2005,BornWolf1999}, as well as coherent wavefield
imaging, inverse source problems, and scattering theory
\cite{Devaney2012,ColtonKress2013}.  The contribution here is not a new
wave-propagation approximation, nor primarily an algorithmic channel-estimation
method.  It is a deterministic super-resolution theory explaining when a finite
aperture can distinguish range-angle point sources through quadratic Fresnel
curvature.

\subsection{Contributions and organization}

The main contributions are as follows.

\begin{enumerate}
\item We identify finite quadratic aperture phases
\(
        \omega_1 n+\omega_2 n^2, ~ n=0,\ldots,N_r-1,
\)
as the resolution geometry of semi-discrete Fresnel super-resolution.

\item We prove deterministic bounds for finite-aperture incomplete quadratic
exponential sums with channel-dependent coefficient sequences, with explicit
support-uniform envelopes for the Fresnel kernel, normalized angular tangent channels,
and higher angular derivative channels needed in a TV dual certificate.

\item We introduce QPAC, a support-uniform aperture-cancellation criterion that
combines derivative separation, lag-correlation cancellation, and
residue-linear cancellation.

\item We prove uniform exact recovery for the semi-discrete Fresnel TV problem
under a sufficient QPAC recovery condition.  The QPAC recovery number combines
support interpolation, Taylor near-support decay, and far leakage.  The result
is uniform over all nonzero source amplitudes and all supports in the
prescribed support class.  No converse is asserted: configurations not
certified by QPAC may still be exactly recoverable.

\item We provide a finite-harmonic Bessel-Vandermonde lift as a computational
surrogate and prove a perturbative bridge from the ideal Fresnel TV certificate
to the lifted dual polynomial.
\end{enumerate}
The lifted SDP is used as a computational relaxation; the exact recovery
theorem itself is the semi-discrete Fresnel TV theorem certified by QPAC.

After comparing the result with related work, sparse-recovery viewpoints, and
quadratic exponential-sum estimates in
\Cref{subsec:related-work,subsec:quadratic-sum-literature},
\Cref{sec:model-main-results} states the semi-discrete Fresnel model, the QPAC
recovery number, and the main exact-recovery theorem.
\Cref{sec:qsum-bounds} develops the finite-aperture quadratic-sum estimates
that enter QPAC.
The proof of the exact recovery theorem is given in \Cref{app:exact-proof}.
\Cref{sec:lifted_model} develops the finite-harmonic lifted SDP relaxation and
the perturbative bridge from the QPAC certificate to the lifted dual
polynomial.
\Cref{sec:numerics} illustrates the resulting dual-polynomial localization
mechanism, and \Cref{sec:discussion} discusses limitations and extensions.
In short, the paper establishes finite quadratic-phase cancellation as a
deterministic sufficient mechanism for certifying semi-discrete Fresnel
super-resolution, without requiring a universal scalar angular-separation
condition.

\paragraph{Notation and conventions.}
We write \(\mathbb T=\mathbb R/(2\pi\mathbb Z)\),
\([N]=\{0,\ldots,N-1\}\), and use \(\|\cdot\|_2\) for the Euclidean norm.
Ordinary italic symbols are used for scalars, measures, and scalar-valued
functions. Bold lowercase symbols denote finite-dimensional vectors, bold
uppercase symbols denote finite-dimensional matrices, and Greek
finite-dimensional vectors and matrices are written in boldsymbol form.  Calligraphic symbols are
reserved for sets, spaces, structured families, and linear operators, unless
explicitly stated otherwise.  Thus, for example, \(c_\ell\) is a scalar,
\(\mathbf c=(c_\ell)_\ell\) is a vector,
\(\boldsymbol\zeta\in\mathbb C^{N_r}\) is a dual vector, and
\(\mathbf G_{\rm SS}\in\mathbb R_+^{2\times2}\) is a matrix.
Components of vectors and matrices are written without boldface, e.g.,
\(c_\ell\), \(\zeta_n\), and \([\mathbf A]_{j\ell}\).
For the closed interval
\(\Theta=[\theta_{\min},\theta_{\max}]\subset(0,\pi)\), \(\mathcal M(\Theta)\) denotes the space of finite
complex Radon measures on \(\Theta\), with total variation norm
\(\|\cdot\|_{\TV}\).
For \(c\neq0\),
\(\operatorname{sgn}(c)=c/|c|\).
For \(x\in\mathbb R\), set
\(
\dist_{2\pi}(x)
:=
\min_{k\in\mathbb Z}|x-2\pi k|, 
[x]_+
:=
\max\{x,0\}.
\)
For a bounded real set \(E\),
\(\operatorname{hull}(E)\) denotes the smallest closed interval containing
\(E\).
The symbol \(\mathbf I_2\) denotes the \(2\times2\) identity matrix.
The notation \(\mathbf A^\dagger\) denotes the Moore-Penrose pseudoinverse
of a matrix \(\mathbf A\), and
\(\mathbf u\odot\mathbf v\) denotes entrywise (Hadamard) multiplication of
vectors of equal dimension.
We write \(\mathbb B_2\) for the Euclidean unit ball in the ambient
finite-dimensional space.
The spectral radius of a square matrix \(\mathbf G\) is denoted by
\(\rho_{\rm sp}(\mathbf G)\).
Complex inner products are conjugate-linear in the first argument:
\(
\langle \mathbf u,\mathbf v\rangle_{\mathbb C^N}
=
\sum_{n=0}^{N-1}\overline{u_n}v_n .
\)
For weighted aperture vectors we use
\(
\langle \mathbf u,\mathbf v\rangle_{\rho}
=
\sum_{n=0}^{N_r-1}\rho_n\overline{u_n}v_n, 
\langle \mathbf u,\mathbf v\rangle_{b}
=
\sum_{n=0}^{N_r-1}b_n\overline{u_n}v_n .
\)
Throughout the paper, \(\rho_n\) denotes the fixed certificate taper and
\(b_n=\rho_n/\sum_m\rho_m\) its normalized version.
The symbols \(a_n\) and \(a_n^X\) denote channel-dependent scalar coefficient
sequences in the QPAC estimates; they are not additional tapers.
The taper may vanish at some aperture indices, so these pairings are not
assumed positive definite on all of \(\mathbb C^{N_r}\).
When a genuine Hilbert-space interpretation is needed, we restrict the
weighted vectors to the active aperture
\(
        \mathcal A_\rho
        :=
        \{n:\rho_n>0\}
\)
and identify the resulting weighted Hilbert space with
\(
        \mathbb C^{|\mathcal A_\rho|}.
\)
Linear independence of Fresnel atoms is always understood in the full space
\(\mathbb C^{N_r}\).
For matrices,
\(
\langle \mathbf X,\mathbf Y\rangle_F
=
\operatorname{trace}(\mathbf X^H\mathbf Y).
\)
For finite-dimensional complex matrices,
\(\mathbf A^H\) denotes the conjugate transpose.
For operators between Hilbert spaces, \(T^*\) denotes the adjoint, defined by
\(
\langle Tx,y\rangle
=
\langle x,T^*y\rangle .
\)
A matrix \(\mathbf R\) is Hermitian if
\(
\mathbf R=\mathbf R^H.
\)
% ============================================================
\section{Models and main results}
\label{sec:model-main-results}
% ============================================================
We first formulate the semi-discrete Fresnel inverse problem and state the
deterministic recovery theorem.  The exact result is proved for the Fresnel
operator.  Later, we develop a computable finite-harmonic lifting for the same
semi-discrete Fresnel inverse problem.

% ============================================================
\subsection{The semi-discrete Fresnel model}
\label{subsec:semidiscrete_fresnel_model}
% ============================================================

Fix integers
\(
        N_r\ge1,~
        N_d\ge1.
\)
Here \(N_r\) is the number of aperture samples and \(N_d\) is the number of
admissible range bins.  Let \(d>0\) denote the inter-sensor spacing, let
\(\lambda>0\) denote the wavelength, and define
\(
        k_\lambda:=\frac{2\pi}{\lambda}.
\)
Fix
\(
0<r_{\min}\le r_{\max}<\infty,
\)
and let
\[
\Rset:=\{r_1,\ldots,r_{N_d}\}
\subset[r_{\min},r_{\max}]
\]
be a finite grid of pairwise distinct ranges. Let
\[
0<\theta_{\min}\le\theta_{\max}<\pi,
\qquad
\Thetaset:=[\theta_{\min},\theta_{\max}],
\]
and define
\[
\Qset:=\{1,\ldots,N_d\}\times\Thetaset .
\]
The feasible domain of the TV problem is \(\Qset\). Whenever physical
coordinates are needed, we identify \(p=(i,\theta)\in\Qset\) with
\((r_i,\theta)\), and we write
\[
        \delta_p:=\delta_{r_i}\otimes\delta_\theta .
\]
For aperture samples \(n=0,\dots,N_r-1\), define the
Fresnel atom
\begin{equation}\label{eq:fresnel_atom_sec2}
        \aFR(r_i,\theta)[n]
        :=
        \exp\!\left(
        i k_\lambda d n\cos\theta
        -
        i\tfrac{k_\lambda d^2 n^2}{2r_i}\sin^2\theta
        \right).
\end{equation}

A semi-discrete scene is a vector of complex measures
\begin{equation}\label{eq:semi_discrete_scene_sec2}
        \nu
        =
        \sum_{i=1}^{N_d}\delta_{r_i}\otimes\mu_i,
        ~
        \mu_i\in\Mset(\Thetaset).
\end{equation}
The Fresnel measurement operator is
\begin{equation}\label{eq:fresnel_forward_model_sec2}
        (\FFR\nu)[n]
        =
        \sum_{i=1}^{N_d}
        \int_{\Thetaset}
        \aFR(r_i,\theta)[n]\,d\mu_i(\theta),
        ~ n=0,\dots,N_r-1.
\end{equation}
The total variation norm is
\(\|\nu\|_{\TV,\Qset}:=\sum_{i=1}^{N_d}\|\mu_i\|_{\TV}\).  Given noiseless
data \(\mx y=\FFR\nu_\star\), we consider
\begin{equation}\label{eq:exact_tv_program_sec2}
        \min_\nu \|\nu\|_{\TV,\Qset}
        ~
        \text{subject to}
        ~
        \FFR\nu=\mx y .
\end{equation}

For a sparse scene
\[
        \nu_\star
        =
        \sum_{\ell=1}^{L}
        c_\ell\,\delta_{r_{i_\ell}}\otimes\delta_{\theta_\ell},
        ~ c_\ell\ne0,
\]
write \(p_\ell=(i_\ell,\theta_\ell)\) and
\(S=\{p_\ell\}_{\ell=1}^{L}\subset\Qset\).

% ============================================================
\subsection{Quadratic phase interactions}
\label{subsec:quadratic_phase_interactions_sec2}
% ============================================================

For an evaluation point \(q=(i,\theta)\in\Qset\) and a source point
\(p_\ell=(i_\ell,\theta_\ell)\in\Qset\), we use the ordered convention
\begin{equation}\label{eq:relative_fresnel_phase_sec2}
        \overline{\aFR(r_i,\theta)[n]}\,
        \aFR(r_{i_\ell},\theta_\ell)[n]
        =
        e^{i\omega_1(q,p_\ell)n+i\omega_2(q,p_\ell)n^2}.
\end{equation}
Here
\begin{equation}\label{eq:relative_phase_coefficients_sec2}
        \omega_1(q,p_\ell)
        =
        k_\lambda d(\cos\theta_\ell-\cos\theta),
\end{equation}
and
\begin{equation}\label{eq:relative_quadratic_coefficients_sec2}
        \omega_2(q,p_\ell)
        =
        -\tfrac{k_\lambda d^2}{2r_{i_\ell}}\sin^2\theta_\ell
        +
        \tfrac{k_\lambda d^2}{2r_i}\sin^2\theta .
\end{equation}
Thus every pairwise interaction is governed by the ordered finite-aperture
phase \(\omega_1(q,p_\ell)n+\omega_2(q,p_\ell)n^2\).  The coefficient
\(\omega_1\) is the angular phase mismatch, while
\(\omega_2\) is the curvature phase mismatch.  The far-field model is obtained in the limit \(\omega_2=0\).

% ============================================================
\subsection{Support classes and near-far decomposition}
\label{subsec:QPAC_support_classes_sec2}
% ============================================================

For an ordered support-support pair \((p_j,p_\ell)\), with
\(p_j=(i_j,\theta_j)\) as evaluation point and
\(p_\ell=(i_\ell,\theta_\ell)\) as source point, define
\[
        \overline{\aFR(r_{i_j},\theta_j)[n]}\,
        \aFR(r_{i_\ell},\theta_\ell)[n]
        =
        e^{i\omega_{1,j\ell}n+i\omega_{2,j\ell}n^2},
\]
where
\(
        \omega_{1,j\ell}
        =
        k_\lambda d(\cos\theta_\ell-\cos\theta_j),
        ~
        \omega_{2,j\ell}
        =
        -\tfrac{k_\lambda d^2}{2r_{i_\ell}}\sin^2\theta_\ell
        +
        \tfrac{k_\lambda d^2}{2r_{i_j}}\sin^2\theta_j .
\)
Set
\[
        d_{N_r}^+(p_j,p_\ell)
        :=
        \min_{0\le n\le N_r-1}
        \dist_{2\pi}
        \bigl(\omega_{1,j\ell}+(2n+1)\omega_{2,j\ell}\bigr).
\]
The endpoint \(n=N_r-1\) is included because the Abel-summation estimate uses
this extended derivative range.

For a support \(S=\{p_\ell=(i_\ell,\theta_\ell)\}_{\ell=1}^L\), define the
feasible same-range angular near set
\[
\begin{aligned}
\mathcal N_\theta(S;\varpi_{\rm loc})
&:=
\bigcup_{\ell=1}^{L}
\{(i_\ell,\theta)\in\Qset:
|\theta-\theta_\ell|\le \varpi_{\rm loc}\},\\
\mathcal N^\circ(S;\varpi_{\rm loc})
&:=
\mathcal N_\theta(S;\varpi_{\rm loc})\setminus S,\\
\mathcal F(S;\varpi_{\rm loc})
&:=
\Qset\setminus \mathcal N_\theta(S;\varpi_{\rm loc}) .
\end{aligned}
\]
Because \(\Theta\) is an interval, every angular segment joining a support
point to a point of
\(\mathcal N_\theta(S;\varpi_{\rm loc})\)
remains in \(\Theta\). Therefore the same near set is used both for strict
dual boundedness and for the Taylor derivative envelopes.
Fix a nonnegative taper \(\rho\) with
\(W_0:=\sum_n\rho_n>0\), and set
\[
b_n:=\tfrac{\rho_n}{W_0},\qquad
\bar n:=\sum_n b_n n,\qquad
\overline{n^2}:=\sum_n b_n n^2.
\]
Define
\[
\chi_i(\theta)
:=
k_\lambda d\,\bar n\cos\theta
+
\frac{k_\lambda d^2}{4r_i}
\overline{n^2}\cos(2\theta),
\]
and the normalized gauged atom
\[
\boldsymbol\psi_{i,\theta}[n]
:=
W_0^{-1/2}e^{-i\chi_i(\theta)}
a_{\rm Fr}(r_i,\theta)[n].
\]
The gauge gives
\[
\|\boldsymbol\psi_{i,\theta}\|_\rho=1,
\qquad
\langle\boldsymbol\psi_{i,\theta},
\partial_\theta\boldsymbol\psi_{i,\theta}\rangle_\rho=0.
\]
Set
\(
\sigma(i,\theta)
:=
\|\partial_\theta\boldsymbol\psi_{i,\theta}\|_\rho .
\)
Whenever normalized tangent channels are used, assume
\(
\inf_{(i,\theta)\in\Qset}\sigma(i,\theta)>0,
\)
and define
\(
\mathbf h_{i,\theta}
:=
\frac{\partial_\theta\boldsymbol\psi_{i,\theta}}
{\sigma(i,\theta)}.
\)
The condition \(q_{\min}>0\) imposed in the main theorem guarantees this
nondegeneracy. These formulas also define the
atoms on \(\mathcal Q\), where the Taylor
estimates are taken.

For \(q,p\in\mathcal Q\), define
\[
\begin{aligned}
K(q,p)&:=\langle\boldsymbol\psi_q,\boldsymbol\psi_p\rangle_\rho,
&
H(q,p)&:=\langle\boldsymbol\psi_q,\mathbf h_p\rangle_\rho,\\
dK(q,p)&:=\langle \mathbf h_q,\boldsymbol\psi_p\rangle_\rho,
&
dH(q,p)&:=\langle \mathbf h_q,\mathbf h_p\rangle_\rho .
\end{aligned}
\]
Throughout the paper, \(K\) denotes the gauged value kernel and \(H\) denotes
the support-side normalized tangent kernel.  The symbols \(dK\) and \(dH\)
denote normalized evaluation-side tangent channels, not physical angular
derivatives.  For \(m=2,3\), we use the shorthand
\(
d^mK(q,p)
:=
\partial_{\theta_q}^mK(q,p),
~
d^mH(q,p)
:=
\partial_{\theta_q}^mH(q,p).
\)
Thus \(K,H,dK,dH\) are the normalized Hermite interpolation channels,
whereas \(d^2K,d^3K,d^2H,d^3H\) denote the physical evaluation-side angular
derivative channels used in the curvature and Taylor estimates.
Writing
\(
        \sigma(q)
        :=
        \|\partial_{\theta_q}\boldsymbol\psi_q\|_\rho,
\)
the physical first evaluation-side derivatives satisfy
\(
        \partial_{\theta_q}K(q,p)
        =
        \sigma(q)dK(q,p),
        ~
        \partial_{\theta_q}H(q,p)
        =
        \sigma(q)dH(q,p).
\)
Thus \(dK,dH\) enter the normalized Hermite interpolation matrix.
The stationarity-based local estimate uses the physical second- and
third-order channels \(d^2K,d^3K,d^2H,d^3H\).

\begin{defn}[Support-uniform QPAC envelope system]
\label{def:qpac-envelope-system}
Assume \(N_r\ge10\). A support-uniform QPAC envelope system is a tuple
\(
       \mathscr E
=
(\rho,Q_{\max},\mathscr C_{\rm coeff}) .
\)
Here \(\rho=(\rho_n)_{n=0}^{N_r-1}\) is the nonnegative aperture taper used in
the gauged certificate.  It satisfies
\(
        W_0:=\sum_{n=0}^{N_r-1}\rho_n>0
\)
and the fourth-order flat-end condition
\[
        \rho_0=\rho_1=\rho_2=\rho_3=0,
        ~
        \rho_{N_r-4}=\rho_{N_r-3}=\rho_{N_r-2}=\rho_{N_r-1}=0 .
\]
The integer \(Q_{\max}\) satisfies
\(
        2\le Q_{\max}\le N_r
\)
and is the maximal residue denominator used by the lag-correlation and
residue-linear branches. The three QPAC cancellation branches are the derivative, lag-correlation,
and residue-linear branches. The \(\ell_1\) and Cauchy-Schwarz bounds are
universal fallback bounds, not cancellation branches.

The coefficient-enclosure scheme \(\mathscr C_{\rm coeff}\) associates with
each compact physical cell \(C\), each channel
\[
        X\in
        \{K,H,dK,dH,d^2K,d^3K,d^2H,d^3H\},
\]
and each branch \(b\in\{\mathrm{Der},\LCS,\mathrm{ResLin}\}\) a value
\(
        \mathcal U_{X,C}^{(b,\mathscr E)}
        \in[0,+\infty].
\)
Whenever
\(
        \mathcal U_{X,C}^{(b,\mathscr E)}<+\infty,
\)
the hypotheses of branch \(b\) are required to hold uniformly on \(C\), and
the resulting envelope is required to satisfy
\(
        |X(q,p)|
        \le
        \mathcal U_{X,C}^{(b,\mathscr E)}
\)
for every ordered pair \((q,p)\) whose physical coordinates satisfy
\(
        (r_{i_p},\theta_p,r_{i_q},\theta_q)\in C .
\)
If these uniform branch conditions or the corresponding channel bound have
not been established on the whole cell, we set
\(
        \mathcal U_{X,C}^{(b,\mathscr E)}
        :=+\infty .
\)
For
\(
        q=(i_q,\theta_q), 
        p=(i_p,\theta_p),
\)
define
\[
        \mathscr C_{X,b}^{\mathscr E}(q,p)
        :=
        \left\{
        C:
        (r_{i_p},\theta_p,r_{i_q},\theta_q)\in C,\;
        \mathcal U_{X,C}^{(b,\mathscr E)}<+\infty
        \right\}.
\]
Here the physical-cell coordinates are ordered as
(source range, source angle, evaluation range, evaluation angle), consistently
with the cellwise construction in the appendix.  Define
\(
       \mathcal U_X^{(b,\mathscr E)}(q,p)
:=
\begin{cases}
\displaystyle
\inf_{C\in\mathscr C_{X,b}^{\mathscr E}(q,p)}
\mathcal U_{X,C}^{(b,\mathscr E)},
&
\mathscr C_{X,b}^{\mathscr E}(q,p)\ne\varnothing,\\[2ex]
+\infty,
&
\mathscr C_{X,b}^{\mathscr E}(q,p)=\varnothing.
\end{cases}
\)
Thus overlapping cells cause no ambiguity, and branch validity is explicitly
channel dependent.
\end{defn}
\begin{defn}[Support-support QPAC class]
\label{def:ss_qpac_class_sec2}
Let
\[
\mathbf u_{\rm SS}
:=(u_K^{\rm SS},u_H^{\rm SS},
   u_{dK}^{\rm SS},u_{dH}^{\rm SS})
\]
be nonnegative constants. For an ordered support-support pair
\((p_j,p_\ell)\), \(j\ne\ell\), and
\(X\in\{K,H,dK,dH\}\), define
\[
B_X^{\rm SS}(p_j,p_\ell)
:=
\min_{b\in\{\mathrm{Der},\mathrm{LCS},\mathrm{ResLin}\}}
\mathcal U_X^{(b,\mathscr E)}(p_j,p_\ell),
\]
where an uncertified branch has value \(+\infty\).

We write
\(\mathfrak S_L^{\rm SS}(\mathbf u_{\rm SS};\mathscr E)\)
for the class of all \(L\)-point supports \(S\subset\Qset\)
such that, for every ordered pair \(p_j,p_\ell\in S\),
\(j\ne\ell\),
\[
B_K^{\rm SS}(p_j,p_\ell)\le u_K^{\rm SS},
\qquad
B_H^{\rm SS}(p_j,p_\ell)\le u_H^{\rm SS},
\]
and
\[
B_{dK}^{\rm SS}(p_j,p_\ell)\le u_{dK}^{\rm SS},
\qquad
B_{dH}^{\rm SS}(p_j,p_\ell)\le u_{dH}^{\rm SS}.
\]
The certifying branch may depend on the ordered pair and
the channel.
\end{defn}

\subsection{The QPAC recovery number}
\label{subsec:explicit_qpac_number_sec2}

\paragraph{Roadmap:}
The hierarchy of estimates is as follows.  The scalar QPAC estimates bound one
finite quadratic sum.  The cellwise envelope construction makes these bounds
uniform over physical range-angle parameter cells.  The support-support
envelopes enter the gauged Hermite interpolation system.  The near-support and
far-region envelopes then control the off-support dual certificate.  Thus QPAC
turns finite quadratic cancellation into the three recovery budgets used below.
% ============================================================

QPAC produces three recovery budgets: the support-interpolation budget
\(\eta_{\rm SS}\), the Taylor near-support budget
\(\eta_{\rm near}(\varpi_{\rm loc})\), and the far-region leakage budget
\(\eta_{\rm far}(\varpi_{\rm loc})\).  These three budgets provide all
estimates needed for the strict TV dual certificate.  All constants
below are deterministic functions of the aperture geometry, the range grid,
the angular domain, the certificate taper, and the branch envelopes developed
in \Cref{sec:qsum-bounds}.  The recovery number is defined after the three
budgets have been specified.

\paragraph{Support interpolation:}
Let
\(\mathbf u_{\rm SS}
=(u_K^{\rm SS},u_H^{\rm SS},u_{dK}^{\rm SS},u_{dH}^{\rm SS})\)
be support-support envelopes, and fix a nonempty prescribed support class
\(
        \varnothing\ne\mathfrak S_L
        \subseteq
        \mathfrak S_L^{\rm SS}
        (\mathbf u_{\rm SS};\mathscr E).
\)
Thus, for every \(S\in\mathfrak S_L\) and every ordered pair of distinct
support points in \(S\),
\(
        |K(p_j,p_\ell)|\le u_K^{\rm SS},
        ~
        |H(p_j,p_\ell)|\le u_H^{\rm SS},
\)
and
\(
        |dK(p_j,p_\ell)|\le u_{dK}^{\rm SS},
        ~
        |dH(p_j,p_\ell)|\le u_{dH}^{\rm SS}.
\)
Define
\begin{equation}\label{eq:GSS_sec2}
        \mathbf G_{\rm SS}
        :=
        \begin{bmatrix}
        u_K^{\rm SS}
        &
        u_H^{\rm SS}
        \\[0.5ex]
        u_{dK}^{\rm SS}
        &
        u_{dH}^{\rm SS}
        \end{bmatrix}.
\end{equation}
Set
\begin{equation}\label{eq:eta_SS_sec2}
        \eta_{\rm SS}
        :=
        (L-1)\rho_{\rm sp}(\mathbf G_{\rm SS}).
\end{equation}
If \(\eta_{\rm SS}<1\), define
\begin{equation}\label{eq:Gamma_sec2}
\boldsymbol\Gamma
:=
\left(\mathbf I_2-(L-1)\mathbf G_{\rm SS}\right)^{-1}
\begin{bmatrix}1\\0\end{bmatrix}
=
\begin{bmatrix}
\Gamma_K\\
\Gamma_H
\end{bmatrix}.
\end{equation}

\begin{equation}\label{eq:Xi_sec2}
\boldsymbol\Xi
:=
\boldsymbol\Gamma-
\begin{bmatrix}1\\0\end{bmatrix}
=
\begin{bmatrix}
\Xi_K\\
\Xi_H
\end{bmatrix}.
\end{equation}
Since \(\mathbf G_{\rm SS}\) is entrywise nonnegative and
\(\rho_{\rm sp}((L-1)\mathbf G_{\rm SS})<1\), the Neumann-series representation gives
\[
        \boldsymbol\Gamma
        =
        \sum_{m=0}^{\infty}
        \bigl((L-1)\mathbf G_{\rm SS}\bigr)^m
        \begin{bmatrix}1\\0\end{bmatrix}.
\]
Consequently,
\(
\boldsymbol\Gamma\ge
\begin{bmatrix}1\\0\end{bmatrix},
~
\boldsymbol\Xi\ge0
\)
componentwise.
The vector \(\boldsymbol\Gamma\) bounds the Hermite certificate
coefficients, and \(\boldsymbol\Xi\) bounds the diagonal correction
from the ideal self-kernel.

\paragraph{Near-support curvature:}
The certificate uses the taper \(\rho\) from
\Cref{def:qpac-envelope-system}.  Set \(W_0:=\sum_n\rho_n\) and
\(b_n:=\rho_n/W_0\).  Define
\(
        \bar n:=\sum_{n=0}^{N_r-1}b_n\,n,
        ~
        \overline{n^2}:=\sum_{n=0}^{N_r-1}b_n\,n^2,
\)
\(
        x_n:=n-\bar n,
        ~
        y_n:=n^2-\overline{n^2}.
\)
Let
\(
        s_{\min}:=
        \min_{\theta\in[\theta_{\min},\theta_{\max}]}\sin\theta>0
\)
and define the conservative tangent-parameter hull
\[
        I_\tau
        :=
        \operatorname{hull}
        \left\{
        \tfrac{d\cos\theta}{r}:
        r\in[r_{\min},r_{\max}],
        ~
        \theta\in[\theta_{\min},\theta_{\max}]
        \right\}.
\]
For \(\tau\in I_\tau\), define
\(
        q(\tau)
        :=
        \sum_{n=0}^{N_r-1}b_n(x_n+\tau y_n)^2 .
\)
Assume the taper is tangent-nondegenerate on \(I_\tau\), namely
\(
        q_{\min}
        :=
        \min_{\tau\in I_\tau}q(\tau)
        >0.
\)
Set
\(
        q_{\max}:=\max_{\tau\in I_\tau}q(\tau).
\)
With \(\kappa_0:=k_\lambda d\), define
\(
        \underline\sigma^2
        :=
        \kappa_0^2s_{\min}^2q_{\min},
        ~
        \overline\sigma_\theta
        :=
        \kappa_0\sqrt{q_{\max}} .
\)
Then
\(
        \|\partial_\theta \boldsymbol\psi_p\|_\rho^2\ge\underline\sigma^2,~
        \|\partial_\theta\boldsymbol \psi_p\|_\rho\le\overline\sigma_\theta
\)
uniformly over \(\Qset\).

Let
\[
\begin{aligned}
        U_{K,2}^{\rm self}
        &:=
        \sup_{p\in\Qset}
        |d^2K(p,p)|,
        &
        U_{H,2}^{\rm self}
        &:=
        \sup_{p\in\Qset}
        |d^2H(p,p)|.
\end{aligned}
\]
Let \(U_{K,2}^{\rm SS}\) and \(U_{H,2}^{\rm SS}\) be certified
support-uniform off-diagonal support-support envelopes for \(d^2K\) and
\(d^2H\), respectively:
\[
        |d^2K(p_j,p_\ell)|\le U_{K,2}^{\rm SS},
        ~
        |d^2H(p_j,p_\ell)|\le U_{H,2}^{\rm SS},
        ~ j\ne \ell,
\]
uniformly for all \(S\in\mathfrak S_L\).  Set
\[
\begin{aligned}
        E_{\rm curv}
        &:=
        2\Xi_KU_{K,2}^{\rm self}
        +
        2\Xi_HU_{H,2}^{\rm self}+
        2(L-1)
        \bigl(
        \Gamma_KU_{K,2}^{\rm SS}
        +
        \Gamma_HU_{H,2}^{\rm SS}
        \bigr),
        \\
        m_{\rm near}
        &:=
        2\underline\sigma^2-E_{\rm curv}.
\end{aligned}
\]
For \(X\in\{K,H\}\) and \(a=2,3\), let
\(B_{X,a}^{\rm loc}(q,p)\) be the pointwise envelope obtained by taking
the infimum of the certified physical-cell bounds for
\(|\partial_{\theta_q}^{a}X(q,p)|\) over the cells containing \((q,p)\).
The cells and their bounds are supplied by
\(\mathscr C_{\rm coeff}\), using the valid derivative-channel bounds
and derivative-size caps in
\Cref{thm:app_cellwise_support_uniform_qpac}.
Wherever this local coverage has been certified,
\begin{equation}
\label{eq:local_channel_envelopes_sec2}
 |\partial_{\theta_q}^{a}X(q,p)|
 \le B_{X,a}^{\rm loc}(q,p),
 \qquad X\in\{K,H\},\quad a=2,3.
\end{equation}
These are physical evaluation-angle derivatives.

For the fixed localization radius \(\varpi_{\rm loc}\) and \(a=2,3\),
choose a certified constant \(D_a\in[0,+\infty]\) such that
\begin{equation}
\label{eq:local_Da_sec2}
\sup_{\substack{
S\in\mathfrak S_L\\
q\in\mathcal N_\theta(S;\varpi_{\rm loc})}}
\sum_{\ell=1}^{L}
\left[
\Gamma_K
\left|\partial_{\theta_q}^{\,a}K(q,p_\ell)\right|
+
\Gamma_H
\left|\partial_{\theta_q}^{\,a}H(q,p_\ell)\right|
\right]
\le D_a.
\end{equation}
If no finite certified bound is available, set \(D_a=+\infty\).

The pairwise cell envelopes provide the admissible choice
\[
D_a^{\rm cell}(\varpi_{\rm loc})
:=
\sup_{\substack{
S\in\mathfrak S_L\\
q\in{\mathcal N}_{\theta}
(S;\varpi_{\rm loc})}}
\sum_{\ell=1}^{L}
\left[
\Gamma_K B_{K,a}^{\rm loc}(q,p_\ell)
+
\Gamma_H B_{H,a}^{\rm loc}(q,p_\ell)
\right].
\]

All source contributions are summed at one common evaluation point and one
common support before the supremum is taken. Because \(\Theta\) is an
interval, the complete near set contains every angular segment used in the
Taylor argument.

Define the near-support number by
\begin{equation}
\label{eq:eta_near_sec2}
\eta_{\rm near}(\varpi_{\rm loc})
:=
\begin{cases}
\begin{aligned}
&\displaystyle
\frac{2\varpi_{\rm loc}D_3(\varpi_{\rm loc})}{3m_{\rm near}}\\
&\displaystyle\quad+
\frac{\varpi_{\rm loc}^{2}D_2(\varpi_{\rm loc})^{2}}
     {2m_{\rm near}},
\end{aligned}
&
\begin{array}{l}
m_{\rm near}>0,\\[-0.2ex]
D_2(\varpi_{\rm loc})+D_3(\varpi_{\rm loc})<+\infty,
\end{array}\\[2ex]
+\infty,&\text{otherwise}.
\end{cases}
\end{equation}
The condition \(\eta_{\rm near}(\varpi_{\rm loc})<1\) includes positive
curvature and finite local derivative bounds.
By \Cref{thm:near_support_strict_decay_app}, it implies strict decay
of the gauged Hermite certificate on the punctured feasible near set.

\paragraph{Far-region leakage:}
For \(X\in\{K,H\}\), define
\[
B_X^\cup(q,p_\ell)
:=
\min_{b\in\{\mathrm{Der},\LCS,\mathrm{ResLin}\}}
\mathcal U_X^{(b,\mathscr E)}(q,p_\ell),
\]
where an uncertified branch has value \(+\infty\). Choose a certified far
budget \(\eta_{\rm far}(\varpi_{\rm loc})\in[0,+\infty]\) such that
\begin{equation}
\label{eq:eta_far_rowsum_sec2}
\sup_{\substack{
S\in\mathfrak S_L\\
q\in\mathcal F(S;\varpi_{\rm loc})}}
\sum_{\ell=1}^{L}
\left[
\Gamma_K|K(q,p_\ell)|
+
\Gamma_H|H(q,p_\ell)|
\right]
\le
\eta_{\rm far}(\varpi_{\rm loc}).
\end{equation}
The supremum over an empty far set is defined to be zero. The branch
envelopes provide the admissible choice
\[
\eta_{\rm far}^{\cup}(\varpi_{\rm loc})
:=
\sup_{\substack{
S\in\mathfrak S_L\\
q\in\mathcal F(S;\varpi_{\rm loc})}}
\sum_{\ell=1}^{L}
\left[
\Gamma_K B_K^\cup(q,p_\ell)
+
\Gamma_H B_H^\cup(q,p_\ell)
\right].
\]
If several valid bounds are available, their minimum may be taken.

In what follows, \(D_2\), \(D_3\), and \(\eta_{\rm far}\) denote these
fixed admissible certified choices.
Finally, define the QPAC recovery number by
\begin{equation}\label{eq:QPAC_recovery_number_sec2}
        \mathfrak C_{\rm QPAC}^{\mathscr E}
        (\mathfrak S_L,\varpi_{\rm loc})
        :=
        \begin{cases}
        \displaystyle
        \max\left\{
        \eta_{\rm SS},
        \eta_{\rm near}(\varpi_{\rm loc}),
        \eta_{\rm far}(\varpi_{\rm loc})
        \right\},
        & \eta_{\rm SS}<1,\\[1.5ex]
        +\infty,
        & \eta_{\rm SS}\ge1 .
        \end{cases}
\end{equation}
\paragraph{QPAC-admissibility:}
Let
\(
        \varnothing\ne\mathfrak S_L
        \subseteq
        \mathfrak S_L^{\rm SS}
        (\mathbf u_{\rm SS};\mathscr E)
\)
be a prescribed support class in \(\mathcal Q\), and let
\(\varpi_{\rm loc}>0\).  We say that
\((\mathfrak S_L,\varpi_{\rm loc},\mathscr E)\) is QPAC-admissible if
\[
        \eta_{\rm SS}<1,
        ~
        \eta_{\rm near}(\varpi_{\rm loc})<1,
        ~
        \eta_{\rm far}(\varpi_{\rm loc})<1.
\]
The local derivative envelopes entering
\(\eta_{\rm near}(\varpi_{\rm loc})\) are taken over the complete near set
\(\mathcal N_\theta(S;\varpi_{\rm loc})\). Because \(\Theta\) is an interval,
this set contains every angular segment used in the Taylor argument.
The quantities \(D_2(\varpi_{\rm loc})\), \(D_3(\varpi_{\rm loc})\), and
\(\eta_{\rm far}(\varpi_{\rm loc})\) must be finite certified aggregate
bounds over the complete near and far classes, respectively.  These bounds may be obtained from the branch-union
cell envelopes or from a separately proved common-point continuum
enclosure. Moreover,
\Cref{cor:qpac_near_implies_noncollision}
shows that
\(
        \eta_{\rm near}(\varpi_{\rm loc})<1
\)
automatically excludes any second same-range support point from the angular
localization neighborhood of radius \(\varpi_{\rm loc}\).  Hence the local
noncollision required by the Taylor argument is already a consequence of the
QPAC near-support condition and is not imposed as an additional separation
or radius-compatibility hypothesis.
 
Note that failure of QPAC does not imply failure of recovery; it only means
that the present deterministic envelopes do not certify the class.

% ============================================================
\subsection{Main exact recovery theorem}
\label{subsec:main_exact_theorem_sec2}

The three QPAC budgets give a single sufficient condition for uniform exact
recovery; no necessity claim is made.

\begin{thm}
\label{thm:main_exact_recovery_sec2}
Assume \(N_r\ge10\), and fix integers
\(
        N_d\ge1, L\ge1.
\)
Fix an aperture spacing \(d>0\), a wavelength
\(\lambda>0\), and set \(k_\lambda:=\tfrac{2\pi}{\lambda}\).  Let
\(
        \mathcal R=\{r_1,\ldots,r_{N_d}\}\subset [r_{\min},r_{\max}],
        ~
        0<r_{\min}\le r_{\max}<\infty,
\)
and let
\(
0<\theta_{\min}\le\theta_{\max}<\pi,
\qquad
\Theta:=[\theta_{\min},\theta_{\max}].
\)
Define
\(
\mathcal Q:=\{1,\ldots,N_d\}\times\Theta .
\)
Fix a nonempty prescribed class \(\mathfrak S_L\) of
\(L\)-point supports in \(\mathcal Q\). Suppose there exist
a support-uniform QPAC envelope system \(\mathscr E\) as in
\Cref{def:qpac-envelope-system}, nonnegative
support-support budgets
\(
\mathbf u_{\rm SS}
=(u_K^{\rm SS},u_H^{\rm SS},
  u_{dK}^{\rm SS},u_{dH}^{\rm SS}),
\)
and a radius \(\varpi_{\rm loc}>0\) such that
\[
\mathfrak S_L
\subseteq
\mathfrak S_L^{\rm SS}(\mathbf u_{\rm SS};\mathscr E),
\]
and the corresponding QPAC recovery number defined in
\Cref{eq:QPAC_recovery_number_sec2} satisfies
\[
\mathfrak C_{\rm QPAC}^{\mathscr E}
(\mathfrak S_L,\varpi_{\rm loc})<1.
\]
Then every measure
\(
        \nu_\star
        =
        \sum_{\ell=1}^{L}
        c_\ell\,\delta_{r_{i_\ell}}\otimes\delta_{\theta_\ell},
        ~ c_\ell\neq0,
\)
with distinct support points
\(
        S=\{(i_\ell,\theta_\ell)\}_{\ell=1}^{L}\in\mathfrak S_L
\)
is the unique minimizer of
\[
        \min_{\nu}\|\nu\|_{\TV,\mathcal Q}
        ~
        \text{\rm subject to}
        ~
        \FFR\nu=\FFR\nu_\star .
\]
The conclusion is uniform over all nonzero amplitude vectors
\((c_\ell)_{\ell=1}^{L}\).
\end{thm}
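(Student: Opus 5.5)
The plan is the standard dual-certificate argument, adapted to the semi-discrete domain \(\mathcal Q\) and the gauged atoms.

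\textbf{Step 1: sufficient condition.} I first record the following. Suppose the atoms \(\{a_{\rm Fr}(p_\ell)\}_{\ell=1}^L\) are linearly independent in \(\mathbb C^{N_r}\), and suppose there is a dual vector \(\boldsymbol\zeta\in\mathbb C^{N_r}\) whose certificate \(P(q):=\langle a_{\rm Fr}(q),\boldsymbol\zeta\rangle\) satisfies \(P(p_\ell)=\operatorname{sgn}(c_\ell)\) and \(|P(q)|<1\) for \(q\in\mathcal Q\setminus S\). Then \(\nu_\star\) is the unique minimizer.

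The argument is short. For a feasible \(\nu=\nu_\star+h\), the equation \(\FFR h=0\) gives
\[
\textstyle\int \overline{P}\,dh=0 .
\]
Decompose \(h=h_S+h_{S^c}\). Then
\[
\|\nu\|_{\TV}\ge \operatorname{Re}\textstyle\int\overline P\,d\nu+\bigl(\|h_{S^c}\|-\operatorname{Re}\int\overline P\,dh_{S^c}\bigr).
\]
The bracket is strictly positive unless \(h_{S^c}=0\). This uses \(|P|<1\) off \(S\), continuity of \(P\) in \(\theta\), and compactness of each \(\Theta\)-slice. If \(h_{S^c}=0\), then \(h\) is supported on \(S\) and lies in the kernel, so linear independence forces \(h=0\).

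The gauge factor \(e^{-i\chi_i(\theta)}\) is unimodular. I may therefore interpolate the gauged signs \(\widetilde v_\ell=e^{i\chi_{i_\ell}(\theta_\ell)}\operatorname{sgn}(c_\ell)\) with the gauged atoms \(\boldsymbol\psi\) in the \(\rho\)-pairing. Write \(\boldsymbol\zeta=\rho\odot(\cdot)\), and set
\[
P_{S,\widetilde{\mathbf v}}(q)=\sum_\ell\bigl[\alpha_\ell K(q,p_\ell)+\beta_\ell H(q,p_\ell)\bigr].
\]
The vanishing-derivative conditions \(\partial_\theta P(p_\ell)=0\) are then exactly the \(dK/dH\) rows.

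\textbf{Step 2: Hermite system.} Stacking the value and tangent conditions gives a \(2L\times2L\) system. Because of the gauge identities \(\|\boldsymbol\psi\|_\rho=1\) and \(\langle\boldsymbol\psi,\partial_\theta\boldsymbol\psi\rangle_\rho=0\), its diagonal \(2\times2\) blocks are \(\mathbf I_2\). The off-diagonal blocks are bounded entrywise by \(\mathbf G_{\rm SS}\), since \(S\in\mathfrak S_L^{\rm SS}\).

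Consider the comparison operator on \(\mathbb R_+^2\)-valued sequences. The condition \(\eta_{\rm SS}<1\) gives a convergent Neumann series, so the system is invertible. It also gives the componentwise bound
\[
(|\alpha_\ell|,|\beta_\ell|)\le(\Gamma_K,\Gamma_H)\quad\text{for all }\ell,
\]
and, after subtracting the identity term, \(|\alpha_\ell-\widetilde v_\ell|\le\Xi_K\) and \(|\beta_\ell|\le\Xi_H\). Invertibility of the Gram-type system with weight \(\rho\) also yields linear independence of the \(\boldsymbol\psi_{p_\ell}\) on \(\mathcal A_\rho\), and hence of the atoms in \(\mathbb C^{N_r}\). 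This settles the linear-independence requirement in Step 1.

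\textbf{Step 3: far region.} For \(q\in\mathcal F(S;\varpi_{\rm loc})\), the bounds \(|\alpha_\ell|\le\Gamma_K\) and \(|\beta_\ell|\le\Gamma_H\) together with \eqref{eq:eta_far_rowsum_sec2} give \(|P(q)|\le\eta_{\rm far}<1\). This covers all other range bins and all far angles on the support ranges.

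\textbf{Step 4: near region, the main obstacle.} Fix \(q=(i_\ell,\theta)\) with \(0<|\theta-\theta_\ell|\le\varpi_{\rm loc}\). Multiply by the phase so that \(\widetilde v_\ell\) becomes \(1\), and consider \(f(\theta)=|P|^2\). We have \(f(\theta_\ell)=1\) and \(f'(\theta_\ell)=0\), the latter from stationarity. Next I would show
\[
-\tfrac12 f''(\theta_\ell)\ge m_{\rm near}.
\]
The self-term contributes \(-2\sigma^2\le-2\underline\sigma^2\), using \(\partial^2_\theta K(p,p)=-\|\partial\boldsymbol\psi\|^2\) in the gauge. The diagonal corrections and cross terms are bounded by \(E_{\rm curv}\), via the \(\Xi\), \(\Gamma\) and \(U^{\rm SS}_{\cdot,2}\) bounds.

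I then Taylor-expand \(\operatorname{Re}P\) and \(\operatorname{Im}P\) along the angular segment, which stays in \(\Theta\) because \(\Theta\) is an interval. Let \(t=|\theta-\theta_\ell|\). Using third-derivative remainders controlled by \(D_3\) and second-derivative sizes controlled by \(D_2\), I aim for
\[
|P(q)|^2\le 1-m_{\rm near}t^2\Bigl(1-\tfrac{2tD_3}{3m_{\rm near}}-\tfrac{t^2D_2^2}{2m_{\rm near}}\Bigr).
\]
Since \(t\le\varpi_{\rm loc}\), the parenthesis is bounded below by \(1-\eta_{\rm near}>0\).

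The delicate points are three. First, matching the constants exactly with the definition of \(\eta_{\rm near}\). Second, handling the imaginary part, whose square contributes the \(D_2^2\) term. Third, ruling out a second same-range support point inside the window; for this I invoke \Cref{cor:qpac_near_implies_noncollision}. The strict decay itself is \Cref{thm:near_support_strict_decay_app}, so in practice I would cite it and verify that its hypotheses follow from Steps 2 and 4.

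Combining Steps 1–4 gives uniqueness for every \(S\in\mathfrak S_L\) and every nonzero \((c_\ell)\). All budgets used are independent of the amplitudes, which yields the stated uniformity.
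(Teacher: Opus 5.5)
Your proposal is correct and follows the paper's proof essentially step for step. The shared steps are the TV dual-certificate criterion, the block-Neumann solution of the gauged Hermite system with atom independence read off from its invertibility, the far-region row-sum bound, the curvature-plus-Taylor near decay of \Cref{thm:near_support_strict_decay_app}, and unimodular ungauging.

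Two small fixes are needed. First, the curvature bound is $-f''(\theta_\ell)\ge m_{\rm near}$, not $-\tfrac12 f''(\theta_\ell)\ge m_{\rm near}$; your own self-term $-2\sigma^2$ in $f''$ shows this. So the prefactor in your Taylor estimate should be $\tfrac12 m_{\rm near}$, which still gives strict decay with the same parenthesis $\ge 1-\eta_{\rm near}$. Second, \Cref{cor:qpac_near_implies_noncollision} is a consequence of the strict-decay theorem rather than an input to it: $D_2,D_3$ already sum over all support atoms, so the Taylor argument never needs noncollision.
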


The QPAC condition in the theorem is branch-agnostic: different ordered
pairs and different channels may be controlled by different valid branches.
We now record two concrete support subclasses.  The first obtains the
support-support envelopes from derivative separation, while the second
obtains them from residue-lag cancellation.  Both are specializations of the
same support-uniform recovery theorem.

\begin{defn}[Derivative verification class]
\label{def:derivative_verification_class_sec2}
For parameters \(0<d_{\rm SS}\le\pi\),
\({s_{2,{\rm SS}}^{\rm max}}\in[0,1]\), and an ambient domain
\(\mathcal Q\), let
\(
        \mathfrak S_L^{\rm der}
        (d_{\rm SS},{s_{2,{\rm SS}}^{\rm max}};\mathcal Q)
\)
be the collection of all \(L\)-point supports \(S\subset\mathcal Q\) whose
distinct support-support pairs satisfy
\(
        d_{N_r}^+(p_j,p_\ell)\ge d_{\rm SS},
        ~
        |\sin\omega_{2,j\ell}|
        \le {s_{2,{\rm SS}}^{\rm max}},
        ~ j\ne\ell .
\)
The first inequality enforces nonstationary behavior of the support-support
quadratic phase increments.  The second inequality is not a separation
condition; it controls the constants in the fourth-order derivative branch by
bounding the curvature factor \(|\sin\omega_2|\). This class provides an explicit derivative-based route for constructing the
support-support envelopes required by the general QPAC condition.
\end{defn}
The derivative verification class becomes a concrete QPAC recovery class
once the derivative branch provides uniform bounds for the four
support-support Hermite channels.  Applying the main theorem to this
subclass gives the following consequence.

\begin{cor}
\label{cor:derivative_verification_route_sec2}
Let
\(
\varnothing\ne\mathfrak S_L
\subseteq
\mathfrak S_L^{\rm der}
(d_{\rm SS},s_{2,\rm SS}^{\max};\mathcal Q)
\)
be a prescribed support subclass. Suppose that the derivative branch
provides support-support envelopes \(\mathbf u_{\rm SS}\), uniformly on
\(\mathfrak S_L\), such that
\(
\mathfrak S_L
\subseteq
\mathfrak S_L^{\rm SS}(\mathbf u_{\rm SS};\mathscr E).
\)
If there exists \(\varpi_{\rm loc}>0\) such that
\(
\mathfrak C_{\rm QPAC}^{\mathscr E}
(\mathfrak S_L,\varpi_{\rm loc})<1,
\)
then the exact-recovery conclusion of
\Cref{thm:main_exact_recovery_sec2} holds uniformly for every support
in \(\mathfrak S_L\).
\end{cor}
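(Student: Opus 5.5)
The corollary is a direct specialization of \Cref{thm:main_exact_recovery_sec2}, so the plan is to check that every hypothesis of the theorem holds for the subclass \(\mathfrak S_L\) and then invoke it. The theorem needs four things. First, \(N_r\ge10\) and the ambient data \(\mathcal R,\Theta,\mathcal Q\) as fixed in \Cref{subsec:semidiscrete_fresnel_model}. Second, a support-uniform QPAC envelope system \(\mathscr E\) as in \Cref{def:qpac-envelope-system}. Third, nonnegative budgets \(\mathbf u_{\rm SS}\) with \(\mathfrak S_L\subseteq\mathfrak S_L^{\rm SS}(\mathbf u_{\rm SS};\mathscr E)\). Fourth, a radius \(\varpi_{\rm loc}>0\) with \(\mathfrak C^{\mathscr E}_{\rm QPAC}(\mathfrak S_L,\varpi_{\rm loc})<1\). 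The second and fourth are assumed verbatim in the corollary. The \(\mathscr E\) implicit in the statement is the one whose derivative branch furnishes the envelopes, so I will read the corollary as quantifying over such a fixed \(\mathscr E\).

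The one point needing care is the third hypothesis. In the corollary the envelopes come specifically from the derivative branch, whereas the class \(\mathfrak S_L^{\rm SS}\) of \Cref{def:ss_qpac_class_sec2} is defined through \(B_X^{\rm SS}=\min_b\mathcal U_X^{(b,\mathscr E)}\) over all three branches. I will note that for every ordered pair \((p_j,p_\ell)\) and every \(X\in\{K,H,dK,dH\}\) one has \(B_X^{\rm SS}(p_j,p_\ell)\le\mathcal U_X^{(\mathrm{Der},\mathscr E)}(p_j,p_\ell)\). Hence a bound by \(u_X^{\rm SS}\) coming from the derivative branch alone gives membership in \(\mathfrak S_L^{\rm SS}(\mathbf u_{\rm SS};\mathscr E)\). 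The corollary in fact assumes this inclusion outright, so the step reduces to recording it.

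The containment \(\mathfrak S_L\subseteq\mathfrak S_L^{\rm der}(d_{\rm SS},s_{2,\rm SS}^{\max};\mathcal Q)\) plays no role in the logic of the reduction. It only explains why the derivative-branch hypotheses (nonstationarity \(d^+_{N_r}\ge d_{\rm SS}\) and the curvature cap on \(|\sin\omega_2|\)) are available uniformly, so that finite \(\mathcal U^{(\mathrm{Der})}\) values exist. Nonemptiness of \(\mathfrak S_L\) is also assumed.

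With these checks done, \Cref{thm:main_exact_recovery_sec2} applies and yields, for every \(S\in\mathfrak S_L\) and every nonzero amplitude vector, that \(\nu_\star\) is the unique TV minimizer. There is no genuine obstacle. The only thing to watch is the bookkeeping of which \(\mathscr E\) and which \(\mathbf u_{\rm SS}\) enter \(\mathfrak C_{\rm QPAC}\): \(\eta_{\rm SS}\), \(\Gamma\) and \(\Xi\) must be computed from the same \(\mathbf u_{\rm SS}\) produced by the derivative branch, so that the recovery number in the hypothesis is exactly the one the theorem uses.
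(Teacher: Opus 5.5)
Your proposal is correct and matches the paper's route: the corollary is obtained by applying \Cref{thm:main_exact_recovery_sec2} directly, since the inclusion $\mathfrak S_L\subseteq\mathfrak S_L^{\rm SS}(\mathbf u_{\rm SS};\mathscr E)$ and $\mathfrak C_{\rm QPAC}^{\mathscr E}(\mathfrak S_L,\varpi_{\rm loc})<1$ are assumed outright. As you note, membership in the derivative class only explains why finite derivative-branch envelopes exist and is not used in the deduction.
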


\begin{defn}[Lag-correlation support class]
\label{def:lcs_support_class}
Let
\(
        \mathbf u_{\rm SS}
        =
        (u_K^{\rm SS},u_H^{\rm SS},u_{dK}^{\rm SS},u_{dH}^{\rm SS})
\)
be support-support interaction budgets. For an ordered support-support pair \((p_j,p_\ell)\) and
\(X\in\{K,H,dK,dH\}\), define
\(
        B_X^{\LCS}(p_j,p_\ell)
        :=
        \mathcal U_X^{(\LCS,\mathscr E)}(p_j,p_\ell).
\)
By the convention in
\Cref{def:qpac-envelope-system}, this quantity equals \(+\infty\) whenever
the LCS branch is not certified for channel \(X\) on any certified cell
containing the ordered pair.  We
define \(\mathfrak S_L^{\LCS}(\mathbf u_{\rm SS};\mathscr E)\) as the class of
all \(L\)-point supports \(S\subset\Qset\) such that, for
every ordered off-diagonal pair \(p_j,p_\ell\in S\), \(j\ne\ell\),
\(
        B_X^{\LCS}(p_j,p_\ell)
        \le
        u_X^{\rm SS},
        ~
        X\in\{K,H,dK,dH\}.
\)
The class is controlled by full residue lag-phase nonalignment over support
cells, not by the derivative separation quantity \(d_{N_r}^+\).
\end{defn}
\begin{rem}[LCS is not a derivative class]
The derivative verification class is controlled by
\(
       d_{N_r}^+(\omega_1,\omega_2)
=
\min_{0\le n\le N_r-1}
\dist_{2\pi}
\bigl(\omega_1+(2n+1)\omega_2\bigr),
\)
together with a curvature-size parameter.  The LCS class is controlled instead by the full residue lag-phase intervals
\(\Jlc_{s,h,m}(\Omega)\) and by the associated cosine majorants
\(\Cmax(\Jlc_{s,h,m})\).
Thus LCS certifies support interactions through nonalignment of complete
residue lag phases, rather than through pointwise lower bounds on the discrete
phase derivative.
\end{rem}

The same specialization applies when support-support interactions are
controlled by residue-lag cancellation rather than by derivative
separation.  Since the LCS support class is contained in the general QPAC
support class with the same interaction budgets, the main recovery theorem
gives the following consequence.

\begin{cor}
\label{cor:lcs_verification_route_sec2}
Let
\(
        \mathbf u_{\rm SS}
        =
        (u_K^{\rm SS},u_H^{\rm SS},
        u_{dK}^{\rm SS},u_{dH}^{\rm SS}),
\)
and let
\(
        \mathfrak S_L^{\LCS}
        (\mathbf u_{\rm SS};\mathscr E)
\)
be the lag-correlation support class from
\Cref{def:lcs_support_class}.  Let
\(
        \varnothing\ne\mathfrak S_L
        \subseteq
        \mathfrak S_L^{\LCS}
        (\mathbf u_{\rm SS};\mathscr E)
\)
be any prescribed support subclass.  By the definitions of
\(B_X^{\LCS}\) and \(B_X^{\rm SS}\),
\(
        \mathfrak S_L
        \subseteq
        \mathfrak S_L^{\LCS}
        (\mathbf u_{\rm SS};\mathscr E)
        \subseteq
        \mathfrak S_L^{\rm SS}
        (\mathbf u_{\rm SS};\mathscr E).
\)
If there exists \(\varpi_{\rm loc}>0\) such that
\(
        \mathfrak C_{\rm QPAC}^{\mathscr E}
        \left(
        \mathfrak S_L,
        \varpi_{\rm loc}
        \right)
        <1,
\)
then the exact-recovery conclusion of
\Cref{thm:main_exact_recovery_sec2} holds uniformly for every support in
\(\mathfrak S_L\).
\end{cor}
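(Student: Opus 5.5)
The plan is to reduce the corollary to \Cref{thm:main_exact_recovery_sec2} by a pure class-inclusion argument, with no new analysis. The only thing to check is that every hypothesis of the main theorem is met by the subclass \(\mathfrak S_L\) with the same envelope system \(\mathscr E\), the same budgets \(\mathbf u_{\rm SS}\), and the same radius \(\varpi_{\rm loc}\).

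First I will verify the inclusion \(\mathfrak S_L^{\LCS}(\mathbf u_{\rm SS};\mathscr E)\subseteq\mathfrak S_L^{\rm SS}(\mathbf u_{\rm SS};\mathscr E)\) pointwise in the channel and the ordered pair. Fix \(S\) in the LCS class, an ordered pair \(p_j,p_\ell\in S\) with \(j\ne\ell\), and \(X\in\{K,H,dK,dH\}\). By \Cref{def:ss_qpac_class_sec2}, \(B_X^{\rm SS}(p_j,p_\ell)\) is the minimum of \(\mathcal U_X^{(b,\mathscr E)}(p_j,p_\ell)\) over \(b\in\{\mathrm{Der},\LCS,\mathrm{ResLin}\}\), and uncertified branches take the value \(+\infty\). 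Since \(\LCS\) is one of the branches in this minimum,
\[
B_X^{\rm SS}(p_j,p_\ell)\le \mathcal U_X^{(\LCS,\mathscr E)}(p_j,p_\ell)=B_X^{\LCS}(p_j,p_\ell)\le u_X^{\rm SS}.
\]
The last inequality is exactly the defining condition of \Cref{def:lcs_support_class}. The convention that \(\mathcal U\) is an infimum over certified cells, or \(+\infty\) when there are none, is identical in both definitions, so no case distinction about empty cell families is needed. Thus \(S\in\mathfrak S_L^{\rm SS}(\mathbf u_{\rm SS};\mathscr E)\). Composing with the assumed inclusion \(\mathfrak S_L\subseteq\mathfrak S_L^{\LCS}\) gives the chain stated in the corollary.

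Next I will observe that the QPAC recovery number \(\mathfrak C_{\rm QPAC}^{\mathscr E}(\mathfrak S_L,\varpi_{\rm loc})\) depends only on the following data:
\begin{enumerate}
\item the support class \(\mathfrak S_L\), through the suprema defining \(D_2\), \(D_3\) and \(\eta_{\rm far}\);
\item the budgets \(\mathbf u_{\rm SS}\), through \(\mathbf G_{\rm SS}\), \(\eta_{\rm SS}\), \(\boldsymbol\Gamma\) and \(\boldsymbol\Xi\);
\item the fixed taper and geometry.
\end{enumerate}
None of these depends on which branch certified the support-support envelopes. The hypothesis \(\mathfrak C_{\rm QPAC}^{\mathscr E}<1\) is therefore literally the hypothesis of the main theorem. \Cref{thm:main_exact_recovery_sec2} then applies with \(\mathfrak S_L\subseteq\mathfrak S_L^{\rm SS}(\mathbf u_{\rm SS};\mathscr E)\), and it yields unique TV minimality of every \(\nu_\star\) supported on some \(S\in\mathfrak S_L\), uniformly over nonzero amplitudes.

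There is essentially no obstacle here. The only point requiring care is the bookkeeping that the min-over-branches in \(B_X^{\rm SS}\) and the single-branch quantity \(B_X^{\LCS}\) are built from the same cellwise infima \(\mathcal U_X^{(b,\mathscr E)}\) with the same \(+\infty\) convention. Once that is checked, the inequality \(\min\le\) any single term closes the argument.
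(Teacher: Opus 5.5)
Your argument is correct and is exactly the paper's route: \(B_X^{\rm SS}\) is a minimum over branches that includes \(\mathcal U_X^{(\LCS,\mathscr E)}=B_X^{\LCS}\le u_X^{\rm SS}\), which gives the inclusion \(\mathfrak S_L^{\LCS}(\mathbf u_{\rm SS};\mathscr E)\subseteq\mathfrak S_L^{\rm SS}(\mathbf u_{\rm SS};\mathscr E)\). After that, \Cref{thm:main_exact_recovery_sec2} applies verbatim with the same \(\mathscr E\), \(\mathbf u_{\rm SS}\) and \(\varpi_{\rm loc}\). Your list of what the recovery number depends on omits the support-dependent second-derivative envelopes \(U_{K,2}^{\rm SS},U_{H,2}^{\rm SS}\), but this is harmless, since the hypothesis is literally the quantity \(\mathfrak C_{\rm QPAC}^{\mathscr E}(\mathfrak S_L,\varpi_{\rm loc})\) that the theorem requires.
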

\begin{rem}[Modeling scope]
The exact theorem is a theorem for the semi-discrete Fresnel model: ranges lie
on a finite admissible grid and angles are continuous.  This choice isolates
the Fresnel curvature mechanism while leaving only one continuous local
direction, so a one-dimensional angular Hermite certificate suffices.  A fully
continuous range-angle theorem would require a two-parameter certificate with
range, angle, and mixed derivative control.  The finite-harmonic lift below is
therefore used only as a computational surrogate for the Fresnel TV certificate,
not as a replacement for the exact QPAC theorem.
\end{rem}

The proof decomposes the dual-certificate construction into three QPAC
estimates: \(\eta_{\rm SS}<1\) gives the gauged Hermite certificate,
\(\eta_{\rm near}<1\) gives strict Taylor decay on the punctured feasible near
set, and \(\eta_{\rm far}<1\) gives strict boundedness on the far region.
Together these three estimates produce the strict TV dual certificate, as
shown in \Cref{app:exact-proof}.
\begin{table}[t]
\centering
\small
\begin{tabular}{p{0.18\linewidth}|p{0.34\linewidth}|p{0.34\linewidth}}
\hline
QPAC quantity & Certificate role & Proof mechanism\\
\hline
\(\eta_{\rm SS}\) &
Hermite interpolation on the support &
Block Neumann domination of the gauged interpolation system\\
\(\eta_{\rm near}(\varpi_{\rm loc})\) &
Strict decay on the punctured feasible near set &
Negative gauged curvature plus Taylor remainder control\\
% Radius compatibility &
% Same-range local noncollision &
% \(K\)-coherence bound plus
% \(0<\varpi_{\rm loc}<
% \sqrt{(1-u_K^{\rm SS})/(2\overline\sigma_\theta^2)}\)\\
\(\eta_{\rm far}(\varpi_{\rm loc})\) &
Strict off-support boundedness &
Support-uniform best-branch QPAC leakage bounds\\
\hline
\end{tabular}
\caption{The three quadratic-phase aperture certification (QPAC) recovery
budgets used in the exact-recovery proof.}
\label{tab:qpac_roadmap}
\end{table}
\begin{rem}[Amplitude independence and coefficient recovery]
The QPAC condition is independent of the nonzero amplitudes and their phases.
It certifies the geometry of the support class.  After exact support recovery,
the amplitudes are obtained by finite-dimensional least squares. Indeed, for \(S=\{(i_\ell,\theta_\ell)\}_{\ell=1}^{L}\), define
\(
        \mathbf A_S
        :=
        \bigl[
        \mathbf \aFR(r_{i_1},\theta_1)
         \cdots
        \mathbf \aFR(r_{i_L},\theta_L)
        \bigr].
\)
Under the hypotheses of \Cref{thm:main_exact_recovery_sec2}, the active atoms
are linearly independent by the Hermite invertibility argument in
\Cref{app:exact-Hermite-system}. Hence, once \(S\) has been identified,
\(\mathbf c
=
\mathbf A_S^\dagger\mathbf y
=
\left(
\mathbf A_S^H\mathbf A_S
\right)^{-1}
\mathbf A_S^H\mathbf y.\)
\end{rem}

The numerical examples in \Cref{sec:numerical_derivative_qpac_example,sec:numerical_range_lcs_qpac_example}
use distinct QPAC branches. The derivative example covers a
two-range support class whose angles vary independently over
prescribed windows. The lag-correlation example covers a fixed
two-range support at a common bearing. Both evaluate the
support-interpolation, near-support, and far-region budgets.ear-support, and far-region budgets appearing
in the main theorem.

% % ============================================================
\subsection{Finite-harmonic lifted Fresnel localization}
\label{subsec:main_lifted_statement}

The exact theorem concerns the semi-discrete Fresnel operator \(\FFR\).  The
finite-harmonic lifted model developed in \Cref{sec:lifted_model} is a
computable surrogate, not a second exact-recovery theorem.  It replaces each
Fresnel atom by a finite Bessel-Vandermonde harmonic expansion and leads to a
finite-dimensional SDP with a lifted trigonometric dual polynomial.  The connection with QPAC is perturbative: for any fixed QPAC dual vector, the
finite-harmonic polynomial generated by that same vector approximates the ideal
Fresnel dual certificate uniformly, with an explicit truncation margin that
tends to zero as the harmonic orders increase.  The two results play complementary roles: the perturbation estimate controls the finite-harmonic representation of a fixed QPAC certificate, while \Cref{prop:lifted_dual_saturation} characterizes the dual polynomial produced by the lifted SDP.  The precise
definitions of the lifted atom, the finite Fresnel operator, the lifted dual
polynomial, and the truncation bound are given in \Cref{sec:lifted_model}.

\subsection{Far-field limit}
\label{subsec:far_field_degeneration_sec2}

The Fresnel theorem contains the far-field geometry as a limiting case.  If the
range grid is sent to infinity while the aperture and angular domain are fixed,
then
\(
        \tfrac{k_\lambda d^2}{2r_i}\sin^2\theta\to0,
        ~
        \omega_1n+\omega_2n^2\to\omega_1n .
\)
Range then disappears from the narrowband spatial response: same-bearing
sources at different ranges generate identical far-field steering vectors and
can only be recovered through their combined complex amplitude.  The limiting
atom is
\(
        a_{\rm FF}(\theta)[n]
        =
        e^{in\kappa(\theta)},
        ~
        \kappa(\theta):=k_\lambda d\cos\theta .
\)
This is the precise sense in which the far-field limit is Fourier-like.  The
array observes spatial aperture samples \(n=0,\ldots,N_r-1\), not Fourier
coefficients over a physical frequency band.  Thus QPAC degenerates to a
linear-phase finite-aperture certificate geometry in the angular spatial
frequency \(\kappa(\theta)\), while the Fresnel theorem extends that geometry
to the regime where the quadratic curvature coordinate is visible.
% ============================================================
% ============================================================
\section{QPAC estimates and kernel localization}
\label{sec:qsum-bounds}
% ============================================================

This section develops the oscillatory estimates behind QPAC.  Every Fresnel
certificate channel reduces to a finite quadratic sum
\begin{equation}
\label{eq:qsum_basic_object}
        T_N(a;\omega_1,\omega_2)
        :=
        \sum_{n=0}^{N-1}a_n e^{i(\omega_1 n+\omega_2 n^2)} .
\end{equation}
Here \(N\) denotes an abstract summation length.  In the Fresnel certificate
applications, \(N=N_r\), where \(N_r\) is the number of aperture samples.
The coefficient sequence \(a\) contains the normalized taper
\(b_n=\rho_n/W_0\) and the channel-specific tangent or derivative multiplier; it depends on the channel,
while the phase coordinates encode the ordered Fresnel interaction. QPAC bounds these sums by
combining derivative separation, lag-correlation cancellation, and
residue-linear cancellation.  These scalar estimates are then converted into
support-uniform channel envelopes over physical support/evaluation cells; the
cellwise construction is given in the appendix. Although the estimates below are connected in spirit to classical bounds for
quadratic exponential sums, the requirements here are different.  The number of
aperture samples is finite, the coefficient sequences are channel-dependent,
and the bounds must hold uniformly over physical support/evaluation cells.  We therefore give
estimates with explicit constants suitable for the Hermite dual-certificate
inequalities.

The normalized gauged Hermite channels are
\(K,H,dK,dH\), corresponding to \((\bs \psi,\bs \psi)\), \((\bs\psi,\mx h)\), \((\mx h,\bs \psi)\),
and \((\mx h,\mx h)\).  The higher angular derivative channels
\(d^2K,d^3K,d^2H,d^3H\) control the local curvature and Taylor remainder.
They are not normalized kernel inner products, so their fallback caps are
derivative-size Cauchy-Schwarz bounds.  The superscript \({\rm SS}\) is reserved for
support-support quantities such as \(\mathbf G_{\rm SS}\), \(d_{\rm SS}\), and
\(\eta_{\rm SS}\), not for channel labels.

\subsection{Quadratic exponential sums and finite-aperture cancellation}
\label{subsec:quadratic-sum-literature}

A central analytic component of this paper is a nonasymptotic theory of
incomplete quadratic exponential sums with channel-dependent coefficient
sequences, adapted to finite apertures and dual certificates.  The scalar estimates are stated for a generic summation
length \(N\):
\[
        T_N(a;\omega_1,\omega_2)
        =
        \sum_{n=0}^{N-1}
        a_n e^{i(\omega_1 n+\omega_2 n^2)} .
\]
In the Fresnel certificate these estimates are applied with \(N=N_r\), where
\(N_r\) is the number of aperture samples.  The physical aperture length is
\(
        D_{\rm ap}=(N_r-1)d .
\)
The phase coordinates \((\omega_1,\omega_2)\) are not abstract frequencies:
they are the ordered linear and curvature mismatches between an evaluation
point and a source point.  Likewise, the coefficient sequences \(a=(a_n)\) are not arbitrary tapers.
They are generated by the fixed certificate taper \(\rho\), the Fresnel kernel,
the gauged angular tangent channels, and the higher angular derivative channels
entering the Hermite dual certificate.  These finite quadratic sums describe how wavefront curvature enters the
recovery theorem.

The classical theory of exponential sums provides the natural background.
Bounds for sums of the form
\(
        \sum_n a_n e^{i\phi(n)}
\)
are fundamental in analytic number theory and harmonic analysis, including the
van der Corput method, Weyl differencing, completion, exponent-pair estimates,
and the Hardy-Littlewood circle method
\cite{GrahamKolesnik1991,Vaughan1997,Montgomery1994,IwaniecKowalski2004}.
For quadratic phases, the literature includes complete Gauss sums, incomplete
Gauss sums, theta sums, and finite or smoothly truncated quadratic sums
\cite{FiedlerJurkatKorner1977,AkarsuMarklof2013,MarklofWelsh2023}.  These
works give powerful cancellation estimates and asymptotic information for
arithmetic phase families.

The estimates required here have a different mathematical target.  The number
of aperture samples \(N_r\) is fixed by the sensing device, and the constants
must be explicit enough to enter a strict deterministic recovery inequality,
\(
        \mathfrak C_{\rm QPAC}^{\mathscr E}<1 .
\)
Moreover, the same quadratic phase must be controlled simultaneously for
several structured coefficient families: the kernel channel, the normalized
tangent channels, and the second and third angular derivative channels used in
the local curvature argument.  The estimates are therefore not large-\(N_r\)
asymptotic bounds for a single arithmetic sum.  They are quantitative,
channel-dependent bounds whose constants become interpolation, curvature, and
leakage budgets in a TV dual certificate.

A second essential feature is support uniformity.  The theorem does not require
a bound at one fixed pair of points.  It requires certified envelopes over
physical range-angle cells:
\[
        \sup_{(q,p)\in\Omega}
        \left|
        \sum_{n=0}^{N_r-1}
        a_n(q,p)
        e^{i(\omega_1(q,p)n+\omega_2(q,p)n^2)}
        \right|.
\]
This distinction is structural.  The certificate must work uniformly for every
support in a prescribed admissible class and for every nonzero amplitude vector
on that support.  Consequently, the analysis must track not only cancellation
at a point in phase space, but cancellation over a continuum of Fresnel
interactions.

QPAC is the resulting finite-aperture quadratic-sum bounds.  It combines
three complementary mechanisms.  The derivative branch is a discrete
nonstationary-phase estimate controlled by lower bounds for
\(\omega_1+(2n+1)\omega_2\) modulo \(2\pi\).  The lag-correlation branch is
related in spirit to Weyl differencing, but it is formulated for
support-uniform finite-aperture certification: after squaring the sum and
splitting into residue classes, it retains the complete residue lag-pair phase,
including both the linear and quadratic contributions, and controls
constructive alignment through interval cosine majorants.  The residue-linear
branch treats near-rational curvature by decomposing the aperture into residue
classes and applying a linear Abel estimate to the remaining oscillation.

Thus the contribution is not an asymptotic exponent-pair theorem, nor a direct
application of a black-box Gauss-sum estimate. It is a
nonasymptotic calculus for incomplete quadratic sums with channel-dependent
coefficient sequences and explicit support-uniform envelopes. Classical bounds control oscillation; the bounds
developed here control the certificate budgets needed for recovery.  These
budgets enter the three analytic tasks in the Fresnel TV certificate:
invertibility of the Hermite interpolation system, strict angular decay near
the support, and off-support leakage control on the far region.  In this sense,
QPAC turns finite quadratic cancellation into a deterministic recovery
principle for near-field super-resolution.

% ============================================================
\subsection{Scalar QPAC branches}
\label{subsec:weighted_qsum_branches_main}
% ============================================================

For \(x\in\mathbb R\), define the distance to the \(2\pi\)-lattice by
\(\dist_{2\pi}(x):=\min_{k\in\mathbb Z}|x-2\pi k|\).
For a phase pair \((\omega_1,\omega_2)\), define the extended derivative
separation
\begin{equation}
\label{eq:dN_plus_main}
        d_N^+(\omega_1,\omega_2)
        :=
        \min_{0\le n\le N-1}
        \dist_{2\pi}
        \bigl(\omega_1+(2n+1)\omega_2\bigr).
\end{equation}
The endpoint \(n=N-1\) is included because the fourth-order Abel summation
argument uses one auxiliary endpoint.  This is the separation quantity used
throughout the QPAC estimates.

We use \(\RS\) for an auxiliary pointwise residue-class split.  The notation
\(\LCS\) is reserved for the support-uniform lag-correlation construction
that expands squared residue sums, retains complete lag-pair phases, and
enters the QPAC support classes.

For \(a\in\mathbb C^N\), define the best-branch scalar QPAC bound
\begin{equation}
\label{eq:Bbest_qpac_main}
\begin{aligned}
     B_{\rm best}^{(4,Q_{\max})}
        (a;\omega_1,\omega_2;N)
        :=
        \min\Bigl\{&
        \|a\|_1,\,
        B_{\rm der}^{(4)}(a;\omega_1,\omega_2;N),\\
        &
        B_{\RS}^{(Q_{\max})}(a;\omega_1,\omega_2;N),\\
        &
        B_{\rm res,lin}^{(Q_{\max})}(a;\omega_1,\omega_2;N)
        \Bigr\}.
\end{aligned}
\end{equation}
The minimum contains the universal \(\ell^1\) cap, two analytic pointwise
cancellation bounds, and one auxiliary residue-split bound.  The derivative
branch applies when the discrete phase increments remain uniformly away from
\(2\pi\mathbb Z\).  The quantity
\(B_{\RS}^{(Q_{\max})}\) is an exact residue decomposition followed by the
triangle inequality across residue classes; it should not be confused with
the support-uniform lag-correlation branch \(\LCS\).  The latter is
constructed separately at the cell level by expanding squared residue sums,
retaining the complete lag-pair phase, and bounding the real lag terms by
interval cosine majorants.  The residue-linear branch treats near-rational
quadratic phases by reducing the residual oscillation on each residue class
to a linear one. Explicit formulas are
collected in \Cref{app:qsum}; the support-uniform cellwise conversion is given
in \Cref{thm:app_cellwise_support_uniform_qpac}.

\begin{defn}[Pointwise residue split for the lag-correlation branch]
\label{def:pointwise_lc}
Fix \(Q\in\{2,\ldots,N\}\).  Write \(n=s+Qm\), set
\(a_m^{(Q,s)}:=a_{s+Qm}\), \(M_s:=\#\{m:s+Qm\le N-1\}\), and define
\(
        \alpha_s:=\omega_1s+\omega_2s^2,~
        \beta_s:=Q\omega_1+2Qs\omega_2,~
        \gamma:=Q^2\omega_2 .
\)
For each residue class,
\(
        T_s^{(Q)}
        :=
        \sum_{m=0}^{M_s-1}
        a_m^{(Q,s)}e^{i(\beta_s m+\gamma m^2)} .
\)
Then
\[
        T_N(a;\omega_1,\omega_2)
        =
        \sum_{s=0}^{Q-1}e^{i\alpha_s}T_s^{(Q)},
        ~
        |T_N(a;\omega_1,\omega_2)|
        \le
        \sum_{s=0}^{Q-1}|T_s^{(Q)}|.
\]
The pointwise residue-split precursor used below is
\[
        B_{\RS}^{(Q_{\max})}(a;\omega_1,\omega_2;N)
        :=
        \min_{2\le Q\le Q_{\max}}
        \sum_{s=0}^{Q-1}|T_s^{(Q)}|.
\]
We use \(Q\ge2\) for nontrivial comparison; allowing \(Q=1\) gives the exact
quantity \(|T_N(a;\omega_1,\omega_2)|\).
\end{defn}

The pointwise bounds introduced above are now collected into a single
estimate.  Each candidate in
\(B_{\rm best}^{(4,Q_{\max})}\) bounds the same weighted quadratic sum, so
taking their minimum preserves the upper-bound property.

\begin{thm}
\label{thm:qpac_sum_bounds_main}
Let \(N\ge9\), and let \(Q_{\max}\in\{2,\ldots,N\}\).  Let \(a\in\mathbb C^N\) satisfy the
fourth-order flat-end condition
\(
        a_0=a_1=a_2=a_3=0,
        ~
        a_{N-4}=a_{N-3}=a_{N-2}=a_{N-1}=0.
\)
Then, for every \((\omega_1,\omega_2)\in\mathbb R^2\),
\begin{equation}
\label{eq:scalar_qpac_bound_main}
        |T_N(a;\omega_1,\omega_2)|
        \le
        B_{\rm best}^{(4,Q_{\max})}
        (a;\omega_1,\omega_2;N).
\end{equation}
\end{thm}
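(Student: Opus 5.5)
Since \(B_{\rm best}^{(4,Q_{\max})}\) is defined as a minimum of four quantities, it suffices to show that each candidate separately bounds \(|T_N(a;\omega_1,\omega_2)|\). A minimum of valid upper bounds is again an upper bound. Candidates equal to \(+\infty\), which occurs for the derivative branch when \(d_N^+=0\), are trivially admissible. I will take the four candidates in order of difficulty.

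The \(\ell^1\) cap is immediate from the triangle inequality, since \(|e^{i(\omega_1 n+\omega_2 n^2)}|=1\). The residue-split bound \(B_{\RS}^{(Q_{\max})}\) follows from \Cref{def:pointwise_lc}. For each \(Q\in\{2,\ldots,Q_{\max}\}\), write \(n=s+Qm\) and expand
\[
\omega_1(s+Qm)+\omega_2(s+Qm)^2=\alpha_s+\beta_s m+\gamma m^2 .
\]
This gives the exact identity \(T_N=\sum_s e^{i\alpha_s}T_s^{(Q)}\). The triangle inequality then bounds \(|T_N|\) by \(\sum_s|T_s^{(Q)}|\), and minimizing over \(Q\) gives the claim.

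The residue-linear branch uses the same decomposition. On each residue class it writes \(e^{i\gamma m^2}=e^{i\gamma' m^2}\,e^{i(\gamma-\gamma')m^2}\), where \(\gamma'\) is a chosen near-rational approximant, or simply treats \(\gamma\) as small modulo \(2\pi\). The slowly varying factor \(a^{(Q,s)}_m e^{i(\gamma-\gamma')m^2}\) is absorbed into the coefficients. The remaining linear sum \(\sum_m c_m e^{i\beta_s' m}\) is then bounded by the Abel/geometric-sum estimate \(|\sum_{m\le k}e^{i\beta m}|\le 1/|\sin(\beta/2)|\), giving \(|T_s^{(Q)}|\le \TV(c)/|\sin(\beta_s'/2)|\) plus the boundary terms. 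I would verify that the explicit formula in \Cref{app:qsum} is exactly this expression; the argument is then a one-line summation by parts per class.

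The derivative branch \(B_{\rm der}^{(4)}\) is the substantive step, and I expect it to be the main obstacle. Write \(\phi(n)=\omega_1 n+\omega_2 n^2\) and \(e_n=e^{i\phi(n)}\). The increment is \(\phi(n+1)-\phi(n)=\omega_1+(2n+1)\omega_2=:\delta_n\), so
\[
e_n=\frac{e_{n+1}-e_n}{e^{i\delta_n}-1}.
\]
By the definition of \(d_N^+\), \(|e^{i\delta_n}-1|\ge 2\sin(d_N^+/2)>0\) for all \(n\le N-1\); this is why the endpoint \(n=N-1\) is included. Substituting this identity and summing by parts transfers a difference onto the coefficient sequence \(a_n/(e^{i\delta_n}-1)\). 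Because \(a_0=\dots=a_3=0\) and \(a_{N-4}=\dots=a_{N-1}=0\), the boundary terms vanish at every stage. Iterating four times gives
\[
|T_N|\le\sum_n|\Delta^{4}\text{-type combination of } a_n\text{ and } (e^{i\delta_n}-1)^{-1}|.
\]
The differences of \((e^{i\delta_n}-1)^{-1}\) involve \(e^{i\delta_{n+1}}-e^{i\delta_n}=e^{i\delta_n}(e^{2i\omega_2}-1)\). This produces the factors \(|\sin\omega_2|\) appearing in \Cref{def:derivative_verification_class_sec2}. The careful part is tracking each index shift so that every denominator stays within \(0\le n\le N-1\), which the condition \(N\ge9\) together with the four zero coefficients at each end permits. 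The other careful part is matching the resulting Leibniz-type expansion to the stated explicit formula for \(B_{\rm der}^{(4)}\). I would first carry out one Abel step to fix the pattern, then induct on the order up to four.
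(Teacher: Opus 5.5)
Your architecture matches the paper's, and your residue-split and derivative-branch steps are the paper's argument. Each entry of the minimum is shown separately to dominate \(|T_N|\). The \(\ell^1\) and residue-split entries come from the triangle inequality and the exact identity \(T_N=\sum_s e^{i\alpha_s}T_s^{(Q)}\). The derivative branch is four flat-end Abel steps with \(\phi_n=(1-e^{i\delta_n})^{-1}\), followed by a fourth-order discrete Leibniz bound.

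For the constants, the paper writes \(z_n=e^{i\delta_n}\) and \(w=e^{2i\omega_2}\). Then every denominator in \(\Delta^r\phi_n\) is some \(1-e^{i\delta_{n+j}}\) with \(n+j\le N-1\). This index control is automatic; \(N\ge9\) only makes the flat-end condition non-vacuous. The closed forms give \(\max_n|\Delta^r\phi_n|\le r!\,|\sin\omega_2|^r/(2s_N^{r+1})\) for \(r=0,\dots,4\), and feeding these into the coefficient recursion reproduces the \(\beta_j\) exactly.

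The step that does not yet prove the stated inequality is the residue-linear branch. There are two concrete problems.

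\textbf{The split-off chirp must lie in \(\pi\mathbb Z\).} You also need to say why it becomes linear. The paper approximates \(\omega_2\approx\pi A/q\) using the same \(q\) as the residue modulus; this branch has its own \(q\in\{1,\dots,Q_{\max}\}\), so \(q=1\) is allowed. On the class \(n=s+qm\), the cross term satisfies \(2qs\omega_2m\equiv2qs\varepsilon_{q,A}m\) modulo \(2\pi\), and the quadratic coefficient is \(q^2\omega_2\equiv\pi Aq+q^2\varepsilon_{q,A}\). The key identity is \(e^{i\pi Aqm^2}=e^{i\pi Aqm}\), which holds because \(m^2-m\) is even. This is where the \(\pi Aq\) in \(\Omega_{q,A,s}\) comes from. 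Only then is \(\sum_m b_me^{i\Omega_{q,A,s}m}\), with \(b_m=a^{(q,s)}_me^{i\nu_{q,A}m^2}\), genuinely a linear sum. A general near-rational \(\gamma'\notin\pi\mathbb Z\) leaves a true quadratic phase. Treating \(\gamma\) as small modulo \(2\pi\) covers only the case \(Aq\) even.

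\textbf{Your Abel step gives a weaker constant than the defined branch.} Bounding partial sums by \(|\sum_{m\le k}e^{i\beta m}|\le1/|\sin(\beta/2)|\) yields \((|b_{M-1}|+\sum_m|b_{m+1}-b_m|)/|\sin(\beta/2)|\). Since \(|b_0|\le|b_{M-1}|+\sum_m|b_{m+1}-b_m|\), this is never smaller than \(V_\nu(b)/(2|\sin(\beta/2)|)\). It is exactly twice that value whenever \(b_0=0\), which happens for the flat-end residue classes. So it proves a weaker inequality than the one with \(B_{\rm res,lin}^{(Q_{\max})}\) as defined.

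The fix is the one-step version of the identity you already use in the derivative branch:
\(e^{i\Omega m}=(e^{i\Omega(m+1)}-e^{i\Omega m})/(e^{i\Omega}-1)\).
Summation by parts then gives
\(|\sum_m b_me^{i\Omega m}|\le V_\nu(b)/|e^{i\Omega}-1|=V_\nu(b)/(2|\sin(\Omega/2)|)\),
with both boundary terms \(|b_0|\) and \(|b_{M-1}|\). Take the per-class minimum with \(\|a^{(q,s)}\|_1\), which also handles \(\sin(\Omega_{q,A,s}/2)=0\). Then minimize over \(1\le q\le Q_{\max}\) and \(0\le A\le 2q-1\). This yields the residue-linear branch exactly, and the theorem follows.
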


\begin{rem}[Normalization is separate from the scalar QPAC estimate]
The scalar estimate in \eqref{eq:scalar_qpac_bound_main} applies to the
inserted coefficient sequence \(a\).  It does not, by itself, know whether
that coefficient sequence came from a normalized kernel.  Thus the unit
Cauchy-Schwarz bound
\(
        |K|,\ |H|,\ |dK|,\ |dH|\le 1
\)
follows separately from the weighted Cauchy-Schwarz inequality for the
normalized channels.  It is not part
of the scalar QPAC theorem.
For the higher derivative channels
\(
        d^2K,\ d^3K,\ d^2H,\ d^3H,
\)
the unit cap is not valid.  These channels contain angular derivative
multipliers, so their natural size may exceed one.  Their fallback caps are
therefore derivative-size Cauchy bounds.
\end{rem}
\subsection{From scalar sums to certificate envelopes}
\label{subsec:scalar_to_certificate_envelopes}

The scalar QPAC estimate is applied to the Fresnel certificate through the
gauged channel reductions.  The normalized Hermite channels \(K,H,dK,dH\) control support interpolation
and first tangent interactions.  The higher evaluation-side angular derivative
channels
\(\partial_\theta^2K,\partial_\theta^3K,\partial_\theta^2H,\partial_\theta^3H\)
control the near-support curvature and Taylor remainder.
For a fixed ordered pair, each channel is a quadratic sum with the fixed taper
\(\rho\) absorbed into a channel-dependent coefficient sequence.  For the recovery theorem, these
pairwise estimates are converted into support-uniform envelopes over physical
support/evaluation cells.  This conversion uses interval enclosures of the
phase variables, tangent-arc hulls for normalized tangent factors, and
dictionary envelopes for the higher derivative profiles.  The complete
support-uniform construction is given in
\Cref{thm:app_cellwise_support_uniform_qpac}.  The resulting envelopes are the
constants entering
\(\eta_{\rm SS},~ \eta_{\rm near}(\varpi_{\rm loc}),~ \eta_{\rm far}(\varpi_{\rm loc}).\)
% ============================================================
\subsection{Physical domain, taper, and tangent normalization}
\label{subsec:physical_domain_taper_main}

We summarize the normalization used by the gauged Hermite channels.  Let
\(
        \alpha:=\tfrac{d}{r},
        ~
        \tau:=\tfrac{d\cos\theta}{r}.
\)
The conservative tangent-parameter hull is
\[
        I_\tau
        =
        \operatorname{hull}
        \left\{
        \tfrac{d\cos\theta}{r}:
        r\in[r_{\min},r_{\max}],
        ~
        \theta\in[\theta_{\min},\theta_{\max}]
        \right\},
\]
where \(\operatorname{hull}(E)\) denotes the smallest closed interval containing
the real set \(E\). Because \(\cos\theta\) may be negative on \((0,\pi)\), interval enclosures for
\(\tau\) are always computed by the full endpoint hull over the relevant
range-angle cell.

Let \(b_n=\rho_n/W_0\), where \(W_0=\sum_n\rho_n>0\), and define
\(
        \bar n=\sum_n b_n\,n,
        ~
        \overline{n^2}=\sum_n b_n\,n^2,
\)
\(
        x_n=n-\bar n,
        ~
        y_n=n^2-\overline{n^2}.
\)
The tangent Gram polynomial is
\[
        q(\tau):=\sum_{n=0}^{N_r-1}b_n(x_n+\tau y_n)^2 .
\]
Assume the taper is tangent-nondegenerate on \(I_\tau\), namely
\(
        q_{\min}
        :=
        \min_{\tau\in I_\tau}q(\tau)
        >0.
\)
Set
\(
        q_{\max}:=\max_{\tau\in I_\tau}q(\tau).
\)
The normalized tangent profile is
\(
       \mathbf h_\tau
:=
\bigl(
h_\tau[0],\ldots,h_\tau[N_r-1]
\bigr)^T,
~
h_\tau[n]
:=
\frac{x_n+\tau y_n}{\sqrt{q(\tau)}},
        ~
        \sum_n b_nh_\tau[n]^2=1.
\)
Writing \(\kappa_0=k_\lambda d\), the physical angular tangent energy is
\(
        \|\partial_\theta\psi_{r,\theta}\|_\rho^2
        =
        (\kappa_0\sin\theta)^2q(\tau).
\)
With
\(
        s_{\min}:=
        \min_{\theta\in[\theta_{\min},\theta_{\max}]}\sin\theta,
\)
the uniform constants used in the main theorem are
\(
      \underline\sigma^2
:=
\kappa_0^2s_{\min}^2q_{\min},
~
\overline\sigma_\theta
:=
\kappa_0\sqrt{q_{\max}}.
\)
The normalized profile \(h_\tau\) is the tangent factor used in the Hermite
channels \(H,dK,dH\).  Full interval details are used in the support-uniform
cell construction in \Cref{thm:app_cellwise_support_uniform_qpac}.

% ============================================================
\subsection{Gauged Fresnel atoms and ordered phase increments}
\label{subsec:gauged_pairwise_coefficients_main}
% ============================================================

Write the ungauged Fresnel atom at \(a=(r_a,\theta_a)\) as
\[
        s_a[n]=e^{i\mu_a n-i\eta_a n^2},
        ~
        \mu_a:=k_\lambda d\cos\theta_a,
        ~
        \eta_a:=\tfrac{k_\lambda d^2}{2r_a}\sin^2\theta_a .
\]
The gauged normalized atom has the form
\(\psi_a[n]=W_0^{-1/2}e^{-i\chi_a}s_a[n]\), where the gauge phase \(\chi_a\)
is independent of \(n\).  It is chosen so
that the angular derivative is orthogonal to the atom:
\(\langle \psi_a,\partial_\theta\psi_a\rangle_\rho=0.\)
With this gauge,
\[
        \partial_\theta\psi_a[n]
        =
        i\widetilde u_a(n)\psi_a[n],
        ~
        \widetilde u_a(n)
        =
        -\kappa_0\sin\theta_a(x_n+\tau_a y_n),
        ~
        \tau_a=\tfrac{d\cos\theta_a}{r_a}.
\]
The normalized gauged tangent atom is
\(
        h_a
        :=
        \tfrac{\partial_\theta\psi_a}
        {\|\partial_\theta\psi_a\|_\rho}.
\)
Since \(\sin\theta_a>0\), the normalized tangent is exactly
\begin{equation}
\label{eq:h_profile_main}
h_a[n]
=
-i\,h_{\tau_a}[n]\psi_a[n]
=
-iW_0^{-1/2}
\frac{x_n+\tau_a y_n}{\sqrt{q(\tau_a)}}
e^{-i\chi_a}s_a[n].
\end{equation}

For two physical points \(u\) and \(v\), define the ordered phase increments
by
\(\overline{s_u[n]}s_v[n] = e^{i(\omega_{1,uv}n+\omega_{2,uv}n^2)}.\)
Since \(s_a[n]=e^{i\mu_a n-i\eta_a n^2}\), we obtain
\begin{equation}
\label{eq:ordered_phase_increments_main}
        \omega_{1,uv}:=\mu_v-\mu_u,
        ~
        \omega_{2,uv}:=\eta_u-\eta_v.
\end{equation}
In the support/evaluation convention used below, \(e\) denotes the evaluation
point and \(s\) denotes the support point.  Thus
\(\omega_1=\mu_s-\mu_e\) and \(\omega_2=\eta_e-\eta_s\).

The four normalized gauged pairwise kernels are
\(K_{uv}:=\langle\psi_u,\psi_v\rangle_\rho, ~ H_{uv}:=\langle\psi_u,h_v\rangle_\rho,\)
and
\(dK_{uv}:=\langle h_u,\psi_v\rangle_\rho, ~ dH_{uv}:=\langle h_u,h_v\rangle_\rho.\)
With the ordering in \eqref{eq:ordered_phase_increments_main}, these kernels
are quadratic sums with the following coefficient families:
\begin{equation}
\label{eq:basic_coeff_families_main}
        a^K_n:=b_n,
\end{equation}
\begin{equation}
\label{eq:H_coeff_family_main}
        a^H_\tau[n]
        :=
        b_nh_\tau[n]
        =
        b_n
        \tfrac{x_n+\tau y_n}{\sqrt{q(\tau)}}.
\end{equation}
Set \(a^{dK}_\tau[n]:=a^H_\tau[n]\), and
\begin{equation}
\label{eq:dH_coeff_family_main}
        a^{dH}_{\tau,\tau'}[n]
        :=
        b_nh_\tau[n]h_{\tau'}[n].
\end{equation}
Writing \(g_{uv}:=e^{i(\chi_u-\chi_v)}\), the exact reductions are
\[
\begin{aligned}
K_{uv}
&=g_{uv}T_{N_r}(a^K;\omega_{1,uv},\omega_{2,uv}),\\
H_{uv}
&=-ig_{uv}T_{N_r}(a^H_{\tau_v};
                  \omega_{1,uv},\omega_{2,uv}),\\
dK_{uv}
&=ig_{uv}T_{N_r}(a^{dK}_{\tau_u};
                 \omega_{1,uv},\omega_{2,uv}),\\
dH_{uv}
&=g_{uv}T_{N_r}(a^{dH}_{\tau_u,\tau_v};
                \omega_{1,uv},\omega_{2,uv}).
\end{aligned}
\]
Because every coefficient sequence is \(b_n=\rho_n/W_0\) multiplied by a
polynomial or a rationally normalized affine expression in \(n\), all four
families inherit the fourth-order flat-end condition from the taper.

% ============================================================
\subsection{Higher angular derivative coefficient families}
\label{subsec:higher_derivative_coefficients_main}
% ============================================================

The higher angular derivative channels are needed to control local angular
variation of the gauged kernel.  Let
\(
        \alpha=\tfrac{d}{r},
        ~
        c=\cos\theta,
        ~
        s_\theta=\sin\theta.
\)
Set
\(
        I_\alpha
        :=
        \left[\tfrac{d}{r_{\max}},\tfrac{d}{r_{\min}}\right].
\)
For \((\alpha,\theta)\in I_\alpha\times[\theta_{\min},\theta_{\max}]\), define
\[
        A_n(\alpha,c):=x_n+\alpha c\,y_n,
        ~
        B_n(\alpha,c):=c\,x_n+\alpha(2c^2-1)y_n,
        ~
        u_{\alpha,\theta}[n]:=-\kappa_0 s_\theta A_n(\alpha,c).
\]
Its first and second angular derivatives are
\[
        u_{\theta,\alpha,\theta}[n]:=-\kappa_0 B_n(\alpha,c),
        ~
        u_{\theta\theta,\alpha,\theta}[n]
        :=
        \kappa_0 s_\theta(x_n+4\alpha c\,y_n).
\]
Since
\(\partial_\theta\psi = i\,u_{\alpha,\theta}\psi,\)
we have
\[
        \partial_\theta^2\psi
        =
        \left(
        i\,u_{\theta,\alpha,\theta}
        -
        u_{\alpha,\theta}^2
        \right)\psi,
\]
and
\[
        \partial_\theta^3\psi
        =
        \left(
        i\,u_{\theta\theta,\alpha,\theta}
        -
        3u_{\alpha,\theta}u_{\theta,\alpha,\theta}
        -
        i\,u_{\alpha,\theta}^3
        \right)\psi.
\]
Because the weighted inner product is conjugate-linear in the first argument,
the coefficient multipliers for second and third evaluation-side derivatives
are
\begin{equation}
\label{eq:Q2_main}
        \widetilde \Pi_2(\alpha,\theta;n)
        :=
        -i\,u_{\theta,\alpha,\theta}[n]
        -
        u_{\alpha,\theta}[n]^2,
\end{equation}
and
\begin{equation}
\label{eq:Q3_main}
        \widetilde \Pi_3(\alpha,\theta;n)
        :=
        -i\,u_{\theta\theta,\alpha,\theta}[n]
        -
        3u_{\alpha,\theta}[n]u_{\theta,\alpha,\theta}[n]
        +
        i\,u_{\alpha,\theta}[n]^3.
\end{equation}
For \(m=2,3\), define the \(K\)-type higher derivative coefficient family by
\begin{equation}
\label{eq:Km_coeff_family_main}
        a^{K,m}_{\alpha,\theta}[n]
        :=
        b_n\widetilde\Pi_m(\alpha,\theta;n).
\end{equation}
For a support-side tangent parameter \(\tau_s\), define the \(H\)-type higher
derivative coefficient family by
\begin{equation}
\label{eq:Hm_coeff_family_main}
       a^{H,m}_{\alpha,\theta;\tau_s}[n]
        :=
        b_nh_{\tau_s}[n]\widetilde\Pi_m(\alpha,\theta;n).
\end{equation}
Then the channels \(d^mK\) and \(d^mH\), for \(m=2,3\), are scalar quadratic
sums with coefficient families
\(a^{K,m}_{\alpha,\theta}, ~ a^{H,m}_{\alpha,\theta;\tau_s}.\)

% ============================================================
\subsection{Pairwise channel bounds}
\label{subsec:actual_pairwise_bounds}
% ============================================================

For a fixed ordered pair \((u,v)\), the physical parameters determine
\(
        \omega_{1,uv},
        ~
        \omega_{2,uv},
        ~
        \tau_u,
        ~
        \tau_v,
        ~
        \alpha_u,
        ~
        \theta_u.
\)
Therefore, for each channel, the corresponding coefficient sequence is fixed.
The coefficient sequences are
\(
        a^K,~
        a^H_{\tau_v},~
        a^{dK}_{\tau_u},~
        a^{dH}_{\tau_u,\tau_v},
\)
and, for the higher angular derivative channels,
\(
        a^{K,2}_{\alpha_u,\theta_u},~
        a^{K,3}_{\alpha_u,\theta_u},~
        a^{H,2}_{\alpha_u,\theta_u;\tau_v},~
        a^{H,3}_{\alpha_u,\theta_u;\tau_v}.
\)
For any one of these coefficient sequences \(a\), the exact pairwise channel
magnitude is
\(\left| T_{N_r}(a;\omega_{1,uv},\omega_{2,uv}) \right|.\)
The corresponding analytical pairwise upper bound is
\(B_{\rm best}^{(4,Q_{\max})} (a;\omega_{1,uv},\omega_{2,uv};N_r).\)
This quantity is pointwise in the physical pair \((u,v)\).  It is not a
support-uniform bound.  The distinction is important: the pairwise bound only
needs to hold at one fixed physical configuration, whereas a support-uniform
bound must hold simultaneously over a continuum of configurations.  Therefore
the pairwise bound can be much tighter than the corresponding support-uniform
cell envelope.

\subsection{Certification of the QPAC constants}
\label{subsec:qpac_recovery_constants_main}

The constants entering the QPAC recovery number were defined in
\Cref{subsec:explicit_qpac_number_sec2}.  We now explain how the scalar
quadratic-sum estimates above certify those constants.

Support-support quantities are certified uniformly over the prescribed
support class
\(
        \varnothing\ne\mathfrak S_L
        \subseteq
        \mathfrak S_L^{\rm SS}
        (\mathbf u_{\rm SS};\mathscr E).
\)
For each ordered support-support pair and each channel
\(X\in\{K,H,dK,dH\}\), the best valid QPAC branch gives a certified
upper bound for that ordered interaction.  Any certified upper bounds on the
corresponding support-uniform suprema may be used as
\(
        u_K^{\rm SS},~
        u_H^{\rm SS},~
        u_{dK}^{\rm SS},~
        u_{dH}^{\rm SS},
\)
which form the Hermite envelope matrix \(\mathbf G_{\rm SS}\).
Second-derivative bounds at the support determine
\(E_{\rm curv}\) and \(m_{\rm near}\).
On the complete near sets, bounds for the sums of
derivative-channel magnitudes give
\(D_2(\varpi_{\rm loc})\) and \(D_3(\varpi_{\rm loc})\).
On the far region, bounds for the corresponding sums of
\(K\)- and \(H\)-channel magnitudes give
\(\eta_{\rm far}(\varpi_{\rm loc})\).
In each sum, all source contributions are evaluated at one
common point before taking the supremum.
The channelwise quantities \(g_K^{\rm far}\) and
\(g_H^{\rm far}\) provide an optional coarser estimate.

\begin{rem}[Cellwise certification of suprema]
The constants above are continuum quantities over support/evaluation classes.
A rigorous numerical instantiation may obtain the required cellwise
enclosures from interval phase boxes, certified lower bounds for branch
denominators, tangent-parameter hulls for the normalized tangent families,
and validated enclosures for the higher angular derivative coefficient
families.
\end{rem}
\begin{lem}[Direct common-point continuum enclosure]
\label{lem:direct_common_point_enclosure}
Fix nonnegative constants \(\Gamma_K,\Gamma_H\), a support
\(S=\{p_\ell\}_{\ell=1}^{L}\), an evaluation range row \(i\), and a compact
angular interval \(J\). For \(a\in\{0,2\}\),
with \(\partial_\theta^0X:=X\), define
\[
F_a(\theta)
:=
\sum_{\ell=1}^{L}
\left[
\Gamma_K
\left|\partial_\theta^aK((i,\theta),p_\ell)\right|
+
\Gamma_H
\left|\partial_\theta^aH((i,\theta),p_\ell)\right|
\right].
\]
Let \(T=\{\theta_m\}\subset J\) have fill distance
\[
h_T
:=
\sup_{\theta\in J}
\min_{\theta_m\in T}|\theta-\theta_m|.
\]
If
\[
L_a\ge
\sum_{\ell=1}^{L}
\left[
\Gamma_K
\sup_{\theta\in J}
\left|\partial_\theta^{a+1}K((i,\theta),p_\ell)\right|
+
\Gamma_H
\sup_{\theta\in J}
\left|\partial_\theta^{a+1}H((i,\theta),p_\ell)\right|
\right],
\]
then
\[
\sup_{\theta\in J}F_a(\theta)
\le
\max_{\theta_m\in T}F_a(\theta_m)+h_TL_a.
\]
For a uniform grid containing both endpoints of \(J\), with maximal
consecutive-node spacing \(h\), one has \(h_T\le h/2\). Without endpoint
inclusion, the exact fill distance \(h_T\) defined above must be used.
\end{lem}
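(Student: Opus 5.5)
The plan is a Lipschitz-on-a-grid argument applied to $F_a$, termwise. First I note the regularity available. For fixed $\ell$, the maps $\theta\mapsto K((i,\theta),p_\ell)$ and $\theta\mapsto H((i,\theta),p_\ell)$ are finite sums over $n\in[N_r]$. Each term is of the form (a smooth real-valued function of $\theta$) times $e^{i\phi_n(\theta)}$. The gauge phase $\chi_i(\theta)$, the Fresnel phase, and the normalized tangent factor $h_{\tau}$ are all real-analytic in $\theta$ on the compact interval $J\subset\Theta\subset(0,\pi)$. Here $q(\tau)\ge q_{\min}>0$ keeps $\sqrt{q(\tau)}$ smooth; note that in $H$ only the support-side $\tau_\ell$ appears, which is fixed. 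Hence $\partial_\theta^aK$ and $\partial_\theta^aH$ are $C^1$ on $J$ for $a\in\{0,2\}$, and their derivatives $\partial_\theta^{a+1}K$, $\partial_\theta^{a+1}H$ are continuous, so the suprema appearing in the hypothesis on $L_a$ are finite or at least well defined.

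Next I use the elementary fact that the modulus of a $C^1$ complex function is Lipschitz with constant $\sup|g'|$. Indeed, for $\theta,\theta'\in J$,
\[
\bigl||g(\theta)|-|g(\theta')|\bigr|\le|g(\theta)-g(\theta')|=\Bigl|\int_{\theta'}^{\theta}g'(t)\,dt\Bigr|\le|\theta-\theta'|\sup_{J}|g'|.
\]
The segment lies in $J$ because $J$ is an interval. I apply this with $g=\partial_\theta^aK((i,\cdot),p_\ell)$ and with $g=\partial_\theta^aH((i,\cdot),p_\ell)$. I then multiply by the nonnegative weights $\Gamma_K,\Gamma_H$ and sum over $\ell$. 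This gives $|F_a(\theta)-F_a(\theta')|\le L_a|\theta-\theta'|$. This step is where the non-differentiability of the modulus could be an obstacle, but the reverse triangle inequality sidesteps it entirely, so no smoothness of $|g|$ is needed.

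Finally, fix $\theta\in J$ and choose $\theta_m\in T$ attaining $\min_m|\theta-\theta_m|\le h_T$. The minimum is attained if $T$ is finite; otherwise I use any $\theta_m$ within $h_T+\epsilon$ and let $\epsilon\to0$. Then $F_a(\theta)\le F_a(\theta_m)+L_ah_T\le\max_{T}F_a+h_TL_a$, and taking the supremum over $\theta$ gives the claim.

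For the grid remark: if $T$ is uniform, contains both endpoints of $J$, and has consecutive spacing at most $h$, then every $\theta\in J$ lies in some $[\theta_m,\theta_{m+1}]$ of length at most $h$. So the nearer endpoint is within $h/2$, and $h_T\le h/2$. Without the endpoints included, a point near the boundary of $J$ may be farther than $h/2$ from every node, which is why the exact $h_T$ must then be used.
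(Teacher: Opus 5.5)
Your proposal is correct and matches the paper's proof: the reverse triangle inequality plus the mean-value bound $|f(\theta)-f(\vartheta)|\le|\theta-\vartheta|\sup_J|f'|$, applied termwise to $F_a$, summed with the nonnegative weights $\Gamma_K,\Gamma_H$, and followed by choosing a nearest node. Your additional remarks on the smoothness of the evaluation-side channels, on a possibly non-attained minimum when $T$ is infinite, and on the bound $h_T\le h/2$ fill in details the paper leaves implicit.
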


\begin{proof}
For every differentiable scalar function \(f\),
\[
\bigl||f(\theta)|-|f(\vartheta)|\bigr|
\le
|f(\theta)-f(\vartheta)|
\le
|\theta-\vartheta|
\sup_{u\in J}|f'(u)|.
\]
Apply this inequality to every kernel term in \(F_a\), sum the
resulting estimates, and choose a nearest node
\(\theta_m\in T\).
\end{proof}

% ============================================================
% ============================================================
% ============================================================
\section{Finite-harmonic lifting for Fresnel dual localization}
\label{sec:lifted_model}
% ============================================================
We now construct the finite-harmonic model used in the numerical section. It
approximates the Fresnel atoms uniformly and yields a semidefinite dual
problem. Throughout this section, the range variable is discretized on
\(\Rset=\{r_1,\ldots,r_{N_d}\}\), whereas the angle remains continuous on
\(\Thetaset\Subset(0,\pi)\).
% We use the complex Frobenius pairing
% \[
%         \langle \mathbf X,\mathbf Y\rangle_F
%         :=
%         \operatorname{trace}(\mathbf Y^H\mathbf X),
% \]
% which is linear in the first argument.

% ============================================================
\subsection{Finite-harmonic Fresnel representation}
\label{subsec:finite_harmonic_fresnel_representation}
% ============================================================

Let \(P_{\rm JA},Q_{\rm JA}\in\mathbb Z_{\ge0}\) be the two Jacobi-Anger
truncation orders used in the Bessel-Vandermonde Fresnel expansion.
Each product \(e^{ip\theta}e^{i2q\theta}\) gives an angular harmonic
\(m=p+2q\), where
\(
        |p|\le P_{\rm JA},
        ~
        |q|\le Q_{\rm JA}.
\)
Set
\(
        I:=P_{\rm JA}+2Q_{\rm JA},
        ~
        N_h:=2I+1,
\)
and define the Vandermonde vector
\[
        \mathbf v_I(\theta)
        :=
        \bigl(e^{-iI\theta},e^{-i(I-1)\theta},
        \ldots,e^{iI\theta}\bigr)^T
        \in\mathbb C^{N_h}.
\]
Repeated Bessel-product frequencies are aggregated.  Thus the finite Fresnel
atom can be written as
\begin{equation}
\label{eq:fresnel_finite_harmonic_expansion}
        a_{\rm Fr}^{P_{\rm JA},Q_{\rm JA}}(r_i,\theta)[n]
        =
        \sum_{m=-I}^{I}
        C_{n,i,m}^{\rm Fr,P_{\rm JA},Q_{\rm JA}} e^{im\theta},
        ~ n=0,\ldots,N_r-1,
\end{equation}
where \(C_{n,i,m}^{\rm Fr,P_{\rm JA},Q_{\rm JA}}=0\) if the frequency \(m\) is not generated by the
truncated Bessel-product expansion.

For each range bin \(r_i\), define the lifted atom
\[
        \mathbf A_i(\theta)
        :=
        \mathbf e_i\mathbf v_I(\theta)^H
        \in\mathbb C^{N_d\times N_h},
\]
where \(\mathbf e_i\) is the \(i\)-th canonical vector in
\(\mathbb C^{N_d}\).  The finite-harmonic Fresnel expansion is represented by
matrices
\(\boldsymbol\Phi_{n,{\rm Fr}}^{P_{\rm JA},Q_{\rm JA}} \in\mathbb C^{N_d\times N_h}, ~ n=0,\ldots,N_r-1,\)
chosen so that
\begin{equation}
\label{eq:fresnel_lifted_approximation_sec4}
    \aFR(r_i,\theta)[n]
        =
        \left\langle
        \boldsymbol\Phi_{n,{\rm Fr}}^{P_{\rm JA},Q_{\rm JA}},
        \mathbf A_i(\theta)
        \right\rangle_F
        +
        e_{n,i,{\rm Fr}}^{P_{\rm JA},Q_{\rm JA}}(\theta).
\end{equation}
The truncation error, controlled by the Bessel-tail estimate proved in the
appendix, satisfies the uniform bound
\begin{equation}
\label{eq:fresnel_lifted_uniform_error_sec4}
        \max_{0\le n\le N_r-1}
        \max_{1\le i\le N_d}
        \sup_{\theta\in\Theta}
        \left|
        e_{n,i,{\rm Fr}}^{P_{\rm JA},Q_{\rm JA}}(\theta)
        \right|
        \le
        \Delta_{P_{\rm JA},Q_{\rm JA}}^{\rm Fr},
        ~
        \Delta_{P_{\rm JA},Q_{\rm JA}}^{\rm Fr}\to0
        \quad\text{as }P_{\rm JA},Q_{\rm JA}\to\infty .
\end{equation}
Consequently,
\[
\sup_{1\le i\le N_d}\sup_{\theta\in\Theta}
\left\|
\mathcal B_{\rm Fr}^{P_{\rm JA},Q_{\rm JA}}\mathbf A_i(\theta)
-
\mathbf a_{\rm Fr}(r_i,\theta)
\right\|_2
\le
\sqrt{N_r}\,
\Delta_{P_{\rm JA},Q_{\rm JA}}^{\rm Fr}
\longrightarrow0.
\]
Thus the finite-harmonic measurement atoms converge uniformly to the Fresnel
measurement atoms.

% ============================================================
\subsection{Lifted atomic norm and finite Fresnel operator}
\label{subsec:lifted_atomic_norm_operator}
% ============================================================

A semi-discrete measure on \(\Rset\times\Thetaset\) is written as
\(\nu = \sum_{i=1}^{N_d}\delta_{r_i}\otimes\mu_i,\)
where each \(\mu_i\) is a finite complex measure on \(\Theta\).  Its lifted
matrix is
\begin{equation}
\label{eq:lifted_measure_matrix}
        \mathbf X_\nu
        :=
        \sum_{i=1}^{N_d}
        \int_{\Theta}
        \mathbf A_i(\theta)\,d\mu_i(\theta)
        \in\mathbb C^{N_d\times N_h}.
\end{equation}

The finite Fresnel lifted operator is
\begin{equation}
\label{eq:finite_fresnel_lifted_operator}
           \bigl(\mathcal B_{\rm Fr}^{P_{\rm JA},Q_{\rm JA}}\mathbf X\bigr)[n]
        :=
        \left\langle
        \boldsymbol\Phi_{n,{\rm Fr}}^{P_{\rm JA},Q_{\rm JA}},
        \mathbf X
        \right\rangle_F,
        ~ n=0,\ldots,N_r-1 .
\end{equation}
In particular,
\(\mathcal B_{\rm Fr}^{P_{\rm JA},Q_{\rm JA}}\mathbf A_i(\theta) = \mathbf a_{\rm Fr}^{P_{\rm JA},Q_{\rm JA}}(r_i,\theta),\)
where \(\mathbf a_{\rm Fr}^{P_{\rm JA},Q_{\rm JA}}\) denotes the finite-harmonic Fresnel atom.

Assume throughout this lifted construction that
\(
        \operatorname{span}
        \left\{
        \mathbf v_I(\theta):
        \theta\in\Theta
        \right\}
        =
        \mathbb C^{N_h}.
\)
This condition holds, in particular, whenever \(\Theta\) contains a
nondegenerate angular interval.  Under this assumption, the atomic gauge
defined below is a norm on
\(\mathbb C^{N_d\times N_h}\). The lifted atomic norm associated with the atoms
\(\{\mathbf A_i(\theta):1\le i\le N_d,\ \theta\in\Theta\}\)
is
\begin{equation}
\label{eq:lifted_atomic_norm}
        \|\mathbf X\|_{\mathcal A}
        :=
        \inf
        \left\{
        \sum_k |c_k|:
        \mathbf X
        =
        \sum_k c_k\mathbf A_{i_k}(\theta_k)
        \right\}.
\end{equation}

Given data \(\mathbf y\in\mathbb C^{N_r}\), the noiseless lifted atomic
program is
\begin{equation}
\label{eq:lifted_primal_noiseless}
        \min_{\mathbf X}
        \|\mathbf X\|_{\mathcal A}
        ~
        \text{subject to}
        ~
        \mathcal B_{\rm Fr}^{P_{\rm JA},Q_{\rm JA}}\mathbf X=\mathbf y .
\end{equation}
When measurement noise or finite-harmonic truncation error is explicitly
allowed, we use the robust version
\begin{equation}
\label{eq:lifted_primal_noisy}
        \min_{\mathbf X}
        \|\mathbf X\|_{\mathcal A}
        ~
        \text{subject to}
        ~
       \|\mathcal B_{\rm Fr}^{P_{\rm JA},Q_{\rm JA}}\mathbf X-\mathbf y\|_2\le\varepsilon ,
\end{equation}
where \(\varepsilon\ge0\) is a prescribed tolerance for measurement noise,
finite-harmonic truncation error, or both.
% ============================================================
\subsection{Lifted dual polynomial and SDP feasibility}
\label{subsec:lifted_dual_polynomial_sdp}
% ============================================================

For
\[
        \boldsymbol\lambda
        =
        (\lambda_0,\ldots,\lambda_{N_r-1})^T
        \in\mathbb C^{N_r},
\]
define
\begin{equation}
\label{eq:lifted_dual_matrix}
        \mathbf Z_{\boldsymbol\lambda}
        :=
        (\mathcal B_{\rm Fr}^{P_{\rm JA},Q_{\rm JA}})^*\boldsymbol\lambda
        =
        \sum_{n=0}^{N_r-1}
        \lambda_n\boldsymbol\Phi_{n,{\rm Fr}}^{P_{\rm JA},Q_{\rm JA}}.
\end{equation}
The lifted dual polynomial on the \(i\)-th range row is
\begin{equation}
\label{eq:lifted_dual_polynomial}
 p_i^{\boldsymbol\lambda}(\theta)
        :=
        \left\langle
        \mathbf A_i(\theta),
        \mathbf Z_{\boldsymbol\lambda}
        \right\rangle_F
        =
        \mathbf e_i^H
        \mathbf Z_{\boldsymbol\lambda}
        \mathbf v_I(\theta).
\end{equation}
With the conjugate-linear-in-the-first convention, this order is the one that
matches the Fresnel TV dual convention
\[
        Q_i^{\rm Fr}(\theta)
        =
        \sum_{n=0}^{N_r-1}
        \zeta_n\overline{a_{\rm Fr}(r_i,\theta)[n]} .
\]
The modulus constraint is unchanged if the conjugate polynomial is used, but
this convention keeps the phase interpolation consistent.

The dual feasibility condition is
\begin{equation}
\label{eq:lifted_dual_feasibility}
        \sup_{1\le i\le N_d}
        \sup_{\theta\in\Theta}
        |p_i^{\boldsymbol\lambda}(\theta)|
        \le 1 .
\end{equation}
For the noiseless program \eqref{eq:lifted_primal_noiseless}, the dual problem
is
\begin{equation}
\label{eq:lifted_dual_noiseless_sec4}
        \max_{\boldsymbol\lambda\in\mathbb C^{N_r}}
        \operatorname{Re}\langle \boldsymbol\lambda,\mathbf y\rangle
        ~
        \text{subject to}
        ~
        \sup_{1\le i\le N_d}\sup_{\theta\in\Theta}|p_i^{\boldsymbol\lambda}(\theta)|\le1 .
\end{equation}
For the robust program \eqref{eq:lifted_primal_noisy}, the dual objective
becomes
\begin{equation}
\label{eq:lifted_dual_noisy}
        \max_{\boldsymbol\lambda\in\mathbb C^{N_r}}
        \operatorname{Re}\langle \boldsymbol\lambda,\mathbf y\rangle
        -
        \varepsilon\|\boldsymbol\lambda\|_2
        ~
        \text{subject to}
        ~
        \sup_{1\le i\le N_d}\sup_{\theta\in\Theta}|p_i^{\boldsymbol\lambda}(\theta)|\le1 .
\end{equation}

The constraint \eqref{eq:lifted_dual_feasibility} is a bounded
trigonometric-polynomial constraint on the physical angular set \(\Theta\).
After zero-filling missing harmonics, each row polynomial has frequencies in
\(\{-I,\ldots,I\}\).  If the modulus constraint is imposed on the full circle
\(\mathbb T\), then
\(
        \sup_{\theta\in\mathbb T}
        |p_i^{\boldsymbol\lambda}(\theta)|
        \le 1
\)
has the standard Fej\'er-Riesz Toeplitz semidefinite representation.  Since
the physical angular set satisfies \(\Theta\subset(0,\pi)\), the full-circle
constraint is conservative for the physical problem.  For a single closed angular arc, the exact restricted constraint can be
represented using the corresponding localizing Toeplitz constraint.  For a
finite union of closed arcs, the corresponding collection of localizing
constraints is required.  In the
numerical experiments we use the full-circle SDP relaxation and search for
near-unit peaks only on the physical angular interval \(\Theta\).

% ============================================================
\subsection{range-angle extraction from lifted dual peaks}
\label{subsec:range_angle_extraction_lifted_dual}
% ============================================================

The lifted dual polynomial provides the localization rule. Peaks of
\(
        |p_i^{\boldsymbol\lambda}(\theta)|
\)
near one identify candidate active atoms. The row index \(i\) gives the range
estimate \(r_i\), and the angular peak location gives the angle estimate.

The dual-saturation argument has the same form for the restricted-angle and
full-circle atom sets and for the noiseless and robust fidelity constraints.
Under strong duality, every atom carrying nonzero mass in a norm-attaining
decomposition of a primal optimizer saturates the corresponding dual modulus
constraint and interpolates the phase of its coefficient. This is an
optimality statement about the support of the lifted optimizer. Equality
between that support and the ground-truth physical support requires a separate
uniqueness or stability argument; in the numerical experiment, that agreement
is observed numerically. At finite \(P_{\rm JA},Q_{\rm JA}\), exact Fresnel
data and finite-harmonic lifted data need not coincide unless the truncation
error vanishes.

\begin{prop}
\label{prop:lifted_dual_saturation}
Let \(\varepsilon\ge0\), and consider the robust lifted program
\eqref{eq:lifted_primal_noisy} and its dual
\eqref{eq:lifted_dual_noisy}. When \(\varepsilon=0\), these reduce to the
noiseless programs
\eqref{eq:lifted_primal_noiseless} and
\eqref{eq:lifted_dual_noiseless_sec4}.
Assume that the primal problem is feasible and that strong duality holds.

Let \(\widehat{\mathbf X}\) be a primal optimizer with a norm-attaining
atomic decomposition
\[
        \widehat{\mathbf X}
        =
        \sum_{\ell=1}^{\widehat L}
        \widehat c_\ell
        \mathbf A_{\widehat i_\ell}(\widehat\theta_\ell),
        \qquad
        \|\widehat{\mathbf X}\|_{\mathcal A}
        =
        \sum_{\ell=1}^{\widehat L}|\widehat c_\ell|,
\]
and let \(\widehat{\boldsymbol\lambda}\) be a dual optimizer. Define
\[
        p_i^{\widehat{\boldsymbol\lambda}}(\theta)
        :=
        \left\langle
        \mathbf A_i(\theta),
        (\mathcal B_{\rm Fr}^{P_{\rm JA},Q_{\rm JA}})^*
        \widehat{\boldsymbol\lambda}
        \right\rangle_F .
\]
Then every active atom in the norm-attaining decomposition satisfies
\[
        p_{\widehat i_\ell}^{\widehat{\boldsymbol\lambda}}
        (\widehat\theta_\ell)
        =
        \frac{\widehat c_\ell}{|\widehat c_\ell|},
        \qquad
        \ell=1,\ldots,\widehat L .
\]
In particular,
\[
        \left|
        p_{\widehat i_\ell}^{\widehat{\boldsymbol\lambda}}
        (\widehat\theta_\ell)
        \right|
        =1,
        \qquad
        \ell=1,\ldots,\widehat L .
\]
Hence every atom in every norm-attaining atomic decomposition of
\(\widehat{\mathbf X}\) belongs to the unit-modulus saturation set of the
lifted dual polynomial.

The same conclusion holds for the full-circle formulation obtained by
replacing \(\Theta\) by \(\mathbb T\) in the atomic norm and dual constraint.
\end{prop}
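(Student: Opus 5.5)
The plan is the standard complementary-slackness argument for atomic-norm programs, with both strong duality and primal feasibility taken as given. Write \(\widehat{\mathbf Z}:=(\mathcal B_{\rm Fr}^{P_{\rm JA},Q_{\rm JA}})^*\widehat{\boldsymbol\lambda}\). By definition of the adjoint and of the dual polynomial, for every atom we have
\[
\langle \widehat{\boldsymbol\lambda},\mathcal B_{\rm Fr}^{P_{\rm JA},Q_{\rm JA}}\mathbf A_i(\theta)\rangle
=\langle \widehat{\mathbf Z},\mathbf A_i(\theta)\rangle_F
=\overline{p_i^{\widehat{\boldsymbol\lambda}}(\theta)} .
\]
I will check this conjugation bookkeeping carefully against the convention in \eqref{eq:lifted_dual_polynomial}. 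Expanding \(\widehat{\mathbf X}\) through its norm-attaining decomposition and using linearity in the second argument then gives
\[
\operatorname{Re}\langle\widehat{\boldsymbol\lambda},\mathcal B_{\rm Fr}^{P_{\rm JA},Q_{\rm JA}}\widehat{\mathbf X}\rangle
=\sum_{\ell=1}^{\widehat L}\operatorname{Re}\bigl(\widehat c_\ell\,\overline{p_{\widehat i_\ell}^{\widehat{\boldsymbol\lambda}}(\widehat\theta_\ell)}\bigr)
\le\sum_{\ell}|\widehat c_\ell|=\|\widehat{\mathbf X}\|_{\mathcal A}.
\]
The inequality is termwise, and it uses only the dual feasibility \eqref{eq:lifted_dual_feasibility}.

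Next I will use strong duality to turn this inequality into an equality. The optimal values coincide, so
\[
\|\widehat{\mathbf X}\|_{\mathcal A}
=\operatorname{Re}\langle\widehat{\boldsymbol\lambda},\mathbf y\rangle-\varepsilon\|\widehat{\boldsymbol\lambda}\|_2 .
\]
Write \(\mathbf y=\mathcal B\widehat{\mathbf X}+(\mathbf y-\mathcal B\widehat{\mathbf X})\), where \(\mathcal B\) is shorthand for \(\mathcal B_{\rm Fr}^{P_{\rm JA},Q_{\rm JA}}\). Primal feasibility gives \(\|\mathbf y-\mathcal B\widehat{\mathbf X}\|_2\le\varepsilon\), and Cauchy--Schwarz then gives
\[
\operatorname{Re}\langle\widehat{\boldsymbol\lambda},\mathbf y\rangle-\varepsilon\|\widehat{\boldsymbol\lambda}\|_2
\le\operatorname{Re}\langle\widehat{\boldsymbol\lambda},\mathcal B\widehat{\mathbf X}\rangle .
\]
In the noiseless case \(\varepsilon=0\) this is simply an identity. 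Chaining the two displays, we get \(\|\widehat{\mathbf X}\|_{\mathcal A}\le\operatorname{Re}\langle\widehat{\boldsymbol\lambda},\mathcal B\widehat{\mathbf X}\rangle\le\|\widehat{\mathbf X}\|_{\mathcal A}\). Hence every inequality in the chain is an equality.

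The last step extracts termwise saturation. For each \(\ell\) we have \(\operatorname{Re}(\widehat c_\ell\overline{p_\ell})\le|\widehat c_\ell||p_\ell|\le|\widehat c_\ell|\), and the sum of these inequalities is tight. So each one is tight, which forces \(|p_\ell|=1\) and \(\widehat c_\ell\overline{p_\ell}=|\widehat c_\ell|\), since \(\widehat c_\ell\neq0\) (zero terms can be discarded from the decomposition). This gives \(p_{\widehat i_\ell}^{\widehat{\boldsymbol\lambda}}(\widehat\theta_\ell)=\widehat c_\ell/|\widehat c_\ell|\). The full-circle statement follows verbatim, with \(\Theta\) replaced by \(\mathbb T\) in both the atom set and the dual constraint.

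I expect the only real obstacle to be sign and conjugation conventions. I need to make sure the dual objective \(\operatorname{Re}\langle\boldsymbol\lambda,\mathbf y\rangle\), which is conjugate-linear in \(\boldsymbol\lambda\), pairs with the dual polynomial defined as \(\langle\mathbf A_i,\mathbf Z\rangle_F\) so that the interpolated phase is \(\operatorname{sgn}(\widehat c_\ell)\) and not its conjugate. If the bookkeeping turns out to produce the conjugate, I will recheck the adjoint identity for the Frobenius pairing with respect to the stated inner product. Also, only the decomposition is assumed norm-attaining, not unique; the argument above never uses uniqueness, so it covers every norm-attaining decomposition at once.
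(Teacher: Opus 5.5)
Your proposal is correct and follows essentially the same route as the paper's proof. Both use the strong-duality chain $\|\widehat{\mathbf X}\|_{\mathcal A}\le\operatorname{Re}\langle\widehat{\boldsymbol\lambda},\mathcal B\widehat{\mathbf X}\rangle\le\|\widehat{\mathbf X}\|_{\mathcal A}$, obtained from primal feasibility and Cauchy--Schwarz, followed by termwise saturation of $\operatorname{Re}\bigl(\widehat c_\ell\,\overline{p_{\widehat i_\ell}^{\widehat{\boldsymbol\lambda}}(\widehat\theta_\ell)}\bigr)\le|\widehat c_\ell|$; your identity $\langle\widehat{\boldsymbol\lambda},\mathcal B\mathbf A_i(\theta)\rangle=\overline{p_i^{\widehat{\boldsymbol\lambda}}(\theta)}$ matches the paper's conjugation conventions, so the interpolated phase is indeed $\widehat c_\ell/|\widehat c_\ell|$, and your remark that zero coefficients must be discarded makes explicit a point the paper leaves implicit.
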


The phase-completed atom set
\[
\mathcal A_\circ
:=
\left\{
e^{i\phi}\mathbf A_i(\theta):
\phi\in[0,2\pi],\
i=1,\ldots,N_d,\
\theta\in\Theta
\right\}
\]
is compact. Its convex hull is the unit ball of
\(\|\cdot\|_{\mathcal A}\). Therefore finite-dimensional compactness and
Carathéodory's theorem yield a finite norm-attaining atomic decomposition.

After the dual SDP is solved, the practical estimated support is read from the
near-saturation set
\begin{equation}
\label{eq:lifted_near_saturation_set}
   \widehat{\mathcal S}_{\tau_{\rm sat}}
:=
\left\{
(r_i,\theta):
1\le i\le N_d,\ \theta\in\Theta,\ 
|p_i^{\widehat{\boldsymbol\lambda}}(\theta)|
\ge
1-\tau_{\rm sat}
\right\},
\end{equation}
where \(\tau_{\rm sat}>0\) is a numerical saturation tolerance.  Exact saturation is guaranteed for every atom in a norm-attaining
decomposition of the corresponding primal optimizer, in both the noiseless
and robust formulations under strong duality. The near-saturation tolerance is
used to accommodate finite solver accuracy and numerical peak localization.
Saturation identifies atoms of the optimizer; it does not by itself establish
equality with the ground-truth support.

Once the support has been estimated, write
\(
        \widehat S
        =
        \{(\widehat i_\ell,\widehat\theta_\ell)\}_{\ell=1}^{\widehat L},
\)
and form the lifted sensing matrix
\(
        \mathbf D_{\widehat S}
        :=
        \bigl[
        \mathcal B_{\rm Fr}^{P_{\rm JA},Q_{\rm JA}}
        \mathbf A_{\widehat i_1}(\widehat\theta_1)
        \ \cdots\
        \mathcal B_{\rm Fr}^{P_{\rm JA},Q_{\rm JA}}
        \mathbf A_{\widehat i_{\widehat L}}
        (\widehat\theta_{\widehat L})
        \bigr].
\)
The amplitudes are then estimated by
\(
        \widehat{\mathbf c}
        =
        \mathbf D_{\widehat S}^{\dagger}\mathbf y.
\)
If \(\mathbf D_{\widehat S}\) has full column rank, this becomes
\(
        \widehat{\mathbf c}
        =
        \left(
        \mathbf D_{\widehat S}^{H}\mathbf D_{\widehat S}
        \right)^{-1}
        \mathbf D_{\widehat S}^{H}\mathbf y.
\)

% ============================================================
\subsection{Connection with the Fresnel QPAC certificate}
\label{subsec:connection_qpac_lifting}
% ============================================================

Let
\[
        \mathcal S_\star
        =
        \{(i_\ell,\theta_\ell)\}_{\ell=1}^{L}
\]
be the true range-angle support on the range grid, with nonzero amplitudes
\(c_\ell\).  The QPAC theorem provides an ideal Fresnel TV dual certificate of
the form
\begin{equation}
\label{eq:fresnel_qpac_dual_polynomial}
        Q_i^{\rm Fr}(\theta)
        =
        \sum_{n=0}^{N_r-1}
        \zeta_n
        \overline{
        a_{\rm Fr}(r_i,\theta)[n]} .
\end{equation}
With the complex measure-pairing convention used throughout the paper, the
interpolation conditions are
\[
        Q_{i_\ell}^{\rm Fr}(\theta_\ell)
        =
        \tfrac{c_\ell}{|c_\ell|},
        ~
        \ell=1,\ldots,L,
\]
and the strict off-support condition is
\[
        |Q_i^{\rm Fr}(\theta)|<1
        ~
        \text{for }(i,\theta)\notin\mathcal S_\star .
\]
This strict inequality is the dual-certificate mechanism behind uniqueness in
the semi-discrete Fresnel TV problem.

For finite Jacobi-Anger truncation orders
\(P_{\rm JA},Q_{\rm JA}\), define the lifted polynomial associated
with the same QPAC vector \(\boldsymbol\zeta\) by
\begin{equation}
\label{eq:lifted_qpac_same_zeta}
        p_{i,\boldsymbol\zeta}^{P_{\rm JA},Q_{\rm JA}}(\theta)
        :=
        \left\langle
        \mathbf A_i(\theta),
        (\mathcal B_{\rm Fr}^{P_{\rm JA},Q_{\rm JA}})^*
        \boldsymbol\zeta
        \right\rangle_F .
\end{equation}
Equivalently,
\[
        p_{i,\boldsymbol\zeta}^{P_{\rm JA},Q_{\rm JA}}(\theta)
        =
        \sum_{n=0}^{N_r-1}
        \zeta_n
        \overline{
        \left(
        \mathcal B_{\rm Fr}^{P_{\rm JA},Q_{\rm JA}}\mathbf A_i(\theta)
        \right)[n]} .
\]
To compare the exact Fresnel certificate with its finite-harmonic
counterpart, we evaluate both polynomials using the same dual vector
\(\boldsymbol\zeta\).  The uniform truncation bound then controls
the difference between the two polynomials over every range bin and every
admissible angle.

\begin{prop}
\label{prop:lifted_fresnel_dual_perturbation}
Assume that the finite-harmonic approximation
\eqref{eq:fresnel_lifted_approximation_sec4} satisfies the uniform pointwise
error bound \eqref{eq:fresnel_lifted_uniform_error_sec4}.  Then, for every
\(\boldsymbol\zeta\in\C^{N_r}\),
\[
        \sup_{1\le i\le N_d}\sup_{\theta\in\Theta}
        \left|
        p_{i,\boldsymbol\zeta}^{P_{\rm JA},Q_{\rm JA}}(\theta)
        -
        Q_i^{\rm Fr}(\theta)
        \right|
        \le
        \sqrt{N_r}
        \|\boldsymbol\zeta\|_2
        \Delta_{P_{\rm JA},Q_{\rm JA}}^{\rm Fr}.
\]
\end{prop}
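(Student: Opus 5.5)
The plan is to compare the two dual polynomials term by term, using the same dual vector \(\boldsymbol\zeta\). The equivalent form of \eqref{eq:lifted_qpac_same_zeta} recorded just before the statement gives \(p_{i,\boldsymbol\zeta}^{P_{\rm JA},Q_{\rm JA}}(\theta)=\sum_n\zeta_n\overline{(\mathcal B_{\rm Fr}^{P_{\rm JA},Q_{\rm JA}}\mathbf A_i(\theta))[n]}\). I will verify this form once from the adjoint definition \eqref{eq:lifted_dual_matrix} and the Frobenius convention \(\langle\mathbf X,\mathbf Y\rangle_F=\operatorname{trace}(\mathbf X^H\mathbf Y)\). Concretely, \(\langle\mathbf A_i(\theta),\sum_n\zeta_n\boldsymbol\Phi_n\rangle_F=\sum_n\zeta_n\langle\mathbf A_i(\theta),\boldsymbol\Phi_n\rangle_F=\sum_n\zeta_n\overline{\langle\boldsymbol\Phi_n,\mathbf A_i(\theta)\rangle_F}\), and by \eqref{eq:finite_fresnel_lifted_operator} the inner product \(\langle\boldsymbol\Phi_n,\mathbf A_i(\theta)\rangle_F\) is exactly \((\mathcal B_{\rm Fr}^{P_{\rm JA},Q_{\rm JA}}\mathbf A_i(\theta))[n]\).

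Next I subtract \eqref{eq:fresnel_qpac_dual_polynomial}. By \eqref{eq:fresnel_lifted_approximation_sec4}, the lifted atom entry and the Fresnel atom entry differ only by the truncation error, so
\[
p_{i,\boldsymbol\zeta}^{P_{\rm JA},Q_{\rm JA}}(\theta)-Q_i^{\rm Fr}(\theta)
=-\sum_{n=0}^{N_r-1}\zeta_n\overline{e_{n,i,{\rm Fr}}^{P_{\rm JA},Q_{\rm JA}}(\theta)} .
\]

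I then apply Cauchy–Schwarz in \(\mathbb C^{N_r}\). This bounds the modulus of the difference by \(\|\boldsymbol\zeta\|_2\bigl(\sum_n|e_{n,i,{\rm Fr}}^{P_{\rm JA},Q_{\rm JA}}(\theta)|^2\bigr)^{1/2}\). The uniform pointwise bound \eqref{eq:fresnel_lifted_uniform_error_sec4} then gives \(\sum_n|e_{n,i,{\rm Fr}}^{P_{\rm JA},Q_{\rm JA}}(\theta)|^2\le N_r(\Delta_{P_{\rm JA},Q_{\rm JA}}^{\rm Fr})^2\). This bound is uniform in \(i\) and in \(\theta\in\Theta\), so taking suprema yields the claim.

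No step is a genuine obstacle. The only point needing care is the conjugation bookkeeping in the first step: it must produce the conjugate of the atom, matching \(Q_i^{\rm Fr}\), and it rests on the conjugate-linear-in-the-first-argument convention. Even a conjugation slip would not change the modulus bound, since \(|\overline{z}|=|z|\).
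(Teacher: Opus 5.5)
Your proposal is correct and follows the paper's own proof: bound each entry of \(\mathcal B_{\rm Fr}^{P_{\rm JA},Q_{\rm JA}}\mathbf A_i(\theta)-\mathbf a_{\rm Fr}(r_i,\theta)\) by \(\Delta_{P_{\rm JA},Q_{\rm JA}}^{\rm Fr}\), hence its \(\ell_2\) norm by \(\sqrt{N_r}\,\Delta_{P_{\rm JA},Q_{\rm JA}}^{\rm Fr}\), and then apply Cauchy--Schwarz against \(\boldsymbol\zeta\). Your explicit check of the adjoint and conjugation bookkeeping, which leads to \(-\sum_n\zeta_n\overline{e_{n,i,{\rm Fr}}^{P_{\rm JA},Q_{\rm JA}}(\theta)}\), is also correct; the paper leaves this step implicit.
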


\begin{proof}
The pointwise approximation error gives
\[
        \left|
        \left(
        \mathcal B_{\rm Fr}^{P_{\rm JA},Q_{\rm JA}}\mathbf A_i(\theta)
        \right)[n]
        -
        a_{\rm Fr}(r_i,\theta)[n]
        \right|
        \le
        \Delta_{P_{\rm JA},Q_{\rm JA}}^{\rm Fr}
\]
for every \(0\le n\le N_r-1\), \(1\le i\le N_d\), and \(\theta\in\Theta\).  Therefore
\[
        \left\|
        \mathcal B_{\rm Fr}^{P_{\rm JA},Q_{\rm JA}}\mathbf A_i(\theta)
        -
        a_{\rm Fr}(r_i,\theta)
        \right\|_2
        \le
        \sqrt{N_r}\Delta_{P_{\rm JA},Q_{\rm JA}}^{\rm Fr}.
\]
Cauchy-Schwarz yields the stated bound.
\end{proof}

Define the computable perturbation level \( \eta_{P_{\rm JA},Q_{\rm JA}}
        :=
        \sqrt{N_r}
        \|\boldsymbol\zeta\|_2
        \Delta_{P_{\rm JA},Q_{\rm JA}}^{\rm Fr}.\)
Then
\(
        \sup_{1\le i\le N_d}\sup_{\theta\in\Theta}
        \left|
        p_{i,\boldsymbol\zeta}^{P_{\rm JA},Q_{\rm JA}}(\theta)
        -
        Q_i^{\rm Fr}(\theta)
        \right|
        \le
        \eta_{P_{\rm JA},Q_{\rm JA}}.
\)

\begin{rem}[QPAC margin and finite-harmonic peak persistence]
Fix \(\delta>0\), and let
\(
       \mathcal N_\delta(\mathcal S_\star)
:=
\bigcup_{\ell=1}^L
\{(i_\ell,\theta)\in\mathcal Q:
|\theta-\theta_\ell|\le\delta\}.
\)
Define the ideal QPAC margin outside this neighborhood by
\(
        \gamma_{\rm QPAC}(\delta)
        :=
        1-
       \sup_{(i,\theta)\in\mathcal Q\setminus
\mathcal N_\delta(\mathcal S_\star)}
        |Q_i^{\rm Fr}(\theta)|.
\)
If
\(
        \gamma_{\rm QPAC}(\delta)>0
        ~\text{and}~
        \eta_{P_{\rm JA},Q_{\rm JA}}<\tfrac12\gamma_{\rm QPAC}(\delta),
\)
then the finite-harmonic polynomial generated by the QPAC vector
\(\boldsymbol\zeta\) preserves the separation of the ideal Fresnel certificate.
At the true support,
\(
        |p_{i_{\ell},\boldsymbol\zeta}^{P_{\rm JA},Q_{\rm JA}}(\theta_{\ell})|
        \ge
        1-\eta_{P_{\rm JA},Q_{\rm JA}},
        ~
        \ell=1,\ldots,L,
\)
whereas outside \(\mathcal N_\delta(\mathcal S_\star)\),
\(
        |p_{i,\boldsymbol\zeta}^{P_{\rm JA},Q_{\rm JA}}(\theta)|
        \le
        1-\gamma_{\rm QPAC}(\delta)+\eta_{P_{\rm JA},Q_{\rm JA}}.
\)
Hence any threshold
\(
        T\in
        \left(
        1-\gamma_{\rm QPAC}(\delta)+\eta_{P_{\rm JA},Q_{\rm JA}},
        \,
        1-\eta_{P_{\rm JA},Q_{\rm JA}}
        \right)
\)
selects only points inside \(\mathcal N_\delta(\mathcal S_\star)\).  The
interval is nonempty because
\(
        \eta_{P_{\rm JA},Q_{\rm JA}}<\tfrac12\gamma_{\rm QPAC}(\delta).
\)
This statement concerns the finite-harmonic polynomial generated by the QPAC
dual vector \(\boldsymbol\zeta\).  The SDP-computed dual polynomial at finite
\(P_{\rm JA},Q_{\rm JA}\) need not equal this particular polynomial.  The QPAC perturbation
estimate explains the limiting certificate geometry as the finite-harmonic
model approaches the exact Fresnel model, while
\Cref{prop:lifted_dual_saturation} gives the contact-set optimality condition
for the corresponding computed lifted problem. Equality with the
ground-truth support is not asserted by dual saturation alone.
\end{rem}
% ============================================================
% ============================================================
\section{Numerical QPAC evaluations and computational illustrations}
\label{sec:numerics}

We give two numerical evaluations of the QPAC condition in
\Cref{thm:main_exact_recovery_sec2}. The first uses the derivative
branch for a continuously varying two-source class, with one
source on each of two range rows. The second uses lag
correlation for a fixed two-range support at a common bearing.

For each example, we report the three quantities entering
\[
\mathfrak C_{\rm QPAC}
=
\max\{\eta_{\rm SS},\eta_{\rm near},\eta_{\rm far}\}.
\]
All local and far quantities are evaluated numerically at the stated
localization radius. The continuous domains are treated using analytical cell
bounds or grids with analytical padding. The reported values are MATLAB
double-precision evaluations, rounded for display. 

Throughout these two examples,
\(
c_0=3\times10^8\,{\rm m/s},
~
\lambda=c_0/f_c,
~
d=\lambda/2.
\)
The computations are reproduced by the provided scripts
in \cite{Daei2026NearFieldRepro}. The subsequent experiments
illustrate the lifted computational model and the channel
bounds.

\subsection{Derivative-route QPAC example}
\label{sec:numerical_derivative_qpac_example}

\begin{example}
\label{ex:numerical_connected_derivative_qpac}
Let \(N_r=128\), \(f_c=10\,{\rm GHz}\), and
\(
\mathcal R=\{r_1,r_2\}
=\{10\,{\rm m},100\,{\rm m}\}.
\)
Use the fourth-order flat-end taper
\(
\rho_n=\binom{n}{4}\binom{127-n}{4},
~ n=0,\ldots,127,
\)
and set \(b_n=\rho_n/\sum_m\rho_m\). Let
\(
\Theta_{\rm ex}
=\left[\frac{\pi}{2}-0.401,\frac{\pi}{2}+0.401\right],
\)
with support windows
\(
I_-=
\left[\frac{\pi}{2}-0.401,\frac{\pi}{2}-0.399\right],
~
I_+=
\left[\frac{\pi}{2}+0.399,\frac{\pi}{2}+0.401\right].
\)
The prescribed class and evaluation domain are
\[
\mathfrak S_{2,{\rm win}}^{\rm der}
=
\left\{
\{(1,\theta_-),(2,\theta_+)\}:
\theta_-\in I_-,\ \theta_+\in I_+
\right\},
\qquad
\mathcal Q_{\rm ex}=\{1,2\}\times\Theta_{\rm ex}.
\]
Thus the two angles vary independently, while each support
contains one point on each range row. Set
\(
\varpi_{\rm loc}=0.012,~
d_{\rm SS}=1.9,~
s_{2,{\rm SS}}^{\max}=0.002.
\)
On both ordered support cells, the phase bounds give
\(
d_{128}^{+}>1.981>d_{\rm SS},
~
|\sin\omega_2|
<1.802\times10^{-3}<s_{2,{\rm SS}}^{\max}.
\)
Hence the derivative branch supplies uniform support-support
bounds for all four Hermite channels.

The near and far budgets are computed as sums at a common
evaluation point. Each support window uses 101 grid points;
the near and far evaluation grids use 601 and 4001 points
per range row, respectively. The grid maxima are enlarged
using uniform angular derivative bounds for both the source
and evaluation variables. In the far calculation, the
same-row source obeys the angular exclusion defining
\(\mathcal F(S;\varpi_{\rm loc})\); the source on the other
range row is included at every evaluation angle. Thus the
reported budgets cover the continuous class and both
evaluation rows.

\Cref{tab:numerical_connected_derivative_qpac_values}
gives conservative decimal bounds for the calculation. In
particular, the computed QPAC recovery number is below one
for the prescribed support class. The calculation is
reproduced by
\nolinkurl{run_derivative_example_qpac.m}.
\end{example}

\begin{table}[t]
\centering
\begin{tabular}{lc}
\hline
Quantity & Conservative reported bound\\
\hline
\(\eta_{\rm SS}\) & \(<2.6\times10^{-5}\)\\
\(m_{\rm near}\) & \(>4.97\times10^3\)\\
\(D_2\) & \(<3.40\times10^3\)\\
\(D_3\) & \(<2.69\times10^5\)\\
\(\eta_{\rm near}\) & \(<0.601\)\\
\(\eta_{\rm far}\) & \(<0.828\)\\
\(\mathfrak C_{\rm QPAC}\) & \(<0.828<1\)\\
\hline
\end{tabular}
\caption{Derivative-route QPAC evaluation for two independently
varying angles on distinct range rows, with
\(\varpi_{\rm loc}=0.012\). The near and far values include
analytic bounds between grid points.}
\label{tab:numerical_connected_derivative_qpac_values}
\end{table}

\subsection{Correlation-route QPAC example}
\label{sec:numerical_range_lcs_qpac_example}

\begin{example}[Two ranges at a common bearing]
\label{ex:numerical_range_lcs_qpac}
Let \(N_r=256\), \(f_c=100\,{\rm GHz}\), and
\(\mathcal R=\{r_1,r_2\}\), where
\(
r_1\approx2.81\,{\rm m},
\qquad
r_2\approx89.73\,{\rm m}.
\)
Set
\(
\theta_0=\frac{\pi}{2},~
\Theta_{\rm LCS}=[\theta_0-0.15,\theta_0+0.15],
~
\mathcal Q_{\rm LCS}=\{1,2\}\times\Theta_{\rm LCS}.
\)
We consider the fixed support
\(
S_\star=\{(1,\theta_0),(2,\theta_0)\},
~
\mathfrak S_{2,\rm range}^{\LCS}=\{S_\star\}.
\)
The source locations are fixed, while the evaluation angle
varies continuously on both range rows.
For this support,
\(
d_{N_r}^{+}\approx8.1174\times10^{-4}<0.1,
\)
so the selected derivative threshold is not met.
We instead use the lag-correlation bound for support
interpolation and choose
\(
\varpi_{\rm loc}=0.0066.
\)

\Cref{tab:numerical_range_lcs_qpac_values} reports the computed QPAC
quantities. All three reported values are below one, with
\[
\mathfrak C_{\rm QPAC}\approx0.825990<1.
\]
Thus the example illustrates the QPAC sufficient condition
for two sources at a common bearing and different ranges.
The bounds do not depend on their nonzero complex amplitudes.
The calculation is reproduced by
\nolinkurl{run_lcs_example_qpac.m}.
\end{example}

\begin{table}[t]
\centering
\begin{tabular}{lc}
\hline
Quantity & Numerical value\\
\hline
\(\eta_{\rm SS}\) & \(0.079408\)\\
\(\eta_{\rm near}\) & \(0.825990\)\\
\(\eta_{\rm far}\) & \(0.780100\)\\
\(\mathfrak C_{\rm QPAC}\) & \(0.825990\)\\
\hline
\end{tabular}
\caption{Correlation-route QPAC evaluation for the fixed
two-range support, with \(\varpi_{\rm loc}=0.0066\).}
\label{tab:numerical_range_lcs_qpac_values}
\end{table}

Having illustrated the QPAC sufficient condition, we now
turn to localization with the finite-harmonic computational
model.
\subsection{Lifted dual-polynomial localization}
We next solve the robust finite-harmonic program and localize the sources from
its dual polynomial. This experiment illustrates
\Cref{prop:lifted_fresnel_dual_perturbation,prop:lifted_dual_saturation}; it is
separate from the numerical evaluations of the QPAC recovery number above.
We use a uniform linear aperture with
\(N_r=16,~ \lambda=0.3\,{\rm m},~ d=0.031831\,{\rm m}.\)
The lifted SDP and the synthetic Fresnel observations use the untapered
physical measurement aperture.  For the separate gauged Hermite certificate
and QPAC channel-envelope diagnostics, we use the fourth-order binomial
flat-end taper
\[
        \rho_n
        =
        \binom{n}{4}\binom{N_r-1-n}{4},
        ~ n=0,\ldots,N_r-1,
\]
normalized only through the weighted inner products by
\(b_n=\tfrac{\rho_n}{\sum_m\rho_m}.\)
This taper is therefore not part of the lifted measurement operator or the SDP
data-fidelity term; it is used only in the gauged-certificate and QPAC
channel-bound diagnostics.

The aperture length is
\(D_{\rm ap}=(N_r-1)d=0.477465\,{\rm m},\)
and the dimensionless aperture size is
\(k_\lambda D_{\rm ap}=10.\)
The Jacobi-Anger truncation orders are
\(
        P_{\rm JA}=20,~ Q_{\rm JA}=8.
\)
Hence the aggregated harmonic half-bandwidth and harmonic dimension are
\(
        I=P_{\rm JA}+2Q_{\rm JA}=36,
        ~
        N_h=2I+1=73.
\)
The number of retained \((p,q)\) pairs is \(697\), and the full dual SDP
block dimension is \(592\).
The range grid contains \(N_d=8\) logarithmically spaced bins,
\(
r_i
=
1.24340\,12^{(i-1)/7}\ {\rm m},
~
i=1,\ldots,8.
\)
Thus
\(
r_{\min}=1.24340\,{\rm m},
~
r_{\max}=14.9208\,{\rm m}.
\)
Equivalently, the Fresnel curvature parameter
\(
        z_2(r_i)
        =
        \frac{k_\lambda D_{\rm ap}^2}{4r_i}
\)
varies over
\(
        0.08\le z_2(r_i)\le0.96.
\)
Thus the range grid spans a factor of \(12\) in range while the selected
Jacobi-Anger orders keep the finite-harmonic approximation error small.

The scene consists of two point sources at the active range-grid locations
\(
(r_{i_1},\theta_1)
=
(2.52899\,{\rm m},0.94248),
~
(r_{i_2},\theta_2)
=
(7.33591\,{\rm m},2.35619),
\)
with complex amplitudes
\(
c_1=-1.00147-0.34340i,
~
c_2=1.15019+0.13302i.
\)
Synthetic observations are generated from the semi-discrete Fresnel model,
while the convex program uses the finite Jacobi-Anger lifted approximation.
There is no additive measurement noise in this experiment.  The deterministic
Jacobi-Anger truncation bound is therefore used as the robustness radius of
the lifted dual problem.
Let \(\mathbf y_{\rm Fr}\) denote the data vector generated by the
semi-discrete Fresnel measurement model, and let
\(\mathbf y_{\rm JA}\) denote the data vector generated by the finite Jacobi-Anger
lifted approximation at the same source parameters.  Their relative mismatch is
\(
        \frac{\|\mathbf y_{\rm Fr}-\mathbf y_{\rm JA}\|_2}
        {\|\mathbf y_{\rm Fr}\|_2}
        =
        3.00862\times10^{-7}.
\)
The corresponding absolute mismatch is
\(
        \|\mathbf y_{\rm Fr}-\mathbf y_{\rm JA}\|_2
        =
        1.83321\times10^{-6}.
\)

By \eqref{eq:Fresnel_operator_error_bv}, the oracle-calibrated
Jacobi-Anger data-mismatch radius used in this synthetic experiment is
\(
        \varepsilon_{\rm JA}
        :=
        \sqrt{N_r}\,
        \Delta_{P_{\rm JA},Q_{\rm JA}}^{\rm Fr}
        \|\nu_\star\|_{\TV,\mathcal Q}.
\)
For the present synthetic scene,
\(
        \varepsilon_{\rm JA}
        =
        1.75056\times10^{-5},
        ~
        \frac{\varepsilon_{\rm JA}}
        {\|\mathbf y_{\rm Fr}\|_2}
        =
        2.87297\times10^{-6}.
\)
Thus the deterministic truncation radius exceeds the realized
Fresnel-to-lifted data mismatch.

\Cref{fig:dual-polynomial-fresnel-lift} shows the magnitude of the
recovered lifted dual polynomial over the range-angle grid.  Up to
permutation, the two dominant peaks occur at
\(
(\widehat r_1,\widehat\theta_1)
=
(2.52899\,{\rm m},0.942),
~
(\widehat r_2,\widehat\theta_2)
=
(7.33591\,{\rm m},2.356).
\)
After jointly pairing the recovered range-angle points with the true
support, the maximum angular error is
\(
        4.77796\times10^{-4}\ {\rm rad},
\)
and both semi-discrete range bins are recovered exactly.  A least-squares
refit on the paired recovered support gives
\(
        \widehat c_1
        =
        -1.00209-0.34131i,
        ~
        \widehat c_2
        =
        1.15059+0.13236i.
\)
These results illustrate the role of the lifted dual polynomial: its
near-unit peaks identify the active range bins and localize the continuous
angular variables.
The figure and the reported peak locations, support errors, amplitude refit,
and model-mismatch diagnostics are reproduced by the script
\nolinkurl{run_lifted_dual_figure1_repro.m} in the reproducibility
archive \cite{Daei2026NearFieldRepro}.

\begin{figure}[t]
    \centering
    \includegraphics[width=0.78\linewidth]{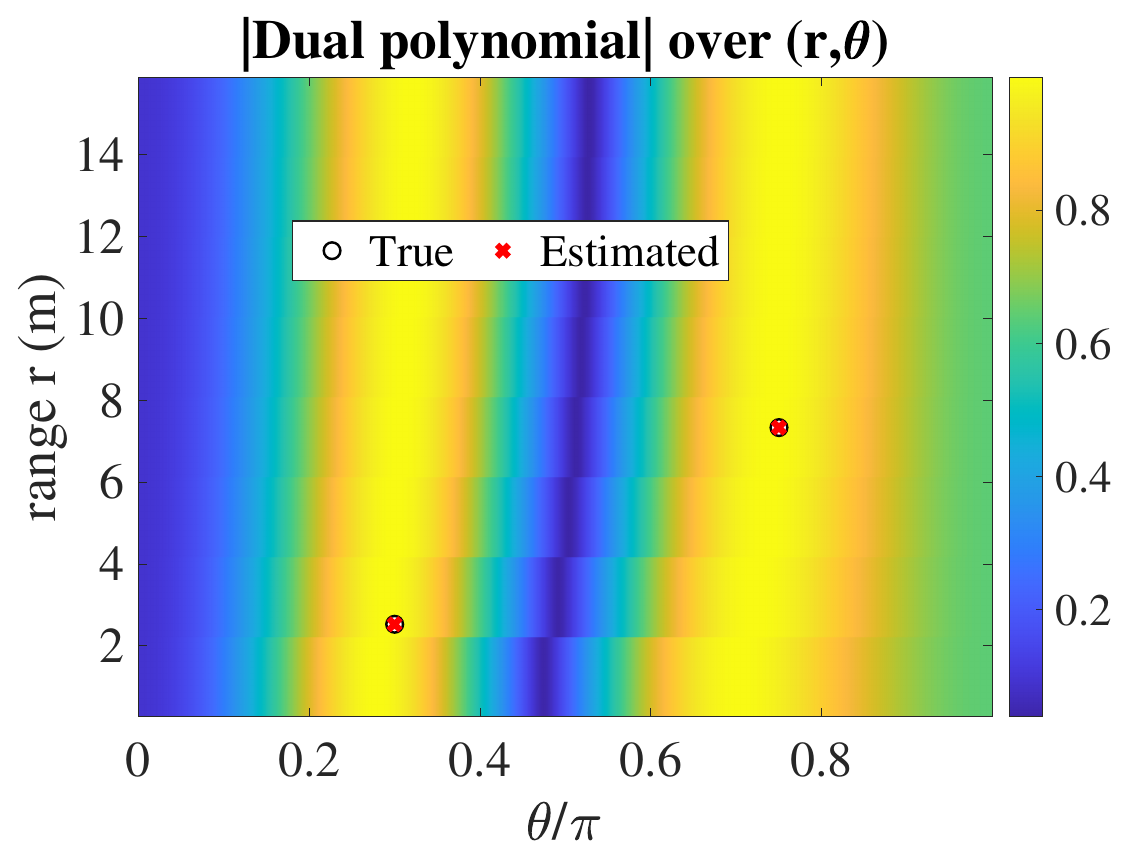}
    \caption{
  Magnitude of the lifted dual polynomial for a two-source scene generated
by the quadratic Fresnel phase model.  The dominant peaks occur near the
true range-angle support.  The maximum paired angular error is
\(4.77796\times10^{-4}\) rad, and both semi-discrete range bins are
recovered exactly.  This figure is reproduced by
\texttt{run\_lifted\_dual\_figure1\_repro.m} in the reproducibility
archive \cite{Daei2026NearFieldRepro}.
    }
    \label{fig:dual-polynomial-fresnel-lift}
\end{figure}

\subsection{QPAC channel envelopes and higher-derivative control}
\label{subsec:numerics_qpac_channel_envelopes}

The final experiment visualizes the passage from one scalar quadratic-sum
estimate to the support-uniform constants used in the recovery theorem.  For
a fixed support point and an evaluation slice, we compare three quantities:
the exact gauged channel magnitude, the pointwise QPAC bound evaluated at the
actual phase pair, and the cellwise QPAC envelope valid simultaneously over
the corresponding physical parameter cell.  The comparison shows both the
validity of the pointwise cancellation estimate and the additional margin
required to make it uniform over a support/evaluation class.
For a fixed support point
\(p_\star=(r_\star,\theta_\star)\)
and an evaluation point \(p=(r,\theta)\), we compare exact channel magnitudes
with two deterministic upper bounds.  The first is the pointwise QPAC bound
evaluated at the actual phase pair
\((\omega_1(p,p_\star),\omega_2(p,p_\star))\).
The second is the corresponding cellwise support-uniform QPAC envelope,
obtained by enforcing the bound simultaneously for every physical
support/evaluation pair in the prescribed cell. The
comparison illustrates the additional conservatism introduced by uniform
cellwise certification.

The first four channels are the normalized Hermite channels
\(K,~ H,~ dK,~ dH.\)
They determine the stability of the Hermite interpolation system.  Here
\(dK\) and \(dH\) denote normalized angular tangent channels; equivalently,
\(\partial_\theta K\) and \(\partial_\theta H\) are obtained from them by
multiplication with the local tangent norm.  These normalized channels satisfy
the weighted Cauchy-Schwarz cap
\(|K|,\ |H|,\ |dK|,\ |dH|\le 1.\)
The higher derivative channels
\(
        \partial_\theta^2K,~
        \partial_\theta^2H,~
        \partial_\theta^3K,~
        \partial_\theta^3H
\)
enter the near-support part of the proof.  The second derivatives determine
the certified negative curvature of the dual certificate at the support, while
the third derivatives control the Taylor remainder for
\(|P_{S,\widetilde {\mathbf v}}(i_\ell,\theta_\ell+t)|^2.\)
Unlike the normalized Hermite channels, these higher derivative channels are
not unit-bounded.  Differentiation introduces aperture-polynomial factors, so
their natural fallback estimates are derivative-size Cauchy bounds rather than
the unit cap.

\Cref{fig:cpam-angle-all-channels} shows the angular slices.  The first row
contains the normalized Hermite channels from the original comparison, while
the second row contains the higher derivative channels needed for the local
curvature and Taylor estimates.  \Cref{fig:cpam-range-all-channels} shows the
corresponding range slices, again keeping the original \(K,H,dK,dH\) panels and
adding the higher derivative channels.
For each channel, the cellwise curve is the object entering
\Cref{thm:app_cellwise_support_uniform_qpac}.  Maximizing these envelopes
over the support-support pairs, complete near sets, and far-region classes produces,
respectively, the constants used in
\(
        \eta_{\rm SS},~
        \eta_{\rm near}(\varpi_{\rm loc}),~
        \eta_{\rm far}(\varpi_{\rm loc}).
\)
The figures therefore display the complete analytical path from one
quadratic phase interaction to the three budgets of the main recovery
theorem: \(K,H,dK,dH\) control Hermite interpolation, the second derivatives
control local curvature, and the third derivatives control the Taylor
remainder. The angular and range slice figures are reproduced by the script
\texttt{run\_bounds\_true\_kernel\_repro.m} in the reproducibility archive
\cite{Daei2026NearFieldRepro}.

\begin{figure}[t]
    \centering

    % ------ Row 1 ----------
    \begin{subfigure}[t]{0.31\textwidth}
        \centering
        \includegraphics[width=\linewidth]{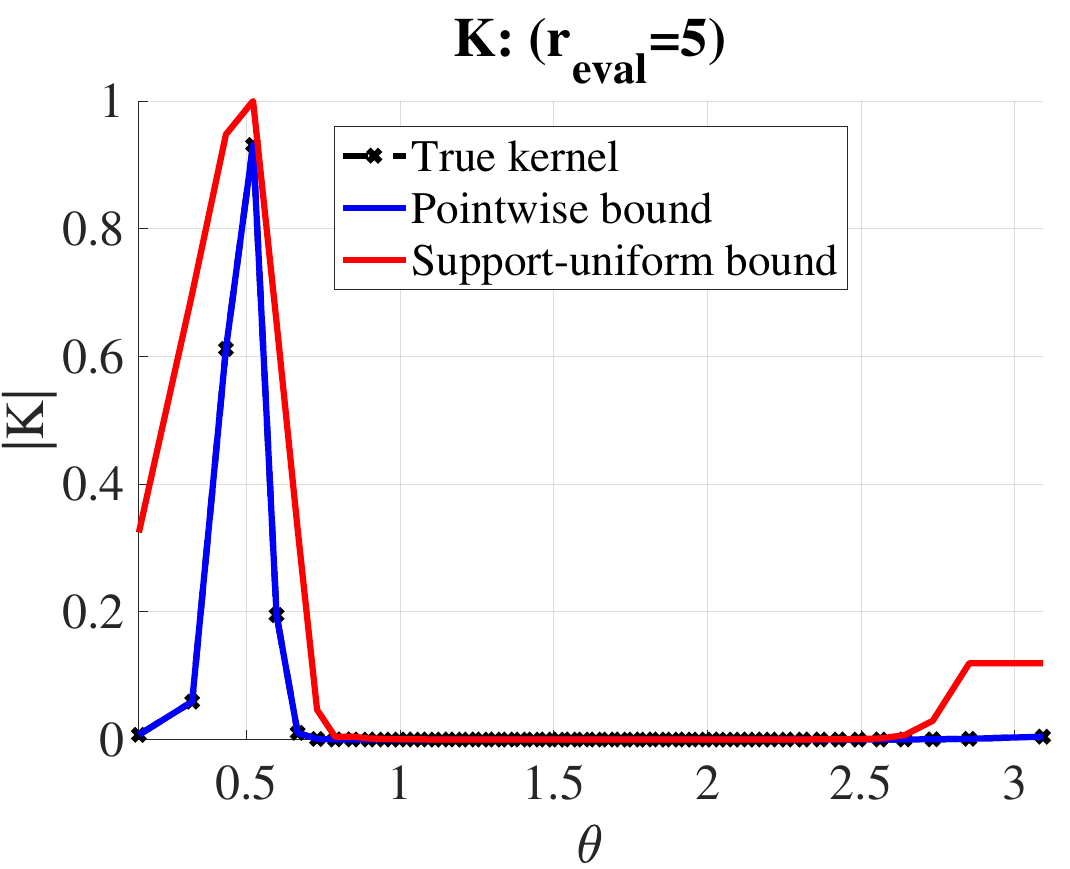}
        \caption{\(K\)}
        \label{fig:cpam-angle-K}
    \end{subfigure}
    \hfill
    \begin{subfigure}[t]{0.31\textwidth}
        \centering
        \includegraphics[width=\linewidth]{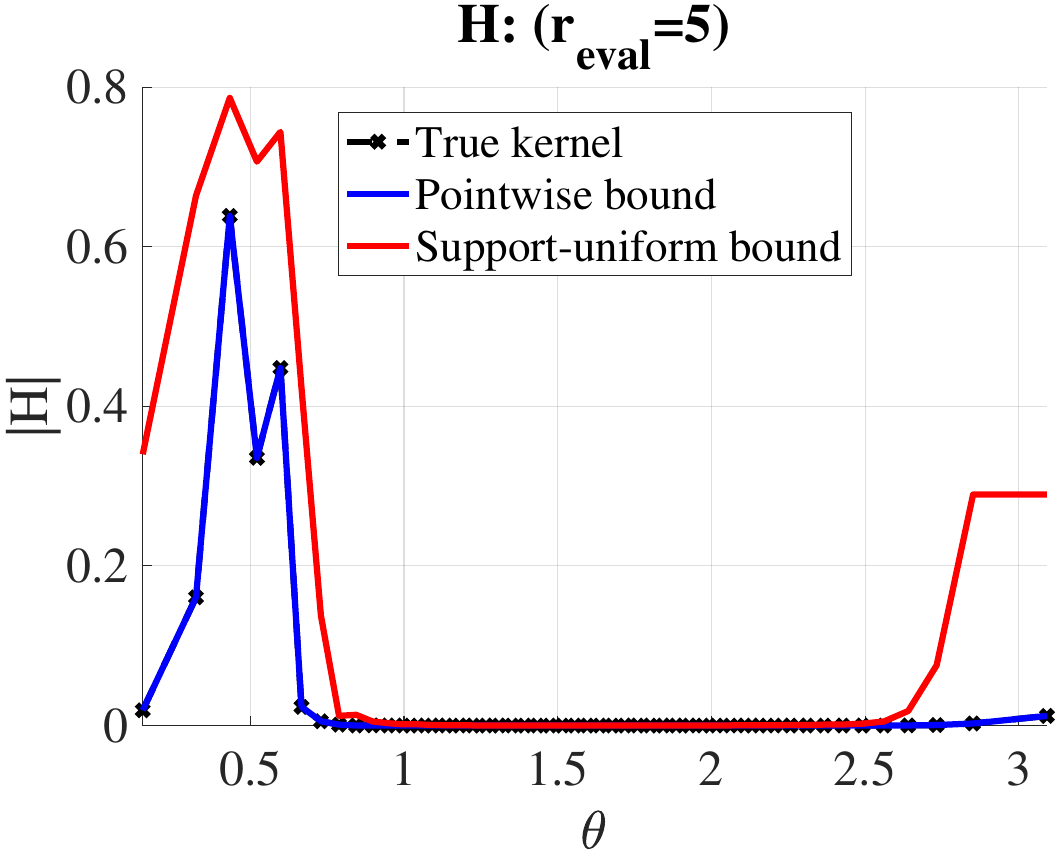}
        \caption{\(H\)}
        \label{fig:cpam-angle-H}
    \end{subfigure}
    \hfill
    \begin{subfigure}[t]{0.31\textwidth}
        \centering
        \includegraphics[width=\linewidth]{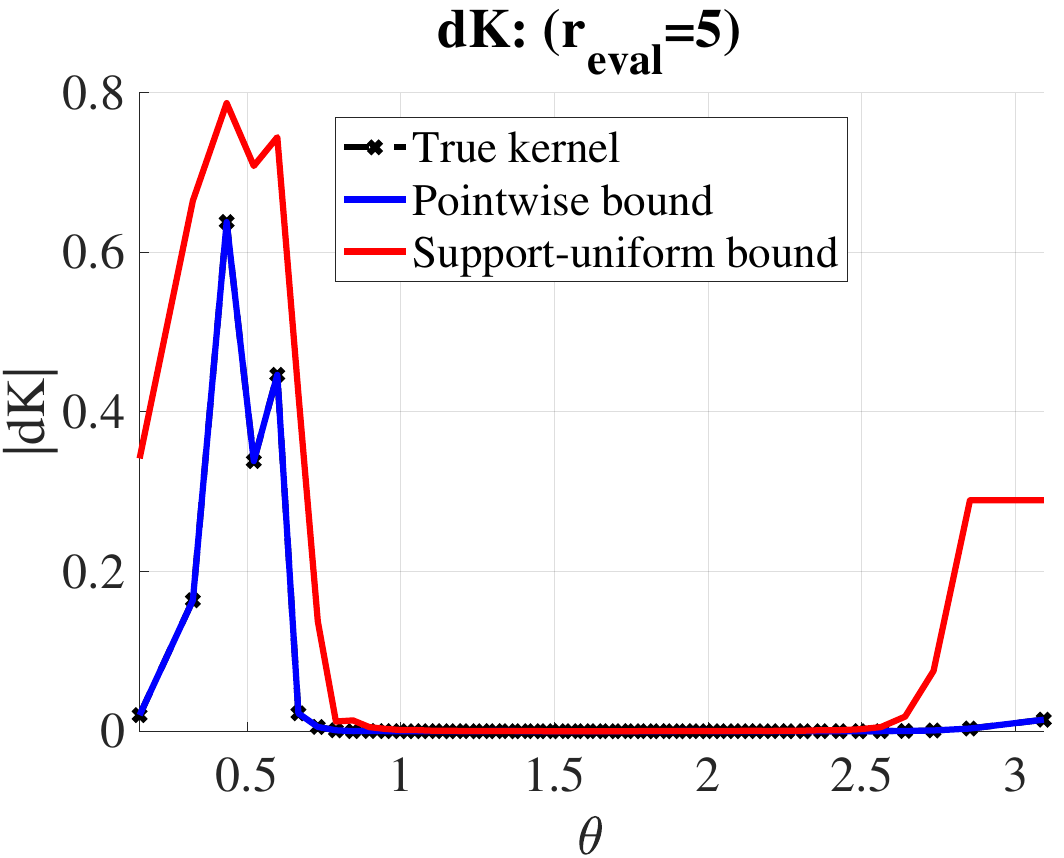}
        \caption{\(dK\)}
        \label{fig:cpam-angle-dK}
    \end{subfigure}

    \vspace{0.8em}

    % ---------- Row 2 ----------
    \begin{subfigure}[t]{0.31\textwidth}
        \centering
        \includegraphics[width=\linewidth]{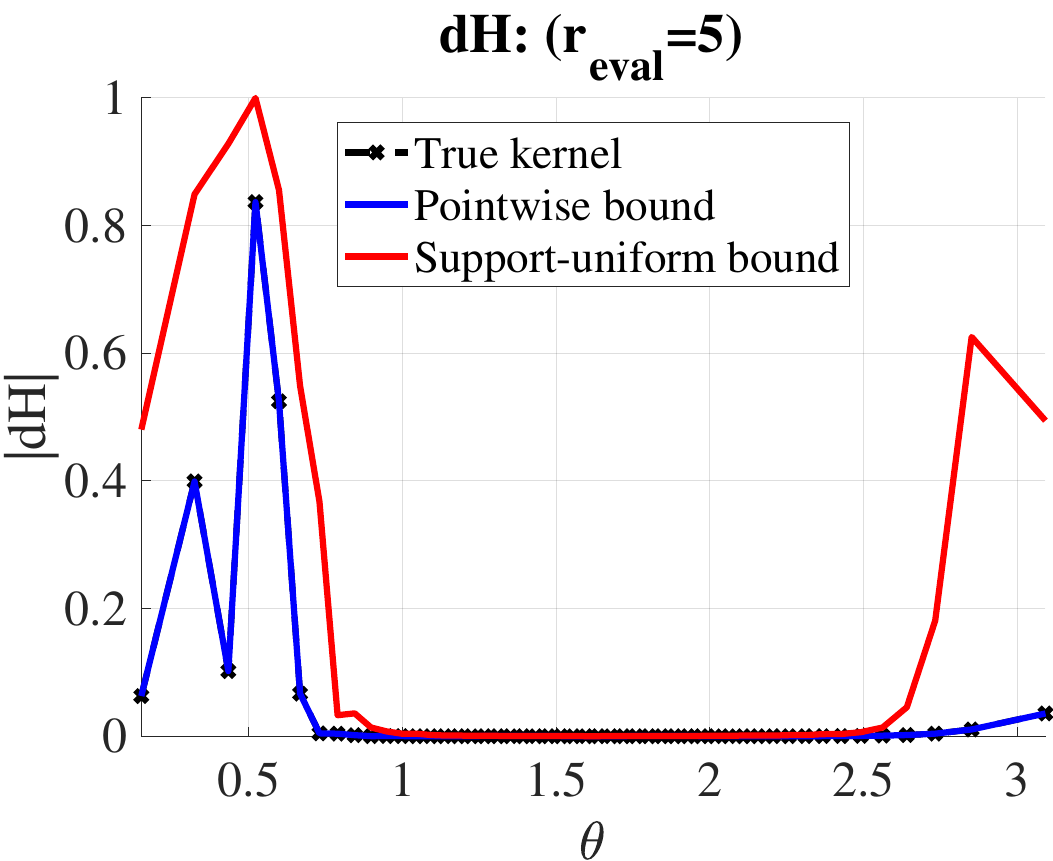}
        \caption{\(dH\)}
        \label{fig:cpam-angle-dH}
    \end{subfigure}
    \hfill
    \begin{subfigure}[t]{0.31\textwidth}
        \centering
        \includegraphics[width=\linewidth]{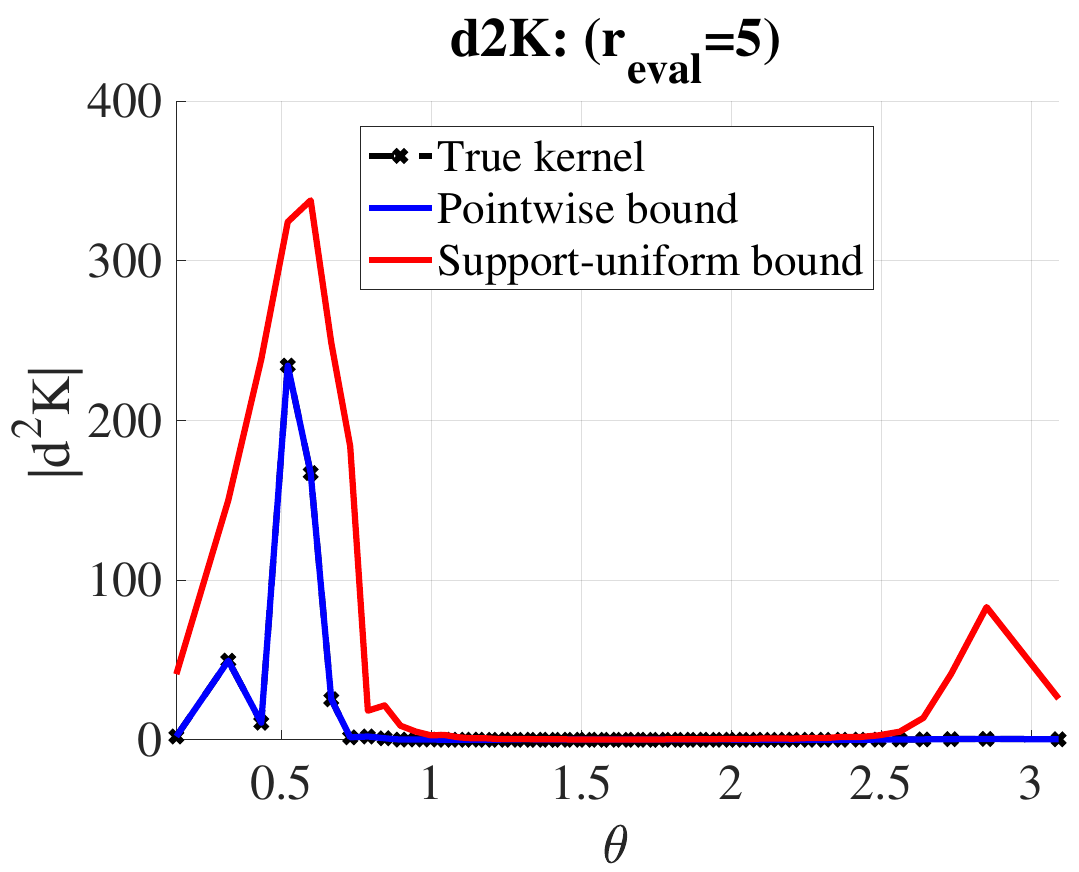}
        \caption{\(\partial_\theta^2K\)}
        \label{fig:cpam-angle-d2K}
    \end{subfigure}
    \hfill
    \begin{subfigure}[t]{0.31\textwidth}
        \centering
        \includegraphics[width=\linewidth]{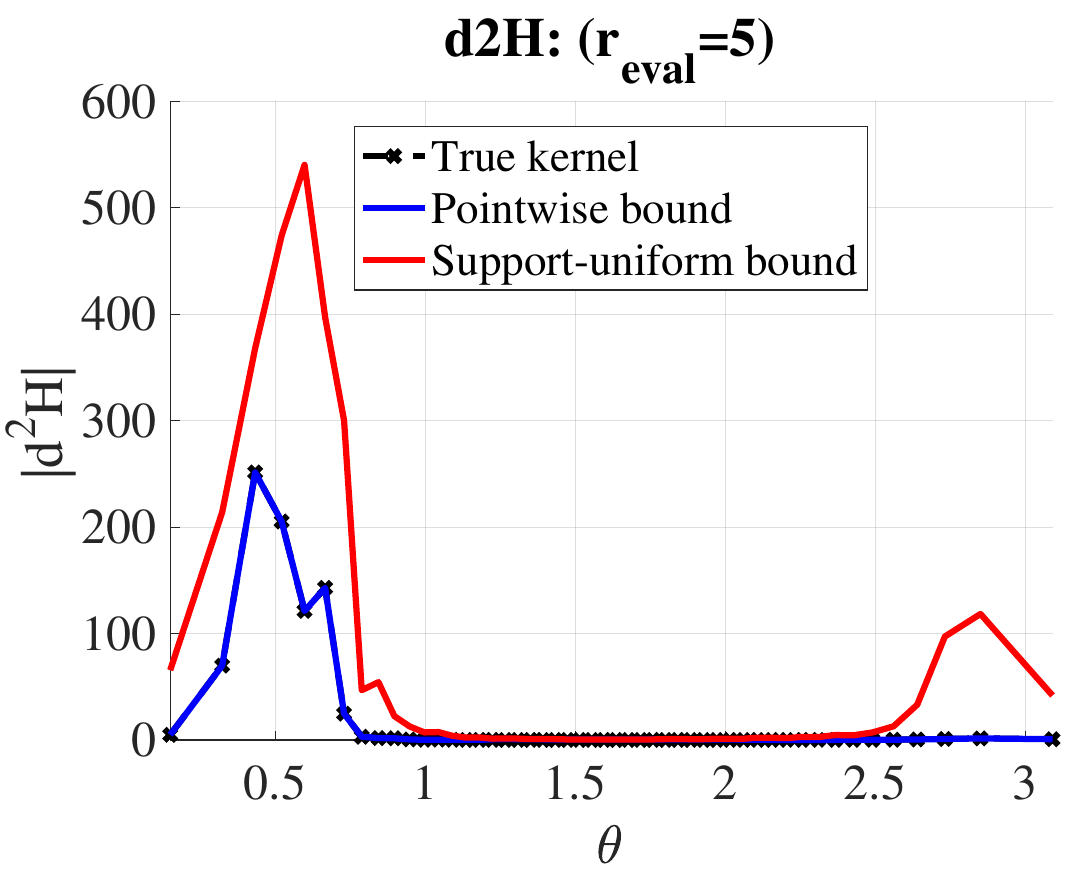}
        \caption{\(\partial_\theta^2H\)}
        \label{fig:cpam-angle-d2H}
    \end{subfigure}

    \vspace{0.8em}

    % ---------- Row 3 ----------
    \makebox[\textwidth][c]{%
    \begin{subfigure}[t]{0.31\textwidth}
        \centering
        \includegraphics[width=\linewidth]{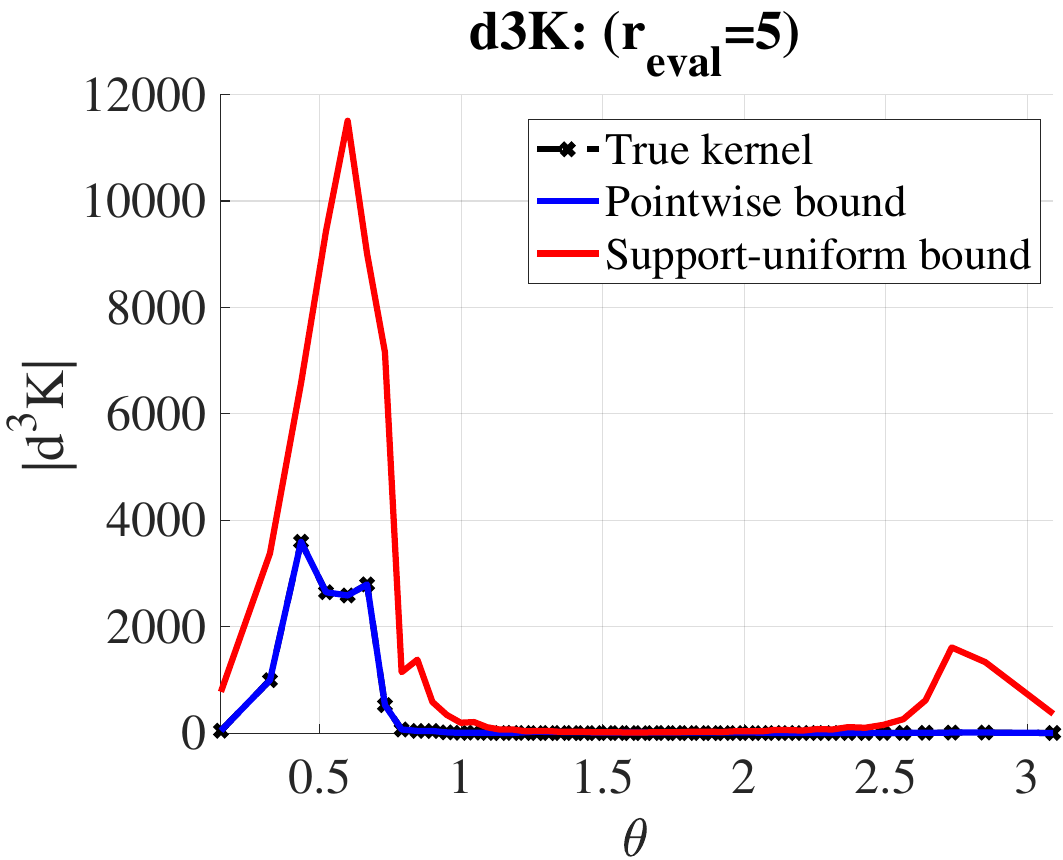}
        \caption{\(\partial_\theta^3K\)}
        \label{fig:cpam-angle-d3K}
    \end{subfigure}
    \hspace{0.05\textwidth}
    \begin{subfigure}[t]{0.31\textwidth}
        \centering
        \includegraphics[width=\linewidth]{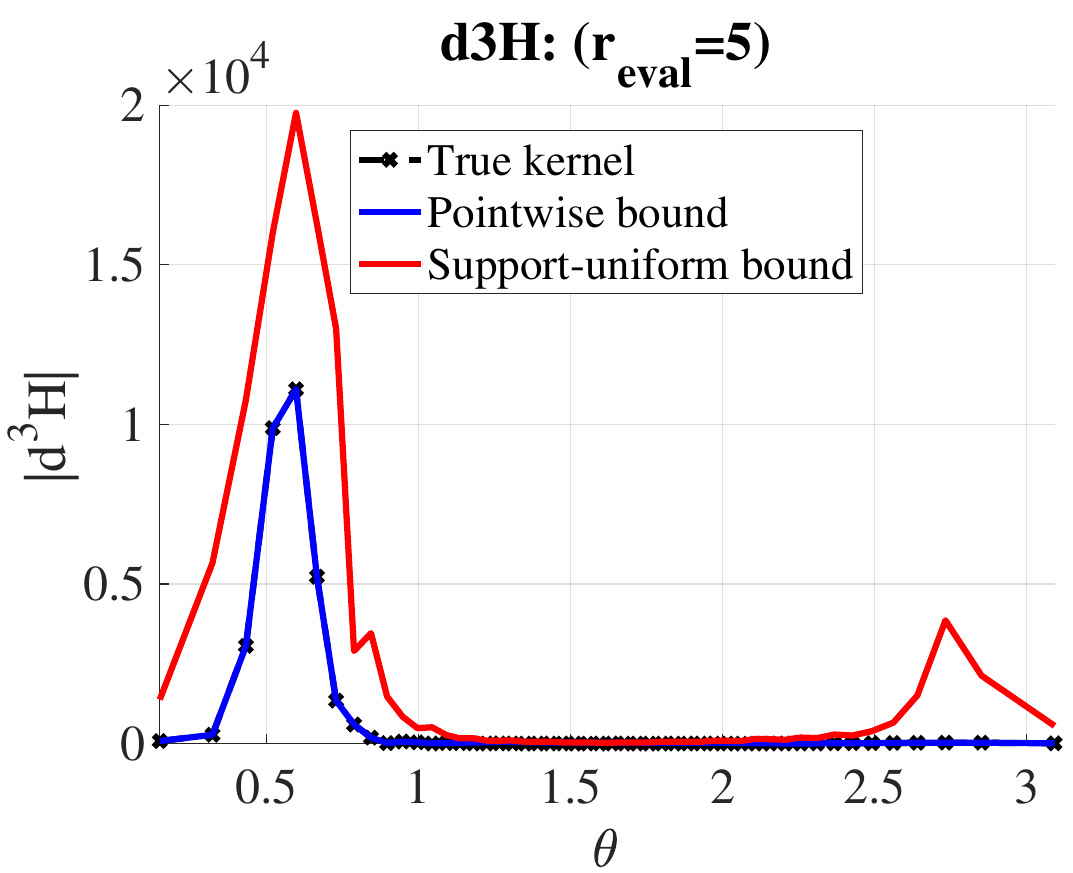}
        \caption{\(\partial_\theta^3H\)}
        \label{fig:cpam-angle-d3H}
    \end{subfigure}%
    }

 \caption{
Angular-slice comparison of exact gauged channel magnitudes, pointwise
QPAC bounds, and cellwise support-uniform QPAC envelopes.  The normalized Hermite
channels \(K,H,dK,dH\) enter interpolation and first-order tangent control,
while the higher derivative channels enter the near-support curvature and
Taylor-remainder estimates. These panels are reproduced by
\texttt{run\_bounds\_true\_kernel\_repro.m} in the reproducibility archive
\cite{Daei2026NearFieldRepro}.
}
    \label{fig:cpam-angle-all-channels}
\end{figure}

\begin{figure}[t]
    \centering

    % ---------- Row 1 ----------
    \begin{subfigure}[t]{0.31\textwidth}
        \centering
        \includegraphics[width=\linewidth]{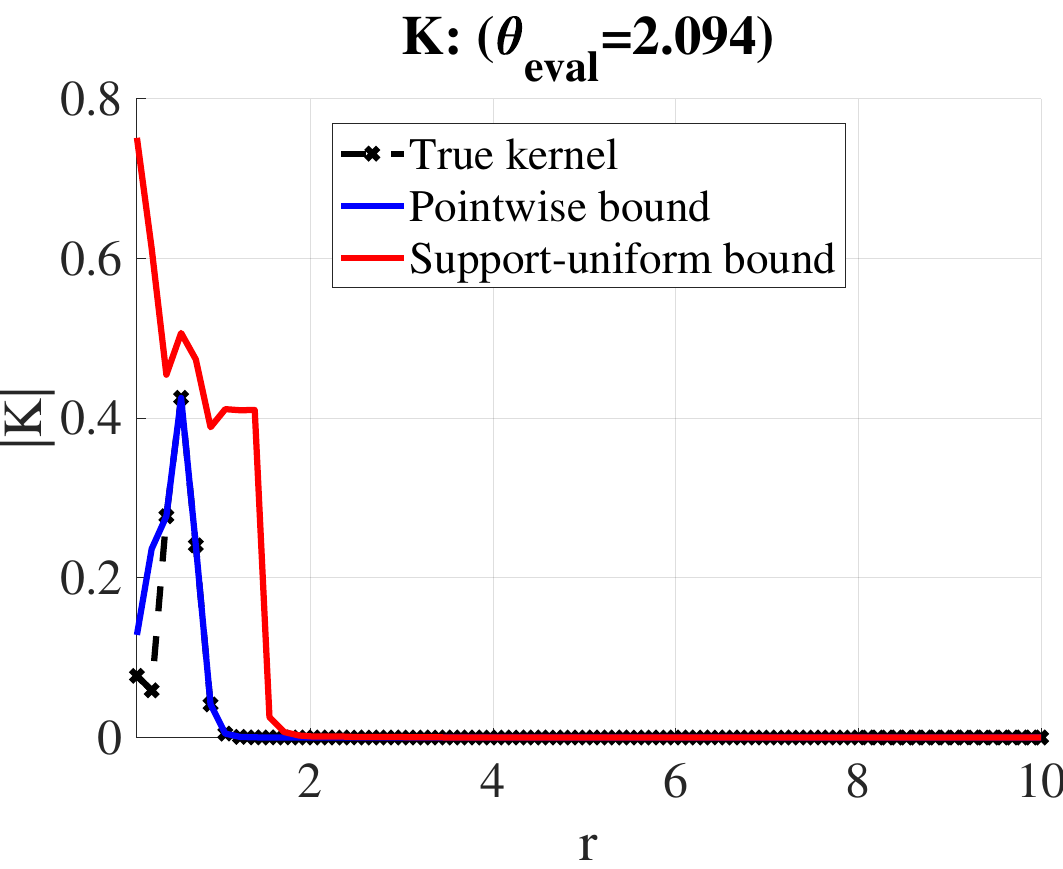}
        \caption{\(K\)}
        \label{fig:cpam-range-K}
    \end{subfigure}
    \hfill
    \begin{subfigure}[t]{0.31\textwidth}
        \centering
        \includegraphics[width=\linewidth,trim={0 0cm 0 0},clip]{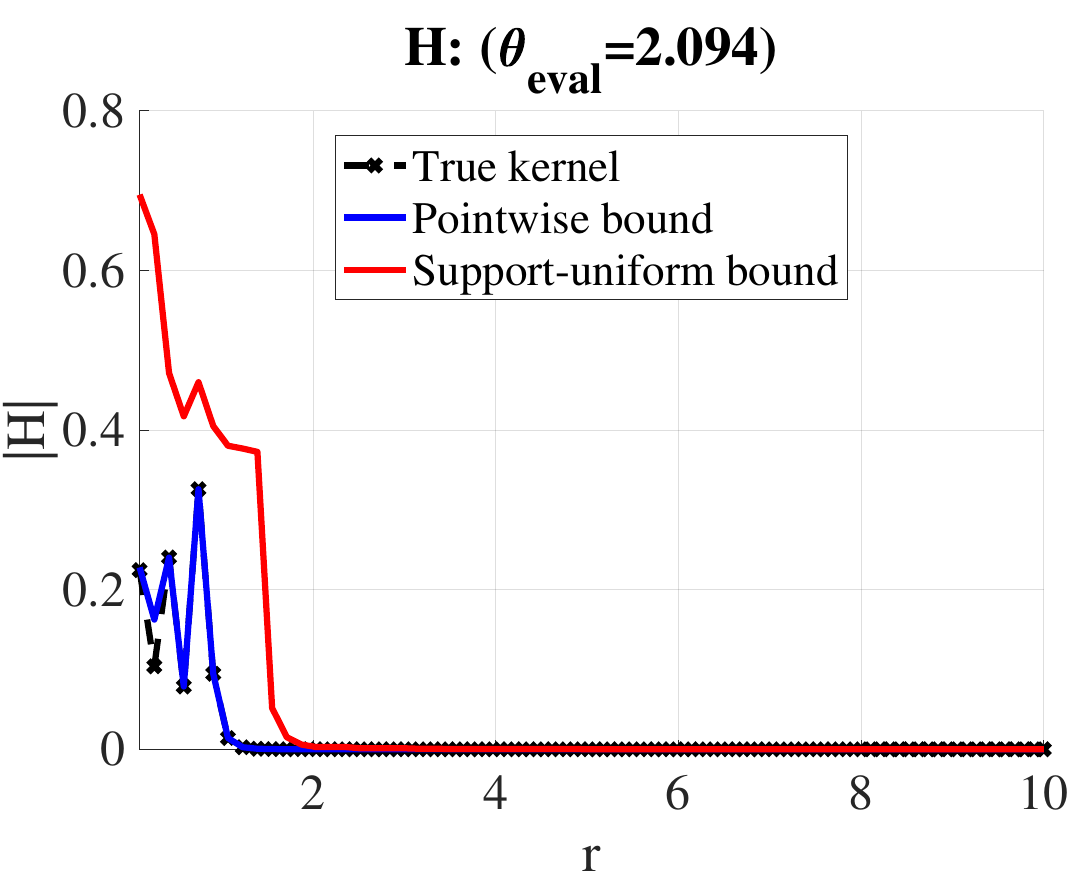}
        \caption{\(H\)}
        \label{fig:cpam-range-H}
    \end{subfigure}
    \hfill
    \begin{subfigure}[t]{0.31\textwidth}
        \centering
        \includegraphics[width=\linewidth]{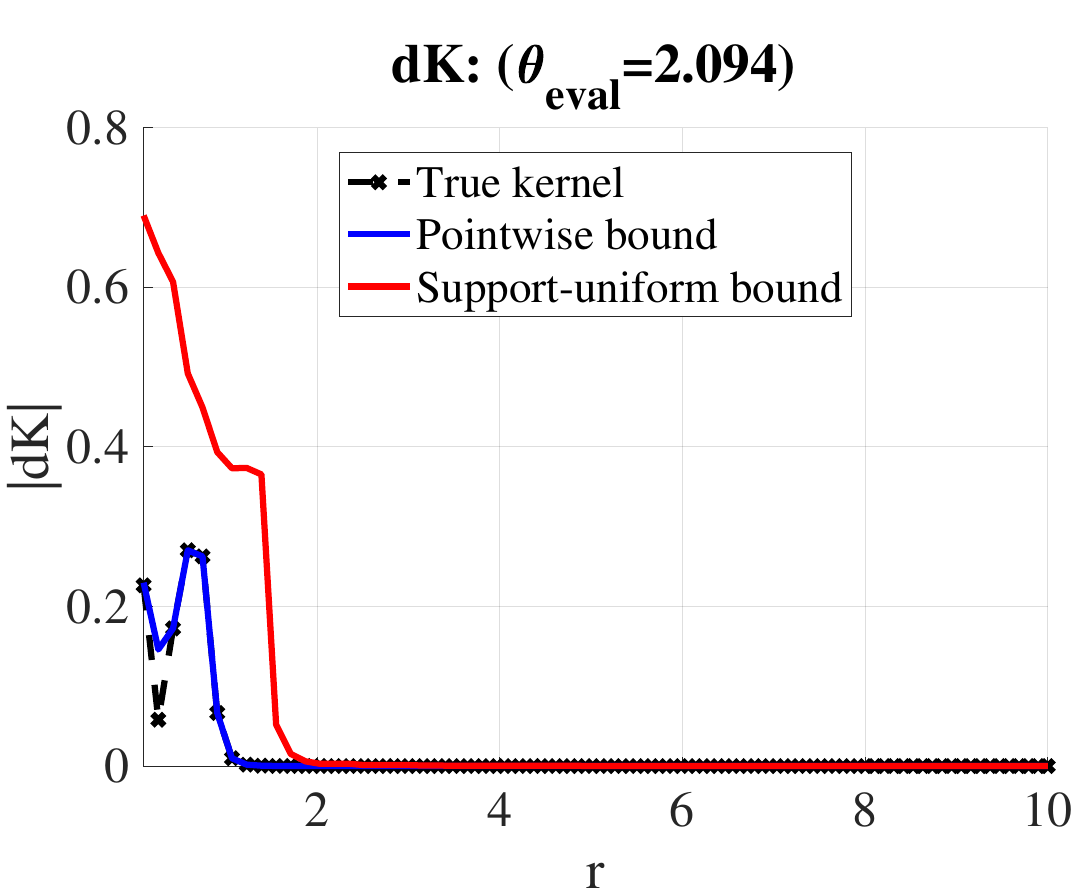}
        \caption{\(dK\)}
        \label{fig:cpam-range-dK}
    \end{subfigure}

    \vspace{0.8em}

    % ---------- Row 2 ----------
    \begin{subfigure}[t]{0.31\textwidth}
        \centering
        \includegraphics[width=\linewidth]{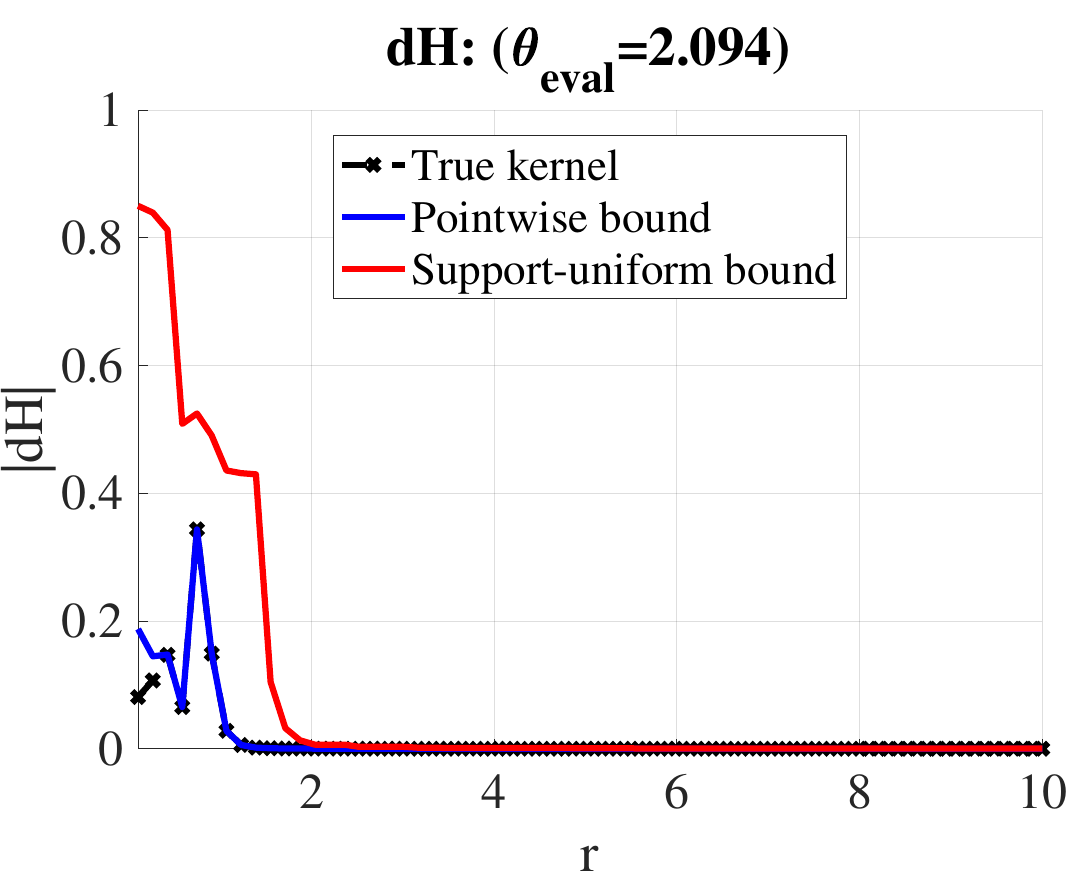}
        \caption{\(dH\)}
        \label{fig:cpam-range-dH}
    \end{subfigure}
    \hfill
    \begin{subfigure}[t]{0.31\textwidth}
        \centering
        \includegraphics[width=\linewidth]{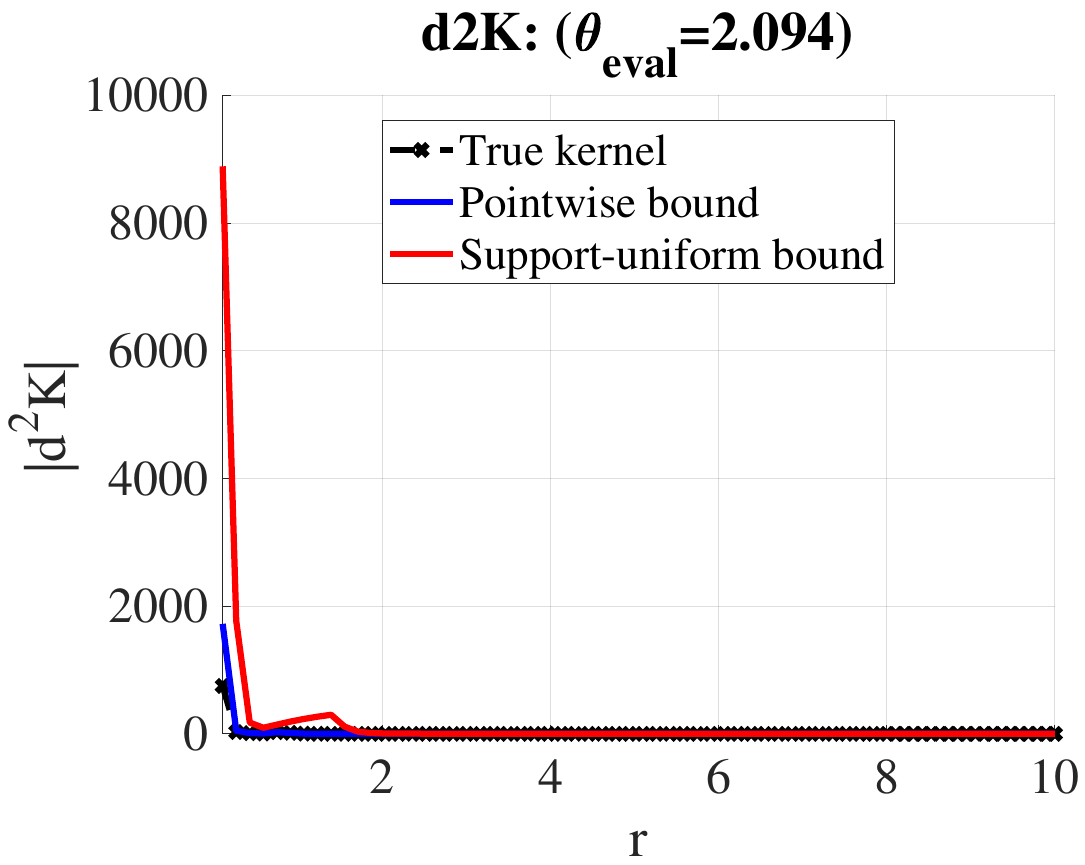}
        \caption{\(\partial_\theta^2K\)}
        \label{fig:cpam-range-d2K}
    \end{subfigure}
    \hfill
    \begin{subfigure}[t]{0.31\textwidth}
        \centering
        \includegraphics[width=\linewidth]{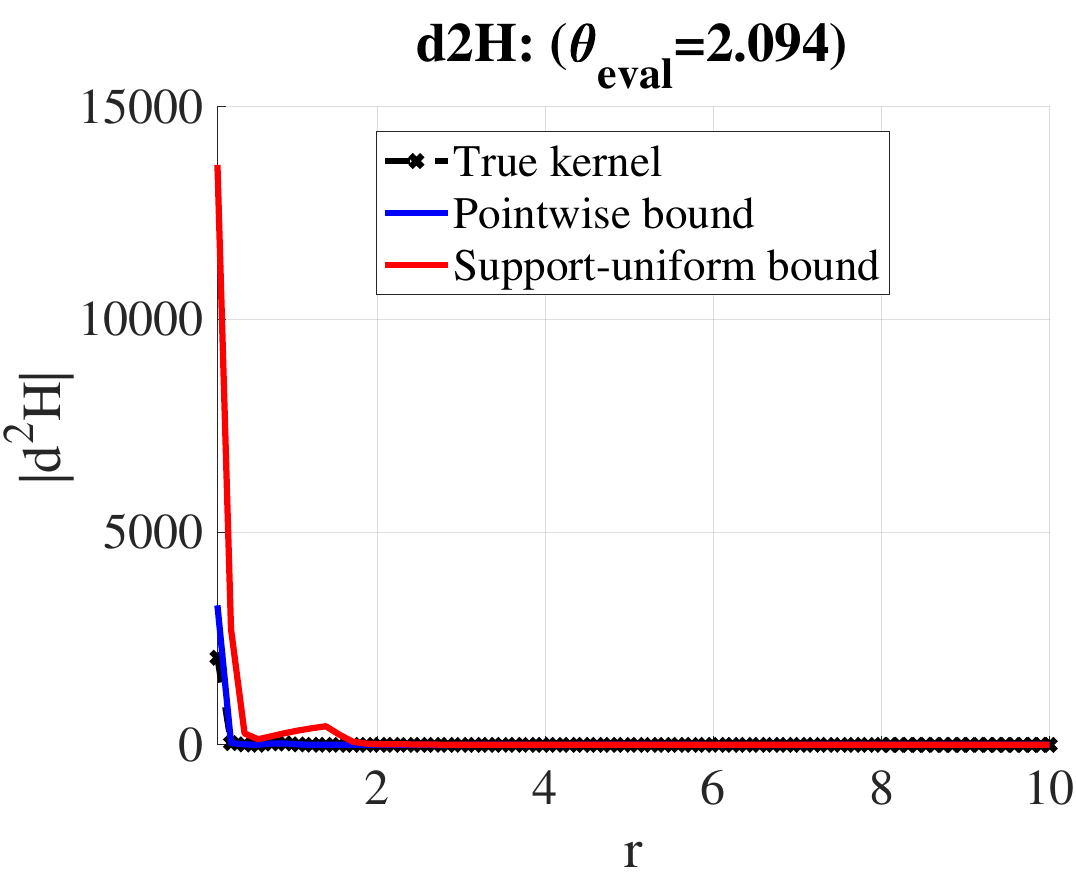}
        \caption{\(\partial_\theta^2H\)}
        \label{fig:cpam-range-d2H}
    \end{subfigure}

    \vspace{0.8em}

    % ---------- Row 3 ----------
    \makebox[\textwidth][c]{%
    \begin{subfigure}[t]{0.31\textwidth}
        \centering
        \includegraphics[width=\linewidth]{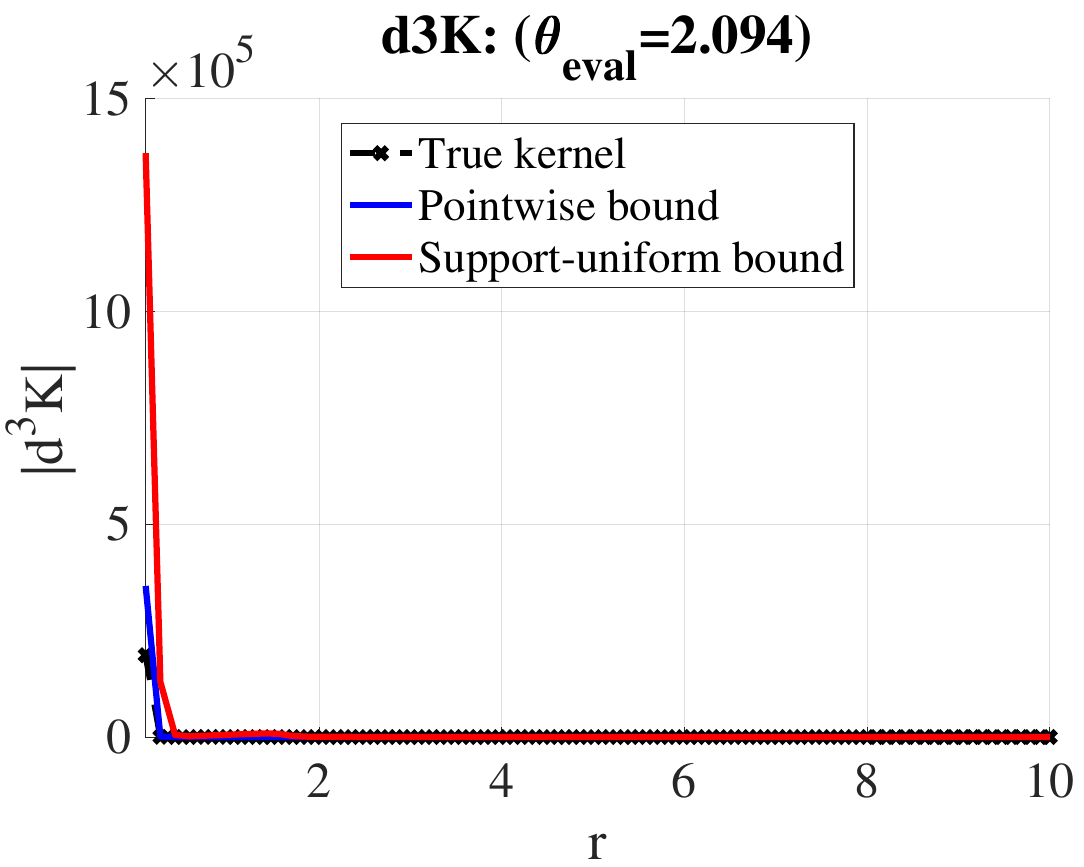}
        \caption{\(\partial_\theta^3K\)}
        \label{fig:cpam-range-d3K}
    \end{subfigure}
    \hspace{0.05\textwidth}
    \begin{subfigure}[t]{0.31\textwidth}
        \centering
        \includegraphics[width=\linewidth]{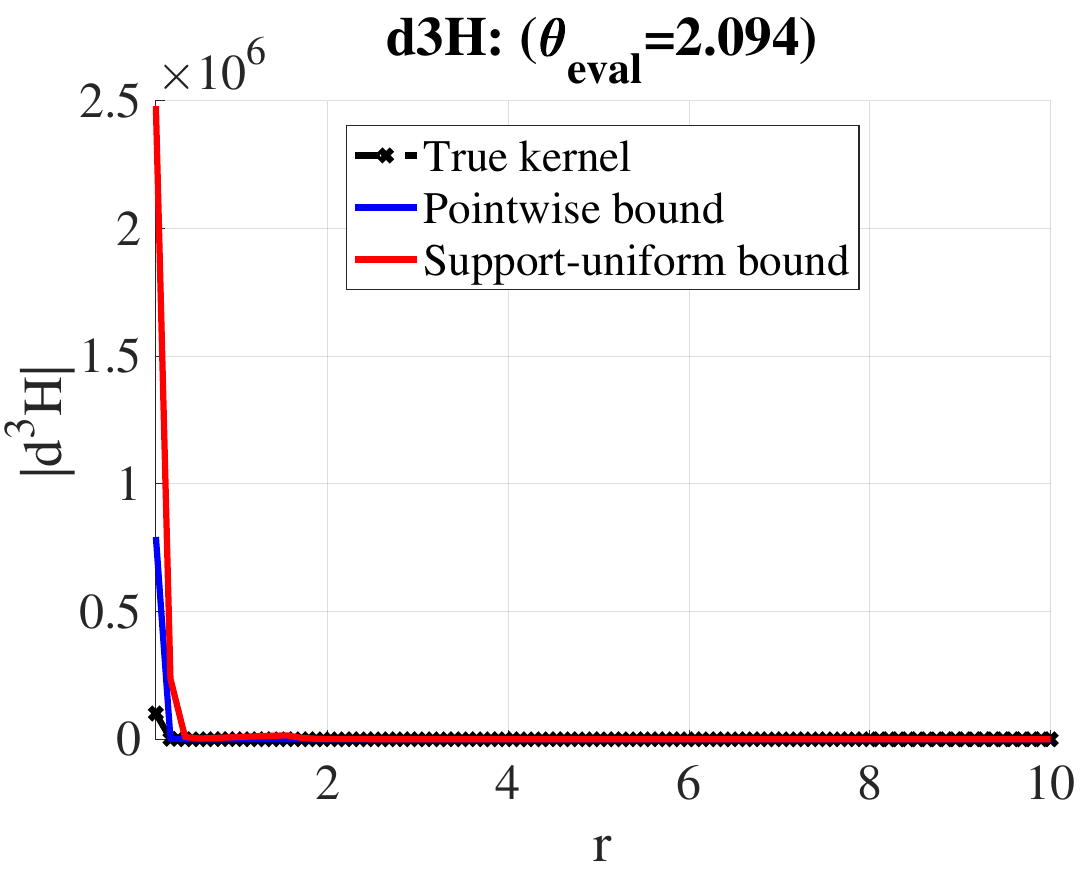}
        \caption{\(\partial_\theta^3H\)}
        \label{fig:cpam-range-d3H}
    \end{subfigure}%
    }

  \caption{
Range-slice comparison of exact gauged channel magnitudes, pointwise QPAC
bounds, and support-uniform QPAC envelopes.  The normalized channels
\(K,H,dK,dH\) quantify interpolation and tangent interactions, while the
second- and third-derivative channels quantify the local curvature and
Taylor-remainder controls used in the strict-decay proof. These panels are reproduced by
\texttt{run\_bounds\_true\_kernel\_repro.m} in the reproducibility archive
\cite{Daei2026NearFieldRepro}.
}
    \label{fig:cpam-range-all-channels}
\end{figure}

% ============================================================
\section{Discussion}
\label{sec:discussion}
% ============================================================

The theorem shows that, for a one-dimensional finite aperture, near-field
super-resolution has its own resolution geometry.  In the narrowband far field,
the aperture response is governed by a linear phase across sensor positions.
The corresponding certificate algebra is Fourier-like, translation invariant
in the angular spatial frequency, and blind to range along a fixed bearing.
Exact recovery is therefore naturally organized by angular separation.  In the
Fresnel near field, wavefront curvature adds a second phase coordinate, and
pairwise interactions are governed by the finite quadratic trajectory
\[
        n\mapsto \omega_1 n+\omega_2 n^2,
        ~ n=0,\ldots,N_r-1 .
\]
Thus the relevant object is not a scalar Fourier distance, but a finite
quadratic phase path seen through the aperture.

This change is structural.  Angular separation alone does not determine
coherence, and curvature separation alone does not determine it either.  The
two coordinates interact over the finite aperture.  A pair of sources that is
close in angle may be separated by curvature, while a pair that appears
well-separated in angle may remain coherent when the quadratic phase is close
to resonance.  Near-field distinguishability is therefore governed by whether
the finite quadratic phase adds coherently or cancels across the aperture.

QPAC is a deterministic sufficient certificate for this cancellation
geometry. It is not intended as a necessary characterization of
recoverability and is not formulated through a universal scalar
minimum-separation threshold. Its branches give
complementary ways of proving finite-aperture cancellation in different regions
of quadratic phase space: nonstationary phases, residue-lag nonalignment, and
near-rational curvature.  The resulting bounds control precisely the kernel,
normalized angular tangent channels, and angular derivative channels that enter the
gauged Hermite dual certificate.

QPAC supplies the three certificate budgets needed by the dual-certificate
proof: Hermite interpolation on the support, strict local angular decay, and
far-region leakage control.  The local Taylor estimate is applied on the
punctured feasible near set
\(
\mathcal N_\theta(S;\varpi_{\rm loc})\setminus S
\).

The theorem is intentionally stated for the semi-discrete Fresnel model, with
range restricted to a finite admissible set and angle left continuous.  This
choice isolates the Fresnel curvature mechanism while leaving only one
continuous local direction, so that a one-dimensional Hermite interpolation
argument suffices.  A fully continuous range-angle theorem would require a
genuinely two-parameter certificate, with range, angle, and mixed derivative
control.  That problem is natural, but mathematically different from the one
solved here.

The finite-harmonic Bessel-Vandermonde lift has a different role.  It is a
computable surrogate for the semi-discrete Fresnel TV problem.  For any fixed
QPAC dual vector, the finite-harmonic polynomial generated by that same vector
is a uniformly controlled perturbation of the corresponding ideal Fresnel
certificate, with an explicit harmonic-truncation error.  This perturbation
statement does not assert that the QPAC dual vector coincides with an optimizer
of the finite-dimensional lifted SDP.  The computed lifted dual optimizer is
interpreted separately through the dual-saturation property of the lifted
problem.

For physical spherical-wave data, harmonic truncation is only one
approximation: the paraxial mismatch between the spherical and Fresnel phases
is separate.  Exact physical recovery would require either an asymptotic
regime in which that mismatch vanishes or a certificate theory built directly
for the spherical-wave kernel.

Several mathematical directions follow from this viewpoint.  The present QPAC
constants are explicit and support-uniform, hence conservative; sharper
certificates should be possible through finer covers or stronger envelopes,
especially near the boundary between resonant and nonresonant quadratic
phases.  A continuous range theory would require a two-parameter Hermite
certificate and a local analysis of the full Fresnel phase map.

Another natural direction is to replace the one-dimensional aperture by
planar, volumetric, sparse, or conformal aperture geometries.  For a general
sensor set \(\Lambda\subset\mathbb R^d\), the Fresnel interaction is no longer
a scalar quadratic trajectory, but a finite quadratic form sampled on the
aperture,
\[
        x\mapsto \boldsymbol\omega_1\cdot x+x^T\Omega_2x,
        ~ x\in\Lambda .
\]
For a uniform planar array, this leads to two-dimensional quadratic sums of
the form
\[
        \sum_{(n_1,n_2)\in\Lambda}
        a_{n_1,n_2}
        \exp i\!\left(
        \omega_{10}n_1+\omega_{01}n_2
        +\omega_{20}n_1^2+\omega_{11}n_1n_2+\omega_{02}n_2^2
        \right),
\]
with possible mixed curvature terms.  Circular, sparse, or conformal apertures
lead instead to quadratic phases sampled over nonrectangular or irregular
finite sets.  Thus the aperture shape is not merely an implementation detail:
it changes the finite oscillatory geometry that the dual certificate must
control.  A QPAC analogue for such arrays would require multidimensional
quadratic-sum envelopes together with the corresponding multi-parameter
Hermite derivative channels.  Broadband measurements would bring time delay
into the same framework as wavefront curvature, potentially combining several
independent mechanisms for resolving range.

The broader conclusion is that near-field super-resolution is not far-field
super-resolution with a modified Rayleigh length.  Once curvature is visible,
the aperture measures a finite quadratic phase trajectory; for more general
apertures, it measures a finite quadratic phase geometry determined jointly by
the source parameters and the sensor set.  The arithmetic and oscillatory
behavior of this geometry determines whether sparse sources add coherently or
cancel across the aperture.  QPAC turns the one-dimensional version of this
finite-aperture quadratic-phase geometry into a deterministic
dual-certificate calculus.  In the far-field limit, the quadratic coordinate
disappears, range becomes unobservable from single-carrier spatial aperture
data, and the theory degenerates to Fourier-type angular geometry.

\paragraph{Code availability:}
The MATLAB code used to reproduce the derivative-certified QPAC example,
the LCS-certified QPAC example, the lifted dual-polynomial experiment, and
the QPAC channel-bound plots is available in the versioned
reproducibility archive \cite{Daei2026NearFieldRepro}.  The archive contains
four principal scripts:
\nolinkurl{run\_derivative\_example\_qpac.m} for the support-uniform derivative
verification,
\nolinkurl{run\_lcs\_example\_qpac.m} for the LCS support-class
verification,
\nolinkurl{run\_lifted\_dual\_figure1\_repro.m} for the finite-harmonic
lifted dual-polynomial experiment, and
\nolinkurl{run\_bounds\_true\_kernel\_repro.m} for the exact-channel,
pointwise-bound, and cellwise support-uniform-bound comparisons.  The archive
also contains README files, licensing information, and instructions for
regenerating the reported tables and figures.  No external input data files
are required.

\section*{Funding}
This work was supported by the European Union through the MULTIPLY-6G
project (Grant Agreement No.~101293106) and the QUEST-6G project
(Grant Agreement No.~101292676). G\'abor Fodor was also supported by
the Swedish Foundation for Strategic Research (SSF) through the
SAICOM project (grant FUS21-0004).
% ====================
\bibliographystyle{amsplain}
\bibliography{refs}

\appendix
% ============================================================

% ============================================================
\section{TV duality and gauged Hermite interpolation}
\label{app:tv-duality}
% ============================================================

This appendix proves the dual-certificate criterion used in the main text and
then gives the gauged Hermite interpolation construction.  We keep all
conjugation conventions explicit.

Throughout this appendix, the Euclidean inner product on \(\C^{N_r}\) is
\(
        \langle u,z\rangle_{\C^{N_r}}
        :=
        \sum_{n=0}^{N_r-1}\overline{u[n]}z[n],
\)
so it is conjugate-linear in the first variable and linear in the second.
For weighted aperture vectors we use
\(
        \langle u,z\rangle_\rho
        :=
        \sum_{n=0}^{N_r-1}\rho_n\overline{u[n]}z[n],
\)
and
\(
        \|u\|_\rho^2
        :=
        \sum_{n=0}^{N_r-1}\rho_n|u[n]|^2 .
\)
Thus all weighted inner products and moments in this appendix are computed with
the QPAC taper \(\rho\).

% ============================================================
\subsection{Overview of the proof}
\label{app:tv-duality-sketch}
% ============================================================

The proof has four steps.

First, the TV dual problem is obtained by introducing a dual vector
\(\boldsymbol\zeta\in\C^{N_r}\).  Its adjoint image is the range-indexed trigonometric
family
\(
        Q_i^{\bs \zeta}(\theta)
        =
        \sum_{n=0}^{N_r-1}
        \zeta_n\overline{\aFR(r_i,\theta)[n]}.
\)
The dual feasibility condition is
\(\sup_{(i,\theta)\in\Qset}|Q_i^{\bs \zeta}(\theta)|\le1.\)

Second, if such a dual family interpolates the signs on the true support,
\(Q_{i_\ell}^\zeta(\theta_\ell)=\sgn(c_\ell),\)
and is strictly bounded by one off the support, then TV minimization is exact.
Indeed, every feasible measure has the same pairing with the adjoint dual
family as the true measure, while dual feasibility bounds this pairing by the
TV norm.  Equality forces every minimizer to be supported only where
\(|Q|=1\), hence on the true support.  Linear independence of the active atoms
then gives uniqueness.

Third, to construct \(Q\), we first construct a gauged certificate \(P\).  The
gauge removes a local phase and makes the value atom orthogonal to the angular
tangent atom at each point.  The gauged certificate is a linear combination of
value and tangent kernels,
\(
        P_{S,\widetilde {\mathbf v}}(p)
        =
        \sum_{\ell=1}^L\alpha_\ell K_\ell(p)
        +
        \sum_{\ell=1}^L \beta_\ell H_\ell(p),
\)
chosen to satisfy Hermite interpolation:
\(P_{S,\widetilde {\mathbf v}}(p_\ell)=\widetilde v_\ell, ~ \partial_\theta P_{S,\widetilde {\mathbf v}}(p_\ell)=0.\)

Fourth, the Hermite interpolation matrix has identity diagonal blocks.  If the
off-diagonal support interactions are dominated by a nonnegative matrix
\(\mathbf G_{\rm SS}\) and
\((L-1)\rho_{\rm sp}(\mathbf G_{\rm SS})<1,\)
then a block Neumann argument proves invertibility and gives coefficient
bounds
\(|\alpha_\ell|\le\Gamma_K, ~ |\beta_\ell|\le\Gamma_H.\)
The same invertibility also implies linear independence of the active Fresnel
atoms.

% ============================================================
\subsection{Vector measures and the TV program}
\label{app:vector-measures-tv-program}
% ============================================================

Recall that
\(\Qset=\{1,\dots,N_d\}\times\Thetaset.\)
A semi-discrete measure on \(\Qset\) is written as
\(
        \nu
        =
        \sum_{i=1}^{N_d}\delta_{r_i}\otimes\mu_i,
        ~
        \mu_i\in\Mset(\Thetaset),
\)
and its total variation is
\(\|\nu\|_{\TV,\Qset}:= \sum_{i=1}^{N_d}\|\mu_i\|_{\TV}.\)
The Fresnel measurement operator is
\(
        (\FFR\nu)[n]
        =
        \sum_{i=1}^{N_d}
        \int_{\Thetaset}
        \aFR(r_i,\theta)[n]\,d\mu_i(\theta),
        ~ n=0,\dots,N_r-1.
\)
Given noiseless data
\(
        \mathbf y=\FFR\nu_\star,
\)
the TV program is
\begin{equation}\label{eq:app_tv_program}
        \min_\nu \|\nu\|_{\TV,\Qset}
        ~
        \text{subject to}
        ~
        \FFR\nu=\mx y .
\end{equation}

For a sparse measure
\(
        \nu_\star
        =
        \sum_{\ell=1}^L
        c_\ell\,\delta_{r_{i_\ell}}\otimes\delta_{\theta_\ell},
        ~ c_\ell\ne0,
\)
we denote
\(p_\ell=(i_\ell,\theta_\ell), ~ S=\{p_1,\dots,p_L\},\)
and define the complex signs by
\(
        v_\ell:=\sgn(c_\ell):=\tfrac{c_\ell}{|c_\ell|}.
\)
For a continuous family
\(
        Q=(Q_i)_{i=1}^{N_d},
        ~
        Q_i:\Thetaset\to\C,
\)
and a vector measure \(\nu=\sum_i\delta_{r_i}\otimes\mu_i\), define the
duality pairing
\(
        \langle Q,\nu\rangle
        :=
        \sum_{i=1}^{N_d}
        \int_{\Thetaset}\overline{Q_i(\theta)}\,d\mu_i(\theta).
\)
With this convention,
\(
        \overline{v_\ell}c_\ell=|c_\ell|.
\)

% ============================================================
\subsection{The adjoint family and the TV dual problem}
\label{app:adjoint-family-tv-dual}
% ============================================================

For \(\boldsymbol\zeta\in\C^{N_r}\), define the adjoint Fresnel family \(Q_i^{\bs \zeta}(\theta)
        :=
        \sum_{n=0}^{N_r-1}
        \zeta_n\overline{\aFR(r_i,\theta)[n]}.\)
Then
\(
        \overline{Q_i^{\bs \zeta}(\theta)}
        =
        \sum_{n=0}^{N_r-1}
        \overline{\zeta_n}\aFR(r_i,\theta)[n].
\)
Therefore, for every vector measure \(\nu\),
\[
\begin{aligned}
        &\langle Q^{\bs \zeta},\nu\rangle
        =
        \sum_{i=1}^{N_d}
        \int_{\Thetaset}
        \sum_{n=0}^{N_r-1}
        \overline{\zeta_n}\aFR(r_i,\theta)[n]
        \,d\mu_i(\theta)                                      
        =
        \sum_{n=0}^{N_r-1}
        \overline{\zeta_n}
        \sum_{i=1}^{N_d}
        \int_{\Thetaset}
        \aFR(r_i,\theta)[n]\,d\mu_i(\theta)   \\                 
       & =
        \sum_{n=0}^{N_r-1}
        \overline{\zeta_n}(\FFR\nu)[n]                         
        =
        \langle \zeta,\FFR\nu\rangle_{\C^{N_r}} .
\end{aligned}
\]
Thus \(Q^{\bs \zeta}\) is exactly the adjoint image of \({\bs \zeta}\), written on the
range-angle parameter domain.

The formal dual of \eqref{eq:app_tv_program} is
\begin{equation}\label{eq:app_tv_dual}
        \max_{\boldsymbol\zeta\in\C^{N_r}}
        \operatorname{Re} \langle \zeta,y\rangle_{\C^{N_r}}
        ~
        \text{subject to}
        ~
        \sup_{(i,\theta)\in\Qset}|Q_i^\zeta(\theta)|\le1 .
\end{equation}

Indeed, the Lagrangian for \eqref{eq:app_tv_program} is
\[
        \mathcal L(\nu,{\bs \zeta})
        =
        \|\nu\|_{\TV,\Qset}
        +
        \operatorname{Re}\langle {\bs \zeta},\mx y-\FFR\nu\rangle_{\C^{N_r}}.
\]
Using the adjoint identity just proved,
\(\operatorname{Re}\langle {\bs \zeta},\FFR\nu\rangle_{\C^{N_r}} = \operatorname{Re}\langle Q^{\bs \zeta},\nu\rangle.\)
Hence
\[
        \mathcal L(\nu,\zeta)
        =
        \operatorname{Re}\langle {\bs \zeta},\mx y\rangle_{\C^{N_r}}
        +
        \left(
        \|\nu\|_{\TV,\Qset}
        -
        \operatorname{Re}\langle Q^{\bs \zeta},\nu\rangle
        \right).
\]
The infimum over \(\nu\) is finite if and only if
\(
        \sup_{(i,\theta)\in\Qset}|Q_i^{\bs \zeta}(\theta)|\le1.
\)
To see this, first suppose the supremum is at most one.  Let
\(d\mu_i(\theta)=s_i(\theta)\,d|\mu_i|(\theta)\)
be the polar decomposition, with \(|s_i(\theta)|=1\) for
\(|\mu_i|\)-almost every \(\theta\).  Then
\[
\begin{aligned}
        \operatorname{Re}\langle Q^{\bs \zeta},\nu\rangle
        &=
        \sum_i
        \int_{\Thetaset}
        \operatorname{Re}\bigl(\overline{Q_i^{\bs \zeta}(\theta)}s_i(\theta)\bigr)
        \,d|\mu_i|(\theta)                                  \le
        \sum_i
        \int_{\Thetaset}
        |Q_i^{\bs \zeta}(\theta)|\,d|\mu_i|(\theta)                  \\
        &\le
        \sum_i\|\mu_i\|_{\TV}
        =
        \|\nu\|_{\TV,\Qset}.
\end{aligned}
\]
Thus the bracketed term is nonnegative and is minimized at \(\nu=0\).

Conversely, suppose that there exists \((i_0,\theta_0)\in\mathcal Q\) such
that
\(
        |Q_{i_0}^\zeta(\theta_0)|>1 .
\)
Choose a phase \(\sigma\in\mathbb C\), \(|\sigma|=1\), such that
\(
        \operatorname{Re}\!\left(
        \overline{Q_{i_0}^{\bs \zeta}(\theta_0)}\,\sigma
        \right)
        =
        |Q_{i_0}^{\bs \zeta}(\theta_0)|.
\)
For \(t>0\), set
\(
        \nu_t
        :=
        t\sigma\,\delta_{r_{i_0}}\otimes\delta_{\theta_0}.
\)
Then
\(
        \|\nu_t\|_{\TV,\Qset}
        =
        t
\)
and
\(
        \operatorname{Re}\langle Q^\zeta,\nu_t\rangle
        =
        t|Q_{i_0}^\zeta(\theta_0)|.
\)
Hence
\(
        \|\nu_t\|_{\TV,\Qset}
        -
        \operatorname{Re}\langle Q^{\bs \zeta},\nu_t\rangle
        =
        t\left(1-|Q_{i_0}^{\bs \zeta}(\theta_0)|\right)
        \to -\infty
\)
as \(t\to\infty\).  Therefore finiteness of the infimum of the Lagrangian over \(\nu\) forces
\(
        \sup_{(i,\theta)\in\mathcal Q}|Q_i^{\bs \zeta}(\theta)|\le1 .
\)
The derivation of the formal dual is useful, but the uniqueness proof below
only needs weak duality and the existence of one strict certificate.

% ============================================================
\subsection{TV dual certificate criterion}
\label{app:tv-dual-certificate-criterion}
% ============================================================

The uniqueness proof requires only three properties: an adjoint polynomial
that interpolates the source phases, strict modulus below one away from the
support, and linear independence of the active measurement atoms.  The
following proposition records this implication using the complex pairing
and conjugation conventions fixed above.

\begin{prop}
\label{prop:app_tv_dual_certificate}
Let
\(
        \nu_\star
        =
        \sum_{\ell=1}^L
        c_\ell\,\delta_{r_{i_\ell}}\otimes\delta_{\theta_\ell},
        ~
        c_\ell\ne0,
\)
and let
\(
        S=\{p_\ell=(i_\ell,\theta_\ell)\}_{\ell=1}^L .
\)
For a raw Fresnel adjoint family \(Q^\zeta\), we use the complex measure pairing
\(
        \langle Q^\zeta,\nu\rangle
        :=
        \int_{\Qset}\overline{Q^\zeta(p)}\,d\nu(p).
\)
Assume there exists \(\boldsymbol\zeta\in\C^{N_r}\) such that
\(
        Q^\zeta
        =
        \FFR^*{\bs \zeta}
\)
in the sense that
\[
        Q_i^{\bs \zeta}(\theta)
        =
        \sum_{n=0}^{N_r-1}
        \zeta_n\overline{\aFR(r_i,\theta)[n]},
        ~
        (i,\theta)\in\Qset .
\]
Assume that \(Q^\zeta\) interpolates the source signs,
\begin{equation}
\label{eq:app_dual_interpolation}
        Q_{i_\ell}^{\bs \zeta}(\theta_\ell)
        =
        \sgn(c_\ell)
        =
        \tfrac{c_\ell}{|c_\ell|},
        ~
        \ell=1,\ldots,L,
\end{equation}
and is strictly bounded off the support,
\begin{equation}
\label{eq:app_dual_strict}
        |Q_i^{\bs \zeta}(\theta)|<1
        ~
        \text{for every }(i,\theta)\in\Qset\setminus S .
\end{equation}
Assume also that the active Fresnel atoms
\(
        \{\aFR(r_{i_\ell},\theta_\ell)\}_{\ell=1}^L
\)
are linearly independent in \(\C^{N_r}\).  Then \(\nu_\star\) is the unique
minimizer of \eqref{eq:app_tv_program} with data \(\mathbf y=\FFR\nu_\star\).
\end{prop}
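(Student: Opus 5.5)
The argument is the standard dual-certificate one, built from weak duality in the pairing \(\langle Q,\nu\rangle\) and the adjoint identity \(\langle Q^{\boldsymbol\zeta},\nu\rangle=\langle\boldsymbol\zeta,\FFR\nu\rangle_{\C^{N_r}}\) established in \Cref{app:adjoint-family-tv-dual}. First we show that \(\nu_\star\) is a minimizer. Let \(\nu\) be feasible, so \(\FFR\nu=\FFR\nu_\star\). The adjoint identity gives
\[
\operatorname{Re}\langle Q^{\boldsymbol\zeta},\nu\rangle=\operatorname{Re}\langle\boldsymbol\zeta,\mathbf y\rangle_{\C^{N_r}}=\operatorname{Re}\langle Q^{\boldsymbol\zeta},\nu_\star\rangle .
\]
By the interpolation condition \eqref{eq:app_dual_interpolation}, the right-hand side equals \(\sum_\ell \overline{\sgn(c_\ell)}c_\ell=\sum_\ell|c_\ell|=\|\nu_\star\|_{\TV,\Qset}\). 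Conditions \eqref{eq:app_dual_interpolation} and \eqref{eq:app_dual_strict} together imply \(\sup_{\Qset}|Q^{\boldsymbol\zeta}|\le1\). The polar-decomposition estimate from \Cref{app:adjoint-family-tv-dual} then gives \(\operatorname{Re}\langle Q^{\boldsymbol\zeta},\nu\rangle\le\|\nu\|_{\TV,\Qset}\). Combining these, \(\|\nu_\star\|_{\TV,\Qset}\le\|\nu\|_{\TV,\Qset}\).

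Next we show that any other minimizer is supported on \(S\). Suppose \(\nu=\sum_i\delta_{r_i}\otimes\mu_i\) is feasible with \(\|\nu\|_{\TV,\Qset}=\|\nu_\star\|_{\TV,\Qset}\). Split each \(\mu_i=\mu_i^S+\mu_i^{c}\), where \(\mu_i^S\) is the restriction to the finite set \(\{\theta_\ell:i_\ell=i\}\) and \(\mu_i^{c}\) is the rest. Then
\[
\|\nu\|_{\TV,\Qset}=\operatorname{Re}\langle Q^{\boldsymbol\zeta},\nu\rangle\le\sum_i\|\mu_i^S\|_{\TV}+\sum_i\int_{\Thetaset}|Q_i^{\boldsymbol\zeta}|\,d|\mu_i^{c}|.
\]
This forces
\[
\sum_i\int_{\Thetaset}\bigl(1-|Q_i^{\boldsymbol\zeta}(\theta)|\bigr)\,d|\mu_i^{c}|(\theta)\le0 .
\]
The main obstacle is to conclude from this that \(|\mu_i^{c}|=0\). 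The strict bound \eqref{eq:app_dual_strict} is only pointwise, not uniform: \(|Q|\) approaches one near the support points. The argument does not need uniformity, however. The integrand \(1-|Q_i^{\boldsymbol\zeta}|\) is continuous, nonnegative, and strictly positive on \(\Thetaset\setminus\{\theta_\ell:i_\ell=i\}\). The measure \(|\mu_i^{c}|\) gives no mass to that finite set. Hence the integral of a strictly positive function against \(|\mu_i^{c}|\) can vanish only if \(|\mu_i^{c}|=0\). Writing \(\Thetaset\) minus the support as a countable union of sets on which \(1-|Q_i|\ge1/k\) makes this rigorous.

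Finally, uniqueness follows from linear independence. Any minimizer then has the form \(\nu=\sum_\ell c'_\ell\,\delta_{r_{i_\ell}}\otimes\delta_{\theta_\ell}\). Feasibility gives \(\sum_\ell(c'_\ell-c_\ell)\,\aFR(r_{i_\ell},\theta_\ell)=0\) in \(\C^{N_r}\). Since the active atoms are assumed linearly independent, \(c'_\ell=c_\ell\) for every \(\ell\), so \(\nu=\nu_\star\).
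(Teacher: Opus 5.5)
Your proof is correct and follows essentially the same route as the paper: the adjoint identity plus the polar-decomposition bound give minimality, the equality case forces every minimizer onto $\{|Q|=1\}\subseteq S$, and linear independence of the active atoms gives uniqueness. The only difference is presentational. You split each $\mu_i$ into on-support and off-support parts and integrate $1-|Q_i|$ against the latter, whereas the paper notes directly that the nonnegative integrand $1-\operatorname{Re}(\overline{Q_i}\,s_i)$ must vanish $|\mu_i|$-a.e.; both handle the lack of a uniform gap near $S$ in the same way.
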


\begin{proof}
Let \(\nu\) be any feasible point of \eqref{eq:app_tv_program}. Then \(\FFR\nu=\FFR\nu_\star .\) Since \(Q^\zeta\) is the adjoint image of \(\zeta\), and since the inner product
on \(\C^{N_r}\) is conjugate-linear in the first argument, we have \(\langle Q^\zeta,\nu\rangle = \langle \zeta,\FFR\nu\rangle_{\C^{N_r}} .\) Therefore feasibility gives
\[
 \langle Q^\zeta,\nu\rangle
 =
 \langle {\bs \zeta},\FFR\nu\rangle_{\C^{N_r}}
 =
 \langle {\bs \zeta},\FFR\nu_\star\rangle_{\C^{N_r}}
 =
 \langle Q^{\bs \zeta},\nu_\star\rangle .
\]
Taking real parts,
\begin{equation}
\label{eq:app_dual_feasible_pairing_identity}
 \operatorname{Re}\langle Q^\zeta,\nu\rangle
 =
 \operatorname{Re}\langle Q^\zeta,\nu_\star\rangle .
\end{equation}

Using the sign interpolation condition \eqref{eq:app_dual_interpolation}, we
compute
\[
\begin{aligned}
 &\operatorname{Re}\langle Q^\zeta,\nu_\star\rangle
 =
 \operatorname{Re}
 \sum_{\ell=1}^L
 \overline{Q_{i_\ell}^{\bs \zeta}(\theta_\ell)}c_\ell 
 =
 \operatorname{Re}
 \sum_{\ell=1}^L
 \overline{\sgn(c_\ell)}\,c_\ell =
 \sum_{\ell=1}^L |c_\ell|
 =
 \|\nu_\star\|_{\TV,\Qset}.
\end{aligned}
\]
On the other hand, write the polar decomposition of each component measure as
\(
        d\mu_i(\theta)
        =
        s_i(\theta)\,d|\mu_i|(\theta),
        ~
        |s_i(\theta)|=1
        \quad |\mu_i|\text{-a.e.}
\)
Then
\[
\begin{aligned}
 &\operatorname{Re}\langle Q^{\bs \zeta},\nu\rangle
 =
 \sum_i
 \int_{\Theta}
 \operatorname{Re}\left(
 \overline{Q_i^{\bs \zeta}(\theta)}s_i(\theta)
 \right)
 d|\mu_i|(\theta) \le
 \sum_i
 \int_{\Theta}
 |Q_i^{\bs \zeta}(\theta)|\,d|\mu_i|(\theta) \le
 \sum_i
 \|\mu_i\|_{\TV}\\
 &=
 \|\nu\|_{\TV,\Qset}.
\end{aligned}
\]
Combining this estimate with
\eqref{eq:app_dual_feasible_pairing_identity} gives
\(        \|\nu_\star\|_{\TV,\Qset}
        \le
        \|\nu\|_{\TV,\Qset}.
\)
Hence \(\nu_\star\) is a minimizer.
It remains to prove uniqueness. Suppose that \(\nu\) is another minimizer.
Then equality holds in the chain above. In particular, \(\|\nu\|_{\TV,\Qset} = \operatorname{Re}\langle Q^\zeta,\nu\rangle .\) Equivalently,
\[
 \sum_i
 \int_\Theta
 \left[
 1-
 \operatorname{Re}\left(
 \overline{Q_i^{\bs \zeta}(\theta)}s_i(\theta)
 \right)
 \right]
 d|\mu_i|(\theta)
 =
 0 .
\]
The integrand is nonnegative, because \(\operatorname{Re}\left( \overline{Q_i^\zeta(\theta)}s_i(\theta) \right) \le |Q_i^\zeta(\theta)| \le 1 .\) Therefore the integrand must vanish \(|\mu_i|\)-almost everywhere. Hence,
for \(|\mu_i|\)-almost every \(\theta\), \(|Q_i^\zeta(\theta)|=1\) and the phase of \(s_i(\theta)\) matches the phase of \(Q_i^\zeta(\theta)\).
By the strict off-support condition \eqref{eq:app_dual_strict}, the set where
\(|Q_i^\zeta(\theta)|=1\) is contained in the active support \(S\). Therefore
every minimizer is supported on \(S\). Hence there exist coefficients
\(d_1,\ldots,d_L\in\C\) such that \(\nu = \sum_{\ell=1}^L d_\ell\,\delta_{r_{i_\ell}}\otimes\delta_{\theta_\ell}.\) Since \(\nu\) and \(\nu_\star\) are feasible for the same data, \(\FFR\nu=\FFR\nu_\star.\) Thus \(\sum_{\ell=1}^L (d_\ell-c_\ell)\aFR(r_{i_\ell},\theta_\ell) = 0 ~ \text{in }\C^{N_r}.\) The active Fresnel atoms are linearly independent, so \(d_\ell=c_\ell, ~ \ell=1,\ldots,L.\) Therefore \(\nu=\nu_\star\). Hence the minimizer is unique.
\end{proof}

% ============================================================
\subsection{Gauged Fresnel atoms}
\label{app:gauged-fresnel-atoms}
% ============================================================

We now construct the gauged atoms used in the Hermite certificate.
Let
\(
        W_0:=\sum_{n=0}^{N_r-1}\rho_n>0 .
\)
Define the weighted moments
\(
        \bar n
        :=
        \tfrac1{W_0}\sum_{n=0}^{N_r-1}\rho_n n,
        ~
        \overline{n^2}
        :=
        \tfrac1{W_0}\sum_{n=0}^{N_r-1}\rho_n n^2 .
\)
For each range bin \(i\), define the gauge phase \( \chi_i(\theta)
        :=
        k_\lambda d\,\bar n\cos\theta
        +
        \tfrac{k_\lambda d^2}{4r_i}\overline{n^2}\cos(2\theta).\) The gauged normalized Fresnel atom is \(\psi_i(\theta)[n]
        :=
        \tfrac1{\sqrt{W_0}}
        e^{-i\chi_i(\theta)}
        \aFR(r_i,\theta)[n].\) Since \(|e^{-i\chi_i(\theta)}|=1\) and
\(|\aFR(r_i,\theta)[n]|=1\), we have
\[
        \|\psi_i(\theta)\|_\rho^2
        =
        \sum_{n=0}^{N_r-1}\rho_n|\psi_i(\theta)[n]|^2
        =
        \tfrac1{W_0}\sum_{n=0}^{N_r-1}\rho_n
        =
        1.
\]
The angular derivative has the form
\begin{equation}\label{eq:app_gauged_derivative}
        \partial_\theta\psi_i(\theta)[n]
        =
        i\,u_i(n,\theta)\psi_i(\theta)[n],
\end{equation}
where
\begin{equation}\label{eq:app_centered_derivative_phase}
        u_i(n,\theta)
        :=
        -k_\lambda d(n-\bar n)\sin\theta
        -
        \tfrac{k_\lambda d^2}{2r_i}
        (n^2-\overline{n^2})\sin(2\theta).
\end{equation}
Indeed, if
\(\aFR(r_i,\theta)[n] = e^{i\phi_i(n,\theta)},\)
then
\(
        \phi_i(n,\theta)
        =
        k_\lambda dn\cos\theta
        -
        \tfrac{k_\lambda d^2n^2}{2r_i}\sin^2\theta.
\)
Thus
\(
        \partial_\theta\phi_i(n,\theta)
        =
        -k_\lambda dn\sin\theta
        -
        \tfrac{k_\lambda d^2n^2}{2r_i}\sin(2\theta),
\)
while
\(
        \partial_\theta\chi_i(\theta)
        =
        -k_\lambda d\bar n\sin\theta
        -
        \tfrac{k_\lambda d^2}{2r_i}\overline{n^2}\sin(2\theta).
\)
Therefore
\(
        \partial_\theta\phi_i(n,\theta)
        -
        \partial_\theta\chi_i(\theta)
        =
        u_i(n,\theta),
\)
which proves \eqref{eq:app_gauged_derivative}.
By the definitions of \(\bar n\) and \(\overline{n^2}\),
\(
        \sum_{n=0}^{N_r-1}\rho_n u_i(n,\theta)=0 .
\)
Consequently,
\[
\begin{aligned}
        &\langle \partial_\theta\psi_i(\theta),
        \psi_i(\theta)\rangle_\rho
        =
        \sum_{n=0}^{N_r-1}
        \rho_n\overline{i\,u_i(n,\theta)\psi_i(\theta)[n]}
        \psi_i(\theta)[n]                                     =\\
        &-i
        \sum_{n=0}^{N_r-1}
        \rho_n u_i(n,\theta)|\psi_i(\theta)[n]|^2              =
        -\tfrac{i}{W_0}
        \sum_{n=0}^{N_r-1}\rho_n\,u_i(n,\theta)
        =
        0 .
\end{aligned}
\]
Thus value and tangent directions are orthogonal at each point.
For a support point
\(p_\ell=(i_\ell,\theta_\ell),\)
define
\(
        \psi_\ell:=\psi_{i_\ell}(\theta_\ell),
        ~
        \dot\psi_\ell
        :=
        \partial_\theta\psi_{i_\ell}(\theta_\ell),
        ~
        \sigma_\ell:=\|\dot\psi_\ell\|_\rho.\)
Assume
\(\sigma_\ell>0, ~ \ell=1,\dots,L.\)
In the main theorem this is guaranteed by the analytical lower bound for the
gauged tangent energy.  Define the unit tangent atom
\(\mathbf h_\ell:=\tfrac{\dot{\boldsymbol\psi}_\ell}{\sigma_\ell}.\)
Then
\(
        \|\psi_\ell\|_\rho=1,
        ~
        \|h_\ell\|_\rho=1,
        ~
        \langle h_\ell,\psi_\ell\rangle_\rho=0.
\)

% ============================================================
\subsection{Gauged value and tangent kernels}
\label{app:gauged-value-tangent-kernels}
% ============================================================

For \(p=(i,\theta)\in\Qset\), define the gauged value and tangent kernels
\begin{equation}\label{eq:app_KH_def}
        K_\ell(p)
        :=
        \langle \psi_i(\theta),\psi_\ell\rangle_\rho,
        ~
        H_\ell(p)
        :=
        \langle \psi_i(\theta),h_\ell\rangle_\rho.
\end{equation}
Because \(\|\psi_\ell\|_\rho=1\) and \(\langle \psi_\ell,h_\ell\rangle_\rho=0\),
\(K_\ell(p_\ell)=1, ~ H_\ell(p_\ell)=0.\)

The angular derivatives with respect to the evaluation point are
\(\partial_\theta K_\ell(i,\theta) = \langle \partial_\theta\psi_i(\theta),\psi_\ell\rangle_\rho,\)
and
\(\partial_\theta H_\ell(i,\theta) = \langle \partial_\theta\psi_i(\theta),h_\ell\rangle_\rho.\)
At the diagonal point \(p_\ell=(i_\ell,\theta_\ell)\),
\(
        \partial_\theta K_\ell(p_\ell)
        =
        \langle \dot\psi_\ell,\psi_\ell\rangle_\rho
        =
        0,
\)
and
\(
        \partial_\theta H_\ell(p_\ell)
        =
        \langle \dot\psi_\ell,h_\ell\rangle_\rho
        =
        \left\langle \dot\psi_\ell,\tfrac{\dot\psi_\ell}{\sigma_\ell}\right\rangle_\rho
        =
        \sigma_\ell.
\)

% ============================================================
\subsection{The gauged Hermite pre-certificate}
\label{app:gauged-hermite-precertificate}
% ============================================================

Let
\(\mathbf v
=
(v_1,\ldots,v_L)^T, ~ |v_\ell|=1.\)
For recovery of
\(        \nu_\star=\sum_{\ell=1}^L c_\ell
        \delta_{r_{i_\ell}}\otimes\delta_{\theta_\ell},
\)
one takes \(v_\ell=\sgn(c_\ell)\).

Define the gauged signs \( \widetilde v_\ell
        :=
        e^{i\chi_{i_\ell}(\theta_\ell)}v_\ell.\) We seek a gauged certificate of the form
\begin{equation}\label{eq:app_P_expansion}
        P_{S,\widetilde {\mathbf v}}(p)
        =
        \sum_{\ell=1}^L\alpha_\ell K_\ell(p)
        +
        \sum_{\ell=1}^L \beta_\ell H_\ell(p).
\end{equation}
Equivalently, if
\(
    \mathbf z_{S,\widetilde{\mathbf v}}
:=
\sum_{\ell=1}^L
\alpha_\ell\boldsymbol\psi_\ell
+
\sum_{\ell=1}^L
\beta_\ell\mathbf h_\ell,
\)
then
\(P_{S,\widetilde {\mathbf v}}(i,\theta) = \langle \bs \psi_i(\theta),\mathbf z_{S,\widetilde{\mathbf v}}\rangle_\rho.\)

The Hermite interpolation conditions are
\begin{equation}\label{eq:app_hermite_conditions}
        P_{S,\widetilde {\mathbf v}}(p_j)=\widetilde v_j,
        ~
        \partial_\theta P_{S,\widetilde {\mathbf v}}(p_j)=0,
        ~ j=1,\dots,L.
\end{equation}
For normalization, we write the second set of equations as
\(\sigma_j^{-1}\partial_\theta P_{S,\widetilde {\mathbf v}}(p_j)=0.\)

Define the \(2L\times2L\) Hermite interpolation matrix \(\mathcal H_{\rm g}(S)\) as
follows.  Its \(j\)-th value row is
\[
        \bigl(
        K_1(p_j),\dots,K_L(p_j),
        H_1(p_j),\dots,H_L(p_j)
        \bigr).
\]
Its \(j\)-th tangent row is
\[
        \sigma_j^{-1}
        \bigl(
        \partial_\theta K_1(p_j),\dots,\partial_\theta K_L(p_j),
        \partial_\theta H_1(p_j),\dots,\partial_\theta H_L(p_j)
        \bigr).
\]
Let
\(
        \boldsymbol\gamma_\ell
        :=
        \begin{bmatrix}
        \alpha_\ell\\
        \beta_\ell
        \end{bmatrix},
        ~
        \mathbf r_j
:=
\begin{bmatrix}
\widetilde v_j\\
0
\end{bmatrix}.
\)
Then \eqref{eq:app_hermite_conditions} is the block system
\begin{equation}\label{eq:app_hermite_block_system}
        \boldsymbol\gamma_j
        +
        \sum_{\ell\ne j}\mathbf E_{j\ell}\boldsymbol\gamma_\ell
        =
        \mathbf r_j,
        ~ j=1,\dots,L .
\end{equation}
where \(\mathbf E_{j\ell}\in\C^{2\times2}\) is the off-diagonal block
\[
      \mathbf E_{j\ell}
        =
        \begin{bmatrix}
        K_\ell(p_j) & H_\ell(p_j)\\[0.5ex]
        \sigma_j^{-1}\partial_\theta K_\ell(p_j)
        &
        \sigma_j^{-1}\partial_\theta H_\ell(p_j)
        \end{bmatrix}.
\]
The diagonal block is the \(2\times2\) identity.  Indeed, using the diagonal
identities above,
\[
        \begin{bmatrix}
        K_j(p_j) & H_j(p_j)\\[0.5ex]
        \sigma_j^{-1}\partial_\theta K_j(p_j)
        &
        \sigma_j^{-1}\partial_\theta H_j(p_j)
        \end{bmatrix}
        =
        \begin{bmatrix}
        1&0\\0&1
        \end{bmatrix}.
\]

% ============================================================
\subsection{Block Neumann domination}
\label{app:block-neumann-domination}
% ============================================================

We write the Hermite coefficient block at support point \(p_j\) as
\(
       \boldsymbol\gamma_j
:=
\begin{bmatrix}
\alpha_j\\
\beta_j
\end{bmatrix},
~
\mathbf r_j
:=
\begin{bmatrix}
\widetilde v_j\\
0
\end{bmatrix}.
\)
The gauged Hermite normalization gives the diagonal block identity
\[
       \mx E_{jj}
        =
        \begin{bmatrix}
        K(p_j,p_j) & H(p_j,p_j)\\
        dK(p_j,p_j) & dH(p_j,p_j)
        \end{bmatrix}
        =
        \mathbf I_2 .
\]
Thus the support-support Hermite matrix has identity diagonal blocks and
off-diagonal blocks
\[
      \mx E_{j\ell}
        =
        \begin{bmatrix}
        K(p_j,p_\ell) & H(p_j,p_\ell)\\
        dK(p_j,p_\ell) & dH(p_j,p_\ell)
        \end{bmatrix},
        ~ j\ne \ell .
\]

Assume that the off-diagonal blocks are dominated entrywise by the nonnegative
matrix
\[
        \mathbf G_{\rm SS}
        :=
        \begin{bmatrix}
        u_K^{\rm SS} & u_H^{\rm SS}\\
        u_{dK}^{\rm SS} & u_{dH}^{\rm SS}
        \end{bmatrix}
        \in\R_+^{2\times2}.
\]
That is, for every \(j\ne\ell\),
\begin{equation}
\label{eq:app_GSS_domination}
        |\mathbf E_{j\ell}|
        \le
        \mathbf G_{\rm SS}
\end{equation}
entrywise.  Equivalently,
\[
        |K(p_j,p_\ell)|\le u_K^{\rm SS},
        ~
        |H(p_j,p_\ell)|\le u_H^{\rm SS},
\]
and
\[
        |dK(p_j,p_\ell)|\le u_{dK}^{\rm SS},
        ~
        |dH(p_j,p_\ell)|\le u_{dH}^{\rm SS}.
\]
Assume the support-support Neumann condition
\begin{equation}
\label{eq:app_neumann_condition}
        (L-1)\rho_{\rm sp}(\mathbf G_{\rm SS})<1 .
\end{equation}
Define
\begin{equation}
\label{eq:app_Gamma_def}
       \boldsymbol \Gamma
        =
        \begin{bmatrix}
        \Gamma_K\\
        \Gamma_H
        \end{bmatrix}
        :=
        \left(\mathbf I_2-(L-1)\mathbf G_{\rm SS}\right)^{-1}
        \begin{bmatrix}
        1\\
        0
        \end{bmatrix}.
\end{equation}
Define also
\[
        \boldsymbol\Xi
        :=
        \boldsymbol\Gamma-
        \begin{bmatrix}
        1\\
        0
        \end{bmatrix}
        =
        \begin{bmatrix}
        \Xi_K\\
        \Xi_H
        \end{bmatrix}.
\]
Because \(\mathbf G_{\rm SS}\ge0\) and
\((L-1)\rho_{\rm sp}(\mathbf G_{\rm SS})<1\), the Neumann series
\[
        \left(\mathbf I_2-(L-1)\mathbf G_{\rm SS}\right)^{-1}
        =
        \sum_{r=0}^{\infty}
        \bigl((L-1)\mathbf G_{\rm SS}\bigr)^r
\]
converges entrywise to a nonnegative matrix.

The block domination condition reduces the off-diagonal Hermite
interactions to a nonnegative \(2\times2\) comparison system.  Since the
spectral radius of this comparison operator is strictly below one, the
associated Neumann series controls both invertibility of the Hermite matrix
and the size of the interpolation coefficients.

\begin{lem}
\label{lem:app_hermite_neumann}
Under \eqref{eq:app_GSS_domination} and
\eqref{eq:app_neumann_condition}, the gauged Hermite matrix
\(\mathcal H_{\rm g}(S)\) is invertible.  Moreover, for every sign vector
\(|v_\ell|=1\), the solution of the Hermite system satisfies
\[
        |\alpha_\ell|\le \Gamma_K,
        ~
        |\beta_\ell|\le \Gamma_H,
        ~
        \ell=1,\ldots,L .
\]
\end{lem}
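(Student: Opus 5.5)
The approach is a comparison argument. The complex block system \eqref{eq:app_hermite_block_system} is compared entrywise with the nonnegative \(2\times2\) system \((\mathbf I_2-(L-1)\mathbf G_{\rm SS})\boldsymbol\Gamma=(1,0)^T\). Write \(\boldsymbol\gamma=(\boldsymbol\gamma_1,\dots,\boldsymbol\gamma_L)\in\C^{2L}\), reorder the unknowns blockwise, and write the Hermite matrix as \(\mathbf I_{2L}+\mathbf E\). Here \(\mathbf E\) has zero diagonal blocks, by the identity \(\mathbf E_{jj}=\mathbf I_2\) already established, and off-diagonal blocks \(\mathbf E_{j\ell}\).

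Define the nonnegative comparison matrix \(\mathbf M:=(\mathbf 1\mathbf 1^T-\mathbf I_L)\otimes\mathbf G_{\rm SS}\in\R_+^{2L\times2L}\). By \eqref{eq:app_GSS_domination}, \(|\mathbf E|\le\mathbf M\) entrywise.

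\textbf{Step 1: invertibility.} The first task is to show \(\rho_{\rm sp}(\mathbf M)=(L-1)\rho_{\rm sp}(\mathbf G_{\rm SS})\). This follows because the eigenvalues of a Kronecker product are products of eigenvalues. The matrix \(\mathbf 1\mathbf 1^T-\mathbf I_L\) has eigenvalues \(L-1\) and \(-1\), so \(\rho_{\rm sp}(\mathbf M)=(L-1)\rho_{\rm sp}(\mathbf G_{\rm SS})\), provided \(L\ge2\). The case \(L=1\) is trivial, since there \(\mathcal H_{\rm g}(S)=\mathbf I_2\).

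By the entrywise domination \(|\mathbf E^m|\le\mathbf M^m\) (the Wielandt comparison), the series \(\sum_m(-\mathbf E)^m\) converges absolutely under \eqref{eq:app_neumann_condition}. Hence \(\mathbf I+\mathbf E\) is invertible, with inverse given by that Neumann series.

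\textbf{Step 2: coefficient bounds.} Rather than bound the inverse crudely, I would use the symmetry of \(\mathbf M\) under permutations of support indices. From \(\boldsymbol\gamma_j=\mathbf r_j-\sum_{\ell\ne j}\mathbf E_{j\ell}\boldsymbol\gamma_\ell\), iterating gives \(\boldsymbol\gamma=\sum_m(-\mathbf E)^m\mathbf r\). Taking moduli, \(|\boldsymbol\gamma|\le\sum_m\mathbf M^m|\mathbf r|\), where \(|\mathbf r_j|=(1,0)^T\) because \(|\widetilde v_j|=1\).

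The key observation is that \(\mathbf M\) maps block-constant vectors \(\mathbf 1\otimes\mathbf w\) to \(\mathbf 1\otimes((L-1)\mathbf G_{\rm SS}\mathbf w)\). Since \(|\mathbf r|=\mathbf 1\otimes(1,0)^T\), this gives
\[
|\boldsymbol\gamma|\le\mathbf 1\otimes\sum_m((L-1)\mathbf G_{\rm SS})^m(1,0)^T=\mathbf 1\otimes\boldsymbol\Gamma,
\]
and so \(|\alpha_\ell|\le\Gamma_K\) and \(|\beta_\ell|\le\Gamma_H\). The last equality is the Neumann representation of \(\boldsymbol\Gamma\) from \eqref{eq:app_Gamma_def}.

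\textbf{Expected obstacle.} The main point to check carefully is the entrywise inequality \(|\mathbf E^m\mathbf r|\le\mathbf M^m|\mathbf r|\), which follows by induction using \(|\mathbf A\mathbf x|\le|\mathbf A||\mathbf x|\). A second point is making sure that the tangent rows of the system are really normalized by \(\sigma_j^{-1}\). This ensures the blocks are \(dK,dH\) rather than physical derivatives, so the domination by \(u_{dK}^{\rm SS},u_{dH}^{\rm SS}\) applies. That normalization relies on \(\sigma_j>0\), which was assumed.

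Alternatively, Step 1 could be obtained directly from Step 2. The same bound shows that \(\mathbf I+\mathbf E\) has trivial kernel: if \((\mathbf I+\mathbf E)\mathbf x=0\), then \(|\mathbf x|\le\mathbf M^m|\mathbf x|\to0\). This avoids any spectral computation for \(\mathbf E\) itself.
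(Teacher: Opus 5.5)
Your proof is correct, but it takes a different route from the paper's. You work with the full $2L\times 2L$ system. You dominate the off-diagonal part $\mathbf E$ by your Kronecker matrix $\mathbf M=(\mathbf 1\mathbf 1^T-\mathbf I_L)\otimes\mathbf G_{\rm SS}$, compute its spectral radius from Kronecker eigenvalues, and use $|\mathbf E^m|\le\mathbf M^m$ to get an absolutely convergent Neumann series for $(\mathbf I+\mathbf E)^{-1}$. You then evaluate $\sum_m\mathbf M^m$ exactly on the block-constant vector $|\mathbf r|=\mathbf 1\otimes(1,0)^T$.

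The paper never leaves the $2\times 2$ level. It sets $\mathbf U:=\max_j|\boldsymbol\gamma_j|$ componentwise and reads off $\mathbf U\le\mathbf e+(L-1)\mathbf G_{\rm SS}\mathbf U$ directly from the block equations, without the $\mathbf e$ term in the homogeneous case. It then iterates. Injectivity gives invertibility, and the bound $\mathbf U\le\boldsymbol\Gamma$ is obtained a posteriori for the unique solution, using only $((L-1)\mathbf G_{\rm SS})^r\to 0$.

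Your block-constant invariance is the full-system counterpart of this max-compression, which amounts to dominating $|\boldsymbol\gamma|$ by $\mathbf 1\otimes\mathbf U$. Your closing alternative for injectivity is essentially the paper's argument before compression.

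The paper's version is lighter: it needs neither the Kronecker spectral computation nor the Wielandt-type comparison of powers. Yours yields more. It gives an explicit convergent Neumann expansion of the inverse Hermite matrix, and by Gelfand's formula $\rho_{\rm sp}(\mathbf E)\le\rho_{\rm sp}(\mathbf M)=\eta_{\rm SS}<1$. It also gives the entrywise bound $|(\mathbf I+\mathbf E)^{-1}|\le(\mathbf I-\mathbf M)^{-1}$, which controls the coefficients blockwise for an arbitrary right-hand side, not only for $|\mathbf r_j|=(1,0)^T$.
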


\begin{proof}
First we prove invertibility.  Suppose
\[
        \mathcal H_{\rm g}(S)\boldsymbol\gamma=0,
        ~
        \boldsymbol\gamma
        :=
        (\boldsymbol\gamma_1,\ldots,\boldsymbol\gamma_L).
\]
Using the diagonal identity \(\mathbf E_{jj}=\mathbf I_2\), the \(j\)-th block equation is
\[
        \boldsymbol\gamma_j
        +
        \sum_{\ell\ne j}\mathbf E_{j\ell}\boldsymbol\gamma_\ell
        =
        0,
        ~ j=1,\ldots,L .
\]
Taking componentwise absolute values and using
\eqref{eq:app_GSS_domination}, we obtain
\(
        |\boldsymbol\gamma_j|
        \le
        \sum_{\ell\ne j}\mathbf G_{\rm SS}|\boldsymbol\gamma_\ell|.
\)
Let
\(
        \mathbf U:=\max_{1\le j\le L}|\boldsymbol\gamma_j|
\)
componentwise.  Then
\(
       \mathbf U\le (L-1)\mathbf G_{\rm SS}\mathbf U .
\)
Set \(\mathbf M:=(L-1)\mathbf G_{\rm SS}\).  Since \(\mathbf M\ge0\), iteration gives
\(
        \mathbf U\le \mathbf M^r \mathbf U,
        ~ r=1,2,\ldots .
\)
The spectral condition \(\rho_{\rm sp}(\mx M)<1\) implies \(\mathbf M^r\to0\).  Hence
\(\mathbf U=0\).  Therefore \(\boldsymbol\gamma_j=0\) for every \(j\), so
\(\mathcal H_{\rm g}(S)\) is injective.  Since it is a square finite-dimensional
matrix, it is invertible.

We now prove the coefficient bounds.  The inhomogeneous Hermite system has
block equations
\[
        \boldsymbol\gamma_j
        +
        \sum_{\ell\ne j}\mathbf E_{j\ell}\boldsymbol\gamma_\ell
        =
        \mathbf r_j,
        ~
        \mathbf r_j=
        \begin{bmatrix}
        \widetilde v_j\\
        0
        \end{bmatrix}.
\]
Since \(|\widetilde v_j|=1\), we have
\(
   |\mathbf r_j|
=
\mathbf e,
\qquad
\mathbf e:=
\begin{bmatrix}
1\\
0
\end{bmatrix}.
\)
Taking componentwise absolute values gives
\(
        |\boldsymbol\gamma_j|
        \le
        e+
        \sum_{\ell\ne j}\mathbf G_{\rm SS}|\boldsymbol\gamma_\ell|.
\)
Taking the componentwise maximum over \(j\) gives
\(
        \mathbf U\le \mathbf e+\mathbf M\mathbf U,
        ~
        \mx M=(L-1)\mathbf G_{\rm SS}.
\)
Iterating this inequality yields, for every \(r\ge0\),
\(
       \mx  U
        \le
        \sum_{m=0}^{r}\mx M^m e
        +
        \mx M^{r+1}\mx U .
\)
Since \(\mx M\ge0\) and \(\rho_{\rm sp}(\mathbf M)<1\), we have \(\mx M^{r+1}\mx U\to0\), and
the Neumann series converges.  Therefore
\(
       \mx U
        \le
        \sum_{m=0}^{\infty}\mx M^m \mx e
        =
        (\mathbf I_2-\mx M)^{-1}\mx e
        =
       \boldsymbol \Gamma .
\)
Thus, for every \(\ell=1,\ldots,L\),
\(
        |\boldsymbol\gamma_\ell|\le \bs\Gamma .
\)
Equivalently,
\(
        |\alpha_\ell|\le \Gamma_K,
        ~
        |\beta_\ell|\le \Gamma_H .
\)
\end{proof}

For the local curvature analysis, the global coefficient estimates must be
refined at each diagonal block.  Subtracting the ideal self-interpolation
vector from the value and tangent equations gives the correction quantities
\(\Xi_K\) and \(\Xi_H\).

\begin{lem}
\label{lem:app_diagonal_correction_bounds}
Under the assumptions of \Cref{lem:app_hermite_neumann}, the Hermite
coefficients satisfy, for every \(j=1,\ldots,L\),
\[
        |\alpha_j-\widetilde v_j|\le \Xi_K,
        ~
        |\beta_j|\le \Xi_H .
\]
\end{lem}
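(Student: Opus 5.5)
The plan is to reuse the block system \eqref{eq:app_hermite_block_system} and split off the diagonal (self-interpolation) part of the solution. Since \(\mathbf E_{jj}=\mathbf I_2\), the \(j\)-th block equation can be rewritten as
\[
        \boldsymbol\gamma_j-\mathbf r_j
        =
        -\sum_{\ell\ne j}\mathbf E_{j\ell}\boldsymbol\gamma_\ell .
\]
The left-hand side is exactly the vector \((\alpha_j-\widetilde v_j,\beta_j)^T\). The claim therefore reduces to bounding the off-diagonal coupling term componentwise by \(\boldsymbol\Xi\).

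First, take componentwise absolute values. Use the domination \eqref{eq:app_GSS_domination} and the coefficient bound \(|\boldsymbol\gamma_\ell|\le\boldsymbol\Gamma\) from \Cref{lem:app_hermite_neumann}. Because \(\mathbf G_{\rm SS}\ge0\), the matrix acts monotonically on nonnegative vectors, and this gives
\[
        \bigl|\boldsymbol\gamma_j-\mathbf r_j\bigr|
        \le
        \sum_{\ell\ne j}\mathbf G_{\rm SS}|\boldsymbol\gamma_\ell|
        \le
        (L-1)\mathbf G_{\rm SS}\boldsymbol\Gamma
        =\mathbf M\boldsymbol\Gamma .
\]

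Second, identify \(\mathbf M\boldsymbol\Gamma\) with \(\boldsymbol\Xi\). By definition \eqref{eq:app_Gamma_def}, \((\mathbf I_2-\mathbf M)\boldsymbol\Gamma=\mathbf e\) with \(\mathbf e=(1,0)^T\). Hence \(\mathbf M\boldsymbol\Gamma=\boldsymbol\Gamma-\mathbf e=\boldsymbol\Xi\). The same identity also follows from the Neumann series, since \(\mathbf M\sum_{m\ge0}\mathbf M^m\mathbf e=\sum_{m\ge1}\mathbf M^m\mathbf e\). Reading off the two components yields \(|\alpha_j-\widetilde v_j|\le\Xi_K\) and \(|\beta_j|\le\Xi_H\), uniformly in \(j\) and in the sign vector.

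There is no genuine obstacle here. The only point needing care is the order of the two inequalities in the first step: the bound \(|\boldsymbol\gamma_\ell|\le\boldsymbol\Gamma\) may be inserted only after domination by the nonnegative \(\mathbf G_{\rm SS}\), because it is nonnegativity that preserves the componentwise inequalities. It is also essential that the global bound from \Cref{lem:app_hermite_neumann} is invoked, rather than re-running the Neumann iteration on the correction vectors.
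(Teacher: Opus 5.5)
Your proposal is correct and follows essentially the same route as the paper: isolate the diagonal block via \(\mathbf E_{jj}=\mathbf I_2\), bound the off-diagonal coupling using \(|\mathbf E_{j\ell}|\le\mathbf G_{\rm SS}\) and the global bound \(|\boldsymbol\gamma_\ell|\le\boldsymbol\Gamma\), and identify \((L-1)\mathbf G_{\rm SS}\boldsymbol\Gamma=\boldsymbol\Gamma-\mathbf e=\boldsymbol\Xi\). The only difference is presentational: the paper writes out the value and tangent rows as two scalar estimates, while you handle both at once as a single componentwise vector inequality.
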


\begin{proof}
The value interpolation equation at \(p_j\) is
\[
        \alpha_j
        +
        \sum_{\ell\ne j}\alpha_\ell K(p_j,p_\ell)
        +
        \sum_{\ell\ne j}\beta_\ell H(p_j,p_\ell)
        =
        \widetilde v_j .
\]
Hence
\[
        \alpha_j-\widetilde v_j
        =
        -
        \sum_{\ell\ne j}\alpha_\ell K(p_j,p_\ell)
        -
        \sum_{\ell\ne j}\beta_\ell H(p_j,p_\ell).
\]
Using the support-support bounds and
\Cref{lem:app_hermite_neumann}, we obtain
\[
        |\alpha_j-\widetilde v_j|
        \le
        (L-1)
        \left(
        u_K^{\rm SS}\Gamma_K
        +
        u_H^{\rm SS}\Gamma_H
        \right).
\]
Similarly, the tangent interpolation equation gives
\[
        \beta_j
        =
        -
        \sum_{\ell\ne j}\alpha_\ell dK(p_j,p_\ell)
        -
        \sum_{\ell\ne j}\beta_\ell dH(p_j,p_\ell),
\]
and therefore
\[
        |\beta_j|
        \le
        (L-1)
        \left(
        u_{dK}^{\rm SS}\Gamma_K
        +
        u_{dH}^{\rm SS}\Gamma_H
        \right).
\]
Set
\(
       \mathbf M:=(L-1)\mathbf G_{\rm SS},
        ~
        \mathbf e:=
        \begin{bmatrix}
        1\\0
        \end{bmatrix}.
\)
Since
\(
        \boldsymbol\Gamma=(\mathbf I_2-\mathbf M)^{-1}\mx e,
\)
we have
\(
     \boldsymbol\Gamma-\mathbf e
=
\mathbf M\boldsymbol\Gamma.
\)
Thus the first component of
\(\mathbf M\boldsymbol\Gamma\)
is \(\Xi_K\), and the second component is \(\Xi_H\).  The two preceding estimates therefore give
\(
        |\alpha_j-\widetilde v_j|\le \Xi_K,
        ~
        |\beta_j|\le \Xi_H .
\)
\end{proof}
% ============================================================
\subsection{From the gauged certificate to an adjoint TV certificate}
\label{app:gauged-to-ungauged}
% ============================================================

Let \(P_{S,\widetilde {\mathbf v}}\) be the solution of the gauged Hermite system.  We
write points as \(p=(i,\theta)\), and use the shorthand
\(
        \chi_p:=\chi_i(\theta).
\) For \(p=(i,\theta)\in\Qset\), let
\(
        \varphi_p[n]
        :=
        a_{\rm Fr}(r_i,\theta)[n],
        ~
        n=0,\ldots,N_r-1,
\)
denote the raw Fresnel measurement atom.
The gauged atom is related to the raw Fresnel atom by
\[
        \psi_p[n]
        =
        W_0^{-1/2}e^{-i\chi_p}\varphi_p[n],
        ~ n=0,\ldots,N_r-1 .
\]
The gauged signs are defined by
\(
        \widetilde v_\ell
        :=
        e^{i\chi_{p_\ell}}v_\ell,
        ~
        p_\ell=(i_\ell,\theta_\ell).
\)

Define the ungauged family \(   Q_{S, {\mathbf v}}(p)
        :=
        e^{-i\chi_p}P_{S,\widetilde {\mathbf v}}(p),
        ~ p=(i,\theta).\)
Since \(|e^{-i\chi_p}|=1\), we immediately have
\[
        |Q_{S, {\mathbf v}}(p)|
        =
        |P_{S,\widetilde {\mathbf v}}(p)|.
\]
Moreover, at each support point \(p_\ell\),
\[
        Q_{S, {\mathbf v}}(p_\ell)
        =
        e^{-i\chi_{p_\ell}}P_{S,\widetilde {\mathbf v}}(p_\ell)
        =
        e^{-i\chi_{p_\ell}}\widetilde v_\ell
        =
        v_\ell .
\]
Thus the ungauged family interpolates the original TV signs.
It remains to verify that \(Q_{S, {\mathbf v}}\) is a genuine raw Fresnel adjoint
polynomial.  Since
\(
        P_{S,\widetilde {\mathbf v}}(p)
        =
        \langle \boldsymbol\psi_p,\mathbf z_{S,\widetilde{\mathbf{v}}}\rangle_\rho ,
\)
and the weighted inner product is
\(
        \langle u,v\rangle_\rho
        =
        \sum_{n=0}^{N_r-1}\rho_n\overline{u[n]}v[n],
\)
we have
\[
\begin{aligned}
        &P_{S,\widetilde {\mathbf v}}(p)
        =
        \sum_{n=0}^{N_r-1}
        \rho_n\overline{\psi_p[n]}z_{S,\widetilde v}[n] =
        e^{i\chi_p}
        \sum_{n=0}^{N_r-1}
        \tfrac{\rho_n}{\sqrt{W_0}}
        z_{S,\widetilde v}[n]\overline{\varphi_p[n]} .
\end{aligned}
\]
Therefore
\[
        Q_{S, {\mathbf v}}(p)
        =
        e^{-i\chi_p}P_{S,\widetilde {\mathbf v}}(p)
        =
        \sum_{n=0}^{N_r-1}
        \zeta_{S,\mathbf v}[n]\overline{\varphi_p[n]},
\]
where
\(
        \zeta_{S,v}[n]
        :=
        \tfrac{\rho_n}{\sqrt{W_0}}z_{S,\widetilde v}[n].
\)
Hence \(Q_{S,\mathbf v}=Q^{\boldsymbol\zeta_{S,\mathbf v}}\) belongs to the range of the raw Fresnel
adjoint operator.

\begin{rem}[Stationarity is a gauged condition]
The gauged certificate satisfies
\(
        \partial_\theta P_{S,\widetilde {\mathbf v}}(p_\ell)=0.
\)
The ungauged certificate \(Q_{S, {\mathbf v}}\) need not be stationary at \(p_\ell\),
because the gauge factor \(e^{-i\chi_p}\) generally has a nonzero angular
derivative.  This is harmless.  The TV dual certificate criterion requires
sign interpolation and strict off-support boundedness for \(Q_{S, {\mathbf v}}\), not
stationarity of \(Q_{S, {\mathbf v}}\).  The stationarity of \(P_{S,\widetilde {\mathbf v}}\) is
only a device used to prove local modulus decay.
\end{rem}

% ============================================================
\subsection{Atom linear independence}
\label{app:atom-linear-independence}
% ============================================================

The TV dual-certificate criterion also requires independence of the active
Fresnel atoms.  This is not an additional hypothesis in the main theorem.
Indeed, any nontrivial linear dependence among the active atoms would
produce a nonzero vector in the null space of the gauged Hermite matrix.
Hermite invertibility therefore gives the required independence.

\begin{lem}
\label{lem:app_hermite_implies_atom_independence}
If \(\mathcal H_{\rm g}(S)\) is invertible, then the active Fresnel atoms
\(
        \{\aFR(r_{i_\ell},\theta_\ell)\}_{\ell=1}^L
\)
are linearly independent in the full aperture space \(\C^{N_r}\).
\end{lem}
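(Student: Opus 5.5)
The plan is to argue by contraposition: a nontrivial linear dependence among the active Fresnel atoms produces a nonzero null vector of \(\mathcal H_{\rm g}(S)\). Suppose there are \(d_1,\ldots,d_L\in\C\), not all zero, with
\[
\sum_{\ell=1}^L d_\ell\,\aFR(r_{i_\ell},\theta_\ell)=0\quad\text{in }\C^{N_r}.
\]
The first step is to pass to gauged atoms. Each \(\psi_\ell\) is the raw atom times the nonzero scalar \(W_0^{-1/2}e^{-i\chi_{p_\ell}}\). Setting \(\alpha_\ell:=d_\ell\sqrt{W_0}\,e^{i\chi_{p_\ell}}\) therefore gives \(\sum_\ell\alpha_\ell\psi_\ell=0\) entrywise in \(\C^{N_r}\), with \((\alpha_\ell)\neq0\). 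We also set \(\beta_\ell:=0\) and use the same vector \(\mathbf z:=\sum_\ell\alpha_\ell\psi_\ell+\sum_\ell\beta_\ell\mathbf h_\ell=0\) that appears in the construction of \(P_{S,\widetilde{\mathbf v}}\).

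The second step is to evaluate the Hermite rows on this coefficient vector \(\boldsymbol\gamma=(\alpha_1,\ldots,\alpha_L,0,\ldots,0)\). The \(j\)-th value row gives
\[
\sum_\ell\alpha_\ell K_\ell(p_j)+\sum_\ell\beta_\ell H_\ell(p_j)=\langle\psi_j,\mathbf z\rangle_\rho=0,
\]
because \(\mathbf z=0\) entrywise. The \(j\)-th tangent row gives \(\sigma_j^{-1}\langle\dot\psi_j,\mathbf z\rangle_\rho=0\) for the same reason. Thus \(\mathcal H_{\rm g}(S)\boldsymbol\gamma=0\) with \(\boldsymbol\gamma\ne0\), which contradicts invertibility. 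This also shows that the identification of rows with the pairings \(\langle\psi_j,\cdot\rangle_\rho\) and \(\sigma_j^{-1}\langle\dot\psi_j,\cdot\rangle_\rho\) against \(\mathbf z\) holds by linearity of the inner product in its second slot, consistent with \eqref{eq:app_KH_def} and \eqref{eq:app_P_expansion}.

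The only delicate point is the possibly degenerate weighted pairing. The taper vanishes at the ends, so \(\langle\cdot,\cdot\rangle_\rho\) is only positive semidefinite on \(\C^{N_r}\). The argument avoids this issue because it runs in the right direction: a dependence in the full space \(\C^{N_r}\) makes \(\mathbf z\) the zero vector, and every \(\rho\)-pairing against \(0\) vanishes regardless of degeneracy. No positivity of \(\rho\) is needed. The converse direction, which would fail, is not required. I expect no further obstacle beyond writing out the row identification cleanly.
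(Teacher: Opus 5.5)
Your proposal is correct and follows the paper's proof essentially step for step. You convert a raw dependence into a gauged one by nonzero rescalings, take the Hermite vector with value coefficients $\alpha_\ell$ and zero tangent coefficients, and note that every value and tangent row pairs against the zero vector, which contradicts invertibility. Your remark that the degenerate $\rho$-pairing is harmless in this direction matches the paper's comment about restricting to $\mathcal A_\rho$.
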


\begin{proof}
Suppose, toward a contradiction, that the raw Fresnel atoms are linearly
dependent in the full aperture space. Then there exist coefficients
\(q_1,\ldots,q_L\), not all zero, such that \(\sum_{\ell=1}^L q_\ell \aFR(r_{i_\ell},\theta_\ell) = 0 ~ \text{in }\C^{N_r}.\) Each gauged atom is obtained from the corresponding raw Fresnel atom by
multiplication by the nonzero scalar \(W_0^{-1/2}e^{-i\chi_{i_\ell}(\theta_\ell)}.\) Therefore there exist coefficients \(\widetilde q_\ell\), not all zero, such
that \(\sum_{\ell=1}^L \widetilde q_\ell\psi_{p_\ell} = 0 ~ \text{in }\C^{N_r}.\) Moreover, \(\widetilde q_\ell=0 \quad\Longleftrightarrow\quad q_\ell=0 .\) In particular, the same identity holds after restriction to the active aperture
set \(\mathcal A_\rho:=\{n:\rho_n>0\}\).  Any full-aperture linear dependence
therefore restricts to a dependence on \(\mathcal A_\rho\).  Thus invertibility
of the weighted Hermite system on the active aperture rules out a full-aperture
dependence.

Define the auxiliary Hermite vector
\(
        \boldsymbol\eta
        :=
        (\boldsymbol\eta_1,\ldots,\boldsymbol\eta_L),
        ~
        \boldsymbol\eta_\ell
        :=
        \begin{bmatrix}
        \widetilde q_\ell\\
        0
        \end{bmatrix}.
\)
For any evaluation point \(p=(i,\theta)\), the corresponding gauged function is
\[
\begin{aligned}
 &P(p)
 =
 \sum_{\ell=1}^L
 \widetilde q_\ell K(p,p_\ell)=
 \left\langle
 \psi_p,
 \sum_{\ell=1}^L\widetilde q_\ell\psi_{p_\ell}
 \right\rangle_\rho
 =
 0 .
\end{aligned}
\]
Hence all value rows of \(\mathcal H_{\rm g}(S)\boldsymbol\eta\) vanish.
Likewise, for every support point \(p_j\),
\[
\begin{aligned}
 &\partial_\theta P(p_j)
 =
 \left\langle
 \partial_\theta\psi_{p_j},
 \sum_{\ell=1}^L\widetilde q_\ell\psi_{p_\ell}
 \right\rangle_\rho
 =
 0 .
\end{aligned}
\]
Hence all tangent rows of \(\mathcal H_{\rm g}(S)\boldsymbol\eta\) vanish. Therefore \(\mathcal H_{\rm g}(S)\boldsymbol\eta=0.\) Since \(\mathcal H_{\rm g}(S)\) is invertible, \(\boldsymbol\eta=0\). Thus \(\widetilde q_\ell=0, ~ \ell=1,\ldots,L.\) This contradicts the fact that the coefficients \(\widetilde q_\ell\) are not
all zero. Therefore no nontrivial full-space dependence among the raw Fresnel
atoms exists. Hence \(\{\aFR(r_{i_\ell},\theta_\ell)\}_{\ell=1}^L\) is linearly independent in \(\C^{N_r}\).
\end{proof}
% ============================================================
% \subsection{Summary of the appendix}
% \label{app:tv-duality-summary}
% % ============================================================

% The results above establish the logical chain used in the main theorem:
% \(\text{block Neumann domination} \Longrightarrow \text{Hermite invertibility and coefficient bounds},\)
% \(\text{Hermite invertibility} \Longrightarrow \text{active atom linear independence},\)
% and
% \(\text{gauged Hermite certificate} \Longrightarrow \text{ungauged adjoint TV certificate}.\)
% Consequently, once the main text proves
% \(|P_{S,\widetilde {\mathbf v}}(p)|<1 ~ \text{for every }p\notin S,\)
% the ungauged family \(Q_{S, {\mathbf v}}\) satisfies the TV dual certificate criterion and
% the exact recovery conclusion follows.

% ============================================================
% ============================================================
% ============================================================
% ============================================================
\section{Proofs of weighted QPAC estimates and gauged kernel bounds}
\label{app:qsum}
% ============================================================

This appendix proves the weighted quadratic-phase aperture certification
(QPAC) estimates used in \Cref{sec:qsum-bounds}.  The proof has three layers.

First, we prove scalar estimates for finite weighted quadratic exponential
sums
\begin{equation}
\label{eq:app_basic_qsum}
        T_N(a;\omega_1,\omega_2)
        :=
        \sum_{n=0}^{N-1}
        a_n e^{i(\omega_1 n+\omega_2 n^2)} ,
\end{equation}
where \(a=(a_0,\ldots,a_{N-1})\in\mathbb C^N\). The derivative branch is obtained by fourth-order Abel summation.  The Lag
Correlation branch is obtained by residue splitting, retaining the complete
residue lag-pair phase, and controlling the real lag contributions through
cosine majorants.  The residue-linear branch is obtained by decomposing the
aperture into residue classes when \(\omega_2\) is close to a rational multiple
of \(\pi\).

Second, we show that each gauged Fresnel channel is exactly a scalar sum of
the form \eqref{eq:app_basic_qsum}, up to a unit-modulus phase factor.  The
four normalized gauged Hermite channels are
\(K,~ H,~ dK,~ dH,\)
corresponding respectively to
\(
        (\psi,\psi),~
        (\psi,h),~
        (h,\psi),~
        (h,h).
\)
The higher angular derivative channels are
\(d^2K,~ d^3K,~ d^2H,~ d^3H,\)
where the angular derivative is taken with respect to the evaluation point
while the support point is fixed.

Third, we convert the pairwise scalar estimates into support-uniform estimates
on compact physical support/evaluation cells.  The support-uniform estimates
are cellwise bounds: they hold simultaneously for every pair of physical
parameters in a prescribed cell. When such cells form a certified cover of the admissible pair class, taking
the supremum of the cellwise bounds yields a support-uniform bound over the
entire class.

The normalized channels \(K,H,dK,dH\) have the weighted
Cauchy-Schwarz cap \(1\). The higher angular derivative channels are not
normalized inner products and therefore receive derivative-size
Cauchy-Schwarz caps instead.

Throughout the appendix, the superscript \({\rm SS}\) means
\emph{support-support}.  It is not a channel label.

% ============================================================
\subsection{Scalar notation and finite differences}
\label{app:scalar_notation}
% ============================================================

For a finite sequence \(c=(c_0,\ldots,c_M)\), define the forward difference
\((\Delta c)_m:=c_{m+1}-c_m, ~ m=0,\ldots,M-1.\)
Higher forward differences are defined recursively:
\(\Delta^{j+1}c:=\Delta(\Delta^jc).\)
The norm
\(\|\Delta^j c\|_1\)
is always taken over the natural index range of \(\Delta^j c\).  Empty sums
are interpreted as zero.
For \(x\in\mathbb R\), define
\(\dist_{2\pi}(x):=\min_{k\in\mathbb Z}|x-2\pi k|\)
and
\(\dist_{\pi}(x):=\min_{k\in\mathbb Z}|x-\pi k|.\)
The signed representative modulo \(2\pi\) is
\(
        \dist_{2\pi}^{\rm sgn}(x)
        :=
        \operatorname{mod}(x+\pi,2\pi)-\pi
        \in[-\pi,\pi).
\)
Thus
\(x-\dist_{2\pi}^{\rm sgn}(x)\in2\pi\mathbb Z.\)
For a phase pair \((\omega_1,\omega_2)\), define the extended derivative
separation
\begin{equation}
\label{eq:app_dN_plus}
        d_N^+(\omega_1,\omega_2)
        :=
        \min_{0\le n\le N-1}
        \dist_{2\pi}\bigl(\omega_1+(2n+1)\omega_2\bigr).
\end{equation}
The index \(n=N-1\) is included because the fourth-order Abel summation
argument introduces one auxiliary endpoint.

% ============================================================
\subsection{The scalar QPAC best-branch estimate}
\label{app:scalar_best_branch}
% ============================================================

Let \(Q_{\max}\in\{2,\ldots,N\}\).  The scalar QPAC estimate is defined as the
minimum of one universal cap and three valid cancellation bounds:

\begin{equation}
\label{eq:app_Bbest_def}
\begin{aligned}
        B_{\rm best}^{(4,Q_{\max})}
        (a;\omega_1,\omega_2;N)
        :=
        \min\Bigl\{
        &\|a\|_1,\,
        B_{\rm der}^{(4)}(a;\omega_1,\omega_2;N),\\
        &B_{\RS}^{(Q_{\max})}(a;\omega_1,\omega_2;N),\,
        B_{\rm res,lin}^{(Q_{\max})}(a;\omega_1,\omega_2;N)
        \Bigr\}.
\end{aligned}
\end{equation}

The separate scalar branches developed in this appendix are combined through
\(B_{\rm best}^{(4,Q_{\max})}\).  The resulting estimate is the pointwise
input used later in the cellwise support-uniform construction.

\begin{thm}
\label{thm:app_scalar_qpac}
Let \(N\ge9\), let \(Q_{\max}\in\{2,\ldots,N\}\), and let
\(a\in\mathbb C^N\) satisfy the fourth-order flat-end condition
\begin{equation}
\label{eq:app_r4_flat_end}
        a_0=a_1=a_2=a_3=0,
        ~
        a_{N-4}=a_{N-3}=a_{N-2}=a_{N-1}=0.
\end{equation}
Then, for every \((\omega_1,\omega_2)\in\mathbb R^2\),
\begin{equation}
\label{eq:app_scalar_qpac_bound}
        |T_N(a;\omega_1,\omega_2)|
        \le
        B_{\rm best}^{(4,Q_{\max})}
        (a;\omega_1,\omega_2;N).
\end{equation}
\end{thm}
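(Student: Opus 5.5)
The proof reduces to a single observation. $B_{\rm best}^{(4,Q_{\max})}$ is the minimum of four quantities, so \eqref{eq:app_scalar_qpac_bound} follows once each quantity is shown to be an upper bound for $|T_N(a;\omega_1,\omega_2)|$ on its own. A branch whose hypotheses fail at $(\omega_1,\omega_2)$ is assigned the value $+\infty$ and is then trivially an upper bound. I would treat the four candidates in turn, fixing throughout
\[
e_n:=e^{i(\omega_1 n+\omega_2 n^2)},\qquad
\delta_n:=\omega_1+(2n+1)\omega_2,
\]
so that $e_{n+1}=e^{i\delta_n}e_n$.

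\emph{The $\ell^1$ cap and the residue split.} The bound $|T_N|\le\|a\|_1$ is the triangle inequality, since $|e_n|=1$. For $B_{\RS}^{(Q_{\max})}$, fix $Q\in\{2,\ldots,Q_{\max}\}$ and write $n=s+Qm$. Expanding gives $\omega_1 n+\omega_2n^2=\alpha_s+\beta_s m+\gamma m^2$, which yields the exact identity $T_N=\sum_{s}e^{i\alpha_s}T_s^{(Q)}$ of \Cref{def:pointwise_lc}. The triangle inequality then gives $|T_N|\le\sum_s|T_s^{(Q)}|$, and minimizing over $Q$ gives the claim.

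\emph{The derivative branch.} This is the substantive step. When $d_N^+>0$, every $\delta_n$ with $0\le n\le N-1$ avoids $2\pi\mathbb Z$, so $e^{i\delta_n}-1\neq0$ and
\[
e_n=\frac{e_{n+1}-e_n}{e^{i\delta_n}-1}.
\]
Set $c^{(0)}_n:=a_n$ and apply summation by parts to $\sum_n c^{(0)}_n(e^{i\delta_n}-1)^{-1}\Delta e_n$. This gives $-\sum_n \Delta\bigl(c^{(0)}_n/(e^{i\delta_n}-1)\bigr)\,e_{n+1}$ plus boundary terms. The boundary terms vanish because $a_0=a_{N-1}=0$; this is the step that uses the extended index $n=N-1$ in $d_N^+$. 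I would iterate four times, each time defining $c^{(j+1)}$ as the difference of $c^{(j)}/(e^{i\delta})$-factors. At stage $j$ the boundary terms involve $\Delta^{j}$ of sequences supported away from the first and last four indices, so they vanish by the fourth-order flat-end condition \eqref{eq:app_r4_flat_end}; the hypothesis $N\ge9$ guarantees the support interval is nonempty.

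To bound the result, I would use $|e^{i\delta}-1|=2|\sin(\delta/2)|\ge \tfrac{2}{\pi}\dist_{2\pi}(\delta)\ge\tfrac2\pi d_N^+$. The differences of the multipliers are handled by
\[
\frac{1}{e^{i\delta_{n+1}}-1}-\frac{1}{e^{i\delta_n}-1}
=\frac{e^{i\delta_n}(1-e^{2i\omega_2})}{(e^{i\delta_{n+1}}-1)(e^{i\delta_n}-1)},
\]
which produces the factor $|1-e^{2i\omega_2}|=2|\sin\omega_2|$. A discrete Leibniz rule then expresses the fourth-stage coefficient through $\Delta^k a$ for $k\le4$, with powers of $(d_N^+)^{-1}$ and $|\sin\omega_2|$ as prefactors. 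Taking $\ell^1$ norms yields the closed-form bound $B_{\rm der}^{(4)}$ recorded in this appendix.

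\emph{The residue-linear branch.} Choose $Q\le Q_{\max}$ and write $Q^2\omega_2=2\pi k+\epsilon$ with $|\epsilon|$ small. On residue class $s$, the identity $e^{i\gamma m^2}=e^{i\epsilon m^2}$ removes the resonant part, leaving the linear phase $\beta_s m$ with the slowly varying amplitude $a^{(Q,s)}_m e^{i\epsilon m^2}$. A linear Abel estimate with denominator $|e^{i\beta_s}-1|$ applies, using the flat ends inherited by each residue subsequence, and the variation of $e^{i\epsilon m^2}$ is controlled by $|\epsilon|$ times an explicit factor. Summing over $s$ with the triangle inequality gives $B_{\rm res,lin}^{(Q_{\max})}$; on classes where $\beta_s$ is resonant, the $\ell^1$ fallback is used instead.

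\emph{Main obstacle.} I expect the main difficulty to be the bookkeeping in the fourth-order Abel iteration. One must check that every boundary term genuinely vanishes under exactly four flat-end zeros, given the index shift introduced at each stage. One must also check that the discrete Leibniz expansion reproduces precisely the constants, in $d_N^+$, $|\sin\omega_2|$ and $\|\Delta^k a\|_1$, that define $B_{\rm der}^{(4)}$.
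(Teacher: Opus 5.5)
Your top-level reduction is the paper's: each of the four entries of \(B_{\rm best}^{(4,Q_{\max})}\) is shown separately to dominate \(|T_N|\). Your treatment of the \(\ell^1\) cap and of \(B_{\RS}^{(Q_{\max})}\) matches the paper exactly. The derivative branch, however, does not produce \(B_{\rm der}^{(4)}\) along the route you commit to.

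\emph{Derivative branch.} Jordan's inequality gives \(|1-e^{i\delta_n}|\ge\frac2\pi d_N^+\), hence \(\|\phi\|_\infty\le\pi/(2d_N^+)\) for \(\phi_n=(1-e^{i\delta_n})^{-1}\). Since \(\sin(d_N^+/2)>d_N^+/\pi\) for \(0<d_N^+<\pi\), your coefficients come out inflated; for instance \(\beta_4=1/(16s_N^4)\) becomes \(\pi^4/(16(d_N^+)^4)\). You would therefore prove \(|T_N|\le\widetilde B\) with \(\widetilde B\ge B_{\rm der}^{(4)}\), generally strictly, which is not the inequality in the theorem. Use the exact bound instead: \(|1-e^{i\delta_n}|=2\sin\bigl(\dist_{2\pi}(\delta_n)/2\bigr)\ge2s_N\). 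The step you defer as bookkeeping also needs an input that iterating your first-difference formula does not supply. You need the closed forms \(\Delta^j\phi_n=z_n(w-1)^jP_j(z_n,w)/\prod_{k=0}^{j}(1-z_nw^k)\), where \(z_n=e^{i\delta_n}\), \(w=e^{2i\omega_2}\), and \(|P_2|\le2\), \(|P_3|\le6\), \(|P_4|\le24\). Since \(1-z_nw^k=1-e^{i\delta_{n+k}}\) with \(n+k\le N-1\), every denominator factor has modulus at least \(2s_N\). This gives \(M_j\le j!\,(2|\sin\omega_2|)^j/(2s_N)^{j+1}\). Only after inserting these bounds into the explicit Leibniz coefficients \(C_{4,0,\ell}\) do the constants \(105/16,105/16,45/16,5/8,1/16\) appear.

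\emph{Residue-linear branch.} This branch is missing an idea. You approximate \(Q^2\omega_2\) only by multiples of \(2\pi\) and keep the linear frequency \(\beta_s\). But \(B_{\rm res,lin}^{(Q_{\max})}\) minimizes over all approximants \(\omega_2\approx\pi A/q\), \(0\le A\le 2q-1\). When \(Aq\) is odd, \(q^2\omega_2\) sits near an odd multiple of \(\pi\), and your argument gives no bound by the corresponding term. Take \(q=A=1\): the paper's term is \(V_{\varepsilon}(a)/(2|\cos(\omega_1/2)|)\) with \(\varepsilon\equiv\omega_2-\pi\). Your scheme only yields \(V_{\omega_2}(a)/(2|\sin(\omega_1/2)|)\), whose amplitude \(a_me^{i\omega_2m^2}=(-1)^ma_me^{i\varepsilon m^2}\) alternates. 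The missing identity is \(e^{i\pi Aqm^2}=e^{i\pi Aqm}\), which holds because \(m^2-m\) is even. It moves the half-resonance into the linear frequency \(\Omega_{q,A,s}=q\omega_1+\pi Aq+2q\varepsilon_{q,A}s\) and leaves the residual \(\nu_{q,A}\).

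Two further corrections are needed. First, residue subsequences do not inherit flat ends once \(q\ge5\), since the first entry \(a_s\) may have \(s\ge4\). This is why \(V_\nu\) carries \(|b_0|+|b_{M-1}|\), and your Abel estimate must keep these endpoint terms. Second, \(V_\nu\) is the exact variation of \(a^{(q,s)}_me^{i\nu m^2}\). Bounding it further through \(|\epsilon|\) again lands above the quantity in the theorem.
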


The flat-end condition is needed only for the fourth-order derivative branch.
The trivial cap, the lag-correlation branch, and the residue-linear branch are
valid for arbitrary finite coefficient sequences.  In the gauged kernel
application, all coefficient families inherit the flat-end condition from the
taper, so the complete best-branch estimate applies.

% ============================================================
\subsection{Derivative branch}
\label{app:derivative_branch}
% ============================================================

If
\(d_N^+(\omega_1,\omega_2)>0,\)
define
\(s_N:=\sin\left(\tfrac{d_N^+(\omega_1,\omega_2)}{2}\right).\)
If \(d_N^+=0\), the derivative branch is declared to be \(+\infty\).

For \(d_N^+>0\), define
\begin{equation}
\label{eq:app_derivative_branch}
        B_{\rm der}^{(4)}(a;\omega_1,\omega_2;N)
        :=
        \sum_{j=0}^{4}
        \beta_j(\omega_2,s_N)\|\Delta^j a\|_1,
\end{equation}
where
\[
        \beta_0(\omega_2,s_N)
        =
        \tfrac{105|\sin\omega_2|^4}{16s_N^8},
        ~
        \beta_1(\omega_2,s_N)
        =
        \tfrac{105|\sin\omega_2|^3}{16s_N^7},
\]
\[
        \beta_2(\omega_2,s_N)
        =
        \tfrac{45|\sin\omega_2|^2}{16s_N^6},
        ~
        \beta_3(\omega_2,s_N)
        =
        \tfrac{5|\sin\omega_2|}{8s_N^5},
\]
and
\(\beta_4(\omega_2,s_N) = \tfrac{1}{16s_N^4}.\)
If \(d_{N}^+=0\), set
\(B_{\rm der}^{(4)}(a;\omega_1,\omega_2;N):=+\infty.\)

Four successive summation-by-parts steps generate a recursively
differentiated product of the coefficient sequence and the reciprocal phase
increment.  To control the resulting sequence, we first establish a
fourth-order discrete Leibniz estimate in terms of the finite differences of
the original coefficients and of the reciprocal-increment sequence.

\begin{lem}
\label{lem:app_fourth_order_product_bound}
Let \(\phi=(\phi_n)\) be a finite sequence.  Define
\(a^{(0)}:=a,\)
and recursively
\[
        a^{(k+1)}
        :=
        \Delta\bigl(a^{(k)}\phi_{\cdot+k}\bigr),
        ~ k=0,1,2,3.
\]
For \(j=0,\ldots,4\), set
\(M_j:=\max_n |(\Delta^j\phi)_n|,\)
where the maximum is taken over the natural index range.  Then
\begin{align}
\|a^{(4)}\|_1
&\le
M_0^4\|\Delta^4a\|_1
+
10M_0^3M_1\|\Delta^3a\|_1
\nonumber\\
&\quad+
\left(
10M_0^3M_2+25M_0^2M_1^2
\right)\|\Delta^2a\|_1
\nonumber\\
&\quad+
\left(
5M_0^3M_3
+
30M_0^2M_1M_2
+
15M_0M_1^3
\right)\|\Delta a\|_1
\nonumber\\
&\quad+
\left(
M_0^3M_4
+
7M_0^2M_1M_3
+
4M_0^2M_2^2
+
11M_0M_1^2M_2
+
M_1^4
\right)\|a\|_1 .
\label{eq:app_fourfold_product_bound}
\end{align}
\end{lem}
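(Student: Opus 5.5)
The plan is to expand $a^{(4)}$ exactly as a finite sum of monomials by repeatedly applying the discrete Leibniz rule, and then bound each monomial in $\ell_1$ using the triangle inequality. The rule we use is, for a product of finitely many sequences,
\[
\Delta(f_1\cdots f_r)_n=\sum_{i=1}^{r} f_1(n+1)\cdots f_{i-1}(n+1)\,(\Delta f_i)_n\,f_{i+1}(n)\cdots f_r(n),
\]
which follows by telescoping.

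By induction on $k$, we will show that $a^{(k)}$ is a sum of monomials of the form
\[
(\Delta^{j}a)_{n+s}\prod_{i=1}^{k}(\Delta^{m_i}\phi)_{n+t_i},\qquad j+\sum_i m_i=k,
\]
with nonnegative integer shifts $s,t_i$. Each monomial has exactly $k$ factors coming from $\phi$; some of them are undifferentiated, that is, $m_i=0$.

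The inductive step works as follows. Multiplying by $\phi_{\cdot+k}$ adds one factor, so each monomial now has $k+2$ factors. Applying the rule above then produces $k+2$ new monomials, in each of which exactly one factor is differentiated once more. In particular the total number of monomials in $a^{(4)}$ is $1\cdot2\cdot3\cdot4\cdot5=120$.

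For the bound on a single monomial, we estimate each $\phi$-factor by $M_{m_i}$. The shifts do not increase $\ell_1$ norms, because every index stays inside the natural range of the differenced sequence; here the recursive index shift $\phi_{\cdot+k}$ is what keeps the ranges compatible. Hence each monomial contributes at most $\prod_i M_{m_i}\,\|\Delta^j a\|_1$.

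The proof of the lemma is then a bookkeeping computation. We track the multiplicities of the exponent patterns $(j;\{m_i\})$ through the steps $k=1,2,3,4$ and record them in a table. We begin with
\[
a^{(1)}=(\Delta a)\,E\phi+a\,\Delta\phi,
\]
which gives the patterns $(1;0)$ and $(0;1)$. At step $k$, a pattern $(j;\{m_i\})$ with a fresh factor $m=0$ adjoined maps to:
\begin{itemize}
\item $(j+1;\{m_i\},0)$, once, when the $a$-factor is differentiated;
\item $(j;\{m_i\}\text{ with one entry raised by } 1)$, once for each $\phi$-factor, when that factor is differentiated.
\end{itemize}

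Grouping the resulting monomials at $k=4$ by $j$ and by the multiset of the $m_i$ (for example $\{0,0,0,1\}\mapsto M_0^3M_1$) gives the coefficients in \eqref{eq:app_fourfold_product_bound}. Two consistency checks are available. The total of all coefficients must be $120$, and indeed $1+10+10+25+5+30+15+1+7+4+11+1=120$. The coefficient of $M_0^4\|\Delta^4a\|_1$ must be $1$.

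The main obstacle is getting the multiplicities at $k=3$ and $k=4$ exactly right. Two things need care here. First, we must fix the shift convention of the Leibniz rule consistently, so that every factor appearing is a genuine finite difference of $\phi$ on its natural range. Second, we must not merge patterns such as $M_0^2M_2^2$ and $M_0M_1^2M_2$. We would carry out the enumeration explicitly as a small generating recursion on patterns, and use the row sums at each step, which must equal $2$, $6$, $24$ and $120$ at $k=1,2,3,4$, as a check against counting errors.
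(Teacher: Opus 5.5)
Your proposal is correct, and the pattern recursion you describe does reproduce the stated constants. At $k=3$ it gives $M_0^3$, $6M_0^2M_1$, $4M_0^2M_2+7M_0M_1^2$ and $M_0^2M_3+4M_0M_1M_2+M_1^3$ for $j=3,2,1,0$; these are exactly the paper's intermediate coefficients $C_{3,0,\ell}$. At $k=4$ it gives all twelve coefficients of the lemma, e.g.\ $25=18+7$ for $M_0^2M_1^2\|\Delta^2a\|_1$ and $30=14+12+4$ for $M_0^2M_1M_2\|\Delta a\|_1$.

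Your route organizes the bookkeeping differently from the paper's. The paper never expands $a^{(k)}$. It applies the two-factor inequality $\|\Delta^r(fg)\|_1\le\sum_{j}\binom{r}{j}\|\Delta^j f\|_1\|\Delta^{r-j}g\|_\infty$ to $\Delta^m a^{(k+1)}=\Delta^{m+1}(a^{(k)}\phi_{\cdot+k})$. It then propagates norm bounds $\|\Delta^m a^{(k)}\|_1\le\sum_\ell C_{k,m,\ell}\|\Delta^\ell a\|_1$ through $C_{k+1,m,\ell}=\sum_{j}\binom{m+1}{j}M_{m+1-j}C_{k,j,\ell}$. This forces it to carry bounds on higher differences of the intermediate sequences (for instance $C_{3,1,\ell}$). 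In exchange, the $\phi$-factors become sup norms immediately and the index ranges are handled once, inside a single two-factor estimate.

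You instead take one difference at a time of a many-factor product, expand $a^{(4)}$ exactly into $120$ unit-coefficient monomials, and take absolute values only at the end. Both arguments count the same objects: assignments of the difference taken at step $s$ to one of the $s+1$ factors $a,\phi_{\cdot},\dots,\phi_{\cdot+s-1}$ present at that step. That is why the constants agree. Your version makes this combinatorial meaning explicit and gives the row-sum checks $2,6,24,120$.

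The price is the index bookkeeping for an exact multi-factor expansion, and your treatment of it is sound. The operator $\Delta$ is only applied at indices $n$ where the product is defined at both $n$ and $n+1$. Hence every factor is evaluated inside the natural range of its differenced sequence, and the shifts of the $a$-factor are injective. This justifies bounding by $M_{m_i}$ and $\|\Delta^j a\|_1$.
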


\begin{proof}
We use the forward-difference convention \((\Delta b)_n:=b_{n+1}-b_n .\) All products, differences, and norms are taken over their natural common index
ranges.
The finite-difference product rule is \(\Delta(fg)_n = (\Delta f)_n g_{n+1} + f_n(\Delta g)_n .\) Iterating this identity gives the finite-difference Leibniz inequality \(\|\Delta^r(fg)\|_1 \le \sum_{j=0}^{r} \binom{r}{j} \|\Delta^j f\|_1 \|\Delta^{r-j}g\|_\infty .\) The shifts that appear in the exact Leibniz formula are harmless here, because
they do not change the \(\ell^\infty\)-norm of the \(\phi\)-factor and only
restrict the natural summation range of the \(\ell^1\)-norm.

For \(k=0,1,2,3\), write \(\phi^{[k]}_n:=\phi_{n+k}.\) Then \(a^{(k+1)} = \Delta\bigl(a^{(k)}\phi^{[k]}\bigr).\) Since \(\Delta^m a^{(k+1)} = \Delta^{m+1}\bigl(a^{(k)}\phi^{[k]}\bigr),\) the Leibniz inequality with \(r=m+1\) gives \(\|\Delta^m a^{(k+1)}\|_1 \le \sum_{j=0}^{m+1} \binom{m+1}{j} M_{m+1-j} \|\Delta^j a^{(k)}\|_1 .\) We now keep track of the dependence on the original sequence \(a\). Define
coefficients \(C_{k,m,\ell}\) by \(\|\Delta^m a^{(k)}\|_1 \le \sum_{\ell=0}^{m+k} C_{k,m,\ell}\|\Delta^\ell a\|_1 .\) For \(k=0\), we have
\(
 C_{0,m,\ell}
 =
 \begin{cases}
 1,&\ell=m,\\
 0,&\ell\neq m.
 \end{cases}
\)
Combining this definition with the previous inequality gives the recursion \(C_{k+1,m,\ell} = \sum_{j=0}^{m+1} \binom{m+1}{j} M_{m+1-j}C_{k,j,\ell}.\) We need the coefficients \(C_{4,0,\ell}\), \(\ell=0,\ldots,4\). The final
step of the recursion is particularly simple: \(C_{4,0,\ell} = M_1C_{3,0,\ell} + M_0C_{3,1,\ell}.\) Thus it remains only to record the needed third-level coefficients.

First, \(C_{3,0,0} = M_0^2M_3+4M_0M_1M_2+M_1^3,\) \(C_{3,0,1} = 4M_0^2M_2+7M_0M_1^2,\) \(C_{3,0,2} = 6M_0^2M_1, ~ C_{3,0,3} = M_0^3, ~ C_{3,0,4}=0.\) Second, \(C_{3,1,0} = M_0^2M_4 + 6M_0M_1M_3 + 4M_0M_2^2 + 7M_1^2M_2,\) \(C_{3,1,1} = 5M_0^2M_3 + 26M_0M_1M_2 + 8M_1^3,\) \(C_{3,1,2} = 10M_0^2M_2+19M_0M_1^2,\) \(C_{3,1,3} = 9M_0^2M_1, ~ C_{3,1,4} = M_0^3.\) Substituting these coefficients into \(C_{4,0,\ell} = M_1C_{3,0,\ell} + M_0C_{3,1,\ell}\) gives, for \(\ell=4\), \(C_{4,0,4} = M_1C_{3,0,4} + M_0C_{3,1,4} = M_0^4.\) For \(\ell=3\), \(C_{4,0,3} = M_1M_0^3 + M_0(9M_0^2M_1) = 10M_0^3M_1.\) For \(\ell=2\), \(C_{4,0,2} = M_1(6M_0^2M_1) + M_0(10M_0^2M_2+19M_0M_1^2),\) and hence \(C_{4,0,2} = 10M_0^3M_2+25M_0^2M_1^2.\) For \(\ell=1\), \(C_{4,0,1} = M_1(4M_0^2M_2+7M_0M_1^2) + M_0(5M_0^2M_3+26M_0M_1M_2+8M_1^3),\) so \(C_{4,0,1} = 5M_0^3M_3 + 30M_0^2M_1M_2 + 15M_0M_1^3.\) Finally, for \(\ell=0\),
\[
\begin{aligned}
 C_{4,0,0}
 &=
 M_1
 \left(
 M_0^2M_3+4M_0M_1M_2+M_1^3
 \right) \\
 &\quad+
 M_0
 \left(
 M_0^2M_4
 +
 6M_0M_1M_3
 +
 4M_0M_2^2
 +
 7M_1^2M_2
 \right).
\end{aligned}
\]
Collecting terms gives \(C_{4,0,0} = M_0^3M_4 + 7M_0^2M_1M_3 + 4M_0^2M_2^2 + 11M_0M_1^2M_2 + M_1^4 .\) Therefore \(\|a^{(4)}\|_1 \le \sum_{\ell=0}^{4} C_{4,0,\ell}\|\Delta^\ell a\|_1 .\) Substituting the computed values of \(C_{4,0,\ell}\) gives exactly
\eqref{eq:app_fourfold_product_bound}.
\end{proof}

We now apply the preceding product estimate to
\(
        \phi_n
        =
        \frac{1}{1-e^{i(\omega_1+(2n+1)\omega_2)}}.
\)
Uniform separation of the discrete phase increments from
\(2\pi\mathbb Z\) controls the finite differences of \(\phi\), while the
flat-end condition eliminates all boundary terms in four successive
summation-by-parts steps.  This gives the derivative-branch estimate.

\begin{lem}
\label{lem:app_derivative_branch}
Under the assumptions of \Cref{thm:app_scalar_qpac},
\(|T_N(a;\omega_1,\omega_2)| \le B_{\rm der}^{(4)}(a;\omega_1,\omega_2;N).\)
\end{lem}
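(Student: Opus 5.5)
We prove \Cref{lem:app_derivative_branch} by four successive Abel summations (summation by parts). Write \(\phi(n):=\omega_1 n+\omega_2 n^2\) and \(z_n:=e^{i\phi(n)}\). The key identity is
\[
z_{n+1}-z_n=z_n\bigl(e^{i(\omega_1+(2n+1)\omega_2)}-1\bigr).
\]
Hence, with \(\phi_n=(1-e^{i(\omega_1+(2n+1)\omega_2)})^{-1}\), we have \(z_n=-\phi_n\,(\Delta z)_n\). Since \(d_N^+>0\), every \(\phi_n\) with \(0\le n\le N-1\) is well defined. We may assume \(d_N^+>0\), since otherwise the bound is \(+\infty\).

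\emph{One Abel step.} Write
\[
T_N=\sum_n a_nz_n=-\sum_n a_n\phi_n(\Delta z)_n .
\]
Summation by parts gives
\[
\sum_n a_n\phi_n(\Delta z)_n=\bigl[\text{boundary terms}\bigr]-\sum_n \Delta(a\phi)_n\,z_{n+1}.
\]
The boundary terms involve \(a_0\phi_0\) and \(a_{N-1}\phi_{N-1}\) (up to the exact index convention), and they vanish by the flat-end condition. After reindexing, this yields
\[
T_N=\pm\sum_n a^{(1)}_n z_{n+1},
\]
with \(a^{(1)}=\Delta(a\phi)\). Iterating with the shifted phase \(z_{n+k}\), which uses the shifted increment \(\phi_{\cdot+k}\) exactly as in the recursion of \Cref{lem:app_fourth_order_product_bound}, we obtain after four steps
\[
|T_N|\le\|a^{(4)}\|_1 .
\]

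\emph{Boundary terms and indices.} The care needed here is twofold. At step \(k\), the relevant boundary coefficients are entries of \(a^{(k)}\) near the endpoints. Since \(a^{(k)}\) is built from finite differences of at most \(k\) consecutive \(a\)-values, the four zeros at each end kill all of them for \(k\le 3\). One also checks that the increments used have indices \(n+k\le N-1\) (with the auxiliary endpoint), which is why \(d_N^+\) includes \(n=N-1\). This bookkeeping, together with \(N\ge 9\), which ensures that the two flat ends do not overlap, is the main obstacle: everything else is algebra.

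\emph{Bounding the \(M_j\).} It remains to bound the quantities \(M_j\) of \(\phi\). Write \(\phi_n=g(x_n)\) with \(x_n=\omega_1+(2n+1)\omega_2\) and \(g(x)=(1-e^{ix})^{-1}\). We have \(|1-e^{ix}|=2|\sin(x/2)|\ge 2s_N\), so \(M_0\le 1/(2s_N)\).

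For the differences, use
\[
\phi_{n+1}-\phi_n=\phi_n\phi_{n+1}\bigl(e^{ix_{n+1}}-e^{ix_n}\bigr),
\]
together with \(|e^{ix_{n+1}}-e^{ix_n}|=|e^{2i\omega_2}-1|=2|\sin\omega_2|\). This gives \(M_1\le 2|\sin\omega_2|/(4s_N^2)\). Iterating this identity, or writing \(\Delta^j\phi\) as a \(j\)-fold product-difference, gives bounds of the form
\[
M_j\le c_j\,|\sin\omega_2|^j\,s_N^{-(j+1)}.
\]
The constants \(c_j\) are fixed by tracking the \(2^{-(j+1)}\) normalizations.

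\emph{Assembly.} Substituting these bounds into \eqref{eq:app_fourfold_product_bound} makes every coefficient of \(\|\Delta^\ell a\|_1\) homogeneous: it is of the form \(|\sin\omega_2|^{4-\ell}s_N^{-(8-\ell)}\) times a number. Summing those numbers should reproduce \(105/16\), \(105/16\), \(45/16\), \(5/8\) and \(1/16\). If the sharp \(c_j\) obtained from the naive iteration do not match these constants exactly, I would instead use the closed form
\[
\Delta^j\phi_n=\sum_{k}(\text{signed binomial})\,\phi_{n+k},
\]
viewed as a divided difference of \(g\) along the arithmetic progression \(x_n\), which gives cleaner constants.
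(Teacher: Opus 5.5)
Your reduction matches the paper's: \(z_n=-\phi_n(\Delta z)_n\), then four Abel steps whose boundary terms vanish by the flat ends, then \(|T_N|\le\|a^{(4)}\|_1\), then \Cref{lem:app_fourth_order_product_bound}. One small slip: \(a^{(k)}_0\) and \(a^{(k)}_{N-k-1}\) involve \(k+1\) consecutive entries of \(a\), not \(k\). The four zeros still cover this for \(k\le3\).

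\textbf{The gap is the step you call algebra.} The lemma asserts the specific coefficients \(\tfrac{105}{16},\tfrac{105}{16},\tfrac{45}{16},\tfrac58,\tfrac1{16}\). These follow from \eqref{eq:app_fourfold_product_bound} only with the sharp bounds \(M_2\le|\sin\omega_2|^2/s_N^3\), \(M_3\le3|\sin\omega_2|^3/s_N^4\), \(M_4\le12|\sin\omega_2|^4/s_N^5\), that is, \(c_j=\tfrac12,\tfrac12,1,3,12\). You never determine \(c_2,c_3,c_4\), and neither route you propose yields them.
\begin{itemize}
\item Iterating \(\Delta\phi_n=\phi_n\phi_{n+1}(e^{ix_{n+1}}-e^{ix_n})\) with the triangle inequality loses the needed cancellation. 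Already at second order it gives \(|\Delta^2\phi_n|\le|\sin\omega_2|^2(s_N^{-2}+s_N^{-3})\), which exceeds the required \(|\sin\omega_2|^2s_N^{-3}\), so the assembled coefficients overshoot.
\item Your fallback, a real divided difference of \(g(x)=(1-e^{ix})^{-1}\) along \(x_n,\dots,x_{n+j}\), also fails. The hypothesis \(d_N^+>0\) only keeps the nodes themselves away from \(2\pi\mathbb Z\). The hull of the nodes has spacing \(2\omega_2\), which can be of size \(\pi\) or more, so it can contain poles of \(g\), and then no derivative bound is available. Even where one is, a mean-value or Hermite--Genocchi estimate produces \(|\omega_2|^j\) rather than \(|\sin\omega_2|^j\).
\item Bounding the signed-binomial sum termwise loses the factor \(|\sin\omega_2|^j\) altogether.
\end{itemize}

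\textbf{What is missing is the multiplicative factorization used in the paper.} Put \(\zeta:=e^{ix_n}\) and \(w:=e^{2i\omega_2}\), so that \(\phi_{n+k}=(1-\zeta w^k)^{-1}\). Over the common denominator \(\prod_{k=0}^{j}(1-\zeta w^k)\), the numerator of \(\Delta^j\phi_n\) factors as \(\zeta(w-1)^jP_j\), with \(P_1=1\), \(P_2=1+\zeta w\), \(P_3=1+2\zeta w+2\zeta w^2+\zeta^2w^3\), and \(P_4=(1+\zeta w^2)(1+3\zeta w+4\zeta w^2+3\zeta w^3+\zeta^2w^4)\).
\begin{itemize}
\item Since \(|\zeta|=|w|=1\), one has \(|P_j|\le 1,2,6,24\) for \(j=1,2,3,4\).
\item Each denominator factor equals \(1-e^{ix_{n+k}}\) with \(n+k\le N-1\), hence has modulus at least \(2s_N\).
\item Finally \(|w-1|=2|\sin\omega_2|\).
\end{itemize}
Together these give exactly \(c_j=\tfrac12,\tfrac12,1,3,12\). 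Substituting into \eqref{eq:app_fourfold_product_bound} reproduces the five constants of \(B_{\rm der}^{(4)}\). So the real obstacle is this exact cancellation in the higher differences of \(\phi\), not the boundary bookkeeping.
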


\begin{proof}
We use the forward-difference convention \((\Delta b)_n:=b_{n+1}-b_n,\) and all finite differences and norms below are taken over their natural ranges.

If \(d_N^+=0\), then the derivative branch is \(+\infty\), so the claim is
immediate. We therefore assume \(d_N^+>0\).

Define \(u_n:=e^{i(\omega_1 n+\omega_2 n^2)},~ n=0,\ldots,N.\) Then \(u_{n+1} = u_n e^{i\delta_n}, ~ \delta_n:=\omega_1+(2n+1)\omega_2, ~ n=0,\ldots,N-1.\) Since \(\dist_{2\pi}(\delta_n)\ge d_{N}^+,\) and since \(x\mapsto \sin(x/2)\) is increasing on \([0,\pi]\), we have
\[
 |1-e^{i\delta_n}|
 =
 2\left|\sin\left(\tfrac{\delta_n}{2}\right)\right|
 \ge
 2\sin\left(\tfrac{d_{N}^+}{2}\right)
 =
 2s_N .
\]
In particular, \(1-e^{i\delta_n}\neq 0\). Define \(\phi_n:=\tfrac{1}{1-e^{i\delta_n}}, ~ n=0,\ldots,N-1.\) Then \(u_n-u_{n+1} = u_n(1-e^{i\delta_n}),\) and therefore \(u_n = \phi_n(u_n-u_{n+1}).\) We use the summation-by-parts identity
\[
 \sum_{n=0}^{M}x_n(y_n-y_{n+1})
 =
 x_0y_0-x_My_{M+1}
 +
 \sum_{n=1}^{M}(x_n-x_{n-1})y_n .
 \tag{SBP}
 \label{eq:app_sbp_identity}
\]
Set \(a^{(0)}_n:=a_n,~ n=0,\ldots,N-1,\) and, for \(k=0,1,2,3\), define \(a^{(k+1)} := \Delta\bigl(a^{(k)}\phi_{\cdot+k}\bigr),\) that is, \(a^{(k+1)}_n = a^{(k)}_{n+1}\phi_{n+1+k} - a^{(k)}_{n}\phi_{n+k}.\) We first record the endpoint behavior. By the flat-end condition
\eqref{eq:app_r4_flat_end}, \(a^{(0)}\) vanishes at least at its first and last four
entries. Since all relevant values of \(\phi\) are finite, multiplication by
\(\phi_{\cdot+k}\) does not destroy endpoint zeros. Also, if a sequence \(b\)
vanishes at its first and last \(r\) entries, then \(\Delta b\) vanishes at its
first and last \(r-1\) entries. Indeed, for the left endpoint, \((\Delta b)_n=b_{n+1}-b_n=0, ~ n=0,\ldots,r-2,\) and the right endpoint is identical. Hence, by induction, \(a^{(k)} \text{ vanishes at its first and last }4-k\text{ entries}, ~ k=0,1,2,3,4.\) We now prove by induction that \(T_N(a;\omega_1,\omega_2) = \sum_{n=0}^{N-k-1}a^{(k)}_n u_{n+k}, ~ k=0,1,2,3,4.\) For \(k=0\), this is exactly the definition \(T_N(a;\omega_1,\omega_2) = \sum_{n=0}^{N-1}a_nu_n.\) Assume the identity holds for some \(k\in\{0,1,2,3\}\). Since \(u_{n+k} = \phi_{n+k}(u_{n+k}-u_{n+k+1}),\) we get \(T_N(a;\omega_1,\omega_2) = \sum_{n=0}^{N-k-1} a^{(k)}_n\phi_{n+k} (u_{n+k}-u_{n+k+1}).\) Apply \eqref{eq:app_sbp_identity} with \(x_n:=a^{(k)}_n\phi_{n+k}, ~ y_n:=u_{n+k}, ~ M:=N-k-1.\) Because \(a^{(k)}\) vanishes at its first and last \(4-k\) entries, and
\(k\le 3\), we have \(x_0=0, ~ x_M=0.\) Therefore the boundary terms vanish, and
\[
 T_N(a;\omega_1,\omega_2)
 =
 \sum_{n=1}^{N-k-1}
 \left(
 a^{(k)}_n\phi_{n+k}
 -
 a^{(k)}_{n-1}\phi_{n+k-1}
 \right)u_{n+k}.
\]
Reindexing with \(m=n-1\) gives
\[
 T_N(a;\omega_1,\omega_2)
 =
 \sum_{m=0}^{N-k-2}
 \left(
 a^{(k)}_{m+1}\phi_{m+1+k}
 -
 a^{(k)}_{m}\phi_{m+k}
 \right)u_{m+k+1}.
\]
By the definition of \(a^{(k+1)}\), this is \(T_N(a;\omega_1,\omega_2) = \sum_{m=0}^{N-k-2} a^{(k+1)}_m u_{m+k+1}.\) This proves the induction. Taking \(k=4\), we obtain \(T_N(a;\omega_1,\omega_2) = \sum_{n=0}^{N-5}a^{(4)}_n u_{n+4}.\) Since \(|u_{n+4}|=1\), it follows that \(|T_N(a;\omega_1,\omega_2)| \le \sum_{n=0}^{N-5}|a^{(4)}_n| = \|a^{(4)}\|_1 .\) It remains to estimate \(\|a^{(4)}\|_1\). Define \(z_n:=e^{i\delta_n}, ~ w:=e^{i2\omega_2}.\) Then \(z_{n+j}=z_nw^j, ~ \phi_n=\tfrac{1}{1-z_n}.\) For \(r=0,1,2,3,4\), define \(M_r := \|\Delta^r\phi\|_\infty .\) More explicitly, \(M_r\) is the supremum of \(|\Delta^r\phi_n|\) over the
natural range \(0\le n\le N-1-r\).

Every denominator factor appearing in \(\Delta^r\phi_n\) has the form \(1-z_nw^j = 1-e^{i(\delta_n+2j\omega_2)} = 1-e^{i\delta_{n+j}}.\) Whenever it appears, the index \(n+j\) lies in \(\{0,\ldots,N-1\}\). Hence \(|1-z_nw^j| = |1-e^{i\delta_{n+j}}| \ge 2s_N .\) In particular, \(M_0 = \|\phi\|_\infty \le \tfrac{1}{2s_N}.\) We now compute the first four differences of \(\phi\). Since \(\phi_{n+j} = \tfrac{1}{1-z_nw^j},\) we have \((\Delta\phi)_n = \tfrac{1}{1-z_nw} - \tfrac{1}{1-z_n} = \tfrac{z_n(w-1)} {(1-z_n)(1-z_nw)}.\) Therefore \(M_1 \le \tfrac{|w-1|}{(2s_N)^2}.\) Since \(|w-1| = |e^{i2\omega_2}-1| = 2|\sin\omega_2|,\) we get \(M_1 \le \tfrac{|\sin\omega_2|}{2s_N^2}.\) For the second difference, \((\Delta^2\phi)_n = \tfrac{1}{1-z_nw^2} - \tfrac{2}{1-z_nw} + \tfrac{1}{1-z_n}.\) Putting the terms over the common denominator gives \((\Delta^2\phi)_n = \tfrac{ z_n(w-1)^2(1+z_nw) } { (1-z_n)(1-z_nw)(1-z_nw^2) }.\) Since \(|z_n|=|w|=1\), we have \(|1+z_nw|\le 2.\) Hence \(M_2 \le \tfrac{2|w-1|^2}{(2s_N)^3} = \tfrac{|\sin\omega_2|^2}{s_N^3}.\) For the third difference, \((\Delta^3\phi)_n = \tfrac{1}{1-z_nw^3} - \tfrac{3}{1-z_nw^2} + \tfrac{3}{1-z_nw} - \tfrac{1}{1-z_n}.\) Using the common denominator \(\prod_{j=0}^{3}(1-z_nw^j),\) one obtains
\[
 (\Delta^3\phi)_n
 =
 \tfrac{
 z_n(w-1)^3
 \left(
 1+2z_nw+2z_nw^2+z_n^2w^3
 \right)
 }
 {
 (1-z_n)(1-z_nw)(1-z_nw^2)(1-z_nw^3)
 }.
\]
The numerator polynomial satisfies \(\left| 1+2z_nw+2z_nw^2+z_n^2w^3 \right| \le 1+2+2+1 = 6.\) Therefore \(M_3 \le \tfrac{6|w-1|^3}{(2s_N)^4} = \tfrac{3|\sin\omega_2|^3}{s_N^4}.\) For the fourth difference,
\[
 (\Delta^4\phi)_n
 =
 \tfrac{1}{1-z_nw^4}
 -
 \tfrac{4}{1-z_nw^3}
 +
 \tfrac{6}{1-z_nw^2}
 -
 \tfrac{4}{1-z_nw}
 +
 \tfrac{1}{1-z_n}.
\]
Again using the common denominator \(\prod_{j=0}^{4}(1-z_nw^j),\) we get
\[
 (\Delta^4\phi)_n
 =
 \tfrac{
 z_n(w-1)^4
 (1+z_nw^2)
 \left(
 1+3z_nw+4z_nw^2+3z_nw^3+z_n^2w^4
 \right)
 }
 {
 \prod_{j=0}^{4}(1-z_nw^j)
 }.
\]
Since \(|1+z_nw^2|\le 2\) and \(\left| 1+3z_nw+4z_nw^2+3z_nw^3+z_n^2w^4 \right| \le 1+3+4+3+1 = 12,\) we obtain \(M_4 \le \tfrac{24|w-1|^4}{(2s_N)^5} = \tfrac{12|\sin\omega_2|^4}{s_N^5}.\) Next we estimate \(\|a^{(4)}\|_1\) from the recursive definition of
\(a^{(k)}\). We use the finite-difference Leibniz formula \(\Delta^r(fg)_n = \sum_{j=0}^{r} \binom{r}{j} (\Delta^j f)_n (\Delta^{r-j}g)_{n+j}.\) Applying this with \(r=m+1\), \(f=a^{(k)}\), and \(g=\phi_{\cdot+k}\), gives \(\Delta^m a^{(k+1)} = \Delta^{m+1}\bigl(a^{(k)}\phi_{\cdot+k}\bigr).\) Hence \(\|\Delta^m a^{(k+1)}\|_1 \le \sum_{j=0}^{m+1} \binom{m+1}{j} M_{m+1-j} \|\Delta^j a^{(k)}\|_1 .\) Introduce constants \(C_{k,m,\ell}\) by \(\|\Delta^m a^{(k)}\|_1 \le \sum_{\ell\ge 0} C_{k,m,\ell}\|\Delta^\ell a\|_1 .\) For \(k=0\),
\(
 C_{0,m,\ell}
 =
 \begin{cases}
 1,&\ell=m,\\
 0,&\ell\neq m.
 \end{cases}
\)
The previous inequality gives the recursion \(C_{k+1,m,\ell} = \sum_{j=0}^{m+1} \binom{m+1}{j} M_{m+1-j}C_{k,j,\ell}.\) Iterating this recursion four times and then taking \(m=0\) gives
\[
 \|a^{(4)}\|_1
 \le
 C_{4,0,0}\|a\|_1
 +
 C_{4,0,1}\|\Delta a\|_1
 +
 C_{4,0,2}\|\Delta^2a\|_1
 +
 C_{4,0,3}\|\Delta^3a\|_1
 +
 C_{4,0,4}\|\Delta^4a\|_1,
\]
where \(C_{4,0,4}=M_0^4,\) \(C_{4,0,3}=10M_0^3M_1,\) \(C_{4,0,2}=10M_0^3M_2+25M_0^2M_1^2,\) \(C_{4,0,1} = 5M_0^3M_3 + 30M_0^2M_1M_2 + 15M_0M_1^3,\) and \(C_{4,0,0} = M_0^3M_4 + 7M_0^2M_1M_3 + 4M_0^2M_2^2 + 11M_0M_1^2M_2 + M_1^4 .\) Substituting \(M_0\le \tfrac{1}{2s_N}, ~ M_1\le \tfrac{|\sin\omega_2|}{2s_N^2},\) \[
 M_2\le \tfrac{|\sin\omega_2|^2}{s_N^3},
 ~
 M_3\le \tfrac{3|\sin\omega_2|^3}{s_N^4},
 ~
 M_4\le \tfrac{12|\sin\omega_2|^4}{s_N^5},
\]
we get \(C_{4,0,4} \le \tfrac{1}{16s_N^4},\) \(C_{4,0,3} \le \tfrac{5|\sin\omega_2|}{8s_N^5},\) \(C_{4,0,2} \le \tfrac{45|\sin\omega_2|^2}{16s_N^6},\) \(C_{4,0,1} \le \tfrac{105|\sin\omega_2|^3}{16s_N^7},\) and \(C_{4,0,0} \le \tfrac{105|\sin\omega_2|^4}{16s_N^8}.\) Therefore
\[
 \|a^{(4)}\|_1
 \le
 \tfrac{105|\sin\omega_2|^4}{16s_N^8}\|a\|_1
 +
 \tfrac{105|\sin\omega_2|^3}{16s_N^7}\|\Delta a\|_1
\]
\[
 \quad+
 \tfrac{45|\sin\omega_2|^2}{16s_N^6}\|\Delta^2a\|_1
 +
 \tfrac{5|\sin\omega_2|}{8s_N^5}\|\Delta^3a\|_1
 +
 \tfrac{1}{16s_N^4}\|\Delta^4a\|_1 .
\]
Since \(|T_N(a;\omega_1,\omega_2)| \le \|a^{(4)}\|_1,\) this proves exactly \eqref{eq:app_derivative_branch}.
\end{proof}

% ============================================================
\subsection{Pointwise residue splitting}
\label{app:lag_correlation_branch}
% ============================================================

This auxiliary pointwise bound is obtained by splitting the aperture into
residue classes modulo \(Q\) and applying the triangle inequality across the
residue sums.  It is a precursor to, but should not be confused with, the
support-uniform lag-correlation support (LCS) envelope developed below.  The
LCS envelope additionally expands the squared residue sums, retains the full
lag-pair phases, and controls their real contributions by interval cosine
majorants.

Let \(Q\in\{2,\ldots,Q_{\max}\}\).  For each residue class
\(s=0,\ldots,Q-1\), define
\[
        M_s
        :=
        \#\{m\in\mathbb Z_{\ge0}:s+Qm\le N-1\}
        =
        1+\left\lfloor\tfrac{N-1-s}{Q}\right\rfloor .
\]
For \(m=0,\ldots,M_s-1\), define the residue subsequence
\(
        a_m^{(Q,s)}
        :=
        a_{s+Qm}.
\)
Next set
\(
        \alpha_s
        :=
        \omega_1s+\omega_2s^2,
        ~
        \beta_s
        :=
        Q\omega_1+2Qs\omega_2,
        ~
        \gamma
        :=
        Q^2\omega_2 .
\)
For each residue class, define the corresponding residue sum
\(
        T_s^{(Q)}
        :=
        \sum_{m=0}^{M_s-1}
        a_m^{(Q,s)}
        e^{i(\beta_s m+\gamma m^2)} .
\)

Indeed, for \(n=s+Qm\), we have
\[
\begin{aligned}
        \omega_1n+\omega_2n^2
        &=
        \omega_1(s+Qm)+\omega_2(s+Qm)^2        \\
        &=
        \omega_1s+\omega_2s^2
        +
        \left(Q\omega_1+2Qs\omega_2\right)m
        +
        Q^2\omega_2m^2                    =
        \alpha_s+\beta_s m+\gamma m^2 .
\end{aligned}
\]
Therefore the full quadratic sum decomposes as
\(
        T_N(a;\omega_1,\omega_2)
        =
        \sum_{s=0}^{Q-1}
        e^{i\alpha_s}
        T_s^{(Q)} .
\)
The pointwise residue-split bound is defined by
\[
        B_{\RS}^{(Q_{\max})}(a;\omega_1,\omega_2;N)
        :=
        \min_{2\le Q\le Q_{\max}}
        \sum_{s=0}^{Q-1}
        |T_s^{(Q)}|.
        \tag{RS}
        \label{eq:app_BRS_def}
\]

The residue-class decomposition is exact before absolute values are taken.
For a fixed modulus \(Q\), the triangle inequality bounds the complete sum
by the sum of the residue-class magnitudes.  Minimizing this estimate over
the admissible values of \(Q\) gives the following pointwise bound.

\begin{lem}
\label{lem:app_lc_branch}
Let \(N\ge2\), let \(Q_{\max}\in\{2,\ldots,N\}\), and let
\(a\in\mathbb C^N\). Then, for every
\((\omega_1,\omega_2)\in\mathbb R^2\),
\[
        |T_N(a;\omega_1,\omega_2)|
        \le
        B_{\RS}^{(Q_{\max})}(a;\omega_1,\omega_2;N).
\]
\end{lem}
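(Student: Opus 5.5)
The claim is that for each fixed \(Q\in\{2,\ldots,Q_{\max}\}\) the residue bound \(\sum_{s}|T_s^{(Q)}|\) dominates \(|T_N|\). Once that holds for every such \(Q\), it holds for the minimum over \(Q\). Since \(Q_{\max}\ge2\), this minimum ranges over a nonempty finite set, so \(B_{\RS}^{(Q_{\max})}\) is a well-defined real number. The proof therefore reduces to a fixed-\(Q\) statement, and that statement follows from the exact decomposition recorded just before the lemma.

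\textbf{Fix \(Q\).} I first check that the residue classes partition the index set \(\{0,\ldots,N-1\}\). Every \(n\) in that range is written uniquely as \(n=s+Qm\) with \(0\le s\le Q-1\) and \(m\ge0\). Since \(Q\le N\), each residue \(s\le Q-1\le N-1\) occurs, and the count \(M_s=1+\lfloor (N-1-s)/Q\rfloor\) is exactly the number of admissible \(m\). Reordering the finite sum \(T_N\) along this partition and substituting the phase identity
\[
\omega_1 n+\omega_2 n^2=\alpha_s+\beta_s m+\gamma m^2
\]
gives the exact formula \(T_N=\sum_{s=0}^{Q-1}e^{i\alpha_s}T_s^{(Q)}\). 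Applying the triangle inequality and \(|e^{i\alpha_s}|=1\) yields \(|T_N|\le\sum_s|T_s^{(Q)}|\). Taking the minimum over \(Q\) completes the argument.

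No flat-end condition is needed here, consistent with the remark after \Cref{thm:app_scalar_qpac}. The only step needing care is the bookkeeping that the map \((s,m)\mapsto s+Qm\) is a bijection onto \(\{0,\ldots,N-1\}\) with the stated \(M_s\). I do not expect a genuine obstacle.
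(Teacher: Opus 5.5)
Your proposal is correct and follows the paper's proof essentially verbatim: fix \(Q\), reindex \(n=s+Qm\) along the residue-class partition, and use \(\omega_1 n+\omega_2 n^2=\alpha_s+\beta_s m+\gamma m^2\) to get \(T_N=\sum_s e^{i\alpha_s}T_s^{(Q)}\); then apply the triangle inequality and minimize over \(2\le Q\le Q_{\max}\). Your remark that \(Q\le N\) makes every residue class nonempty is harmless but unnecessary, since an empty class would just contribute zero to both sides.
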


\begin{proof}
Fix \(Q\in\{2,\ldots,Q_{\max}\}\). Split the aperture indices
\(n=0,\ldots,N-1\) into residue classes modulo \(Q\). Thus every admissible
index can be written uniquely as \(n=s+Qm, ~ s\in\{0,\ldots,Q-1\}, ~ m\in\{0,\ldots,M_s-1\}.\) Hence
\[
\begin{aligned}
 T_N(a;\omega_1,\omega_2)=
 \sum_{n=0}^{N-1}
 a_n e^{i(\omega_1n+\omega_2n^2)}=
 \sum_{s=0}^{Q-1}
 \sum_{m=0}^{M_s-1}
 a_{s+Qm}
 e^{i(\omega_1(s+Qm)+\omega_2(s+Qm)^2)} .
\end{aligned}
\]
For fixed \(s\), the phase identity above gives \(\omega_1(s+Qm)+\omega_2(s+Qm)^2 = \alpha_s+\beta_s m+\gamma m^2 .\) Therefore
\[
\begin{aligned}
 T_N(a;\omega_1,\omega_2)=
 \sum_{s=0}^{Q-1}
 e^{i\alpha_s}
 \sum_{m=0}^{M_s-1}
 a_{s+Qm}e^{i(\beta_s m+\gamma m^2)}=
 \sum_{s=0}^{Q-1}
 e^{i\alpha_s}T_s^{(Q)} .
\end{aligned}
\]
Since \(|e^{i\alpha_s}|=1\), the triangle inequality gives \(|T_N(a;\omega_1,\omega_2)| \le \sum_{s=0}^{Q-1} |T_s^{(Q)}|.\) This estimate holds for every \(Q\in\{2,\ldots,Q_{\max}\}\). Taking the
minimum over \(2\le Q\le Q_{\max}\) gives
\[
 |T_N(a;\omega_1,\omega_2)|
 \le
 \min_{2\le Q\le Q_{\max}}
 \sum_{s=0}^{Q-1}|T_s^{(Q)}|
 =
 B_{\RS}^{(Q_{\max})}(a;\omega_1,\omega_2;N).
\]
The claim follows.
\end{proof}

\begin{rem}[Why \(Q=1\) is excluded]
If \(Q=1\) were allowed, there would be only one residue class and the
residue decomposition would reduce to
\(
        T_N(a;\omega_1,\omega_2)=T_0^{(1)}.
\)
Hence the corresponding triangle-inequality estimate would be exactly
\(
        |T_N(a;\omega_1,\omega_2)|,
\)
and would provide no independent cancellation estimate.  We therefore
restrict the residue decomposition to \(Q\ge2\).
\end{rem}
% ============================================================
\subsection{Residue-linear branch}
\label{app:residue_linear_branch}
% ============================================================

For \(q\in\{1,\ldots,Q_{\max}\}\) and
\(A\in\{0,\ldots,2q-1\},\)
define
\(
        \varepsilon_{q,A}
        :=
        \dist_{2\pi}^{\rm sgn}
        \left(
        \omega_2-\tfrac{\pi A}{q}
        \right),
\)
and
\(
        \nu_{q,A}
        :=
        \dist_{2\pi}^{\rm sgn}
        \left(q^2\varepsilon_{q,A}\right).
\)
For \(s=0,\ldots,q-1\), define
\(\Omega_{q,A,s}:= q\omega_1+\pi Aq+2q\varepsilon_{q,A}s.\)

For a finite sequence \(c=(c_0,\ldots,c_{M-1})\) of length \(M\ge1\), define
\[
        b_m:=c_m e^{i\nu m^2},
        ~ m=0,\ldots,M-1,
\]
and
\[
        V_\nu(c)
        :=
        |b_0|+|b_{M-1}|
        +
        \sum_{m=0}^{M-2}|b_{m+1}-b_m|.
\]
For \(M=1\), the empty sum is interpreted as zero.
Define
\begin{equation}
\label{eq:app_Breslin_def}
\begin{aligned}
        B_{\rm res,lin}^{(Q_{\max})}
        (a;\omega_1,\omega_2;N)
        :=
        \min_{1\le q\le Q_{\max}}
        \min_{0\le A\le 2q-1}
        \sum_{s=0}^{q-1}
        \min\left\{
        \|a^{(q,s)}\|_1,\,
        \tfrac{V_{\nu_{q,A}}(a^{(q,s)})}
        {2|\sin(\Omega_{q,A,s}/2)|}
        \right\}.
\end{aligned}
\end{equation}
The quotient is interpreted as \(+\infty\) if
\(\sin(\Omega_{q,A,s}/2)=0.\)

When the quadratic coefficient is close to a rational phase
\(\pi A/q\), splitting the aperture modulo \(q\) removes the rational
quadratic component from each residue class.  The remaining phase consists
of a linear oscillation and a residual quadratic chirp.  Applying the
first-order Abel estimate to this representation gives the residue-linear
bound.

\begin{lem}
\label{lem:app_reslin}
For every \(a\in\mathbb C^N\),
\[
        |T_N(a;\omega_1,\omega_2)|
        \le
        B_{\rm res,lin}^{(Q_{\max})}
        (a;\omega_1,\omega_2;N).
\]
\end{lem}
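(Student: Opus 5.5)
The plan is to fix an admissible pair \((q,A)\), show that \(|T_N(a;\omega_1,\omega_2)|\) is at most the corresponding inner sum over \(s\) in \eqref{eq:app_Breslin_def}, and then take the minimum over \(q\in\{1,\ldots,Q_{\max}\}\) and \(A\in\{0,\ldots,2q-1\}\). For \(q=1\) there is a single residue class, and the argument below still applies.

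\emph{Step 1: residue splitting.} Split the indices as \(n=s+qm\), with \(0\le s\le q-1\) and \(0\le m\le M_s-1\), exactly as in the proof of \Cref{lem:app_lc_branch}. This gives
\[
T_N(a;\omega_1,\omega_2)=\sum_{s=0}^{q-1}\sum_{m=0}^{M_s-1}a^{(q,s)}_m e^{i(\omega_1 n+\omega_2 n^2)}, \qquad a^{(q,s)}_m:=a_{s+qm}.
\]

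\emph{Step 2: reducing the phase modulo \(2\pi\).} By the definition of the signed representative, \(\omega_2=\pi A/q+\varepsilon_{q,A}+2\pi k\) for some \(k\in\mathbb Z\). Since \(n^2\) is an integer, the term \(2\pi k n^2\) is invisible modulo \(2\pi\). Expanding \(n^2=s^2+2sqm+q^2m^2\) gives three simplifications.
\begin{itemize}
\item The rational part contributes \(\tfrac{\pi A}{q}n^2=\tfrac{\pi A s^2}{q}+2\pi Asm+\pi Aqm^2\). Here \(2\pi Asm\in 2\pi\mathbb Z\). Since \(m^2\equiv m \pmod 2\), we also have \(\pi Aq m^2\equiv \pi Aq m \pmod{2\pi}\).
\item The residual part contributes \(\varepsilon_{q,A}n^2=\varepsilon_{q,A}s^2+2q\varepsilon_{q,A}sm+q^2\varepsilon_{q,A}m^2\). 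Because \(q^2\varepsilon_{q,A}-\nu_{q,A}\in 2\pi\mathbb Z\) and \(m^2\in\mathbb Z\), we may replace \(q^2\varepsilon_{q,A}m^2\) by \(\nu_{q,A}m^2\).
\item The linear part contributes \(\omega_1 n=\omega_1 s+q\omega_1 m\).
\end{itemize}
Collecting terms, we get, modulo \(2\pi\),
\[
\omega_1 n+\omega_2 n^2\equiv c_s+\Omega_{q,A,s}\,m+\nu_{q,A}m^2,
\]
where \(c_s\) depends only on \(s\). Hence each residue sum equals \(e^{ic_s}\sum_m b_m z_s^m\), with \(b_m=a^{(q,s)}_m e^{i\nu_{q,A}m^2}\) and \(z_s=e^{i\Omega_{q,A,s}}\).

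\emph{Step 3: first-order Abel estimate.} For a residue sum \(S=\sum_{m=0}^{M-1}b_mz^m\), write \(S\) via
\[
(1-z)S=b_0-b_{M-1}z^M+\sum_{m=0}^{M-2}(b_{m+1}-b_m)z^{m+1}.
\]
This is a telescoping identity and is easy to verify. It yields \(|S|\le V_{\nu_{q,A}}(a^{(q,s)})/|1-z|\). Since \(|1-z|=2|\sin(\Omega_{q,A,s}/2)|\), this is exactly the quotient in \eqref{eq:app_Breslin_def}; when \(\sin(\Omega_{q,A,s}/2)=0\) the quotient is \(+\infty\) and the bound is vacuous. Alternatively, the trivial bound \(|S|\le\|a^{(q,s)}\|_1\) holds. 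So \(|S|\) is at most the minimum of the two.

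\emph{Step 4: conclusion.} Apply the triangle inequality over \(s\); the unimodular factors \(e^{ic_s}\) drop out. This bounds \(|T_N|\) by the inner sum for the fixed \((q,A)\), and minimizing over \((q,A)\) proves the lemma.

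The only delicate point is Step 2. One must keep every reduction modulo \(2\pi\) valid, which relies on integer coefficients multiplying each \(2\pi\mathbb Z\) term and on the parity identity \(m^2\equiv m\). In particular, the \(m^2\) term must be reduced to the wrapped curvature \(\nu_{q,A}\), and not to \(q^2\varepsilon_{q,A}\) taken literally.
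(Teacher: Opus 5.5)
Your proof is correct and follows the paper's argument essentially step for step. You split into residue classes modulo \(q\), absorb \(2\pi A s m\), and use \(m^2\equiv m \pmod 2\) to move \(\pi A q m^2\) into the linear phase. You then wrap \(q^2\varepsilon_{q,A}\) to \(\nu_{q,A}\) and combine, per residue class, the first-order Abel bound (which you usefully verify via the explicit telescoping identity) with the trivial \(\ell^1\) bound. The triangle inequality and minimization over \((q,A)\) finish the proof. The only cosmetic difference is that you expand \(\omega_2 n^2\) directly with \(\omega_2=\pi A/q+\varepsilon_{q,A}+2\pi k\), whereas the paper first expands in \(m\) and then substitutes the congruence into the \(m\)-dependent terms.
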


\begin{proof}
Fix \(q\in\{1,\ldots,Q_{\max}\}\) and
\(A\in\{0,\ldots,2q-1\}\). Choose
\(\varepsilon_{q,A}\in\mathbb R\) such that \(\omega_2 \equiv \tfrac{\pi A}{q} + \varepsilon_{q,A} \pmod{2\pi}.\) We split the aperture into residue classes modulo \(q\). Thus every
\(n=0,\ldots,N-1\) can be written uniquely as \(n=qm+s, ~ s\in\{0,\ldots,q-1\}, ~ m=0,\ldots,M_s-1,\) where \(M_s := \#\{m\in\mathbb Z_{\ge0}:s+qm\le N-1\}.\) For each residue class, define \(a_m^{(q,s)} := a_{s+qm}, ~ m=0,\ldots,M_s-1.\) We first expand the phase. For \(n=qm+s\),
\[
\begin{aligned}
 &\omega_1(qm+s)+\omega_2(qm+s)^2
 =
 \omega_1s+\omega_2s^2 +
 m(q\omega_1+2qs\omega_2)
 +
 m^2(q^2\omega_2).
\end{aligned}
\]
Substituting \(\omega_2 \equiv \tfrac{\pi A}{q} + \varepsilon_{q,A} \pmod{2\pi}\) into the \(m\)-dependent part gives \(m(q\omega_1+2\pi As+2q\varepsilon_{q,A}s) + m^2(\pi Aq+q^2\varepsilon_{q,A}).\) The term \(2\pi Asm\) is an integer multiple of \(2\pi\), hence it does not
change the exponential. Moreover, \(e^{i\pi Aqm^2} = e^{i\pi Aqm},\) because \(m^2-m\) is even and therefore \(\pi Aq(m^2-m)\in 2\pi\mathbb Z.\) Thus the rational quadratic part \(\pi Aqm^2\) may be moved into the linear
phase.

Define
\(
    \Omega_{q,A,s}
    :=
    q\omega_1+\pi Aq+2q\varepsilon_{q,A}s,
\)
and recall that
\(
    \nu_{q,A}
    :=
    \dist_{2\pi}^{\rm sgn}
    \left(q^2\varepsilon_{q,A}\right).
\)
Since
\(
    q^2\varepsilon_{q,A}-\nu_{q,A}\in2\pi\mathbb Z,
\)
we have, for every integer \(m\),
\(
    e^{iq^2\varepsilon_{q,A}m^2}
    =
    e^{i\nu_{q,A}m^2}.
\)
Hence, up to a phase factor independent of \(m\), the \(s\)-th residue-class
phase is
\(
    \Omega_{q,A,s}m+\nu_{q,A}m^2.
\) Consequently, since \(m\)-independent phase factors have modulus one,
\[
 |T_N(a;\omega_1,\omega_2)|
 \le
 \sum_{s=0}^{q-1}
 \left|
 \sum_{m=0}^{M_s-1}
 a_m^{(q,s)}
 e^{i(\Omega_{q,A,s}m+\nu_{q,A}m^2)}
 \right|.
\]

Fix a residue class \(s\). Set \(b_m := a_m^{(q,s)} e^{i\nu_{q,A}m^2}, ~ m=0,\ldots,M_s-1.\) Then \(|b_m|=|a_m^{(q,s)}|\) and the residue sum becomes \(\sum_{m=0}^{M_s-1} b_m e^{i\Omega_{q,A,s}m}.\) The first-order Abel estimate gives
\[
 \left|
 \sum_{m=0}^{M_s-1} b_m e^{i\Omega_{q,A,s}m}
 \right|
 \le
 \tfrac{
 |b_0|+|b_{M_s-1}|+
 \sum_{m=0}^{M_s-2}|b_{m+1}-b_m|
 }
 {
 2\left|\sin\left(\Omega_{q,A,s}/2\right)\right|
 },
\]
whenever \(\sin\left(\Omega_{q,A,s}/2\right)\neq 0.\) If this denominator vanishes, the Abel term is interpreted as \(+\infty\), so
only the trivial estimate below is used for that residue class.

The trivial estimate is always valid:
\[
 \left|
 \sum_{m=0}^{M_s-1} b_m e^{i\Omega_{q,A,s}m}
 \right|
 \le
 \sum_{m=0}^{M_s-1}|b_m|
 =
 \|a^{(q,s)}\|_1 .
\]
Therefore, for each residue class,
\[
\begin{aligned}
 &\left|
 \sum_{m=0}^{M_s-1}
 a_m^{(q,s)}
 e^{i(\Omega_{q,A,s}m+\nu_{q,A}m^2)}
 \right|
 \le
 \min\Bigg\{
 \|a^{(q,s)}\|_1,~
 \tfrac{
 |b_0|+|b_{M_s-1}|+
 \sum_{m=0}^{M_s-2}|b_{m+1}-b_m|
 }
 {
 2\left|\sin\left(\Omega_{q,A,s}/2\right)\right|
 }
 \Bigg\}.
\end{aligned}
\]
Here the second term is understood as \(+\infty\) if its denominator vanishes.
Summing this estimate over \(s=0,\ldots,q-1\) gives the residue-linear bound
for the fixed pair \((q,A)\). Since the argument holds for every
\(q\in\{1,\ldots,Q_{\max}\}\) and every \(A\in\{0,\ldots,2q-1\}\), taking the
minimum over all such pairs gives \(|T_N(a;\omega_1,\omega_2)| \le B_{\rm res,lin}^{(Q_{\max})} (a;\omega_1,\omega_2;N),\) which is exactly the claimed estimate.
\end{proof}

\begin{proof}[Proof of \Cref{thm:app_scalar_qpac}]
The trivial estimate gives
\(|T_N(a;\omega_1,\omega_2)|\le\|a\|_1.\)
The derivative branch follows from \Cref{lem:app_derivative_branch}.  The
pointwise residue-split bound follows from \Cref{lem:app_lc_branch}, while its
support-uniform LCS refinement is developed in the cellwise envelope
construction below.  The residue-linear branch follows from
\Cref{lem:app_reslin}. Since every term in
the minimum defining \(B_{\rm best}^{(4,Q_{\max})}\) is an upper
bound for \(|T_N|\), their minimum is also an upper bound.
\end{proof}

% ============================================================
\subsection{Gauged Fresnel atoms and phase increments}
\label{app:gauged_fresnel_atoms}
% ============================================================

Let
\(k_\lambda:=\tfrac{2\pi}{\lambda}, ~ \kappa_0:=k_\lambda d.\)
For a physical point \(a=(r_a,\theta_a)\), define
\(
        \alpha_a:=\tfrac{d}{r_a},~
        \tau_a:=\tfrac{d\cos\theta_a}{r_a}=\alpha_a\cos\theta_a,
\)
\(
        \mu_a:=k_\lambda d\cos\theta_a,~
        \eta_a:=\tfrac{k_\lambda d^2}{2r_a}\sin^2\theta_a .
\)
The Fresnel atom is written as
\(
        s_a[n]
        =
        e^{i\mu_a n-i\eta_a n^2},
        ~ n=0,\ldots,N_r-1.
\)
Throughout this appendix, \(u\) denotes the evaluation point and \(v\) denotes
the support point.  The ordered phase increments are \(  \omega_{1,uv}:=\mu_v-\mu_u,~
        \omega_{2,uv}:=\eta_u-\eta_v .\)
% \begin{equation}
% \label{eq:app_ordered_phase_increments}
% \end{equation}
Then
\(\overline{s_u[n]}s_v[n] = e^{i(\omega_{1,uv}n+\omega_{2,uv}n^2)}.\)
Let \(\rho_n\ge0\) be the aperture taper, and set
\(
        W_0:=\sum_{n=0}^{N_r-1}\rho_n,~
        b_n:=\tfrac{\rho_n}{W_0}.
\)
Let
\(\bar n:=\sum_n b_n n, ~ \overline{n^2}:=\sum_n b_n n^2,\)
and define centered aperture coordinates
\(x_n:=n-\bar n, ~ y_n:=n^2-\overline{n^2}.\)
For \(\tau\) in the admissible tangent interval, define
\(q(\tau):=\sum_n b_n(x_n+\tau y_n)^2.\)
We assume
\(q(\tau)>0\)
on the tangent interval under consideration.

The normalized tangent profile is
\(h_\tau[n]:= \tfrac{x_n+\tau y_n}{\sqrt{q(\tau)}}.\)
It satisfies
\(\sum_n b_n h_\tau[n]^2=1.\)
The gauged atom has the form
\(\psi_a[n] = W_0^{-1/2}e^{-i\chi_a}s_a[n],\)
where \(\chi_a\) is independent of \(n\).  The corresponding normalized
angular tangent has the form
\[
        h_a[n]
        =
        \text{\rm unit phase depending only on }a
        \cdot
        W_0^{-1/2} h_{\tau_a}[n]s_a[n].
\]
The exact constant phase is irrelevant for all absolute-value estimates.

% ============================================================
\subsection{Pairwise reductions for \texorpdfstring{\(K,H,dK,dH\)}{K, H, dK, dH}}
\label{app:pairwise_basic_channel_reductions}
% ============================================================

We first record the scalar quadratic-sum reductions for the four normalized
gauged Hermite channels.  Write the ungauged Fresnel phase at a physical point
\(a=(r_a,\theta_a)\) as
\[
        s_a[n]
        =
        e^{i\mu_a n-i\eta_a n^2},
        ~
        \mu_a:=k_\lambda d\cos\theta_a,
        ~
        \eta_a:=\tfrac{k_\lambda d^2}{2r_a}\sin^2\theta_a .
\]
For an ordered pair \((u,v)\), define the ordered phase increments by
\[
        \overline{s_u[n]}s_v[n]
        =
        e^{i(\omega_{1,uv}n+\omega_{2,uv}n^2)}.
\]
Equivalently, \( \omega_{1,uv}
        =
        \mu_v-\mu_u,
        ~
        \omega_{2,uv}
        =
        \eta_u-\eta_v .\) Thus the first point \(u\) is the evaluation point and the second point \(v\)
is the support/source point.  This is the same ordered convention used in the
main theorem.

Let
\(
        b_n:=\tfrac{\rho_n}{W_0},
        ~
        W_0:=\sum_{n=0}^{N_r-1}\rho_n .
\)
The four normalized gauged pairwise channels are
\(
        K_{uv}:=\langle\psi_u,\psi_v\rangle_\rho,~
        H_{uv}:=\langle\psi_u,h_v\rangle_\rho,
\)
and
\(
        dK_{uv}:=\langle h_u,\psi_v\rangle_\rho,~
        dH_{uv}:=\langle h_u,h_v\rangle_\rho,
\)
where
\(
        \langle f,g\rangle_\rho
        :=
        \sum_{n=0}^{N_r-1}\rho_n\overline{f[n]}g[n].
\)
Define the coefficient families
\(
        a^K[n]
        :=
        b_n,
\)
\(
        a^H_\tau[n]
        :=
        b_nh_\tau[n],
        ~
        a^{dK}_\tau[n]
        :=
        b_nh_\tau[n],
\)
and
\(
        a^{dH}_{\tau,\tau'}[n]
        :=
        b_nh_\tau[n]h_{\tau'}[n].
\)
Here \(h_\tau\) is the real normalized tangent profile, and \(\tau_u,\tau_v\)
denote the tangent parameters associated with \(u\) and \(v\), respectively.

Throughout this subsection, ``unit phase'' denotes a complex scalar of modulus
one that may depend on the ordered pair \((u,v)\), but is independent of the
aperture index \(n\).

The coefficient families above were chosen so that all four normalized
Hermite interactions share the same ordered quadratic phase and differ only
through their amplitude sequences.  Consequently, each channel is exactly a
weighted scalar quadratic sum, up to a unit-modulus factor independent of
the aperture index.

\begin{prop}
\label{prop:app_basic_pairwise_reductions}
For every ordered pair \((u,v)\),
\[
        K_{uv}
        =
        \text{\rm unit phase}\cdot
        T_{N_r}(a^K;\omega_{1,uv},\omega_{2,uv}),
\]
\[
        H_{uv}
        =
        \text{\rm unit phase}\cdot
        T_{N_r}(a^H_{\tau_v};\omega_{1,uv},\omega_{2,uv}),
\]
\[
        dK_{uv}
        =
        \text{\rm unit phase}\cdot
        T_{N_r}(a^{dK}_{\tau_u};\omega_{1,uv},\omega_{2,uv}),
\]
and
\[
        dH_{uv}
        =
        \text{\rm unit phase}\cdot
        T_{N_r}
        \left(
        a^{dH}_{\tau_u,\tau_v};
        \omega_{1,uv},\omega_{2,uv}
        \right).
\]
\end{prop}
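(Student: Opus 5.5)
The plan is a direct computation: write each gauged vector explicitly as a scalar multiple of the raw Fresnel atom $s_a$, times a real, $n$-dependent profile where one is present, and expand the weighted inner product. From the appendix on gauged Fresnel atoms we have $\psi_a[n]=W_0^{-1/2}e^{-i\chi_a}s_a[n]$ with $\chi_a$ independent of $n$. From \eqref{eq:h_profile_main} we have
\[
h_a[n]=-i\,W_0^{-1/2}e^{-i\chi_a}h_{\tau_a}[n]\,s_a[n],
\]
where $h_{\tau_a}[n]=(x_n+\tau_a y_n)/\sqrt{q(\tau_a)}$ is real. This expression follows from \eqref{eq:app_gauged_derivative}: first $\partial_\theta\psi_a=i\widetilde u_a\psi_a$ with $\widetilde u_a(n)=-\kappa_0\sin\theta_a(x_n+\tau_a y_n)$. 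Next, $\|\partial_\theta\psi_a\|_\rho=\kappa_0\sin\theta_a\sqrt{q(\tau_a)}$, and $\sin\theta_a>0$ on $\Theta\subset(0,\pi)$. So the only preliminary to check is that the profile is real and the phase factor is $n$-independent, and both are immediate.

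Next substitute into $\langle f,g\rangle_\rho=\sum_n\rho_n\overline{f[n]}g[n]$. Take $K_{uv}$ as the model case:
\[
K_{uv}=W_0^{-1}e^{i(\chi_u-\chi_v)}\sum_n\rho_n\overline{s_u[n]}s_v[n]=g_{uv}\sum_n b_n e^{i(\omega_{1,uv}n+\omega_{2,uv}n^2)}.
\]
Here we use $\rho_n/W_0=b_n$, the ordered identity $\overline{s_u}s_v=e^{i(\omega_{1,uv}n+\omega_{2,uv}n^2)}$ (from $\omega_{1,uv}=\mu_v-\mu_u$ and $\omega_{2,uv}=\eta_u-\eta_v$), and $g_{uv}=e^{i(\chi_u-\chi_v)}$. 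For $H_{uv}$, the tangent sits in the linear slot. It contributes the constant factor $-i$ and the real multiplier $h_{\tau_v}[n]$, so the coefficient sequence is $a^H_{\tau_v}$ and the unit phase is $-ig_{uv}$.

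For $dK_{uv}$, the tangent sits in the conjugate-linear slot. Conjugating $-i\,h_{\tau_u}[n]$ gives $+i\,h_{\tau_u}[n]$; this uses reality of $h_{\tau_u}$. The coefficient sequence is therefore $a^{dK}_{\tau_u}=b_nh_{\tau_u}$ and the phase is $ig_{uv}$. For $dH_{uv}$ the constant factors combine as $\overline{(-i)}(-i)=1$, and the coefficient sequence is $b_nh_{\tau_u}[n]h_{\tau_v}[n]$. This reproduces the exact reductions listed in \Cref{subsec:gauged_pairwise_coefficients_main}.

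There is no genuine obstacle. The only point requiring care is the bookkeeping of conjugation: one must check that the tangent profile attached to the evaluation point $u$ enters unconjugated, which holds because $h_\tau$ is real. One must also check that the constant factors from the gauge and from $\mp i$ collect into a single modulus-one scalar independent of $n$. Finally, I would note that the $\chi$ used here is the same gauge phase $\chi_i(\theta)$ defined in \Cref{app:gauged-fresnel-atoms}, so that the orthogonality giving \eqref{eq:h_profile_main} is available.
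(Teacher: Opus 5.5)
Your proposal is correct and follows the paper's proof. Both substitute $\psi_a=W_0^{-1/2}e^{-i\chi_a}s_a$ and $h_a=-i\,h_{\tau_a}\psi_a$ (with $h_{\tau_a}$ real) into the weighted inner product and collect the $n$-independent unimodular factors. You are more explicit than the paper's appendix proof in two places: you derive the tangent profile from \eqref{eq:app_gauged_derivative}, and you identify the unit phases as $g_{uv}$, $-ig_{uv}$, $ig_{uv}$, $g_{uv}$, where the paper leaves them as ``unit phase''.
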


\begin{proof}
The gauged atom differs from \(W_0^{-1/2}s_a\) only by an
\(n\)-independent unit-modulus factor. Therefore
\[
\begin{aligned}
 K_{uv}
 &=
 \sum_{n=0}^{N_r-1}
 \rho_n\overline{\psi_u[n]}\psi_v[n]=
 \text{unit phase}\cdot
 \sum_{n=0}^{N_r-1}
 b_n\overline{s_u[n]}s_v[n] =\\
 &\text{unit phase}\cdot
 \sum_{n=0}^{N_r-1}
 b_n
 e^{i(\omega_{1,uv}n+\omega_{2,uv}n^2)} .
\end{aligned}
\]
This is the claimed scalar reduction for \(K\).
For \(H_{uv}=\langle\psi_u,h_v\rangle_\rho\), the tangent factor comes from the
support point \(v\). Up to an \(n\)-independent unit phase, the support-side
tangent contributes the real profile \(h_{\tau_v}[n]\). Hence
\[
\begin{aligned}
 H_{uv}=
 \sum_{n=0}^{N_r-1}
 \rho_n\overline{\psi_u[n]}h_v[n]=
 \text{unit phase}\cdot
 \sum_{n=0}^{N_r-1}
 b_n h_{\tau_v}[n]
 e^{i(\omega_{1,uv}n+\omega_{2,uv}n^2)} .
\end{aligned}
\]
This gives the coefficient family \(a^H_{\tau_v}\).

For \(dK_{uv}=\langle h_u,\psi_v\rangle_\rho\), the tangent factor comes from
the evaluation point \(u\). Thus
\[
\begin{aligned}
 dK_{uv}
 &=
 \text{unit phase}\cdot
 \sum_{n=0}^{N_r-1}
 b_n h_{\tau_u}[n]
 e^{i(\omega_{1,uv}n+\omega_{2,uv}n^2)} ,
\end{aligned}
\]
which gives the coefficient family \(a^{dK}_{\tau_u}\).
Finally, \(dH_{uv}=\langle h_u,h_v\rangle_\rho\) contains both tangent factors.
Therefore
\[
\begin{aligned}
 dH_{uv}
 &=
 \text{unit phase}\cdot
 \sum_{n=0}^{N_r-1}
 b_n h_{\tau_u}[n]h_{\tau_v}[n]
 e^{i(\omega_{1,uv}n+\omega_{2,uv}n^2)} .
\end{aligned}
\]
This is the claimed reduction for \(dH\).
\end{proof}

Since the taper has fourth-order flat ends, \(b_n\) vanishes at the first and
last four aperture indices.  Each coefficient family above is \(b_n\) multiplied
by a polynomial or rationally normalized tangent factor that is finite on the
admissible physical domain.  Hence all four coefficient sequences inherit the
fourth-order flat-end condition.

% ============================================================
\subsection{First angular derivatives}
\label{app:first_angular_derivatives}
% ============================================================

Let the angular derivative act on the evaluation point \(u\), while the support
point \(v\) is fixed.  Write
\(
        K_v(u):=K_{uv},
        ~
        H_v(u):=H_{uv}.
\)
Let
\(
        \sigma_u
        :=
        \|\partial_\theta\psi_u\|_\rho .
\)
By construction,
\(
        h_u
        =
        \tfrac{\partial_\theta\psi_u}{\sigma_u}.
\)

The normalized tangent \(h_u\) is obtained by dividing the physical angular
derivative of the gauged atom by its weighted norm \(\sigma_u\).  Restoring
this normalization identifies the physical first derivatives of \(K\) and
\(H\) with the normalized evaluation-side tangent channels \(dK\) and
\(dH\).

\begin{lem}\label{lem:app_first_derivative_channels}
For every ordered pair \((u,v)\),
\[
        \partial_\theta K_v(u)
        =
        \sigma_u dK_{uv},
        ~
        \partial_\theta H_v(u)
        =
        \sigma_u dH_{uv}.
\]
Consequently, if \(\sigma_u\le\overline\sigma_\theta\), then
\[
        |\partial_\theta K_v(u)|
        \le
        \overline\sigma_\theta |dK_{uv}|,
        ~
        |\partial_\theta H_v(u)|
        \le
        \overline\sigma_\theta |dH_{uv}|.
\]
\end{lem}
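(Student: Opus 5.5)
The identity follows by differentiating the definitions of \(K_v(u)\) and \(H_v(u)\) in the evaluation angle \(\theta_u\) and then substituting the definition of the normalized tangent \(h_u\). I fix the support point \(v\) and its range row. I regard \(K_v(u)=\langle\psi_u,\psi_v\rangle_\rho\) and \(H_v(u)=\langle\psi_u,h_v\rangle_\rho\) as functions of \(\theta_u\) alone, with the evaluation range row fixed. The second argument does not depend on \(\theta_u\). The weighted pairing is a finite sum \(\sum_n\rho_n\overline{\psi_u[n]}\,g[n]\), and each \(\psi_u[n]\) is smooth in \(\theta_u\), since it is a product of the gauge exponential and the Fresnel exponential. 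Hence differentiation passes through the sum term by term.

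Complex conjugation commutes with the real-variable derivative \(\partial_{\theta_u}\). This gives
\(\partial_\theta K_v(u)=\langle\partial_\theta\psi_u,\psi_v\rangle_\rho\) and \(\partial_\theta H_v(u)=\langle\partial_\theta\psi_u,h_v\rangle_\rho\). These are the formulas already recorded in \Cref{app:gauged-value-tangent-kernels}.

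Next I substitute \(\partial_\theta\psi_u=\sigma_u h_u\). This holds by the definition \(h_u=\partial_\theta\psi_u/\sigma_u\), which is legitimate because \(\sigma_u>0\) under the standing tangent-nondegeneracy hypothesis \(q_{\min}>0\). The scalar \(\sigma_u\) is real and nonnegative. Pulling it out of the first, conjugate-linear, slot therefore produces \(\overline{\sigma_u}=\sigma_u\). We obtain \(\partial_\theta K_v(u)=\sigma_u\langle h_u,\psi_v\rangle_\rho=\sigma_u\,dK_{uv}\), and likewise \(\partial_\theta H_v(u)=\sigma_u\,dH_{uv}\).

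For the consequence, I take moduli, which gives \(|\partial_\theta K_v(u)|=\sigma_u|dK_{uv}|\). I then apply \(\sigma_u\le\overline\sigma_\theta\), and the same for \(H\). When the uniform bound is wanted, the hypothesis \(\sigma_u\le\overline\sigma_\theta\) follows from \(\|\partial_\theta\psi_u\|_\rho^2=(\kappa_0\sin\theta_u)^2q(\tau_u)\le\kappa_0^2q_{\max}\).

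The only delicate point is the conjugation convention. Since the inner product is conjugate-linear in the first argument, one must check that the scalar extracted from that slot is real. Here it is the norm \(\sigma_u\), so no stray phase appears.
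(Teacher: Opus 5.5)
Your proof is correct and follows the paper's argument: you differentiate the evaluation slot of the weighted pairing, substitute \(\partial_\theta\psi_u=\sigma_u h_u\), and use that \(\sigma_u\) is real to pull it out of the conjugate-linear slot, then bound \(\sigma_u\le\overline\sigma_\theta\). Your extra remarks make explicit what the paper leaves implicit: term-by-term differentiation of the finite sum, \(\sigma_u>0\) from \(q_{\min}>0\), and \(\sigma_u^2=(\kappa_0\sin\theta_u)^2q(\tau_u)\le\kappa_0^2q_{\max}\).
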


\begin{proof}
The weighted inner product is conjugate-linear in its first argument. Since the
derivative acts on the evaluation point \(u\),
\(
 \partial_\theta K_v(u)
 =
 \partial_\theta\langle\psi_u,\psi_v\rangle_\rho
 =
 \langle\partial_\theta\psi_u,\psi_v\rangle_\rho .
\)
Using \(\partial_\theta\psi_u=\sigma_u h_u\), we obtain \(\partial_\theta K_v(u) = \sigma_u\langle h_u,\psi_v\rangle_\rho = \sigma_u dK_{uv}.\) The same argument applies to \(H_v(u)=\langle\psi_u,h_v\rangle_\rho\), because
\(h_v\) is fixed when differentiating with respect to the evaluation angle:
\[
 \partial_\theta H_v(u)
 =
 \langle\partial_\theta\psi_u,h_v\rangle_\rho
 =
 \sigma_u\langle h_u,h_v\rangle_\rho
 =
 \sigma_u dH_{uv}.
\]
The inequalities follow immediately from
\(\sigma_u\le\overline\sigma_\theta\).
\end{proof}

% ============================================================
\subsection{Higher angular derivative reductions}
\label{app:higher_angular_derivatives}
% ============================================================

Let
\(
        \alpha:=\tfrac{d}{r},
        ~
        c:=\cos\theta,
        ~
        s:=\sin\theta,
        ~
        \kappa_0:=k_\lambda d .
\)
Define the real aperture profile
\(
        u_{\alpha,\theta}[n]
        :=
        -\kappa_0 s\,(x_n+\alpha c\,y_n).
\)
Its first and second angular derivatives are
\(
        u_{\theta,\alpha,\theta}[n]
        :=
        -\kappa_0\left(c\,x_n+\alpha\cos(2\theta)y_n\right),
\)
and
\(
        u_{\theta\theta,\alpha,\theta}[n]
        :=
        \kappa_0 s(x_n+4\alpha c\,y_n).
\)
The gauged atom satisfies
\(
        \partial_\theta\psi
        =
        i\,u_{\alpha,\theta}\psi .
\)
Differentiating once more gives
\(
        \partial_\theta^2\psi
        =
        \left(
        i\,u_{\theta,\alpha,\theta}
        -
        u_{\alpha,\theta}^2
        \right)\psi .
\)
Differentiating a third time gives
\(
        \partial_\theta^3\psi
        =
        \left(
        i\,u_{\theta\theta,\alpha,\theta}
        -
        3u_{\alpha,\theta}u_{\theta,\alpha,\theta}
        -
        i\,u_{\alpha,\theta}^3
        \right)\psi .
\)
Because the derivative acts on the first argument of the inner product and the
inner product is conjugate-linear in that argument, the scalar-sum coefficient
is the complex conjugate of the derivative multiplier.  Since
\(u_{\alpha,\theta}\), \(u_{\theta,\alpha,\theta}\), and
\(u_{\theta\theta,\alpha,\theta}\) are real-valued profiles, define \( \widetilde \Pi_2(\alpha,\theta;n)
        :=
        -i\,u_{\theta,\alpha,\theta}[n]
        -
        u_{\alpha,\theta}[n]^2,\)
and \(  \widetilde\Pi_3(\alpha,\theta;n)
        :=
        -i\,u_{\theta\theta,\alpha,\theta}[n]
        -
        3u_{\alpha,\theta}[n]u_{\theta,\alpha,\theta}[n]
        +
        i\,u_{\alpha,\theta}[n]^3 .\)
For \(m=2,3\), define
\(
        a^{K,m}_{\alpha,\theta}[n]
        :=
        b_n\widetilde\Pi_m(\alpha,\theta;n),
\)
and
\(
        a^{H,m}_{\alpha,\theta;\tau_v}[n]
        :=
        b_nh_{\tau_v}[n]\widetilde\Pi_m(\alpha,\theta;n).
\)

The second and third angular derivatives introduce the explicit aperture
multipliers \(\widetilde\Pi_2\) and \(\widetilde\Pi_3\), but they do not
change the underlying ordered quadratic phase.  The higher derivative
channels therefore admit scalar quadratic-sum representations of the same
form as the normalized Hermite channels.

\begin{prop}
\label{prop:app_higher_derivative_reductions}
For \(m=2,3\),
\[
        \partial_\theta^m K_v(u)
        =
        \text{\rm unit phase}\cdot
        T_{N_r}
        \left(
        a^{K,m}_{\alpha_u,\theta_u};
        \omega_{1,uv},\omega_{2,uv}
        \right),
\]
and
\[
        \partial_\theta^m H_v(u)
        =
        \text{\rm unit phase}\cdot
        T_{N_r}
        \left(
        a^{H,m}_{\alpha_u,\theta_u;\tau_v};
        \omega_{1,uv},\omega_{2,uv}
        \right).
\]
Here, as above, the unit phase is independent of \(n\).
\end{prop}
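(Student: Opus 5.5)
The plan is to differentiate the channels \(K_v(u)=\langle\psi_u,\psi_v\rangle_\rho\) and \(H_v(u)=\langle\psi_u,h_v\rangle_\rho\) directly in the evaluation angle \(\theta_u\), with the support point \(v\) held fixed. The only moving factor is then \(\psi_u\). Because the inner product is conjugate-linear in its first argument and \(\rho_n\) does not depend on \(\theta\), I get
\[
        \partial_\theta^m K_v(u)=\langle\partial_\theta^m\psi_u,\psi_v\rangle_\rho,
        \qquad
        \partial_\theta^m H_v(u)=\langle\partial_\theta^m\psi_u,h_v\rangle_\rho .
\]
This is a finite sum, so exchanging differentiation and summation is legitimate.

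The first step is to substitute the multiplier formulas already derived in the text. From \(\partial_\theta\psi=i\,u_{\alpha,\theta}\psi\), I will check by the product rule that
\[
        \partial_\theta^2\psi=(i\,u_{\theta,\alpha,\theta}-u_{\alpha,\theta}^2)\psi,
\]
and, differentiating once more,
\[
        \partial_\theta^3\psi=(i\,u_{\theta\theta,\alpha,\theta}-3u_{\alpha,\theta}u_{\theta,\alpha,\theta}-i\,u_{\alpha,\theta}^3)\psi .
\]
The inputs here are that \(u_{\alpha,\theta}\) agrees with the centred phase derivative \eqref{eq:app_centered_derivative_phase}, since \(\sin 2\theta=2sc\). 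Moreover \(\partial_\theta u_{\alpha,\theta}=u_{\theta,\alpha,\theta}\) and \(\partial_\theta u_{\theta,\alpha,\theta}=u_{\theta\theta,\alpha,\theta}\). Both are short trigonometric computations using that \(\alpha=d/r\) is fixed on a range row.

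Writing the bracketed multiplier as \(\Pi_m(n)\), conjugation gives \(\overline{\Pi_m}=\widetilde\Pi_m\), because all the \(u\)-profiles are real. Hence
\[
        \partial_\theta^mK_v(u)=\sum_n\rho_n\widetilde\Pi_m(\alpha_u,\theta_u;n)\,\overline{\psi_u[n]}\psi_v[n].
\]

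The second step inserts \(\psi_a[n]=W_0^{-1/2}e^{-i\chi_a}s_a[n]\). This gives
\[
        \rho_n\overline{\psi_u[n]}\psi_v[n]=e^{i(\chi_u-\chi_v)}\,b_n\,e^{i(\omega_{1,uv}n+\omega_{2,uv}n^2)},
\]
using \eqref{eq:ordered_phase_increments_main}. The factor \(e^{i(\chi_u-\chi_v)}\) is a unit phase independent of \(n\), so the \(K\)-identity follows with coefficient sequence \(a^{K,m}_{\alpha_u,\theta_u}\).

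For \(H\), I use the normalized tangent representation \eqref{eq:h_profile_main}, namely \(h_v[n]=-i\,h_{\tau_v}[n]\psi_v[n]\). Here \(h_{\tau_v}\) is real and the factor \(-i\) is absorbed into the unit phase. The remaining weight is \(b_nh_{\tau_v}[n]\widetilde\Pi_m\), which is exactly \(a^{H,m}_{\alpha_u,\theta_u;\tau_v}\). The identity \eqref{eq:h_profile_main} needs \(\sigma_v>0\) and \(\sin\theta_v>0\); the first is ensured by the standing assumption \(q(\tau)>0\), and the second holds because \(\Theta\subset(0,\pi)\).

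The main obstacle is bookkeeping rather than analysis: I must get the sign and conjugation of the third-derivative multiplier right. I will differentiate \(\Pi_2\psi\) carefully, keeping \(\partial_\theta(u^2)=2u\,u_\theta\). The cross term then assembles as \(-2u u_\theta-u u_\theta=-3u u_\theta\), and the cubic term as \(i\cdot(-u^2)\cdot u=-iu^3\). I also need to confirm that the derivative of \(u_{\theta,\alpha,\theta}\) carries the coefficient \(4\alpha c\): differentiating \(\cos 2\theta\) gives \(-2\sin 2\theta=-4sc\).
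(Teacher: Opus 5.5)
Your proposal is correct and follows the paper's proof. You differentiate only the evaluation-side atom inside the $\rho$-inner product, write $\partial_\theta^m\psi_u=\Pi_m\psi_u$ with $\widetilde\Pi_m=\overline{\Pi_m}$, and absorb into the $n$-independent unit phase both the gauge factor $e^{i(\chi_u-\chi_v)}$ and, for $H$, the factor $-i$ from $h_v=-i\,h_{\tau_v}\psi_v$; your explicit product-rule check of the second- and third-derivative multipliers verifies formulas the paper only asserts.
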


\begin{proof}
Let \(\Pi_2(\alpha,\theta;n) := i\,u_{\theta,\alpha,\theta}[n] - u_{\alpha,\theta}[n]^2\) and
\[
 \Pi_3(\alpha,\theta;n)
 :=
 i\,u_{\theta\theta,\alpha,\theta}[n]
 -
 3u_{\alpha,\theta}[n]u_{\theta,\alpha,\theta}[n]
 -
 i\,u_{\alpha,\theta}[n]^3 .
\]
Then \(\partial_\theta^m\psi_u[n] = \Pi_m(\alpha_u,\theta_u;n)\psi_u[n], ~ m=2,3.\) By construction, \(\widetilde\Pi_m(\alpha_u,\theta_u;n) = \overline{\Pi_m(\alpha_u,\theta_u;n)}, ~ m=2,3.\) For \(K_v(u)=\langle\psi_u,\psi_v\rangle_\rho\), the derivative acts only on
the first argument. Therefore \(\partial_\theta^mK_v(u) = \langle\partial_\theta^m\psi_u,\psi_v\rangle_\rho .\) Using the representation above,
\[
\begin{aligned}
 \partial_\theta^mK_v(u)
 &=
 \sum_{n=0}^{N_r-1}
 \rho_n
 \overline{\Pi_m(\alpha_u,\theta_u;n)\psi_u[n]}
 \psi_v[n] \\
 &=
 \text{unit phase}\cdot
 \sum_{n=0}^{N_r-1}
 b_n\widetilde\Pi_m(\alpha_u,\theta_u;n)
 e^{i(\omega_{1,uv}n+\omega_{2,uv}n^2)} .
\end{aligned}
\]
This is the claimed reduction for \(K\).
For \(H_v(u)=\langle\psi_u,h_v\rangle_\rho\), the support-side tangent is fixed
with respect to the evaluation angle and contributes the real factor
\(h_{\tau_v}[n]\), up to an \(n\)-independent unit phase. Hence
\[
\begin{aligned}
 \partial_\theta^mH_v(u)
 &=
 \langle\partial_\theta^m\psi_u,h_v\rangle_\rho \\
 &=
 \text{unit phase}\cdot
 \sum_{n=0}^{N_r-1}
 b_nh_{\tau_v}[n]\widetilde \Pi_m(\alpha_u,\theta_u;n)
 e^{i(\omega_{1,uv}n+\omega_{2,uv}n^2)} .
\end{aligned}
\]
This is the claimed reduction for \(H\).
\end{proof}
% ============================================================
\subsection{Dictionary form of
\texorpdfstring{\(\widetilde{\Pi}_2\) and \(\widetilde{\Pi}_3\)}
{Pi2 and Pi3}}\label{app:Q_dictionary}
% ============================================================

The higher derivative profiles can be written as finite dictionaries of fixed
aperture monomials with physical scalar coefficients.  This form is useful for
certified support-uniform envelopes.

For \(\widetilde \Pi_2\), using \(c=\cos\theta\) and \(s^2=1-c^2\),
\[
\begin{aligned}
        &\widetilde\Pi_2(\alpha,\theta;n)
        =
        i\kappa_0 c\,x_n
        +
        i\kappa_0\alpha(2c^2-1)y_n~
        -
        \kappa_0^2(1-c^2)x_n^2
        -
        2\kappa_0^2\alpha c(1-c^2)x_ny_n~
        \\&-
        \kappa_0^2\alpha^2c^2(1-c^2)y_n^2 .
\end{aligned}
\]
Thus
\(
        \widetilde \Pi_2(\alpha,\theta;n)
        =
        \sum_{\ell=1}^{5}
        \Gamma_\ell^{(2)}(\alpha,c)A_\ell^{(2)}[n],
\)
where
\[
        A_1^{(2)}=x,\quad
        A_2^{(2)}=y,\quad
        A_3^{(2)}=x^2,\quad
        A_4^{(2)}=xy,\quad
        A_5^{(2)}=y^2,
\]
and
\(\Gamma_1^{(2)}=i\kappa_0 c, ~ \Gamma_2^{(2)}=i\kappa_0\alpha(2c^2-1),\)
\(\Gamma_3^{(2)}=-\kappa_0^2(1-c^2), ~ \Gamma_4^{(2)}=-2\kappa_0^2\alpha c(1-c^2),\)
\(\Gamma_5^{(2)}=-\kappa_0^2\alpha^2c^2(1-c^2).\)
For \(\widetilde\Pi_3\), write \(s=\sin\theta\).  Then
\(
        \widetilde\Pi_3(\alpha,\theta;n)
        =
        \sum_{\ell=1}^{9}
        \Gamma_\ell^{(3)}(\alpha,c,s)A_\ell^{(3)}[n],
\)
where
\[
        A_1^{(3)}=x,\quad
        A_2^{(3)}=y,\quad
        A_3^{(3)}=x^2,\quad
        A_4^{(3)}=xy,\quad
        A_5^{(3)}=y^2,
\]
\[
        A_6^{(3)}=x^3,\quad
        A_7^{(3)}=x^2y,\quad
        A_8^{(3)}=xy^2,\quad
        A_9^{(3)}=y^3,
\]
and
\(
        \Gamma_1^{(3)}
        =
        -i\kappa_0 s,
        ~
        \Gamma_2^{(3)}
        =
        -i4\kappa_0\alpha c s,
\)
\(
        \Gamma_3^{(3)}
        =
        -3\kappa_0^2 c s,
        ~
        \Gamma_4^{(3)}
        =
        -3\kappa_0^2\alpha(3c^2-1)s,
\)
\(\Gamma_5^{(3)} = -3\kappa_0^2\alpha^2c(2c^2-1)s,\)
\(
        \Gamma_6^{(3)}
        =
        -i\kappa_0^3(1-c^2)s,
        ~
        \Gamma_7^{(3)}
        =
        -i3\kappa_0^3\alpha c(1-c^2)s,
\)
\(
        \Gamma_8^{(3)}
        =
        -i3\kappa_0^3\alpha^2c^2(1-c^2)s,
        ~
        \Gamma_9^{(3)}
        =
        -i\kappa_0^3\alpha^3c^3(1-c^2)s.
\)

On a compact physical cell, suppose certified scalar envelopes satisfy
\(|\Gamma_\ell^{(m)}|\le \gamma_\ell^{(m),{\rm cell}}, ~ m=2,3.\)
Then, for every valid scalar branch envelope \(\mathcal B^{\rm cell}\), the
triangle inequality gives
\begin{equation}
\label{eq:app_dictionary_K_bound}
        |T_{N_r}(a^{K,m}_{\alpha,\theta};\omega_1,\omega_2)|
        \le
        \sum_\ell
        \gamma_\ell^{(m),{\rm cell}}\,
        \mathcal B^{\rm cell}\bigl(b\odot A_\ell^{(m)}\bigr).
\end{equation}
Indeed,
\(
        a^{K,m}_{\alpha,\theta}
        =
        \sum_\ell
        \Gamma_\ell^{(m)}
        \bigl(b\odot A_\ell^{(m)}\bigr),
\)
and hence
\[
\begin{aligned}
        &|T(a^{K,m}_{\alpha,\theta})|
        \le
        \sum_\ell
        |\Gamma_\ell^{(m)}|
        |T(b\odot A_\ell^{(m)})|\le
        \sum_\ell
        \gamma_\ell^{(m),{\rm cell}}
        \mathcal B^{\rm cell}(b\odot A_\ell^{(m)}).
\end{aligned}
\]
The \(H\)-type higher derivative channels are treated similarly, with the
additional normalized support tangent factor:
\begin{equation}
\label{eq:app_dictionary_H_bound}
        |T_{N_r}(a^{H,m}_{\alpha,\theta;\tau_s};
        \omega_1,\omega_2)|
        \le
        \sum_\ell
        \gamma_\ell^{(m),{\rm cell}}\,
        \sup_{\tau_s\in I_{\tau_s}}
        \mathcal B^{\rm cell}
        \bigl(b\odot h_{\tau_s}\odot A_\ell^{(m)}\bigr).
\end{equation}
The right-hand side is a certified support-uniform bound whenever the
supremum over \(\tau_s\) is enclosed rigorously.  This dictionary construction
does not claim to be the exact supremum over the nonlinear higher derivative
family.  It is a certified implementable upper bound obtained from finite
atoms and scalar QPAC branch envelopes.

% ============================================================
\subsection{Physical cells and induced phase boxes}
\label{app:physical_cells}
% ============================================================

Let
\(\mathcal C_{\rm phys} = R_s\times\Theta_s\times R_e\times\Theta_e\)
be a compact support/evaluation cell.  Thus
\(
        r_s\in R_s,~ \theta_s\in\Theta_s,
        ~
        r_e\in R_e,~ \theta_e\in\Theta_e.
\)
The support point is \(v=(r_s,\theta_s)\), and the evaluation point is
\(u=(r_e,\theta_e)\).
Define interval enclosures
\(\mu_s\in I_{\mu_s},~ \mu_e\in I_{\mu_e},\)
\(\eta_s\in I_{\eta_s},~ \eta_e\in I_{\eta_e},\)
where
\(
        \mu=k_\lambda d\cos\theta,
        ~
        \eta=\tfrac{k_\lambda d^2}{2r}\sin^2\theta .
\)
Then the induced phase boxes are
\(
        \omega_1=\mu_s-\mu_e
        \in
        I_{\omega_1}:=I_{\mu_s}-I_{\mu_e},
\)
and
\(
        \omega_2=\eta_e-\eta_s
        \in
        I_{\omega_2}:=I_{\eta_e}-I_{\eta_s}.
\)
The tangent intervals are
\(
        \tau_s=\tfrac{d\cos\theta_s}{r_s}\in I_{\tau_s},
        ~
        \tau_e=\tfrac{d\cos\theta_e}{r_e}\in I_{\tau_e}.
\)
Because \(\theta\in(0,\pi)\), \(\cos\theta\) may be negative.  Therefore
\(I_{\tau_s}\) and \(I_{\tau_e}\) are computed as endpoint hulls of
\(d\cos\theta/r\) over their corresponding cells, not by assuming
\(\tau\ge0\).

The evaluation inverse range parameter satisfies
\(
        \alpha_e=\tfrac{d}{r_e}
        \in
        I_{\alpha_e}
        :=
        \left[
        \tfrac{d}{\max R_e},
        \tfrac{d}{\min R_e}
        \right].
\)

% ============================================================
\subsection{Cellwise branch admissibility}
\label{app:cellwise_branch_admissibility}
% ============================================================

A scalar branch is admissible on the physical cell \(\mathcal C_{\rm phys}\)
if all denominator or phase-interval conditions needed by that branch hold
uniformly for every phase pair
\(
        (\omega_1,\omega_2)\in I_{\omega_1}\times I_{\omega_2}.
\)
We write
\(
        \Omega_{\rm cell}
        :=
        I_{\omega_1}\times I_{\omega_2}.
\)

\paragraph{Derivative branch:}
Fix \(d_0>0\) and \(s_2^{\rm max}\in[0,1]\).  The derivative branch is
admissible on the cell if
\[
        \underline d_{N_r}^+\ge d_0
        ~\text{and}~
        s_2^{\rm cell}\le s_2^{\rm max},
\]
where \(\underline d_{N_r}^+\) is a certified lower bound satisfying
\[
        \underline d_{N_r}^+
        \le
        \inf_{(\omega_1,\omega_2)\in\Omega_{\rm cell}}
        d_{N_r}^+(\omega_1,\omega_2),
\]
and \(s_2^{\rm cell}\) is a certified upper bound satisfying
\[
        s_2^{\rm cell}
        \ge
        \sup_{\omega_2\in I_{\omega_2}}|\sin\omega_2|.
\]

The lower bound \(\underline d_{N_r}^+\) is computed as follows.  For each
\(n=0,\ldots,N_r-1\), enclose the affine phase increment
\[
        \omega_1+(2n+1)\omega_2
\]
over the rectangle \(I_{\omega_1}\times I_{\omega_2}\).  The distance of this
interval to \(2\pi\mathbb Z\) gives a certified lower bound for the
\(n\)-th derivative separation.  Taking the minimum over
\(n=0,\ldots,N_r-1\) gives \(\underline d_{N_r}^+\).

When the branch is admissible, set
\(
        s_0:=\sin\left(\tfrac{d_0}{2}\right).
\)
Then the derivative-branch coefficients are bounded uniformly on the cell by
\[
        \beta_0(d_0,s_2^{\rm max})
        =
        \tfrac{105(s_2^{\rm max})^4}{16s_0^8},
        ~
        \beta_1(d_0,s_2^{\rm max})
        =
        \tfrac{105(s_2^{\rm max})^3}{16s_0^7},
\]
\[
        \beta_2(d_0,s_2^{\rm max})
        =
        \tfrac{45(s_2^{\rm max})^2}{16s_0^6},
        ~
        \beta_3(d_0,s_2^{\rm max})
        =
        \tfrac{5s_2^{\rm max}}{8s_0^5},
\]
and
\[
        \beta_4(d_0,s_2^{\rm max})
        =
        \tfrac{1}{16s_0^4}.
\]

\paragraph{Lag-correlation support branch:}
Fix \(Q\in\{2,\ldots,Q_{\max}\}\).  The LCS branch has no denominator
condition.  Its admissibility on a cell is determined by certified interval
bounds for the real lag contributions.
For a channel \(X\), a point \(z\in\mathcal C_{\rm phys}\), and a residue class
\(s=0,\ldots,Q-1\), let \(a_X^{(Q,s)}(z)\) denote the corresponding residue
subsequence.  For each lag
\(
        h=1,\ldots,M_s-1
\)
and lag position
\(
        m=0,\ldots,M_s-1-h,
\)
define
\[
        d_{s,h,m}(z)
        :=
        a_X^{(Q,s)}(z)_{m+h}
        \overline{a_X^{(Q,s)}(z)_m}.
\]
Terms with \(d_{s,h,m}(z)=0\) contribute zero to the lag estimate and are
omitted.  For \(d_{s,h,m}(z)\ne0\), define the full residue lag-phase interval
\[
        \Jlc_{s,h,m}(z,\Omega_{\rm cell})
        :=
        \arg(d_{s,h,m}(z))
        +
        hQ\,I_{\omega_1}
        +
        \bigl(2Qsh+Q^2(h^2+2hm)\bigr)I_{\omega_2}.
\]
The argument may be chosen as any real representative modulo \(2\pi\);
the resulting cosine majorant is unchanged because \(\Cmax\) is
\(2\pi\)-periodic in the sense used below.
Here the interval operations are real interval operations before any
modulo-\(2\pi\) reduction.
For a compact real interval \(J=[L,U]\), define
\(
        \Cmax(J):=\sup_{\xi\in J}\cos\xi.
\)
Equivalently,
\[
        \Cmax([L,U])
        =
        \begin{cases}
        1,
        &\text{if there exists } k\in\mathbb Z
        \text{ such that } 2\pi k\in[L,U],\\[1mm]
        \max\{\cos L,\cos U\},
        &\text{otherwise}.
        \end{cases}
\]
This definition handles intervals that cross a multiple of \(2\pi\).  In particular,
\(
        -1\le \Cmax(J)\le 1.
\)
For the residue lag phase, define
\[
        \Phi_{s,h,m}(\omega_1,\omega_2)
        :=
        hQ\omega_1+
        \bigl(2Qsh+Q^2(h^2+2hm)\bigr)\omega_2 .
\]
The interval \(\Jlc_{s,h,m}(z,\Omega_{\rm cell})\) contains the full phase
\[
        \arg(d_{s,h,m}(z))
        +
        \Phi_{s,h,m}(\omega_1,\omega_2)
\]
for every \((\omega_1,\omega_2)\in\Omega_{\rm cell}\).  Therefore, for
\(d_{s,h,m}(z)\ne0\),
\[
        \operatorname{Re}
        \left(
        d_{s,h,m}(z)
        e^{i\Phi_{s,h,m}(\omega_1(z),\omega_2(z))}
        \right)
        \le
        |d_{s,h,m}(z)|
        \Cmax(\Jlc_{s,h,m}(z,\Omega_{\rm cell})).
\]
Terms with \(d_{s,h,m}(z)=0\) contribute zero.

\paragraph{Residue-linear branch:}
Fix \(q\in\{1,\ldots,Q_{\max}\}\) and
\(A\in\{0,\ldots,2q-1\}\).  Let
\(
        \varepsilon
        :=
        \omega_2-\tfrac{\pi A}{q}.
\)
Choose a certified upper bound
\(
        \bar\nu_{q,A}
        \ge
        \sup_{\omega_2\in I_{\omega_2}}
        \dist_{2\pi}(q^2\varepsilon).
\)
For each residue class \(s=0,\ldots,q-1\), choose a certified lower bound
\[
        \gamma_s^{(q,A)}
        \le
        \inf_{(\omega_1,\omega_2)\in\Omega_{\rm cell}}
        \left|
        \sin\left(
        \tfrac{q\omega_1+\pi Aq+2q\varepsilon s}{2}
        \right)
        \right|.
\]
The residue-linear branch is admissible if
\(
        \gamma_s^{(q,A)}>0,
        ~ s=0,\ldots,q-1.
\)
In that case, the first-order Abel denominators in all residue classes are
uniformly bounded below on the cell.

\subsection{Cellwise coefficient envelopes}
\label{app:cellwise_coefficient_envelopes}
% ============================================================

Let \(X\) denote one of the eight channels
\[
        K,H,dK,dH,d^2K,d^3K,d^2H,d^3H.
\]
On a physical cell, the channel \(X\) has a coefficient family
\[
        a_X(z),
        ~
        z=(r_s,\theta_s,r_e,\theta_e)\in\mathcal C_{\rm phys}.
\]
A cellwise branch envelope for \(X\) is any number
\(\mathcal E_X^{(b,{\rm cell})}\) such that
\[
        B_b(a_X(z);\omega_1(z),\omega_2(z);N_r)
        \le
        \mathcal E_X^{(b,{\rm cell})}
\]
for every \(z\in\mathcal C_{\rm phys}\), where \(b\) is an admissible scalar
branch.

\paragraph{Derivative branch envelope:}
For the derivative branch, suppose that for \(j=0,\ldots,4\) we have certified
bounds
\(
        \mathcal D_{X,j}^{\rm cell}
        \ge
        \sup_{z\in\mathcal C_{\rm phys}}
        \|\Delta^j a_X(z)\|_1 .
\)
Then the derivative envelope
\(
        \mathcal E_X^{({\rm der},{\rm cell})}
        :=
        \sum_{j=0}^{4}
        \beta_j(d_0,s_2^{\rm max})
        \mathcal D_{X,j}^{\rm cell}
\)
is valid on the entire cell.

\paragraph{Lag-correlation support envelope:}
Fix \(Q\in\{2,\ldots,Q_{\max}\}\).  For each residue class \(s\), let
\(a_X^{(Q,s)}(z)\) be the residue subsequence of \(a_X(z)\).  Define the
residue-class energy
\(
        E_{Q,s}(z)
        :=
        \sum_{m=0}^{M_s-1}
        |a_X^{(Q,s)}(z)_m|^2 .
\)
Suppose we have a certified energy bound
\(
        \mathcal E_{Q,s}^{\rm cell}
        \ge
        \sup_{z\in\mathcal C_{\rm phys}} E_{Q,s}(z).
\)
For each lag
\(
        h=1,\ldots,M_s-1
\)
and lag position
\(
        m=0,\ldots,M_s-1-h,
\)
define
\(
        d_{s,h,m}(z)
        :=
        a_X^{(Q,s)}(z)_{m+h}
        \overline{a_X^{(Q,s)}(z)_m}.
\)
For \(d_{s,h,m}(z)\ne0\), define
\[
        \Jlc_{s,h,m}(z,\Omega_{\rm cell})
        :=
        \arg(d_{s,h,m}(z))
        +
        hQ\,I_{\omega_1}
        +
        \bigl(2Qsh+Q^2(h^2+2hm)\bigr)I_{\omega_2}.
\]
Terms with \(d_{s,h,m}(z)=0\) are omitted because they contribute zero.

For each residue class \(s\) and lag \(h\), suppose we have a certified real
lag bound
\[
        \mathcal U_{Q,s,h}^{\LCS,{\rm cell}}
        \ge
        \sup_{z\in\mathcal C_{\rm phys}}
        \sum_{\substack{0\le m\le M_s-1-h\\ d_{s,h,m}(z)\ne0}}
        |d_{s,h,m}(z)|
        \Cmax(\Jlc_{s,h,m}(z,\Omega_{\rm cell})) .
\]

For every \(z\in\mathcal C_{\rm phys}\), the residue-class expansion gives
\[
\begin{aligned}
        |T_s^{(Q)}(z)|^2
        &=
        \sum_{m=0}^{M_s-1}
        |a_X^{(Q,s)}(z)_m|^2                                      \\
        &\quad+
        2\operatorname{Re}
        \sum_{h=1}^{M_s-1}
        \sum_{m=0}^{M_s-1-h}
        d_{s,h,m}(z)
        e^{i\Phi_{s,h,m}(\omega_1(z),\omega_2(z))}.
\end{aligned}
\]
Here
\[
        \Phi_{s,h,m}(\omega_1,\omega_2)
        =
        hQ\omega_1+
        \bigl(2Qsh+Q^2(h^2+2hm)\bigr)\omega_2 .
\]
By the definition of \(\Jlc_{s,h,m}(z,\Omega_{\rm cell})\), for every
nonzero \(d_{s,h,m}(z)\),
\[
        \operatorname{Re}
        \left(
        d_{s,h,m}(z)
        e^{i\Phi_{s,h,m}(\omega_1(z),\omega_2(z))}
        \right)
        \le
        |d_{s,h,m}(z)|
        \Cmax(\Jlc_{s,h,m}(z,\Omega_{\rm cell})).
\]
Zero coefficients contribute zero.  Therefore
\[
        |T_s^{(Q)}(z)|^2
        \le
        \mathcal E_{Q,s}^{\rm cell}
        +
        2
        \sum_{h=1}^{M_s-1}
        \mathcal U_{Q,s,h}^{\LCS,{\rm cell}} .
\]
Consequently,
\[
        |T_s^{(Q)}(z)|
        \le
        \left(
        \mathcal E_{Q,s}^{\rm cell}
        +
        2
        \sum_{h=1}^{M_s-1}
        \mathcal U_{Q,s,h}^{\LCS,{\rm cell}}
        \right)_+^{1/2}.
\]
The quantity under the square root is nonnegative whenever the certified
bounds are valid, because it upper-bounds \(|T_s^{(Q)}(z)|^2\).
Since
\(
        T_N
        =
        \sum_{s=0}^{Q-1}
        e^{i\alpha_s}T_s^{(Q)},
\)
the triangle inequality gives
\(
        |T_N|
        \le
        \sum_{s=0}^{Q-1}|T_s^{(Q)}|.
\)
Thus the cellwise LCS envelope for this fixed \(Q\) is
\[
        \mathcal E_X^{(\LCS,Q,{\rm cell})}
        :=
        \sum_{s=0}^{Q-1}
        \left(
        \mathcal E_{Q,s}^{\rm cell}
        +
        2
        \sum_{h=1}^{M_s-1}
        \mathcal U_{Q,s,h}^{\LCS,{\rm cell}}
        \right)_+^{1/2}.
\]
Taking the minimum over \(Q\) gives
\[
        \mathcal E_X^{(\LCS,{\rm cell})}
        :=
        \min_{2\le Q\le Q_{\max}}
        \mathcal E_X^{(\LCS,Q,{\rm cell})}.
\]

\paragraph{Residue-linear branch envelope:}
Fix an admissible pair \((q,A)\).  Suppose that, for each residue class
\(s=0,\ldots,q-1\), we have certified bounds
\(
        \mathcal L_{q,s}^{\rm cell}
        \ge
        \sup_{z\in\mathcal C_{\rm phys}}
        \|a_X^{(q,s)}(z)\|_1
\)
and
\(        \mathcal V_{q,A,s}^{\rm cell}
        \ge
        \sup_{z\in\mathcal C_{\rm phys}}
        V_{\nu}(a_X^{(q,s)}(z))
\)
for every admissible residual \(\nu\) satisfying
\(
        \dist_{2\pi}(\nu)\le \bar\nu_{q,A}.
\)
Then
\[
        \mathcal E_X^{({\rm res,lin},q,A,{\rm cell})}
        :=
        \sum_{s=0}^{q-1}
        \min\left\{
        \mathcal L_{q,s}^{\rm cell},\,
        \tfrac{
        \mathcal V_{q,A,s}^{\rm cell}
        }{
        2\gamma_s^{(q,A)}
        }
        \right\}
\]
is valid on the cell.  Taking the minimum over admissible \((q,A)\) gives
\[
        \mathcal E_X^{({\rm res,lin},{\rm cell})}
        :=
        \min_{(q,A)\ {\rm admissible}}
        \mathcal E_X^{({\rm res,lin},q,A,{\rm cell})}.
\]
The minimum over an empty admissible family is defined to be \(+\infty\).
\paragraph{Trivial envelope:}
The trivial branch is given by any certified bound
\(
        \mathcal E_X^{(\ell_1,{\rm cell})}
        \ge
        \sup_{z\in\mathcal C_{\rm phys}}\|a_X(z)\|_1.
\)
% ============================================================
\subsection{Cauchy-Schwarz bounds}
\label{app:cauchy_caps}
% ============================================================

The scalar QPAC estimates apply to arbitrary coefficient sequences and do
not encode the normalization of the Hermite channels.  Independently of any
oscillatory cancellation, the gauged value and tangent atoms have unit
weighted norm. The Cauchy-Schwarz inequality therefore provides a universal unit
bound for the four normalized channels.

\begin{lem}
\label{lem:app_unit_caps}
For every pair \((u,v)\),
\(|K_{uv}|,\quad |H_{uv}|,\quad |dK_{uv}|,\quad |dH_{uv}| \le 1.\)
\end{lem}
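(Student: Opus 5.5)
The plan is to obtain all four bounds from the weighted Cauchy--Schwarz inequality for the semi-inner product \(\langle\cdot,\cdot\rangle_\rho\). This requires only that each of the four atoms \(\psi_u,\psi_v,h_u,h_v\) has unit weighted norm. Because \(\rho_n\ge0\), the form \(\langle f,g\rangle_\rho=\sum_n\rho_n\overline{f[n]}g[n]\) is Hermitian and positive semidefinite. Cauchy--Schwarz therefore holds in the form \(|\langle f,g\rangle_\rho|\le\|f\|_\rho\|g\|_\rho\), even though the form may be degenerate when some \(\rho_n\) vanish. To sidestep the degeneracy, I would either restrict to the active aperture \(\mathcal A_\rho\), where the form is a genuine inner product on \(\mathbb C^{|\mathcal A_\rho|}\) as noted in the notation section, or apply the ordinary Cauchy--Schwarz inequality in \(\mathbb C^{N_r}\) to the vectors \((\sqrt{\rho_n}f[n])_n\) and \((\sqrt{\rho_n}g[n])_n\).

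The first step is the normalizations. The identity \(\|\psi_a\|_\rho=1\) was computed in \Cref{app:gauged-fresnel-atoms}, since \(|\psi_a[n]|=W_0^{-1/2}\) for every \(n\). For the tangent atom, the standing assumption \(q_{\min}>0\) gives \(\sigma_a=\|\partial_\theta\psi_a\|_\rho=\kappa_0\sin\theta_a\sqrt{q(\tau_a)}>0\), so \(h_a=\partial_\theta\psi_a/\sigma_a\) is well defined and \(\|h_a\|_\rho=1\). The same fact can be read off the explicit profile \eqref{eq:h_profile_main}: \(\|h_a\|_\rho^2=\sum_n b_n h_{\tau_a}[n]^2=1\).

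The second step substitutes these norms into Cauchy--Schwarz. This yields \(|K_{uv}|\le\|\psi_u\|_\rho\|\psi_v\|_\rho=1\), \(|H_{uv}|\le\|\psi_u\|_\rho\|h_v\|_\rho=1\), \(|dK_{uv}|\le\|h_u\|_\rho\|\psi_v\|_\rho=1\), and \(|dH_{uv}|\le\|h_u\|_\rho\|h_v\|_\rho=1\).

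There is no real obstacle. The only point needing care is that the tangent channels exist at all, i.e. that \(\sigma_u,\sigma_v>0\), and this is guaranteed by tangent nondegeneracy \(q_{\min}>0\) together with \(\sin\theta>0\) on \(\Theta\subset(0,\pi)\).
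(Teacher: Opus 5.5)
Your proposal is correct and follows the same route as the paper: unit weighted norms of $\psi_u,\psi_v,h_u,h_v$ combined with the weighted Cauchy--Schwarz inequality. Your additional remarks on the possibly degenerate semi-inner product (restrict to $\mathcal A_\rho$ or rescale by $\sqrt{\rho_n}$) and on $\sigma_a>0$ from $q_{\min}>0$ and $\sin\theta>0$ spell out details the paper leaves implicit in ``by construction.''
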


\begin{proof}
By construction, \(\|\psi_u\|_\rho=\|\psi_v\|_\rho=1, ~ \|h_u\|_\rho=\|h_v\|_\rho=1.\) The result follows immediately from the Cauchy-Schwarz inequality applied to
\[
 \langle\psi_u,\psi_v\rangle_\rho,\quad
 \langle\psi_u,h_v\rangle_\rho,\quad
 \langle h_u,\psi_v\rangle_\rho,\quad
 \langle h_u,h_v\rangle_\rho .
\]
\end{proof}

The second and third angular derivative channels are not normalized inner
products, so the preceding unit bound does not apply to them.  Their
universal Cauchy bounds are instead determined by the weighted
\(\ell^2\)-size of the derivative multipliers
\(\widetilde\Pi_m\).

\begin{lem}
\label{lem:app_higher_cauchy_caps}
For \(m=2,3\),
\(
        |\partial_\theta^m K_v(u)|
        \le
        \left(
        \sum_n b_n |\widetilde\Pi_m(\alpha_u,\theta_u;n)|^2
        \right)^{1/2},
\)
and
\(
        |\partial_\theta^m H_v(u)|
        \le
        \left(
        \sum_n b_n |\widetilde\Pi_m(\alpha_u,\theta_u;n)|^2
        \right)^{1/2}.
\)
\end{lem}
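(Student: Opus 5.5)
The plan is to write each higher derivative channel as a weighted inner product and then apply the weighted Cauchy--Schwarz inequality. By the computation preceding \Cref{prop:app_higher_derivative_reductions}, $\partial_\theta^m\psi_u[n]=\Pi_m(\alpha_u,\theta_u;n)\psi_u[n]$ with $|\Pi_m|=|\widetilde\Pi_m|$. Since the derivative acts only on the first argument,
\[
\partial_\theta^mK_v(u)=\langle\partial_\theta^m\psi_u,\psi_v\rangle_\rho,
\qquad
\partial_\theta^mH_v(u)=\langle\partial_\theta^m\psi_u,h_v\rangle_\rho .
\]

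For $K$, the Cauchy--Schwarz inequality in the weighted pairing gives $|\partial_\theta^mK_v(u)|\le\|\partial_\theta^m\psi_u\|_\rho\,\|\psi_v\|_\rho$, and $\|\psi_v\|_\rho=1$. To compute $\|\partial_\theta^m\psi_u\|_\rho^2$, use $|\psi_u[n]|^2=W_0^{-1}$. This yields
\[
\|\partial_\theta^m\psi_u\|_\rho^2=\sum_n\rho_nW_0^{-1}|\Pi_m(\alpha_u,\theta_u;n)|^2=\sum_n b_n|\widetilde\Pi_m(\alpha_u,\theta_u;n)|^2,
\]
which is the claimed bound for $K$.

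The $H$ case is identical once we use $\|h_v\|_\rho=1$. This normalization holds because $h_v=\partial_\theta\psi_v/\sigma_v$, or equivalently because $\sum_n b_nh_{\tau_v}[n]^2=1$; it requires the nondegeneracy $\sigma_v>0$ assumed throughout.

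No real obstacle is expected. The only care needed is to confirm that $\widetilde\Pi_m$ is the conjugate of $\Pi_m$, so that their moduli agree. This follows directly because $u_{\alpha,\theta}$, $u_{\theta,\alpha,\theta}$ and $u_{\theta\theta,\alpha,\theta}$ are real-valued.
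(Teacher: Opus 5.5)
Your argument is correct and is essentially the paper's proof. The paper applies weighted Cauchy--Schwarz to the scalar sum from \Cref{prop:app_higher_derivative_reductions}, pairing against $\sum_n b_n=1$ for $K$ and against $\sum_n b_n h_{\tau_v}[n]^2=1$ for $H$; this is exactly your bound $\|\partial_\theta^m\psi_u\|_\rho\|\psi_v\|_\rho$ (resp.\ $\|\partial_\theta^m\psi_u\|_\rho\|h_v\|_\rho$) written coordinatewise, and Cauchy--Schwarz still holds for the merely semidefinite pairing $\langle\cdot,\cdot\rangle_\rho$ even though the taper vanishes at the flat ends.
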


\begin{proof}
For \(K\), by \Cref{prop:app_higher_derivative_reductions},
\[
 \partial_\theta^mK_v(u)
 =
 \text{unit phase}\cdot
 \sum_n b_n \widetilde\Pi_m(\alpha_u,\theta_u;n)
 e^{i(\omega_{1,uv}n+\omega_{2,uv}n^2)} .
\]
Cauchy-Schwarz gives
\[
 |\partial_\theta^mK_v(u)|
 \le
 \left(\sum_n b_n|\widetilde\Pi_m(\alpha_u,\theta_u;n)|^2\right)^{1/2}
 \left(\sum_n b_n\right)^{1/2}.
\]
Since \(\sum_n b_n=1\), the first bound follows.

For \(H\), the coefficient contains \(h_{\tau_v}[n]\):
\[
 \partial_\theta^mH_v(u)
 =
 \text{unit phase}\cdot
 \sum_n b_n h_{\tau_v}[n]\widetilde\Pi_m(\alpha_u,\theta_u;n)
 e^{i(\omega_{1,uv}n+\omega_{2,uv}n^2)} .
\]
Apply Cauchy-Schwarz in the form
\[
\begin{aligned}
 |\partial_\theta^mH_v(u)|
 &\le
 \left(\sum_n b_n|\widetilde\Pi_m(\alpha_u,\theta_u;n)|^2\right)^{1/2}
 \left(\sum_n b_n h_{\tau_v}[n]^2\right)^{1/2}.
\end{aligned}
\]
The second factor is \(1\) by the normalization of \(h_{\tau_v}\).
\end{proof}

Thus, on a physical cell, valid derivative-size Cauchy-Schwarz caps are
\(
        \mathcal C_{K,m}^{\rm cell}
        :=
\sup_{\substack{\alpha\in I_{\alpha_e}\\ \theta\in \Theta_e}}
\left(
        \sum_n b_n |\widetilde\Pi_m(\alpha,\theta;n)|^2
        \right)^{1/2},
\)
and
\(\mathcal C_{H,m}^{\rm cell}:= \mathcal C_{K,m}^{\rm cell}.\)
These caps apply to \(d^mK\) and \(d^mH\), respectively, for \(m=2,3\).
We use the shorthand
\[
        B_X^{\ell_1,{\rm cell}}
        :=
        \mathcal E_X^{(\ell_1,{\rm cell})},
        ~
        B_X^{{\rm der},{\rm cell}}
        :=
        \mathcal E_X^{({\rm der},{\rm cell})},
\]
\[    B_X^{\LCS,{\rm cell}}
        :=
        \mathcal E_X^{(\LCS,{\rm cell})},
        ~
        B_X^{{\rm res,lin},{\rm cell}}
        :=
        \mathcal E_X^{({\rm res,lin},{\rm cell})}.\]
% ============================================================
\subsection{Cellwise support-uniform QPAC theorem}
\label{app:cellwise_support_uniform_theorem}
% ============================================================
Define the Cauchy-Schwarz cell cap by
\[
B_X^{{\rm Cauchy},{\rm cell}}
:=
\begin{cases}
1,
& X\in\{K,H,dK,dH\},\\[0.5ex]
\mathcal C_{K,2}^{\rm cell},
& X=d^2K,\\[0.5ex]
\mathcal C_{K,3}^{\rm cell},
& X=d^3K,\\[0.5ex]
\mathcal C_{H,2}^{\rm cell},
& X=d^2H,\\[0.5ex]
\mathcal C_{H,3}^{\rm cell},
& X=d^3H.
\end{cases}
\]
Here the constants \(\mathcal C_{K,m}^{\rm cell}\) and
\(\mathcal C_{H,m}^{\rm cell}\) are the derivative-size Cauchy-Schwarz caps from
\Cref{lem:app_higher_cauchy_caps}.
For a channel \(X\), define the cellwise support-uniform bound
\begin{equation}
\label{eq:app_cellwise_support_bound}
        B_X^{\rm cell}
        :=
        \min
        \left\{
        B_X^{\ell_1,{\rm cell}},
        B_X^{{\rm Cauchy},{\rm cell}},
        B_X^{{\rm der},{\rm cell}},
        B_X^{\LCS,{\rm cell}},
        B_X^{{\rm res,lin},{\rm cell}}
        \right\},
\end{equation}
where a branch term is omitted if its admissibility conditions fail on the
cell.  For the normalized channels \(K,H,dK,dH\), the Cauchy-Schwarz  term includes the
unit cap \(1\).  For \(d^2K,d^3K,d^2H,d^3H\), the Cauchy-Schwarz term is the
corresponding derivative-size cap from \Cref{lem:app_higher_cauchy_caps}.

Every finite candidate entering \(B_X^{\rm cell}\) has been constructed as
an upper bound that is valid simultaneously over the entire physical cell.
Taking the minimum over these valid candidates preserves the upper-bound
property.  The resulting theorem is the step that converts the scalar QPAC
branches into support-uniform channel envelopes.

\begin{thm}
\label{thm:app_cellwise_support_uniform_qpac}
Let
\(
        \mathcal C_{\rm phys}
        =
        R_s\times\Theta_s\times R_e\times\Theta_e
\)
be a compact physical support/evaluation cell.  For each channel
\[
        X\in
        \{K,H,dK,dH,d^2K,d^3K,d^2H,d^3H\},
\]
let \(B_X^{\rm cell}\) be defined by
\eqref{eq:app_cellwise_support_bound}.  The minimum in
\eqref{eq:app_cellwise_support_bound} is taken only over candidates that are
certified on the whole cell: the trivial \(\ell^1\) cap, the appropriate
Cauchy-Schwarz cap, and those scalar QPAC branches whose admissibility conditions hold
on the entire induced phase box of the cell.  Invalid branches are omitted,
equivalently assigned the value \(+\infty\).
Then
\[
        |X((r_e,\theta_e),(r_s,\theta_s))|
        \le
        B_X^{\rm cell}
\]
for every
\(
        (r_s,\theta_s,r_e,\theta_e)
        \in
        \mathcal C_{\rm phys}.
\)
Here the first argument of \(X\) is the evaluation point and the second
argument is the support point, consistently with the main convention
\(X(q,p)\).
\end{thm}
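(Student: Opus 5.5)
The plan is to fix a point $z=(r_s,\theta_s,r_e,\theta_e)\in\mathcal C_{\rm phys}$, set $u=(r_e,\theta_e)$ and $v=(r_s,\theta_s)$, and show that every finite candidate in \eqref{eq:app_cellwise_support_bound} dominates $|X(u,v)|$ at this $z$. Since a minimum of valid upper bounds is again an upper bound, the theorem follows. The first step reduces the channel to a scalar sum. By \Cref{prop:app_basic_pairwise_reductions} for $X\in\{K,H,dK,dH\}$, and by \Cref{prop:app_higher_derivative_reductions} for $X\in\{d^2K,d^3K,d^2H,d^3H\}$, we have $|X(u,v)|=|T_{N_r}(a_X(z);\omega_1(z),\omega_2(z))|$. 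Here $a_X(z)$ is the corresponding coefficient family, and the unit phase drops out in absolute value. Because $\mu,\eta$ lie in their enclosures, $(\omega_1(z),\omega_2(z))\in\Omega_{\rm cell}$. Likewise $\tau_s,\tau_e,\alpha_e$ lie in their hulls.

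Next I would verify each candidate in turn.
\begin{enumerate}
\item The $\ell^1$ cap follows from $|T|\le\|a_X(z)\|_1\le\mathcal E_X^{(\ell_1,{\rm cell})}$.
\item The Cauchy--Schwarz cap is \Cref{lem:app_unit_caps} for the normalized channels. For the higher-derivative channels it is \Cref{lem:app_higher_cauchy_caps}, followed by taking the supremum over $(\alpha,\theta)\in I_{\alpha_e}\times\Theta_e$.
\item For the derivative branch, admissibility gives $d_{N_r}^+(\omega_1(z),\omega_2(z))\ge\underline d_{N_r}^+\ge d_0$ and $|\sin\omega_2(z)|\le s_2^{\max}$. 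Hence $s_N\ge s_0$, since $\sin(x/2)$ is increasing on $[0,\pi]$. Each $\beta_j(\omega_2,s_N)$ is monotone increasing in $|\sin\omega_2|$ and decreasing in $s_N$, so $\beta_j(\omega_2,s_N)\le\beta_j(d_0,s_2^{\max})$. The flat-end condition on $a_X(z)$ is inherited from $\rho$. \Cref{lem:app_derivative_branch} then gives $|T|\le\sum_j\beta_j\|\Delta^ja_X(z)\|_1\le\mathcal E_X^{({\rm der},{\rm cell})}$.
\item For the residue-linear branch, I would apply \Cref{lem:app_reslin} with a fixed admissible $(q,A)$, keeping only that term of the minimum. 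For each residue class, $V_{\nu_{q,A}}(a^{(q,s)})\le\mathcal V^{\rm cell}_{q,A,s}$, because $\dist_{2\pi}(\nu_{q,A})\le\bar\nu_{q,A}$. The denominator satisfies $|\sin(\Omega_{q,A,s}/2)|\ge\gamma_s^{(q,A)}$. One needs to check that $\Omega_{q,A,s}$ as defined with the signed representative $\varepsilon_{q,A}$ agrees modulo $2\pi$ with the expression used in the cell lower bound, so that the sine modulus is unchanged. The $\ell^1$ alternative is bounded by $\mathcal L_{q,s}^{\rm cell}$. Minimizing over admissible $(q,A)$ completes this item.
\end{enumerate}

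The LCS envelope is the step I expect to need the most care, because it is not a consequence of \Cref{thm:app_scalar_qpac}. I would first use the exact decomposition $T_N=\sum_s e^{i\alpha_s}T_s^{(Q)}$ from \Cref{lem:app_lc_branch}. Then I would expand $|T_s^{(Q)}|^2$ as the diagonal energy plus $2\operatorname{Re}\sum_{h\ge1}\sum_m d_{s,h,m}e^{i\Phi_{s,h,m}}$. To confirm this, compute
\[
(\beta_s(m+h)+\gamma(m+h)^2)-(\beta_s m+\gamma m^2)=hQ\omega_1+(2Qsh+Q^2(h^2+2hm))\omega_2 .
\]
Since the real phase $\arg d_{s,h,m}(z)+\Phi_{s,h,m}(\omega_1(z),\omega_2(z))$ lies in $\Jlc_{s,h,m}(z,\Omega_{\rm cell})$ by interval arithmetic, each real part is at most $|d_{s,h,m}|\Cmax(\Jlc)$. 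The $\Cmax$ formula is justified by noting that $\cos$ attains $1$ exactly at multiples of $2\pi$, and otherwise is monotone between them, so its maximum is at an endpoint. Summing and bounding by $\mathcal E_{Q,s}^{\rm cell}+2\sum_h\mathcal U^{\LCS,{\rm cell}}_{Q,s,h}$ gives the square bound; a subtle point here is that $\Cmax$ may be negative. I would then take the positive part, apply the square root and the triangle inequality across $s$, and minimize over $Q$. Since the chosen $z$ was arbitrary, this concludes the proof.
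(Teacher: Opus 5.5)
Your proposal is correct and follows the paper's proof. You fix $z\in\mathcal C_{\rm phys}$, note that the induced phase pair and the tangent and range parameters lie in the cell enclosures, check that each finite candidate ($\ell^1$, Cauchy--Schwarz, derivative, residue-linear, LCS) bounds $|X(z)|$ at that point, and take the minimum. You also make explicit details the paper leaves terse: monotonicity of the $\beta_j$, the $2\pi$-shift invariance of $\sin(\Omega_{q,A,s}/2)$ and $\nu_{q,A}$, and the lag-phase computation.

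One wording fix: $\cos$ is not monotone between consecutive multiples of $2\pi$. What gives the endpoint formula for $\Cmax$ is that it has no interior local maximum there. In any case the proof only needs $\cos\xi\le\Cmax(J)$ for $\xi\in J$, which is immediate from the definition $\Cmax(J)=\sup_{\xi\in J}\cos\xi$.
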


\begin{proof}
Fix
\(
        z=(r_s,\theta_s,r_e,\theta_e)\in\mathcal C_{\rm phys},
\)
and set
\(
        p_s=(r_s,\theta_s),
        ~
        p_e=(r_e,\theta_e).
\)
The ordered channel value is \(X(p_e,p_s)\), and \(z\) induces the phase pair
\((\omega_1(z),\omega_2(z))\) with evaluation point \(p_e\) and support point
\(p_s\).  For brevity in this proof, write
\(
        X(z):=X(p_e,p_s).
\)
By construction of the cellwise phase box, this pointwise phase pair belongs
to the phase box associated with \(\mathcal C_{\rm phys}\).
For the derivative and residue-linear candidates, the scalar
estimates apply at the induced phase pair, and the coefficient
enclosures bound their values uniformly on the cell.
For the LCS candidate, the squared-residue expansion in
\Cref{app:cellwise_coefficient_envelopes} directly gives
\[
|X(z)|\le\mathcal E_X^{(\LCS,{\rm cell})}.
\]
Thus each included cancellation candidate bounds
\(|X(z)|\) throughout the cell.
The trivial \(\ell^1\) candidate is valid by the triangle inequality: \(|X(z)| \le \|a_X(z)\|_1 \le \mathcal E_X^{(\ell^1,{\rm cell})}.\) The Cauchy-Schwarz candidate is valid by \Cref{lem:app_unit_caps} for the normalized
channels \(K,H,dK,dH,\) and by \Cref{lem:app_higher_cauchy_caps} for the higher derivative channels \(d^2K,d^3K,d^2H,d^3H.\) In particular, the higher derivative channels use derivative-size Cauchy-Schwarz caps,
not the unit cap.
Therefore every finite candidate appearing in
\eqref{eq:app_cellwise_support_bound} is an upper bound for \(|X(z)|\).
Taking the minimum over valid upper bounds preserves the upper-bound property.
Hence \(|X(z)|\le B_X^{\rm cell}.\) Since \(z\in\mathcal C_{\rm phys}\) was arbitrary, the bound holds uniformly on
the whole cell.
\end{proof}

% ============================================================
\subsection{Raw tangent comparison}
\label{app:raw_tangent_comparison}
% ============================================================

The estimates above use the gauged tangent \(h_a\).  If one uses instead the
raw ungauged angular tangent, the same scalar QPAC theorem still applies after
replacing the centered gauged profile by the raw derivative profile.  The
gauged tangent is used in the Hermite certificate because it enforces
\(
        \langle \psi_a,h_a\rangle_\rho=0.
\)
Thus the diagonal value-tangent block is exactly orthogonal.  This
orthogonality is the reason the gauged tangent is the natural object for the
QPAC certificate.

% ============================================================
\section{Near-curvature estimates}
\label{app:near-curvature}
% ============================================================

This appendix proves the local strict-decay estimate used in the QPAC recovery
theorem.  The argument is deterministic and support-uniform.  It uses three
ingredients:

\begin{itemize}
        \item coefficient bounds for the gauged Hermite system;
        \item a uniform lower bound on the tangent energy at the support;
        \item local support-uniform derivative bounds for \(K\) and \(H\) up to
        order three.
\end{itemize}

The conclusion is that the gauged Hermite certificate
\(P_{S,\widetilde {\mathbf v}}\) satisfies
\(
        |P_{S,\widetilde {\mathbf v}}(p)|<1
\)
throughout every punctured feasible near interval, provided the near-curvature
margin is positive and the QPAC near number is strictly smaller than one.

% ============================================================
\subsection{Local notation}
\label{app:near-local-notation}
% ============================================================

Let
\(
        S=\{p_\ell=(i_\ell,\theta_\ell)\}_{\ell=1}^{L}
        \subset
        \Qset
        =
        \{1,\ldots,N_d\}\times\Theta
\)
be a support set.  Let
\(
        \widetilde{\mathbf v}=(\widetilde v_1,\ldots,\widetilde v_L)
        \in\mathbb C^L,
        ~
        |\widetilde v_\ell|=1,
\)
be an arbitrary gauged interpolation phase vector.  In the exact-recovery
argument, this vector is chosen as
\(
        \widetilde v_\ell
        =
        e^{i\chi_{i_\ell}(\theta_\ell)}
        \tfrac{c_\ell}{|c_\ell|}.
\)
The gauged Hermite certificate has the form
\begin{equation}
\label{eq:near_certificate_def}
        P_{S,\widetilde {\mathbf v}}(p)
        =
        \sum_{\ell=1}^{L}
        \alpha_\ell K_\ell(p)
        +
        \sum_{\ell=1}^{L}
        \beta_\ell H_\ell(p),
\end{equation}
where
\(
        K_\ell(p)
        :=
        K_{p,p_\ell}
        =
        \langle \bs \psi_p,\bs \psi_{p_\ell}\rangle_\rho,
        ~
        H_\ell(p)
        :=
        H_{p,p_\ell}
        =
        \langle \bs \psi_p,\mx h_{p_\ell}\rangle_\rho .
\)
The first argument is always the evaluation point, and the second argument is
the support point.
The gauged Hermite interpolation conditions are
\begin{equation}
\label{eq:near_Hermite_conditions}
        P_{S,\widetilde {\mathbf v}}(p_j)=\widetilde v_j,
        ~
        \partial_\theta P_{S,\widetilde {\mathbf v}}(p_j)=0,
        ~
        j=1,\ldots,L.
\end{equation}
For each support point \(p_j=(i_j,\theta_j)\), define the local angular path
\[
        p_j(t):=(i_j,\theta_j+t),
\]
and the feasible local interval
\[
        \mathcal I_j(\varpi_{\rm loc})
        :=
        \left\{
        t\in\mathbb R:
        |t|\le \varpi_{\rm loc},
        \quad
        \theta_j+t\in\Theta
        \right\}.
\]
Define the punctured feasible local interval by
\[
        \mathcal I_j^\circ(\varpi_{\rm loc})
        :=
        \mathcal I_j(\varpi_{\rm loc})\setminus\{0\}.
\]

We study the scalar functions \( f_j(t)
        :=
        |P_{S,\widetilde {\mathbf v}}(p_j(t))|^2\)
and \( F_j(t)
        :=
        1-f_j(t)
        =
        1-|P_{S,\widetilde {\mathbf v}}(p_j(t))|^2.\)
By the Hermite interpolation conditions,
\(
        f_j(0)=1,
        ~
        f_j'(0)=0.
\)
Equivalently,
\(
        F_j(0)=0,
        ~
        F_j'(0)=0.
\)
The goal is to prove
\(
        f_j(t)<1,
        ~
        t\in\mathcal I_{j}^{\circ}(\varpi_{\rm loc}),
\)
or, equivalently,
\(
        F_j(t)>0,
        ~
        t\in\mathcal I_{j}^{\circ}(\varpi_{\rm loc}).
\)

% ============================================================
\subsection{Tangent-energy lower bound}
\label{app:tangent-energy-lower-bound}
% ============================================================

Recall that
\(
        \tau=\tfrac{d\cos\theta}{r}.
\)
Let
\(
        W_0:=\sum_{n=0}^{N_r-1}\rho_n>0,
\)
and define the weighted aperture moments
\(
        \bar n
        :=
        \tfrac1{W_0}\sum_{n=0}^{N_r-1}\rho_n n,
        ~
        \overline{n^2}
        :=
        \tfrac1{W_0}\sum_{n=0}^{N_r-1}\rho_n n^2.
\)
Set
\(
        x_n:=n-\bar n,
        ~
        y_n:=n^2-\overline{n^2}.
\)
For \(\tau\in I_\tau\), define \( D_g(\tau)
        :=
        \sum_{n=0}^{N_r-1}
        \rho_n(x_n+\tau y_n)^2 .\)
Since \(b_n=\rho_n/W_0\), this is equivalently
\(
        D_g(\tau)=W_0q(\tau).
\)
Thus the constants \(D_{\min}/W_0\) and \(D_{\max}/W_0\) used below coincide
with \(q_{\min}\) and \(q_{\max}\) in the main theorem.  Assume
\(
        D_g(\tau)>0,
        ~
        \tau\in I_\tau .
\)
The gauge is chosen so that
\(
        \partial_\theta\psi_{r,\theta}[n]
        =
        i\widetilde u_{r,\theta}(n)\psi_{r,\theta}[n],
\)
where
\(
        \widetilde u_{r,\theta}(n)
        =
        -k_\lambda d\sin\theta\,(x_n+\tau y_n).
\)
Since
\(
        |\psi_{r,\theta}[n]|^2=\tfrac1{W_0},
\)
we obtain
\begin{align*}
       & \|\partial_\theta\psi_{r,\theta}\|_\rho^2
        =
        \sum_{n=0}^{N_r-1}
        \rho_n
        |\widetilde u_{r,\theta}(n)|^2
        |\psi_{r,\theta}[n]|^2
        =
        (k_\lambda d\sin\theta)^2
        \tfrac1{W_0}
        \sum_{n=0}^{N_r-1}
        \rho_n(x_n+\tau y_n)^2
        \nonumber\\&=
        (k_\lambda d\sin\theta)^2
        \tfrac{D_g(\tau)}{W_0}.
\end{align*}
Let
\(
        D_{\min}:=\min_{\tau\in I_\tau}D_g(\tau),
        ~
        D_{\max}:=\max_{\tau\in I_\tau}D_g(\tau).
\)
Define \( \underline\sigma^2
        :=
        (k_\lambda d)^2s_{\min}^2
        \tfrac{D_{\min}}{W_0},\) \(
        ~
        \overline\sigma_\theta
        :=
        k_\lambda d
        \sqrt{\tfrac{D_{\max}}{W_0}} .\) Then, uniformly over the admissible physical domain,
\begin{equation}
\label{eq:near_sigma_uniform_bounds}
        \|\partial_\theta\psi_{r,\theta}\|_\rho^2
        \ge
        \underline\sigma^2,
        ~
        \|\partial_\theta\psi_{r,\theta}\|_\rho
        \le
        \overline\sigma_\theta .
\end{equation}

% ============================================================
\subsection{Coefficient bounds from the Hermite system}
\label{app:near-coefficient-bounds}
% ============================================================

The support-support QPAC estimates give the nonnegative matrix
\[
        \mathbf G_{\rm SS}
        =
        \begin{bmatrix}
        u_K^{\rm SS} & u_H^{\rm SS}\\
        u_{dK}^{\rm SS} & u_{dH}^{\rm SS}
        \end{bmatrix}.
\]
Assume
\[
        \eta_{\rm SS}
        :=
        (L-1)\rho_{\rm sp}(\mathbf G_{\rm SS})<1.
\]
Define
\begin{equation}
\label{eq:near_Gamma_def}
       \boldsymbol \Gamma
        =
        \begin{bmatrix}
        \Gamma_K\\
        \Gamma_H
        \end{bmatrix}
        :=
        \left(\mathbf I_2-(L-1)\mathbf G_{\rm SS}\right)^{-1}
        \begin{bmatrix}
        1\\
        0
        \end{bmatrix}.
\end{equation}
Also define
\begin{equation}
\label{eq:near_Xi_def}
        \boldsymbol\Xi
        :=
        \boldsymbol\Gamma-
        \begin{bmatrix}
        1\\
        0
        \end{bmatrix}
        =
        \begin{bmatrix}
        \Xi_K\\
        \Xi_H
        \end{bmatrix}.
\end{equation}

The Hermite-system estimates proved earlier imply, uniformly over the support
class and over all gauged phase vectors \(\widetilde{\mx v}\),
\begin{equation}
\label{eq:near_alpha_beta_bounds}
        |\alpha_\ell|\le \Gamma_K,
        ~
        |\beta_\ell|\le \Gamma_H,
        ~
        \ell=1,\ldots,L.
\end{equation}
They also imply the diagonal correction bounds
\begin{equation}
\label{eq:near_diagonal_correction_bounds}
        |\alpha_j-\widetilde v_j|\le \Xi_K,
        ~
        |\beta_j|\le \Xi_H,
        ~
        j=1,\ldots,L.
\end{equation}
These are the only coefficient estimates used in the near-curvature argument.

% ============================================================
\subsection{Local derivative bounds}
\label{app:near-derivative-bounds}
% ============================================================

All derivatives below are taken with respect to the evaluation angle.
First, assume that the self second-derivative bounds
\(
        U_{K,2}^{\rm self},
        ~
        U_{H,2}^{\rm self}
\)
satisfy, for every admissible support point \(p_j\),
\begin{equation}
\label{eq:near_self_second_derivative_bounds}
        |\partial_\theta^2 K_j(p_j)|
        \le
        U_{K,2}^{\rm self},
        ~
        |\partial_\theta^2 H_j(p_j)|
        \le
        U_{H,2}^{\rm self}.
\end{equation}

Second, assume that the off-diagonal support-support second-derivative bounds
\(
        U_{K,2}^{\rm SS},
        ~
        U_{H,2}^{\rm SS}
\)
satisfy, for every support \(S\) in the support class and every \(j\ne\ell\),
\begin{equation}
\label{eq:near_SS_second_derivative_bounds}
        |\partial_\theta^2 K_\ell(p_j)|
        \le
        U_{K,2}^{\rm SS},
        ~
        |\partial_\theta^2 H_\ell(p_j)|
        \le
        U_{H,2}^{\rm SS}.
\end{equation}

Third, use the local channel envelopes and aggregate derivative bounds
\(B_{X,a}^{\rm loc}\) and \(D_a(\varpi_{\rm loc})\), \(a=2,3\), from
\eqref{eq:local_channel_envelopes_sec2}-\eqref{eq:local_Da_sec2}.
They are certified on the complete near sets
\(\mathcal N_\theta(S;\varpi_{\rm loc})\).

For
\[
 P(t):=P_{S,\widetilde{\mathbf v}}(p_j(t)),
\]
one has
\begin{equation}
\label{eq:near_P_derivative_D_bound}
 |P^{(a)}(t)|\le D_a(\varpi_{\rm loc}),
 ~ a=2,3,~
 t\in\mathcal I_j(\varpi_{\rm loc}).
\end{equation}
Indeed, differentiation of \eqref{eq:near_certificate_def} gives
\[
 P^{(a)}(t)
 =
 \sum_{\ell=1}^{L}\alpha_\ell
       \partial_\theta^aK_\ell(p_j(t))
 +
 \sum_{\ell=1}^{L}\beta_\ell
       \partial_\theta^aH_\ell(p_j(t)).
\]
Using the coefficient bounds and then the local channel envelopes,
\[
\begin{aligned}
 |P^{(a)}(t)|
 &\le
 \sum_{\ell=1}^{L}
 \left[
 \Gamma_K
 \left|\partial_\theta^aK(p_j(t),p_\ell)\right|
 +
 \Gamma_H
 \left|\partial_\theta^aH(p_j(t),p_\ell)\right|
 \right] \\
 &\le
 \mathcal D_a(\varpi_{\rm loc})
 \le
 D_a(\varpi_{\rm loc}),
 \qquad a=2,3.
\end{aligned}
\]
The last inequality follows because
\(p_j(t)\in\mathcal N_\theta(S;\varpi_{\rm loc})\).

% ============================================================

% ============================================================
\subsection{Curvature at the support}
\label{app:near-curvature-at-support}
% ============================================================

The local argument begins with the diagonal kernel.  Because the gauged atom
is normalized along the angular path, the real second derivative of its
self-correlation is exactly the negative weighted tangent energy.

\begin{lem}
\label{lem:self_curvature_gauged_atom}
Let
\(
        K_j(t)
        :=
        K_j(p_j(t))
        =
        \langle\psi_{p_j(t)},\psi_{p_j}\rangle_\rho .
\)
Then
\(
        \operatorname{Re} K_j''(0)
        =
        -\|\partial_\theta\psi_{p_j}\|_\rho^2 .
\)
\end{lem}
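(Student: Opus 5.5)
The plan is to differentiate the normalization identity twice along the angular path, or equivalently to use the explicit multiplier form of $\partial_\theta^2\psi$. Write $\psi(t):=\psi_{p_j(t)}$. Since the weighted inner product is conjugate-linear in the first argument and the support atom $\psi_{p_j}$ is fixed, we have $K_j''(0)=\langle \partial_\theta^2\psi_{p_j},\psi_{p_j}\rangle_\rho$.

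The cleanest route uses the identity $\partial_\theta^2\psi=(i\,u_\theta-u^2)\psi$ from the higher-derivative reductions, where $u=u_{\alpha,\theta}$ and $u_\theta=u_{\theta,\alpha,\theta}$ are real profiles. Substituting and using $|\psi[n]|^2=W_0^{-1}$ gives
\[
K_j''(0)=\sum_n \rho_n\,\overline{(i u_\theta[n]-u[n]^2)}\,|\psi_{p_j}[n]|^2
=\sum_n b_n\bigl(-i\,u_\theta[n]-u[n]^2\bigr).
\]
Taking real parts removes the imaginary term, because $u_\theta$ is real. Hence $\operatorname{Re}K_j''(0)=-\sum_n b_n u[n]^2$.

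Next, identify this sum with the tangent energy. Since $\partial_\theta\psi=i u\psi$, we get $\|\partial_\theta\psi_{p_j}\|_\rho^2=\sum_n\rho_n u[n]^2|\psi[n]|^2=\sum_n b_n u[n]^2$, which is exactly the quantity above. This is the same computation as in the tangent-energy lower bound of \Cref{app:tangent-energy-lower-bound}.

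An alternative, gauge-free check differentiates $\|\psi(t)\|_\rho^2\equiv1$ twice. This yields $2\operatorname{Re}\langle\psi'',\psi\rangle_\rho+2\|\psi'\|_\rho^2=0$ at $t=0$, which gives the claim directly. I would present this as the main argument, since it only needs smoothness of $\psi$ in $\theta$ (the atom is a finite sum of smooth exponentials). The multiplier computation then serves as confirmation.

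There is no real obstacle. The only points needing care are the conjugation convention, so that the derivative falls on the first slot, and remembering that $\operatorname{Re}\langle\psi',\psi'\rangle$ enters with the correct sign.
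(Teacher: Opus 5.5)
Your main argument is exactly the paper's proof. You differentiate $\|\psi_{p_j(t)}\|_\rho^2\equiv 1$ twice at $t=0$ and note that, because the evaluation atom sits in the conjugate-linear slot, $K_j''(0)=\langle\partial_\theta^2\psi_{p_j},\psi_{p_j}\rangle_\rho$. Your multiplier computation with $\partial_\theta^2\psi=(i\,u_\theta-u^2)\psi$ is also correct and gives a valid explicit cross-check, since it identifies both $-\operatorname{Re}K_j''(0)$ and $\|\partial_\theta\psi_{p_j}\|_\rho^2$ with $\sum_n b_n u[n]^2$.
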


\begin{proof}
The atom is normalized: \(\|\psi_{p_j(t)}\|_\rho^2=1 ~ \text{for all admissible }t.\) The inner product is conjugate-linear in the first argument, so \(\|\psi_{p_j(t)}\|_\rho^2 = \langle\psi_{p_j(t)},\psi_{p_j(t)}\rangle_\rho .\) Differentiating once and evaluating at \(t=0\) gives
\[
 0
 =
 \langle\partial_\theta\psi_{p_j},\psi_{p_j}\rangle_\rho
 +
 \langle\psi_{p_j},\partial_\theta\psi_{p_j}\rangle_\rho
 =
 2\operatorname{Re}
 \langle\partial_\theta\psi_{p_j},\psi_{p_j}\rangle_\rho .
\]
Differentiating twice and evaluating at \(t=0\) gives
\[
\begin{aligned}
 0
 &=
 \langle\partial_\theta^2\psi_{p_j},\psi_{p_j}\rangle_\rho
 +
 2\langle\partial_\theta\psi_{p_j},
 \partial_\theta\psi_{p_j}\rangle_\rho
 +
 \langle\psi_{p_j},\partial_\theta^2\psi_{p_j}\rangle_\rho .
\end{aligned}
\]
The first and third terms are conjugates. Hence
\[
 2\operatorname{Re}
 \langle\partial_\theta^2\psi_{p_j},\psi_{p_j}\rangle_\rho
 +
 2\|\partial_\theta\psi_{p_j}\|_\rho^2
 =
 0.
\]
Since \(K_j''(0) = \langle\partial_\theta^2\psi_{p_j},\psi_{p_j}\rangle_\rho,\) the claim follows.
\end{proof}
Define the curvature-loss quantity
\begin{equation}
\label{eq:near_Ecurv_def}
        E_{\rm curv}
        :=
        2\Xi_K U_{K,2}^{\rm self}
        +
        2\Xi_H U_{H,2}^{\rm self}
        +
        2(L-1)
        \left(
        \Gamma_K U_{K,2}^{\rm SS}
        +
        \Gamma_H U_{H,2}^{\rm SS}
        \right).
\end{equation}
The near-curvature margin is
\(
        m_{\rm near}
        :=
        2\underline\sigma^2-E_{\rm curv}.
\)
The ideal diagonal curvature is reduced by coefficient corrections and by
off-diagonal interactions with the remaining support atoms.  The quantity
\(E_{\rm curv}\) collects these losses, so \(m_{\rm near}\) is the
curvature retained uniformly over the full support class.

\begin{lem}
\label{lem:support_uniform_curvature_bound}
For every support \(S\) in the class, every gauged phase vector
\(\widetilde{\mx v}\), and every \(j=1,\ldots,L\),
\(
        f_j''(0)\le -m_{\rm near}.
\)
Equivalently,
\(
        F_j''(0)\ge m_{\rm near}.
\)
\end{lem}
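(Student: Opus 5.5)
We work with \(P(t):=P_{S,\widetilde{\mathbf v}}(p_j(t))\) and the identity
\(
f_j(t)=|P(t)|^2=P(t)\overline{P(t)}.
\)
Differentiating twice gives
\(
f_j''(0)=2\operatorname{Re}\bigl(\overline{P(0)}P''(0)\bigr)+2|P'(0)|^2 .
\)
The Hermite conditions \eqref{eq:near_Hermite_conditions} give \(P(0)=\widetilde v_j\) and \(P'(0)=0\). Hence
\(
f_j''(0)=2\operatorname{Re}\bigl(\overline{\widetilde v_j}\,P''(0)\bigr),
\)
and the task reduces to an upper bound on this real part.

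Next we expand \(P''(0)\) using \eqref{eq:near_certificate_def}:
\[
P''(0)=\alpha_j\,\partial_\theta^2K_j(p_j)+\beta_j\,\partial_\theta^2H_j(p_j)
+\sum_{\ell\ne j}\bigl[\alpha_\ell\,\partial_\theta^2K_\ell(p_j)+\beta_\ell\,\partial_\theta^2H_\ell(p_j)\bigr].
\]
In the diagonal value term we write \(\alpha_j=\widetilde v_j+(\alpha_j-\widetilde v_j)\). Since \(|\widetilde v_j|=1\), the ideal piece contributes
\(
2\operatorname{Re}\bigl(\overline{\widetilde v_j}\widetilde v_j K_j''(0)\bigr)=2\operatorname{Re}K_j''(0).
\)
By \Cref{lem:self_curvature_gauged_atom}, this equals \(-2\|\partial_\theta\psi_{p_j}\|_\rho^2\). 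The uniform lower bound \eqref{eq:near_sigma_uniform_bounds} then gives \(-2\|\partial_\theta\psi_{p_j}\|_\rho^2\le-2\underline\sigma^2\).

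The remaining terms are bounded in absolute value, using \(|\overline{\widetilde v_j}|=1\) and \(\operatorname{Re}z\le|z|\).

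The diagonal correction term is at most \(2|\alpha_j-\widetilde v_j|\,|\partial_\theta^2K_j(p_j)|\). By \eqref{eq:near_diagonal_correction_bounds} and \eqref{eq:near_self_second_derivative_bounds}, this is at most \(2\Xi_KU_{K,2}^{\rm self}\).

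The diagonal tangent term is at most \(2|\beta_j|\,|\partial_\theta^2H_j(p_j)|\le2\Xi_HU_{H,2}^{\rm self}\).

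Each off-diagonal term is controlled by \eqref{eq:near_alpha_beta_bounds} and \eqref{eq:near_SS_second_derivative_bounds}. Summing over the \(L-1\) indices \(\ell\ne j\) gives at most \(2(L-1)(\Gamma_KU_{K,2}^{\rm SS}+\Gamma_HU_{H,2}^{\rm SS})\).

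Adding everything, \(f_j''(0)\le-2\underline\sigma^2+E_{\rm curv}=-m_{\rm near}\), by \eqref{eq:near_Ecurv_def}. This holds uniformly over \(S\) in the class, over all gauged phase vectors \(\widetilde{\mathbf v}\), and over all \(j\). The claim for \(F_j=1-f_j\) follows immediately.

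The one delicate point is that the true diagonal coefficient is \(\alpha_j\), not \(\widetilde v_j\). Separating \(\alpha_j-\widetilde v_j\) is exactly what allows \Cref{lem:self_curvature_gauged_atom} to be applied to a real part. The term \(\beta_j\,\partial_\theta^2H_j\) must be charged to \(\Xi_H\), not \(\Gamma_H\). A secondary check is that the derivatives act on the first argument, matching \Cref{lem:self_curvature_gauged_atom}. Beyond these, the argument is just differentiation and the triangle inequality.
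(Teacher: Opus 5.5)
Your proposal is correct and follows essentially the same route as the paper's proof. The common steps are: reduce \(f_j''(0)\) to \(2\operatorname{Re}(\overline{\widetilde v_j}P''(0))\) via the Hermite conditions, split \(\alpha_j=\widetilde v_j+(\alpha_j-\widetilde v_j)\), handle the ideal diagonal piece with \Cref{lem:self_curvature_gauged_atom} and the tangent-energy lower bound, and bound the remaining diagonal and off-diagonal terms by \(\Xi_K,\Xi_H,\Gamma_K,\Gamma_H\) times the second-derivative envelopes. One small aside: since \(\boldsymbol\Xi=\boldsymbol\Gamma-(1,0)^T\), one has \(\Xi_H=\Gamma_H\) identically, so your remark that the diagonal tangent term must be charged to \(\Xi_H\) rather than \(\Gamma_H\) is harmless but makes no actual difference.
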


\begin{proof}
Fix \(j\), and write \(P(t):=P_{S,\widetilde {\mathbf v}}(p_j(t)).\) Then \(f_j(t)=|P(t)|^2.\) By Hermite interpolation, \(P(0)=\widetilde v_j, ~ P'(0)=0.\) Therefore \(f_j''(0) = 2\operatorname{Re} \left( \overline{\widetilde v_j}P''(0) \right).\) Next,
\[
 P''(0)
 =
 \sum_{\ell=1}^{L}
 \alpha_\ell\partial_\theta^2K_\ell(p_j)
 +
 \sum_{\ell=1}^{L}
 \beta_\ell\partial_\theta^2H_\ell(p_j).
\]
Separating the diagonal term \(\ell=j\), and writing \(\alpha_j=\widetilde v_j+(\alpha_j-\widetilde v_j),\) we obtain
\[
\begin{aligned}
 &f_j''(0)
 =
 2\operatorname{Re}\bigl(\partial_\theta^2K_j(p_j)\bigr)+
 2\operatorname{Re}
 \left(
 \overline{\widetilde v_j}(\alpha_j-\widetilde v_j)
 \partial_\theta^2K_j(p_j)
 \right)+
 2\operatorname{Re}
 \left(
 \overline{\widetilde v_j}\beta_j
 \partial_\theta^2H_j(p_j)
 \right)\\
 &+
 2\operatorname{Re}
 \left(
 \overline{\widetilde v_j}
 \sum_{\ell\ne j}
 \alpha_\ell\partial_\theta^2K_\ell(p_j)
 \right)+
 2\operatorname{Re}
 \left(
 \overline{\widetilde v_j}
 \sum_{\ell\ne j}
 \beta_\ell\partial_\theta^2H_\ell(p_j)
 \right).
\end{aligned}
\]

By \Cref{lem:self_curvature_gauged_atom}, \(2\operatorname{Re}\bigl(\partial_\theta^2K_j(p_j)\bigr) = -2\|\partial_\theta\psi_{p_j}\|_\rho^2 .\) Using \(|\widetilde v_j|=1,\) the diagonal correction bounds \(|\alpha_j-\widetilde v_j|\le \Xi_K, ~ |\beta_j|\le \Xi_H,\) the self derivative bounds
\[
 |\partial_\theta^2K_j(p_j)|\le U_{K,2}^{\rm self},
 ~
 |\partial_\theta^2H_j(p_j)|\le U_{H,2}^{\rm self},
\]
and the off-diagonal bounds \(|\alpha_\ell|\le \Gamma_K, ~ |\beta_\ell|\le \Gamma_H,\) \[
 |\partial_\theta^2K_\ell(p_j)|\le U_{K,2}^{\rm SS},
 ~
 |\partial_\theta^2H_\ell(p_j)|\le U_{H,2}^{\rm SS},
 ~
 \ell\ne j,
\]
we get
\[
\begin{aligned}
 &f_j''(0)
 \le
 -2\|\partial_\theta\psi_{p_j}\|_\rho^2
 +
 2\Xi_K U_{K,2}^{\rm self}
 +
 2\Xi_H U_{H,2}^{\rm self} +
 2(L-1)
 \left(
 \Gamma_K U_{K,2}^{\rm SS}
 +
 \Gamma_H U_{H,2}^{\rm SS}
 \right).
\end{aligned}
\]
By the tangent-energy lower bound, \(\|\partial_\theta\psi_{p_j}\|_\rho^2 \ge \underline\sigma^2.\) Therefore \(f_j''(0) \le -2\underline\sigma^2+E_{\rm curv} = -m_{\rm near}.\) Since \(F_j=1-f_j\), this is equivalent to \(F_j''(0)\ge m_{\rm near}.\) \end{proof}

% ============================================================
\subsection{Taylor proof of local strict decay}
\label{app:near-taylor-local-decay}

The local estimate uses the exact Hermite conditions before taking
absolute values.  Recall from \eqref{eq:eta_near_sec2} that, when the
curvature margin is positive and the derivative bounds are finite,
\begin{equation}
\label{eq:near_eta_near_def}
 \eta_{\rm near}(\varpi_{\rm loc})
 =
 \frac{2\varpi_{\rm loc}D_3(\varpi_{\rm loc})}{3m_{\rm near}}
 +
 \frac{\varpi_{\rm loc}^{2}D_2(\varpi_{\rm loc})^{2}}
      {2m_{\rm near}}.
\end{equation}
Otherwise \(\eta_{\rm near}(\varpi_{\rm loc})=+\infty\).

\begin{thm}
\label{thm:near_support_strict_decay_app}
Suppose
\[
 \eta_{\rm SS}<1,
 \qquad
 \eta_{\rm near}(\varpi_{\rm loc})<1.
\]
For every support \(S\) in the prescribed class, every gauged phase
vector \(\widetilde{\mathbf v}\), every \(j=1,\ldots,L\), and every
\(t\in\mathcal I_j(\varpi_{\rm loc})\),
\begin{equation}
\label{eq:near_quadratic_gap}
 |P_{S,\widetilde{\mathbf v}}(p_j(t))|^2
 \le
 1-\frac{m_{\rm near}}{2}
 \bigl(1-\eta_{\rm near}(\varpi_{\rm loc})\bigr)t^2.
\end{equation}
In particular,
\[
 |P_{S,\widetilde{\mathbf v}}(p_j(t))|<1,
 \qquad
 t\in\mathcal I_{j}^{\circ}(\varpi_{\rm loc}).
\]
\end{thm}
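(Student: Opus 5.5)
The plan is a second-order Taylor expansion of the certificate $P(t):=P_{S,\widetilde{\mathbf v}}(p_j(t))$ itself, not of its squared modulus, followed by expanding $|P(t)|^2$ around the interpolated value. This keeps the bookkeeping minimal and, if the computation below is right, reproduces the two terms of $\eta_{\rm near}$ in \eqref{eq:near_eta_near_def} exactly. I first fix a support $S$ in the class, a gauged phase vector $\widetilde{\mathbf v}$, an index $j$ and a $t\in\mathcal I_j(\varpi_{\rm loc})$. Since $\Theta$ is an interval, the feasible set $\mathcal I_j(\varpi_{\rm loc})$ is an interval containing $0$. Hence the whole segment $\{p_j(s):s\text{ between }0\text{ and }t\}$ lies in $\mathcal N_\theta(S;\varpi_{\rm loc})$, and the bounds \eqref{eq:near_P_derivative_D_bound}, $|P''(s)|\le D_2$ and $|P'''(s)|\le D_3$, hold along it. By the Hermite conditions \eqref{eq:near_Hermite_conditions}, $P(0)=\widetilde v_j$ and $P'(0)=0$.

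\emph{Step 1 (remainder decomposition).} I write $P(t)=\widetilde v_j+E(t)$ with two forms of Taylor's formula with integral remainder:
\[
E(t)=\int_0^t (t-s)P''(s)\,ds
=\tfrac{t^2}{2}P''(0)+\int_0^t\tfrac{(t-s)^2}{2}P'''(s)\,ds .
\]
The first form gives $|E(t)|\le D_2t^2/2$. The second gives $E(t)=\tfrac{t^2}{2}P''(0)+R(t)$ with $|R(t)|\le D_3|t|^3/6$.

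\emph{Step 2 (expansion of the modulus).} Since $|\widetilde v_j|=1$,
\[
|P(t)|^2=1+2\operatorname{Re}\bigl(\overline{\widetilde v_j}E(t)\bigr)+|E(t)|^2 .
\]
Because $P'(0)=0$, we have $f_j''(0)=2\operatorname{Re}(\overline{\widetilde v_j}P''(0))$. This identity is already used in the proof of \Cref{lem:support_uniform_curvature_bound}, which gives $f_j''(0)\le -m_{\rm near}$. Therefore
\[
2\operatorname{Re}\bigl(\overline{\widetilde v_j}E(t)\bigr)\le -\tfrac{m_{\rm near}}{2}t^2+\tfrac{D_3}{3}|t|^3 .
\]
The quadratic term satisfies $|E(t)|^2\le D_2^2t^4/4$.

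\emph{Step 3 (absorbing the remainders).} Using $|t|\le\varpi_{\rm loc}$, I bound $|t|^3\le\varpi_{\rm loc}t^2$ and $t^4\le\varpi_{\rm loc}^2t^2$. Since $m_{\rm near}>0$ and $D_2+D_3<\infty$ (both implied by $\eta_{\rm near}<1$), this yields
\[
|P(t)|^2\le 1-\tfrac{m_{\rm near}}{2}t^2\Bigl(1-\tfrac{2\varpi_{\rm loc}D_3}{3m_{\rm near}}-\tfrac{\varpi_{\rm loc}^2D_2^2}{2m_{\rm near}}\Bigr),
\]
which is \eqref{eq:near_quadratic_gap}. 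For $t\ne0$ the right-hand side is strictly less than $1$, because $m_{\rm near}>0$ and $\eta_{\rm near}<1$; this gives the strict decay on $\mathcal I_j^\circ(\varpi_{\rm loc})$. The coefficient bounds behind $D_a$ come from \Cref{lem:app_hermite_neumann} and require $\eta_{\rm SS}<1$.

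The main point needing care is Step 1's use of the derivative bounds along the entire segment, not just at its endpoints. This requires that every intermediate point $p_j(s)$ belongs to the complete near set on which $D_2$ and $D_3$ are certified, including points near other same-row support atoms. This holds by convexity of $\mathcal I_j$ and the definition of $\mathcal N_\theta$. Beyond that, the argument is routine.
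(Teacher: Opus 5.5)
Your proposal is correct and is essentially the paper's proof. It uses the same two integral-remainder forms of Taylor's formula for $P$, giving $|P(t)-\widetilde v_j|\le D_2t^2/2$ and $|R_3(t)|\le D_3|t|^3/6$, and the same expansion $|P(t)|^2=1+2\operatorname{Re}\bigl(\overline{\widetilde v_j}(P(t)-\widetilde v_j)\bigr)+|P(t)-\widetilde v_j|^2$ combined with \Cref{lem:support_uniform_curvature_bound}. The cubic and quartic terms are absorbed via $|t|\le\varpi_{\rm loc}$ exactly as the paper does, and the paper also justifies the derivative bounds along the whole angular segment by the fact that $\Theta$ is an interval.
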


\begin{proof}
Fix \(S,\widetilde{\mathbf v},j\), and write
\[
 P(t)=P_{S,\widetilde{\mathbf v}}(p_j(t)),
 \qquad
 v=P(0)=\widetilde v_j,
 \qquad
 D_a=D_a(\varpi_{\rm loc}).
\]
Hermite interpolation gives
\[
 |v|=1,\qquad P'(0)=0.
\]
By \Cref{lem:support_uniform_curvature_bound},
\[
 2\operatorname{Re}\bigl(\overline vP''(0)\bigr)
 =
 \left.\frac{d^2}{dt^2}|P(t)|^2\right|_{t=0}
 \le -m_{\rm near}.
\]

Let \(t\in\mathcal I_j(\varpi_{\rm loc})\).
Because \(\Theta\) is an interval, one has
\(\theta_j+s\in\Theta\) for every \(s\) between \(0\) and \(t\).
Taylor's formula with integral remainder gives
\[
 P(t)-v=\frac12P''(0)t^2+R_3(t),
 \qquad
 R_3(t)=\frac12\int_0^t(t-s)^2P'''(s)\,ds.
\]
By \eqref{eq:near_P_derivative_D_bound},
\[
 |R_3(t)|\le\frac{D_3}{6}|t|^3.
\]
Independently, integrating \(P''\) and using \(P'(0)=0\),
\[
 P(t)-v=\int_0^t(t-s)P''(s)\,ds,
\]
so that
\[
 |P(t)-v|\le\frac{D_2}{2}t^2.
\]

Consequently,
\[
\begin{aligned}
 |P(t)|^2
 &=1+2\operatorname{Re}
       \bigl(\overline v(P(t)-v)\bigr)
       +|P(t)-v|^2\\
 &\le
 1-\frac{m_{\rm near}}2t^2
   +\frac{D_3}{3}|t|^3
   +\frac{D_2^2}{4}t^4\\
 &\le
 1-\frac{m_{\rm near}}2
 \left(
 1-\frac{2\varpi_{\rm loc}D_3}{3m_{\rm near}}
  -\frac{\varpi_{\rm loc}^2D_2^2}{2m_{\rm near}}
 \right)t^2.
\end{aligned}
\]
The last inequality uses \(|t|\le\varpi_{\rm loc}\).
By \eqref{eq:near_eta_near_def}, this is
\eqref{eq:near_quadratic_gap}.

Since \(m_{\rm near}>0\) and
\(\eta_{\rm near}(\varpi_{\rm loc})<1\), the right-hand side is
strictly smaller than one whenever \(t\ne0\).
The strict modulus bound follows on the punctured feasible interval.
\end{proof}

The strict-decay theorem also constrains the support geometry. If a second
support point at the same range lay within the localization radius, Hermite
interpolation would force the certificate modulus to equal one there, while
the preceding theorem would force it to be strictly smaller than one.  The
two conditions are incompatible.

\begin{cor}
\label{cor:qpac_near_implies_noncollision}
Under the hypotheses of \Cref{thm:near_support_strict_decay_app}, no two distinct same-range support points
\(
        p_j=(i,\theta_j),
        ~
        p_\ell=(i,\theta_\ell)
\)
satisfy
\(
        |\theta_j-\theta_\ell|
        \le
        \varpi_{\rm loc}.
\)
\end{cor}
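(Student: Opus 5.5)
The plan is to argue by contradiction, using the strict decay bound of \Cref{thm:near_support_strict_decay_app} against the value interpolation condition \eqref{eq:near_Hermite_conditions}. Suppose that two distinct support points $p_j=(i,\theta_j)$ and $p_\ell=(i,\theta_\ell)$ lie on the same range row and satisfy $|\theta_j-\theta_\ell|\le\varpi_{\rm loc}$. Set $t:=\theta_\ell-\theta_j$. Since the points are distinct and share the row index $i$, we have $t\neq0$. Since $\theta_\ell\in\Theta$, we also have $\theta_j+t\in\Theta$. Hence $t\in\mathcal I_j^\circ(\varpi_{\rm loc})$ and $p_j(t)=p_\ell$.

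Next we evaluate the gauged certificate at this point. Fix any gauged phase vector $\widetilde{\mathbf v}$ with unimodular entries. Because $\eta_{\rm SS}<1$, \Cref{lem:app_hermite_neumann} makes the Hermite system solvable, so $P_{S,\widetilde{\mathbf v}}$ exists. The value condition at $p_\ell$ gives
\[
|P_{S,\widetilde{\mathbf v}}(p_j(t))|=|P_{S,\widetilde{\mathbf v}}(p_\ell)|=|\widetilde v_\ell|=1 .
\]

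On the other hand, \Cref{thm:near_support_strict_decay_app} applies to this support, this $\widetilde{\mathbf v}$, this index $j$, and this $t$. Its bound \eqref{eq:near_quadratic_gap} gives
\[
|P_{S,\widetilde{\mathbf v}}(p_j(t))|^2\le 1-\tfrac{m_{\rm near}}{2}\bigl(1-\eta_{\rm near}(\varpi_{\rm loc})\bigr)t^2<1 .
\]
The last inequality is strict because $m_{\rm near}>0$ (this is implied by $\eta_{\rm near}<1$ being finite), because $\eta_{\rm near}(\varpi_{\rm loc})<1$, and because $t\neq0$. This contradicts the previous display, so no such pair of support points exists.

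The only point that needs care is that the Taylor argument must stay valid even when a second support point lies inside the segment. The derivative bounds $D_2$ and $D_3$ are suprema over the complete near set $\mathcal N_\theta(S;\varpi_{\rm loc})$, which includes other support points. The proof of \Cref{thm:near_support_strict_decay_app} never assumes noncollision, so the argument is not circular. The segment from $\theta_j$ to $\theta_\ell$ lies in $\Theta$ because $\Theta$ is an interval.
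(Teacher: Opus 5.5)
Your proposal is correct and is essentially the paper's proof. Both arguments set $t=\theta_\ell-\theta_j\in\mathcal I_j^\circ(\varpi_{\rm loc})$, use \Cref{thm:near_support_strict_decay_app} to get $|P_{S,\widetilde{\mathbf v}}(p_\ell)|<1$, and contradict the value interpolation $P_{S,\widetilde{\mathbf v}}(p_\ell)=\widetilde v_\ell$ with $|\widetilde v_\ell|=1$. Your extra remarks are accurate and only make explicit what the paper leaves implicit: that $m_{\rm near}>0$ follows from the finiteness of $\eta_{\rm near}$, and that the Taylor bounds do not presuppose noncollision.
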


\begin{proof}
Suppose, to the contrary, that
\(
        0<
        |\theta_\ell-\theta_j|
        \le
        \varpi_{\rm loc}.
\)
Set
\(
        t:=\theta_\ell-\theta_j.
\)
Since \(\theta_j,\theta_\ell\in\Theta\), one has
\(
        t\in
        \mathcal I_{j}^{\circ}(\varpi_{\rm loc}).
\)
Hence \Cref{thm:near_support_strict_decay_app} gives
\(
        |P_{S,\widetilde{\mathbf v}}(p_\ell)|
        =
        |P_{S,\widetilde{\mathbf v}}(p_j(t))|
        <1.
\)
On the other hand, Hermite interpolation gives
\(
        P_{S,\widetilde{\mathbf v}}(p_\ell)
        =
        \widetilde v_\ell,
        ~
        |\widetilde v_\ell|=1,
\)
which is a contradiction.
\end{proof}

\subsection{Support-uniform near set statement}
\label{app:near-support-uniform-statement}
% ============================================================

Recall that
\(
        \mathcal N^\circ(S;\varpi_{\rm loc})
        =
        \mathcal N_\theta(S;\varpi_{\rm loc})\setminus S.
\)
We show directly that
\Cref{thm:near_support_strict_decay_app} gives strict boundedness on this set.
Fix
\(
        p=(i,\theta)
        \in
        \mathcal N^\circ(S;\varpi_{\rm loc}).
\)
By the definition of
\(\mathcal N_\theta(S;\varpi_{\rm loc})\), there exists a support point
\(
        p_j=(i_j,\theta_j)\in S
\)
such that
\(
        i=i_j,
        ~
        |\theta-\theta_j|\le\varpi_{\rm loc}.
\)
Set
\(
        t:=\theta-\theta_j.
\)
Since \(p\notin S\), one has \(t\ne0\).  Moreover,
\(\theta_j,\theta\in\Theta\subset(0,\pi)\), and therefore
\(
        [\theta_j,\theta]
        \subset
        [\theta_{\min},\theta_{\max}]\cap(0,\pi).
\)
Consequently,
\(
        t\in
        \mathcal I_{j}^{\circ}(\varpi_{\rm loc}).
\)
By \Cref{thm:near_support_strict_decay_app},
\(
        |P_{S,\widetilde{\mathbf v}}(p)|
        =
        |P_{S,\widetilde{\mathbf v}}(p_j(t))|
        <1.
\)
Since
\(p\in\mathcal N^\circ(S;\varpi_{\rm loc})\)
was arbitrary, we obtain
\begin{equation}
\label{eq:near_final_strict_decay}
        |P_{S,\widetilde{\mathbf v}}(p)|<1,
        ~
        p\in
        \mathcal N^\circ(S;\varpi_{\rm loc}).
\end{equation}

The derivative estimates used in the Taylor proof are taken on
\(\mathcal I_j(\varpi_{\rm loc})\), equivalently on
\(\mathcal N_\theta(S;\varpi_{\rm loc})\).
Strict dual boundedness is required on the feasible off-support set
\(\mathcal Q\setminus S\).
% ============================================================
\section{Proof of the support-uniform Fresnel theorem}
\label{app:exact-proof}
% ============================================================

This appendix proves the main exact-recovery theorem from the support-uniform
QPAC conditions.  The proof assembles four ingredients proved elsewhere:
Hermite-system invertibility and coefficient bounds, support-uniform QPAC
kernel estimates, near-support strict decay, and TV duality.

The theorem is support-uniform: the same deterministic QPAC inequalities hold
for every support in the prescribed support class and for every choice of
nonzero amplitudes.

% ============================================================
\subsection{Statement proved in this appendix}
\label{app:exact-proof-statement}
% ============================================================

Let
\(
        \Qset
        =
        \{1,\ldots,N_d\}\times\Theta
\)
be the semi-discrete Fresnel parameter domain, and let
\(
        \varnothing\ne\mathfrak S_L
        \subseteq
        \mathfrak S_L^{\rm SS}
        (\mathbf u_{\rm SS};\mathscr E)
\)
be the prescribed class of \(L\)-point supports.  Thus every
\(
        S
        =
        \{p_\ell=(i_\ell,\theta_\ell)\}_{\ell=1}^{L}
        \in\mathfrak S_L
\)
satisfies, for every ordered pair of distinct support points,
\(|K(p_j,p_\ell)|\le u_K^{\rm SS}, ~ |H(p_j,p_\ell)|\le u_H^{\rm SS},\)
and
\(|dK(p_j,p_\ell)|\le u_{dK}^{\rm SS}, ~ |dH(p_j,p_\ell)|\le u_{dH}^{\rm SS}.\)
Assume that, for some \(\varpi_{\rm loc}>0\),
\[
        \mathfrak C_{\rm QPAC}^{\mathscr E}
        (\mathfrak S_L,\varpi_{\rm loc})
        =
        \max\left\{
        \eta_{\rm SS},
        \eta_{\rm near}(\varpi_{\rm loc}),
        \eta_{\rm far}(\varpi_{\rm loc})
        \right\}
        <1 .
\]
In particular,
\(
        \eta_{\rm SS}<1,
        ~
        \eta_{\rm near}(\varpi_{\rm loc})<1,
        ~
        \eta_{\rm far}(\varpi_{\rm loc})<1.
\)
The near-support conclusion
\(
        |P_{S,\widetilde{\mathbf v}}(p)|<1,
        ~
        p\in
        \mathcal N_\theta(S;\varpi_{\rm loc})\setminus S,
\)
follows from
\Cref{app:near-support-uniform-statement}.
We prove that every measure
\(
        \nu_\star
        =
        \sum_{\ell=1}^L
        c_\ell\,\delta_{p_\ell},
        ~
        c_\ell\neq0,
        ~
        S=\{p_\ell\}_{\ell=1}^L\in\mathfrak S_L,
\)
is the unique solution of
\begin{equation}\label{eq:exact_TV_program}
        \min_{\nu}\|\nu\|_{\TV,\Qset}
        ~
        \text{\rm subject to}
        ~
        \FFR\nu=\FFR\nu_\star .
\end{equation}

% ============================================================
\subsection{Raw atoms, adjoint polynomials, and TV duality}
\label{app:exact-atoms-duality}

For \(p=(i,\theta)\in\Qset\), recall the raw Fresnel measurement atom
\(
        \varphi_p[n]
        :=
        a_{\rm Fr}(r_i,\theta)[n],
        ~ n=0,\ldots,N_r-1 .
\)
Thus
\(\FFR\nu = \int_{\Qset}\varphi_p\,d\nu(p).\)
The adjoint Fresnel family associated with
\(\zeta\in\mathbb C^{N_r}\) is
\(
        Q^\zeta(p)
        :=
        \langle \varphi_p,\zeta\rangle_{\mathbb C^{N_r}}
        =
        \sum_{n=0}^{N_r-1}
        \zeta_n\overline{\varphi_p[n]} .
\)
With the inner-product convention of the paper,
\(
        \langle \zeta,\FFR\nu\rangle_{\mathbb C^{N_r}}
        =
        \int_{\Qset}\overline{Q^\zeta(p)}\,d\nu(p).
\)
Therefore the TV dual certificate criterion proved in
\Cref{prop:app_tv_dual_certificate} applies directly to the raw Fresnel atoms
\(\varphi_p\).  In the rest of this appendix we only construct such an adjoint
family by first building a gauged Hermite certificate and then removing the
gauge.
% ============================================================

\subsection{Hermite interpolation and coefficient bounds}
\label{app:exact-Hermite-system}

Fix a support
\(S=\{p_\ell\}_{\ell=1}^L\in\mathfrak S_L\)
and a raw phase vector \(\mx v=(v_1,\ldots,v_L)\), \(|v_\ell|=1\).  Define the
gauged phases
\(\widetilde v_\ell:= e^{i\chi_{i_\ell}(\theta_\ell)}v_\ell.\)
The gauged Hermite certificate is
\[
        P_{S,\widetilde {\mathbf v}}(p)
        =
        \sum_{\ell=1}^{L}\alpha_\ell K_\ell(p)
        +
        \sum_{\ell=1}^{L}\beta_\ell H_\ell(p),
\]
with interpolation conditions
\(P_{S,\widetilde {\mathbf v}}(p_j)=\widetilde v_j, ~ \partial_\theta P_{S,\widetilde {\mathbf v}}(p_j)=0.\)
The support-support QPAC bounds give the block domination matrix
\(\mathbf G_{\rm SS}\).  Since
\(\eta_{\rm SS} = (L-1)\rho_{\rm sp}(\mathbf G_{\rm SS})<1,\)
\Cref{lem:app_hermite_neumann} gives existence, uniqueness, and the coefficient
bounds
\[
        |\alpha_\ell|\le \Gamma_K,
        ~
        |\beta_\ell|\le \Gamma_H,
        ~ \ell=1,\ldots,L.
\]
By \Cref{lem:app_diagonal_correction_bounds}, the diagonal correction bounds are
\[
        |\alpha_j-\widetilde v_j|\le \Xi_K,
        ~
        |\beta_j|\le \Xi_H,
        ~ j=1,\ldots,L.
\]
Finally, \Cref{lem:app_hermite_implies_atom_independence} gives linear
independence of the active raw Fresnel atoms.
% ============================================================
\subsection{Strict dual feasibility on the far set}
\label{app:exact-far-set}

Recall that
\[
 \mathcal F(S;\varpi_{\rm loc})
 =
 \Qset\setminus\mathcal N_\theta(S;\varpi_{\rm loc}).
\]
Fix \(p\in\mathcal F(S;\varpi_{\rm loc})\).
The certificate representation, the coefficient bounds, and the
certified far-pair envelopes give
\[
\begin{aligned}
|P_{S,\widetilde{\mathbf v}}(p)|
&\le
\sum_{\ell=1}^{L}
\left(
|\alpha_\ell|\,|K(p,p_\ell)|
+
|\beta_\ell|\,|H(p,p_\ell)|
\right)\\
&\le
\sum_{\ell=1}^{L}
\left(
\Gamma_K|K(p,p_\ell)|
+
\Gamma_H|H(p,p_\ell)|
\right)\le \eta_{\rm far}(\varpi_{\rm loc}).
\end{aligned}
\]
The support \(S\) and evaluation point \(p\) remain fixed throughout
each sum.

Since \(\eta_{\rm far}(\varpi_{\rm loc})<1\), we obtain
\begin{equation}
\label{eq:exact_far_strict}
 |P_{S,\widetilde{\mathbf v}}(p)|<1,
 \qquad p\in\mathcal F(S;\varpi_{\rm loc}).
\end{equation}
If the far set is empty, this assertion is vacuous and the corresponding
budget is zero.

% ============================================================

% ============================================================
\subsection{Global dual certificate}
\label{app:exact-global-certificate}
% ============================================================

By definition,
\[
        \mathcal N^\circ(S;\varpi_{\rm loc})
        =
        \mathcal N_\theta(S;\varpi_{\rm loc})\setminus S
        \subseteq
        \mathcal Q\setminus S.
\]
By \eqref{eq:near_final_strict_decay},
\(
        |P_{S,\widetilde{\mathbf v}}(p)|<1,
        ~
        p\in
        \mathcal N^\circ(S;\varpi_{\rm loc}).
\)

The domain decomposes as
\(
        \Qset
        =
        S
        \cup
        \mathcal N^\circ(S;\varpi_{\rm loc})
        \cup
        \mathcal F(S;\varpi_{\rm loc}).
\)
On the support \(S\), the gauged Hermite interpolation conditions give
\[
        P_{S,\widetilde {\mathbf v}}(p_\ell)
        =
        \widetilde v_\ell,
        ~
        |\widetilde v_\ell|=1,
        ~
        \ell=1,\ldots,L .
\]
On the punctured near set, \eqref{eq:near_final_strict_decay} gives
\[
        |P_{S,\widetilde {\mathbf v}}(p)|<1,
        ~
        p\in\mathcal N^\circ(S;\varpi_{\rm loc}).
\]
On the far set, \eqref{eq:exact_far_strict} gives
\[
        |P_{S,\widetilde {\mathbf v}}(p)|<1,
        ~
        p\in\mathcal F(S;\varpi_{\rm loc}).
\]
Therefore
\(
        |P_{S,\widetilde {\mathbf v}}(p)|\le1,
        ~
        p\in\Qset,
\)
and
\(
        |P_{S,\widetilde {\mathbf v}}(p)|<1,
        ~
        p\in\Qset\setminus S .
\)
We now remove the gauge.  By \Cref{app:gauged-to-ungauged}, the ungauged
polynomial
\(
        Q_{S, {\mathbf v}}(p)
        :=
        e^{-i\chi_p}P_{S,\widetilde {\mathbf v}}(p)
\)
belongs to the range of the raw Fresnel adjoint operator and has the raw
adjoint representation
\(
        Q_{S, {\mathbf v}}(p)
        =
        \sum_{n=0}^{N_r-1}
        \zeta_{S,v}[n]\overline{\varphi_p[n]},
\)
where
\(
      \zeta_{S,v}[n]
:=
\tfrac{\rho_n}{\sqrt{W_0}}z_{S,\widetilde v}[n].
\)
Since the gauge factor is unimodular,
\[
        |Q_{S, {\mathbf v}}(p)|
        =
        |P_{S,\widetilde {\mathbf v}}(p)|,
        ~
        p\in\Qset .
\]
Hence
\[
        |Q_{S, {\mathbf v}}(p)|\le1,
        ~
        p\in\Qset,
\]
and
\[
        |Q_{S, {\mathbf v}}(p)|<1,
        ~
        p\in\Qset\setminus S .
\]

It remains only to check sign interpolation.  For each support point
\(p_\ell=(i_\ell,\theta_\ell)\), the definition of the gauged signs gives
\(
        \widetilde v_\ell
        =
        e^{i\chi_{p_\ell}}v_\ell.
\)
Therefore
\[
        Q_{S, {\mathbf v}}(p_\ell)
        =
        e^{-i\chi_{p_\ell}}P_{S,\widetilde {\mathbf v}}(p_\ell)
        =
        e^{-i\chi_{p_\ell}}\widetilde v_\ell
        =
        v_\ell .
\]
Thus \(Q_{S, {\mathbf v}}\) satisfies
\[
        Q_{S, {\mathbf v}}(p_\ell)=v_\ell,
        ~
        \ell=1,\ldots,L,
\]
\[
        |Q_{S, {\mathbf v}}(p)|\le1,
        ~
        p\in\Qset,
\]
and
\[
        |Q_{S, {\mathbf v}}(p)|<1,
        ~
        p\in\Qset\setminus S .
\]
This is precisely the raw TV dual certificate required by
\Cref{prop:app_tv_dual_certificate}.
% ============================================================
\subsection{Completion of the exact-recovery proof}
\label{app:exact-completion}
% ============================================================

Let
\(
        \nu_\star
        =
        \sum_{\ell=1}^{L}
        c_\ell\delta_{p_\ell},
        ~ c_\ell\neq0,
\)
with
\(S=\{p_\ell\}_{\ell=1}^{L}\in\mathfrak S_L.\)
Set
\(v_\ell:=\tfrac{c_\ell}{|c_\ell|}.\)
The construction above gives a raw Fresnel adjoint polynomial \(Q_{S, {\mathbf v}}\)
satisfying
\(Q_{S, {\mathbf v}}(p_\ell)=v_\ell, ~ \ell=1,\ldots,L,\)
\(|Q_{S, {\mathbf v}}(p)|\le1, ~ p\in\Qset,\)
and
\(|Q_{S, {\mathbf v}}(p)|<1, ~ p\in\Qset\setminus S.\)
The Hermite-system argument also proves that the raw active Fresnel atoms
\(\{\varphi_{p_\ell}\}_{\ell=1}^{L}\)
are linearly independent.
Therefore all hypotheses of \Cref{prop:app_tv_dual_certificate} hold.  Hence
\(\nu_\star\) is the unique minimizer of
\[
        \min_{\nu}\|\nu\|_{\TV,\Qset}
        ~
        \text{\rm subject to}
        ~
        \FFR\nu=\FFR\nu_\star .
\]
This proves the support-uniform Fresnel exact-recovery theorem.
% ============================================================
\section{Finite-harmonic Bessel-Vandermonde expansions}
\label{app:bv-expansion}
% ============================================================

This appendix proves the finite-harmonic expansion used by the lifted
computational model.  The object expanded here is the semi-discrete Fresnel
atom.  It admits an infinite Bessel-Vandermonde expansion in the angular
variable, and truncating the two Jacobi-Anger series gives a finite-harmonic
approximation with explicit uniform error
\(\Delta_{P_{\rm JA},Q_{\rm JA}}^{\rm Fr}\to0 ~ \text{as }P_{\rm JA},Q_{\rm JA}\to\infty.\)
For physical spherical-wave data, this Fresnel lift incurs an additional
deterministic paraxial mismatch.  Thus
\(
        \Delta_{P_{\rm JA},Q_{\rm JA}}^{\rm NF}
        \le
        \Delta_{P_{\rm JA},Q_{\rm JA}}^{\rm Fr}
        +
        \Delta_{\rm par}.
\)
The first term is controlled by the harmonic truncation order.  The second is
the deterministic Fresnel approximation error and does not vanish by increasing
\(P_{\rm JA},Q_{\rm JA}\) alone.  This distinction is used only to interpret computations with
physical data; the exact recovery theorem is proved for the semi-discrete
Fresnel TV model.
% ============================================================
\subsection{Aperture and atom conventions}
\label{app:bv-conventions}
% ============================================================

Let the aperture index be
\(n=0,\ldots,N_r-1,\)
and set
\(n_{\max}:=N_r-1.\)
We keep the notation \(k_\lambda=2\pi/\lambda\) for the wavenumber.  The admissible ranges satisfy
\(r_i\in[r_{\min},r_{\max}], ~ i=1,\ldots,N_d,\)
and angles satisfy
\(\theta\in[\theta_{\min},\theta_{\max}] \subset(0,\pi).\)
We assume the aperture is in the non-occluding near-field regime \(\varepsilon_\star
        :=
        \tfrac{n_{\max}d}{r_{\min}}
        <1 .\)
This ensures that all square-root expressions below are single-valued and
smooth on the physical domain.
The semi-discrete Fresnel atom at range \(r_i\) is
\begin{equation}\label{eq:fresnel_atom_bv_app}
        a_{\rm Fr}(r_i,\theta)[n]
        :=
        \exp\left(
        ik_{\lambda} dn\cos\theta
        -
        i\tfrac{k_{\lambda} d^2n^2}{2r_i}\sin^2\theta
        \right).
\end{equation}
The physical spherical-wave atom is
\begin{equation}\label{eq:physical_atom_bv_app}
        a_{\rm NF}(r_i,\theta)[n]
        :=
        \exp\left(
        -ik_{\lambda}
        \left(
        R_n(r_i,\theta)-r_i
        \right)
        \right),
\end{equation}
where
\(R_n(r,\theta):= \sqrt{r^2-2rnd\cos\theta+n^2d^2}.\)
With this sign convention, the Fresnel approximation of
\(R_n(r,\theta)-r\)
is
\(-nd\cos\theta + \tfrac{n^2d^2}{2r}\sin^2\theta,\)
which gives exactly \eqref{eq:fresnel_atom_bv_app}.
For a semi-discrete scene
\(
        \nu
        =
        \sum_{i=1}^{N_d}
        \delta_{r_i}\otimes\mu_i,
        ~
        \mu_i\in\mathcal M(\Theta),
\)
define the physical near-field measurement operator
\(\FNF\) by
\begin{equation}\label{eq:physical_near_field_operator}
        (\FNF\nu)[n]
        :=
        \sum_{i=1}^{N_d}
        \int_{\Theta}
        a_{\rm NF}(r_i,\theta)[n]\,d\mu_i(\theta),
        ~
        n=0,\ldots,N_r-1 .
\end{equation}
% ============================================================
\subsection{Jacobi-Anger identities and tail bounds}
\label{app:jacobi-anger-tail-bounds}
% ============================================================

We use the Jacobi-Anger identity\cite[Sec.~10.12]{OlverEtAl2010}:
\begin{equation}\label{eq:JA_identity}
        e^{iz\cos\theta}
        =
        \sum_{p\in\mathbb Z}
        i^pJ_p(z)e^{ip\theta},
        ~ z\in\mathbb R,
\end{equation}
where \(J_p\) is the Bessel function of the first kind.  Since
\(J_{-p}(z)=(-1)^pJ_p(z),\)
the two-sided tail is controlled by the positive orders.

For \(X\ge0\) and \(M\in\mathbb Z_{\ge0}\), define the Bessel tail envelope
\begin{equation}\label{eq:Bessel_tail_def}
        \mathcal E_M(X)
        :=
        2\sum_{p=M+1}^{\infty}
        \sup_{0\le z\le X}|J_p(z)|.
\end{equation}
Then, uniformly for
\(z\in[0,X]\) and \(\theta\in\mathbb R\), we have
\begin{equation}\label{eq:JA_tail_bound}
        \left|
        e^{iz\cos\theta}
        -
        \sum_{|p|\le M}
        i^pJ_p(z)e^{ip\theta}
        \right|
        \le
        \mathcal E_M(X).
\end{equation}

The quantity \(\mathcal E_M(X)\) is deterministic and can be computed by
certified interval bounds for Bessel functions.  For a completely explicit
upper bound, one may use
\[
        |J_p(z)|\le I_p(z)
        \le
        e^{X^2/4}\tfrac{(X/2)^p}{p!},
        ~ 0\le z\le X,
\]
where \(I_p\) is the modified Bessel function.  Hence
\begin{equation}\label{eq:Bessel_tail_explicit_bound}
        \mathcal E_M(X)
        \le
        2e^{X^2/4}
        \sum_{p=M+1}^{\infty}
        \tfrac{(X/2)^p}{p!}.
\end{equation}
In particular,
\(\mathcal E_M(X)\to0 ~ \text{as }M\to\infty.\)

% ============================================================
\subsection{Fresnel Bessel-Vandermonde expansion}
\label{app:fresnel-bv-expansion}
% ============================================================

Define
\(
        X_1:=k_{\lambda}d\,n_{\max},
        ~
        X_2:=\tfrac{k_{\lambda}d^2n_{\max}^2}{4r_{\min}} .
\)
For fixed \(i\) and \(n\), set
\(
        \xi_{n,i}:=\tfrac{k_{\lambda}d^2n^2}{4r_i}.
\)
Using
\(
        \sin^2\theta=\tfrac{1-\cos(2\theta)}{2},
\)
we rewrite the Fresnel atom as
\[
\begin{aligned}
        a_{\rm Fr}(r_i,\theta)[n]
        &=
        \exp(ik_{\lambda}dn\cos\theta)
        \exp\left(
        -i\tfrac{k_{\lambda}d^2n^2}{2r_i}\sin^2\theta
        \right) \\
        &=
        \exp(ik_{\lambda}dn\cos\theta)
        \exp(-i\xi_{n,i})
        \exp(i\xi_{n,i}\cos(2\theta)).
\end{aligned}
\]
Thus the Fresnel atom is the product of one angular Jacobi-Anger factor, one
constant phase, and one second-harmonic Jacobi-Anger factor.

Applying \eqref{eq:JA_identity} to both oscillatory factors gives the exact
infinite expansion
\begin{equation}
\label{eq:fresnel_exact_infinite_BV}
        a_{\rm Fr}(r_i,\theta)[n]
        =
        \sum_{p\in\mathbb Z}
        \sum_{q\in\mathbb Z}
        c_{n,i,p,q}^{\rm Fr}
        e^{i(p+2q)\theta},
\end{equation}
where \(  c_{n,i,p,q}^{\rm Fr}
        :=
        e^{-i\xi_{n,i}}
        i^{p+q}
        J_p(k_{\lambda}dn)
        J_q(\xi_{n,i}).\)
For Jacobi-Anger truncation orders
\(P_{\rm JA},Q_{\rm JA}\ge0\), define
\(
        \mathcal J_{P_{\rm JA},Q_{\rm JA}}
        :=
        \left\{
        (p,q)\in\mathbb Z^2:
        |p|\le P_{\rm JA},\
        |q|\le Q_{\rm JA}
        \right\}.
\)
The finite Bessel-Vandermonde approximation is \[ a_{\rm Fr}^{P_{\rm JA},Q_{\rm JA}}(r_i,\theta)[n]
        :=
        \sum_{(p,q)\in\mathcal J_{P_{\rm JA},Q_{\rm JA}}}
        c_{n,i,p,q}^{\rm Fr}
        e^{i(p+2q)\theta}.\]
The retained harmonics have the form
\(
        m=p+2q.
\)
Since \(|p|\le P_{\rm JA}\) and \(|q|\le Q_{\rm JA}\), all retained harmonics satisfy
\(
        |m|\le P_{\rm JA}+2Q_{\rm JA}.
\)
We therefore set
\(
        I:=P_{\rm JA}+2Q_{\rm JA}
\)
and index the lifted harmonic variables by
\(
        m=-I,\ldots,I.
\)
For \(-I\le m\le I\), define the aggregated Fresnel coefficient \( C_{n,i,m}^{\rm Fr,P_{\rm JA},Q_{\rm JA}}
        :=
        \sum_{\substack{|p|\le P_{\rm JA},\ |q|\le Q_{\rm JA}\\ p+2q=m}}
        c_{n,i,p,q}^{\rm Fr},\) with the convention that an empty sum is zero.  Then
\begin{equation}
\label{eq:fresnel_finite_BV_aggregated}
        a_{\rm Fr}^{P_{\rm JA},Q_{\rm JA}}(r_i,\theta)[n]
        =
        \sum_{m=-I}^{I}
        C_{n,i,m}^{\rm Fr,P_{\rm JA},Q_{\rm JA}}e^{im\theta}.
\end{equation}
Let
\(
        \mathbf v_I(\theta)
        :=
        \bigl(e^{im\theta}\bigr)_{m=-I}^{I}
\)
and
\(
        \mathbf A_i(\theta)
        :=
        \mathbf e_i\mathbf v_I(\theta)^H .
\)
The lifted sensing matrices are chosen as
\begin{equation}
\label{eq:fresnel_lifted_matrix_coefficients}
        \bigl(\boldsymbol\Phi_{n,{\rm Fr}}^{P_{\rm JA},Q_{\rm JA}}\bigr)_{i,m}
        :=
        \overline{C_{n,i,-m}^{\rm Fr,P_{\rm JA},Q_{\rm JA}}},
        ~
        m=-I,\ldots,I.
\end{equation}
With the Frobenius inner product
\(
        \langle X,Y\rangle_F
        =
        \operatorname{trace}(X^HY)
        =
        \sum_{i,m}\overline{X_{i,m}}Y_{i,m},
\)
we obtain
\begin{align*}
   & \left\langle
        \boldsymbol\Phi_{n,{\rm Fr}}^{P_{\rm JA},Q_{\rm JA}},
        \mathbf A_i(\theta)
        \right\rangle_F=
        \sum_{m=-I}^{I}
        \overline{
        \bigl(\boldsymbol\Phi_{n,{\rm Fr}}^{P_{\rm JA},Q_{\rm JA}}\bigr)_{i,m}}
        e^{-im\theta} =
       \nonumber\\
       &\sum_{m=-I}^{I}
        C_{n,i,-m}^{\rm Fr,P_{\rm JA},Q_{\rm JA}}e^{-im\theta} =
        \sum_{m=-I}^{I}
        C_{n,i,m}^{\rm Fr,P_{\rm JA},Q_{\rm JA}}e^{im\theta}=
        a_{\rm Fr}^{P_{\rm JA},Q_{\rm JA}}(r_i,\theta)[n].
\end{align*} 

The finite-harmonic representation differs from the exact Fresnel atom only
through the omitted Jacobi-Anger tails.  Bounding the linear-phase and
quadratic-phase tails uniformly and controlling their product gives an
atomwise approximation error that is independent of the sensor index, the
range bin, and the admissible angle.

\begin{prop}
\label{prop:fresnel_truncation_bound}
For every \(i=1,\ldots,N_d\), every \(n=0,\ldots,N_r-1\), and every
\(\theta\in[\theta_{\min},\theta_{\max}]\),
\begin{equation}
\label{eq:fresnel_trunc_error_bound}
        \left|
        a_{\rm Fr}(r_i,\theta)[n]
        -
        a_{\rm Fr}^{P_{\rm JA},Q_{\rm JA}}(r_i,\theta)[n]
        \right|
        \le
        \Delta_{P_{\rm JA},Q_{\rm JA}}^{\rm Fr},
\end{equation}
where \(\Delta_{P_{\rm JA},Q_{\rm JA}}^{\rm Fr}
:=
\mathcal E_{P_{\rm JA}}(X_1)
+
\mathcal E_{Q_{\rm JA}}(X_2)
+
\mathcal E_{P_{\rm JA}}(X_1)
\mathcal E_{Q_{\rm JA}}(X_2).\)
Consequently,
\[
        \Delta_{P_{\rm JA},Q_{\rm JA}}^{\rm Fr}\to0
        ~
        \text{as }P_{\rm JA},Q_{\rm JA}\to\infty.
\]
\end{prop}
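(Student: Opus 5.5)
The plan is to factor the Fresnel atom into its two Jacobi-Anger factors and bound the truncation error of a product by the errors of the factors. From the rewriting above,
\[
a_{\rm Fr}(r_i,\theta)[n]=e^{-i\xi_{n,i}}\,F_1(\theta)\,F_2(\theta),
\qquad
F_1(\theta):=e^{ik_\lambda dn\cos\theta},
\qquad
F_2(\theta):=e^{i\xi_{n,i}\cos(2\theta)}.
\]
Let \(F_1^{P}\) denote the Jacobi-Anger partial sum \(\sum_{|p|\le P_{\rm JA}}i^pJ_p(k_\lambda dn)e^{ip\theta}\). Let \(F_2^{Q}\) denote the partial sum \(\sum_{|q|\le Q_{\rm JA}}i^qJ_q(\xi_{n,i})e^{i2q\theta}\), obtained by applying \eqref{eq:JA_identity} with the angle \(2\theta\). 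By the definition of \(c^{\rm Fr}_{n,i,p,q}\) and of the rectangle \(\mathcal J_{P_{\rm JA},Q_{\rm JA}}\), the finite approximation factors exactly as \(a^{P_{\rm JA},Q_{\rm JA}}_{\rm Fr}(r_i,\theta)[n]=e^{-i\xi_{n,i}}F_1^{P}F_2^{Q}\). This holds because the truncation set is a product set, so the double sum splits. The aggregation by \(m=p+2q\) is only a regrouping and does not affect this identity.

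Next I apply the tail bound \eqref{eq:JA_tail_bound}. The arguments satisfy \(0\le k_\lambda dn\le X_1\) and \(0\le\xi_{n,i}\le k_\lambda d^2n_{\max}^2/(4r_{\min})=X_2\), using \(n\le n_{\max}\) and \(r_i\ge r_{\min}\). Hence, uniformly in \(\theta\), \(i\) and \(n\), we have \(|F_1-F_1^P|\le\mathcal E_{P_{\rm JA}}(X_1)=:e_1\) and \(|F_2-F_2^Q|\le\mathcal E_{Q_{\rm JA}}(X_2)=:e_2\). For the second factor the bound is applied at angle \(2\theta\), which is legitimate because \eqref{eq:JA_tail_bound} holds for all real angles.

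Then I use the product decomposition
\[
F_1F_2-F_1^PF_2^Q=(F_1-F_1^P)F_2+F_1(F_2-F_2^Q)-(F_1-F_1^P)(F_2-F_2^Q),
\]
and check the algebra by expansion. Since \(|F_1|=|F_2|=1\) and \(|e^{-i\xi_{n,i}}|=1\), the triangle inequality gives
\[
\bigl|a_{\rm Fr}-a_{\rm Fr}^{P_{\rm JA},Q_{\rm JA}}\bigr|\le e_1+e_2+e_1e_2=\Delta^{\rm Fr}_{P_{\rm JA},Q_{\rm JA}}.
\]
This is \eqref{eq:fresnel_trunc_error_bound}.

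For the limit, I use \eqref{eq:Bessel_tail_explicit_bound}. It shows \(\mathcal E_M(X)\le 2e^{X^2/4}\sum_{p>M}(X/2)^p/p!\), which is the tail of a convergent exponential series and therefore tends to \(0\) as \(M\to\infty\) for fixed \(X\). Consequently \(e_1,e_2\to0\) as \(P_{\rm JA},Q_{\rm JA}\to\infty\), and so \(\Delta^{\rm Fr}\to0\).

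The only delicate point is justifying the factorization of the truncated sum, which requires that the truncation set is the full rectangle. Everything else is routine bookkeeping.
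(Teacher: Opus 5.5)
Your proof is correct and follows the paper's route: factor the atom as $e^{-i\xi_{n,i}}F_1F_2$, bound the two Jacobi--Anger tails by $\mathcal E_{P_{\rm JA}}(X_1)$ and $\mathcal E_{Q_{\rm JA}}(X_2)$, and then control the product error. The paper splits the error as $(F_1-F_1^P)F_2+F_1^P(F_2-F_2^Q)$ and uses $|F_1^P|\le 1+\mathcal E_{P_{\rm JA}}(X_1)$, while your three-term identity gives the same bound; you also make explicit the factorization of the truncated sum over the rectangular index set, which the paper only uses implicitly.
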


\begin{proof}
Let \(A(\theta):=\exp(ik_{\lambda}dn\cos\theta), ~ B(\theta):=\exp(i\xi_{n,i}\cos(2\theta)).\) Let \(A_{P_{\rm JA}}\) and \(B_{Q_{\rm JA}}\) be their truncated Jacobi-Anger expansions: \(A_{P_{\rm JA}}(\theta) := \sum_{|p|\le P_{\rm JA}} i^pJ_p(k_{\lambda}dn)e^{ip\theta},\) and \(B_{Q_{\rm JA}}(\theta) := \sum_{|q|\le Q_{\rm JA}} i^qJ_q(\xi_{n,i})e^{i2q\theta}.\) By the definition of the Jacobi-Anger tails, \(|A(\theta)-A_{P_{\rm JA}}(\theta)| \le \mathcal E_{P_{\rm JA}}(X_1),\) because \(0\le k_{\lambda}dn\le X_1.\) Similarly, \(|B(\theta)-B_{Q_{\rm JA}}(\theta)| \le \mathcal E_{Q_{\rm JA}}(X_2),\) because \(0\le \xi_{n,i}\le X_2.\) The factor \(e^{-i\xi_{n,i}}\) has modulus one, so it does not affect the
truncation error.

Since \(|A(\theta)|=|B(\theta)|=1,\) we have
\[
\begin{aligned}
 |A(\theta)B(\theta)-A_{P_{\rm JA}}(\theta)B_{Q_{\rm JA}}(\theta)|
 &\le
 |A(\theta)-A_{P_{\rm JA}}(\theta)|\,|B(\theta)|+
 |A_{P_{\rm JA}}(\theta)|\,|B(\theta)-B_{Q_{\rm JA}}(\theta)|.
\end{aligned}
\]
Also, \(|A_{P_{\rm JA}}(\theta)| \le |A(\theta)|+|A(\theta)-A_{P_{\rm JA}}(\theta)| \le 1+\mathcal E_{P_{\rm JA}}(X_1).\) Therefore
\[
\begin{aligned}
 |A(\theta)B(\theta)-A_{P_{\rm JA}}(\theta)B_{Q_{\rm JA}}(\theta)|
 &\le
 \mathcal E_{P_{\rm JA}}(X_1)
 +
 \bigl(1+\mathcal E_{P_{\rm JA}}(X_1)\bigr)\mathcal E_{Q_{\rm JA}}(X_2) \\
 &=
 \mathcal E_{P_{\rm JA}}(X_1)
 +
 \mathcal E_{Q_{\rm JA}}(X_2)
 +
 \mathcal E_{P_{\rm JA}}(X_1)\mathcal E_{Q_{\rm JA}}(X_2).
\end{aligned}
\]
Multiplication by the unit-modulus factor \(e^{-i\xi_{n,i}}\) preserves this
bound. Hence
\[
 \left|
 a_{\rm Fr}(r_i,\theta)[n]
 -
 a_{\rm Fr}^{P_{\rm JA},Q_{\rm JA}}(r_i,\theta)[n]
 \right|
 \le
 \Delta_{P_{\rm JA},Q_{\rm JA}}^{\rm Fr}.
\]
The convergence \(\Delta_{P_{\rm JA},Q_{\rm JA}}^{\rm Fr}\to0\) follows from the convergence of the Jacobi-Anger tails \(\mathcal E_{P_{\rm JA}}(X_1)\to0, ~ \mathcal E_{Q_{\rm JA}}(X_2)\to0\) as \(P_{\rm JA},Q_{\rm JA}\to\infty\). This proves the proposition.
\end{proof}
% ============================================================
\subsection{Lifted Bessel-Vandermonde factorization}
\label{app:lifted-BV-factorization}
% ============================================================

Collect the Bessel-product terms with the same angular harmonic
\(m=p+2q.\)
Then
\(
        a_{\rm Fr}^{P_{\rm JA},Q_{\rm JA}}(r_i,\theta)[n]
        =
        \sum_{m=-P_{\rm JA}-2Q_{\rm JA}}^{P_{\rm JA}+2Q_{\rm JA}}
        C_{n,i,m}^{\rm Fr,P_{\rm JA},Q_{\rm JA}}e^{im\theta},
\)
where
\(
        C_{n,i,m}^{\rm Fr,P_{\rm JA},Q_{\rm JA}}
        :=
        \sum_{\substack{|p|\le P_{\rm JA},\ |q|\le Q_{\rm JA}\\ p+2q=m}}
        c_{n,i,p,q}^{\rm Fr}.
\)
Let
\(
        I:=P_{\rm JA}+2Q_{\rm JA},
        ~
        \mathbf v_I(\theta)
        :=
        \bigl(e^{-iI\theta},e^{-i(I-1)\theta},
        \ldots,e^{iI\theta}\bigr)^T,
\)
and define
\(\mathbf A_i(\theta):= \mathbf e_i\mathbf v_I(\theta)^H.\)
Define the lifted coefficient matrices
\(\boldsymbol\Phi_{n,{\rm Fr}}^{P_{\rm JA},Q_{\rm JA}} \in\mathbb C^{N_d\times(2I+1)}\)
by indexing columns with \(m=-I,\ldots,I\) and setting
\(\bigl(\boldsymbol\Phi_{n,{\rm Fr}}^{P_{\rm JA},Q_{\rm JA}}\bigr)_{i,m}:= \overline{C_{n,i,-m}^{\rm Fr,P_{\rm JA},Q_{\rm JA}}}.\)
Then, with the Frobenius convention
\(\langle X,Y\rangle_F = \operatorname{trace}(X^HY),\)
one has
\(
        a_{\rm Fr}^{P_{\rm JA},Q_{\rm JA}}(r_i,\theta)[n]
        =
        \left\langle
        \boldsymbol\Phi_{n,{\rm Fr}}^{P_{\rm JA},Q_{\rm JA}},
        \mathbf A_i(\theta)
        \right\rangle_F.
\)
Combining this identity with \Cref{prop:fresnel_truncation_bound}, we obtain
\begin{equation}
\label{eq:fresnel_lifted_approximation_app}
        a_{\rm Fr}(r_i,\theta)[n]
        =
        \left\langle
        \boldsymbol\Phi_{n,{\rm Fr}}^{P_{\rm JA},Q_{\rm JA}},
        \mathbf A_i(\theta)
        \right\rangle_F
        +
        e_{n,i,{\rm Fr}}^{P_{\rm JA},Q_{\rm JA}}(\theta).
\end{equation}
Moreover,
\begin{equation}
\label{eq:fresnel_lifted_uniform_error_app}
        \max_{0\le n\le N_r-1}
        \max_{1\le i\le N_d}
        \sup_{\theta\in\Theta}
        |e_{n,i,{\rm Fr}}^{P_{\rm JA},Q_{\rm JA}}(\theta)|
        \le
        \Delta_{P_{\rm JA},Q_{\rm JA}}^{\rm Fr}.
\end{equation}
% ============================================================
\subsection{Physical spherical-wave atom and Fresnel remainder}
\label{app:physical-Fresnel-remainder}
% ============================================================

We now compare the physical atom to the Fresnel atom.  Define
\(\varepsilon:=\tfrac{nd}{r}, ~ c:=\cos\theta,\)
and
\(F(\varepsilon,c):= \sqrt{1-2\varepsilon c+\varepsilon^2}.\)
Then
\(R_n(r,\theta)-r = r\left(F(\varepsilon,\cos\theta)-1\right).\)
For fixed \(c\in[-1,1]\), the Taylor expansion at \(\varepsilon=0\) gives
\[
        F(\varepsilon,c)
        =
        1-\varepsilon c+\tfrac{\varepsilon^2}{2}(1-c^2)
        +
        \mathcal R_3(\varepsilon,c).
\]
Since
\(1-c^2=\sin^2\theta,\)
this is exactly the Fresnel path expansion
\[
        R_n(r,\theta)-r
        =
        -nd\cos\theta
        +
        \tfrac{n^2d^2}{2r}\sin^2\theta
        +
        r\,\mathcal R_3(\varepsilon,\cos\theta).
\]

We now bound the remainder uniformly.  Differentiating
\(F(\varepsilon,c) = (1-2\varepsilon c+\varepsilon^2)^{1/2}\)
with respect to \(\varepsilon\), one obtains
\[
        F'''(\varepsilon,c)
        =
        -3(1-c^2)(\varepsilon-c)
        (1-2\varepsilon c+\varepsilon^2)^{-5/2}.
\]
For
\(0\le \varepsilon\le\varepsilon_\star<1, ~ -1\le c\le1,\)
we have
\[
        1-2\varepsilon c+\varepsilon^2
        \ge
        (1-\varepsilon)^2
        \ge
        (1-\varepsilon_\star)^2,
\]
and
\(|\varepsilon-c|\le1+\varepsilon_\star, ~ 1-c^2\le1.\)
Therefore
\[
        |F'''(\varepsilon,c)|
        \le
        \tfrac{3(1+\varepsilon_\star)}
        {(1-\varepsilon_\star)^5}.
\]
Taylor's theorem gives
\begin{equation}\label{eq:Fresnel_remainder_scalar_bound}
        |\mathcal R_3(\varepsilon,c)|
        \le
        C_{\rm par}(\varepsilon_\star)\varepsilon^3,
        ~
        C_{\rm par}(\varepsilon_\star)
        :=
        \tfrac{1+\varepsilon_\star}
        {2(1-\varepsilon_\star)^5}.
\end{equation}
Consequently,
\begin{equation}\label{eq:path_remainder_bound}
\left|
        R_n(r,\theta)-r
        +
        nd\cos\theta
        -
        \tfrac{n^2d^2}{2r}\sin^2\theta
\right|
\le
        C_{\rm par}(\varepsilon_\star)
        \tfrac{(nd)^3}{r^2}.
\end{equation}
Uniformly over the admissible domain,
\begin{equation}\label{eq:path_remainder_uniform}
        \Delta_{\rm path}
        :=
        C_{\rm par}(\varepsilon_\star)
        \tfrac{(n_{\max}d)^3}{r_{\min}^2}
\end{equation}
satisfies
\[
\left|
        R_n(r_i,\theta)-r_i
        +
        nd\cos\theta
        -
        \tfrac{n^2d^2}{2r_i}\sin^2\theta
\right|
\le
        \Delta_{\rm path}.
\]

% ============================================================
\subsection{Physical atom approximated by the Fresnel Bessel-Vandermonde lift}
\label{app:NF-by-Fresnel-lift}
% ============================================================

Define the physical-Fresnel phase remainder
\[
        \delta_n(r,\theta)
        :=
        R_n(r,\theta)-r
        +
        nd\cos\theta
        -
        \tfrac{n^2d^2}{2r}\sin^2\theta.
\]
Then
\(
        a_{\rm NF}(r,\theta)[n]
        =
        a_{\rm Fr}(r,\theta)[n]
        e^{-ik_{\lambda}\delta_n(r,\theta)}.
\)
Hence
\[
        |a_{\rm NF}(r,\theta)[n]-a_{\rm Fr}(r,\theta)[n]|
        =
        |e^{-ik_{\lambda}\delta_n(r,\theta)}-1|
        \le
        \min\{2,k_{\lambda}|\delta_n(r,\theta)|\}.
\]
Using \eqref{eq:path_remainder_uniform}, define \(\Delta_{\rm par}
        :=
        \min\{2,k_{\lambda}\Delta_{\rm path}\}.\)
Then
\begin{equation}\label{eq:physical_minus_fresnel_bound}
        \max_{0\le n\le N_r-1}
        \max_{1\le i\le N_d}
        \sup_{\theta\in\Theta}
        |a_{\rm NF}(r_i,\theta)[n]
        -
        a_{\rm Fr}(r_i,\theta)[n]|
        \le
        \Delta_{\rm par}.
\end{equation}

Combining \eqref{eq:fresnel_lifted_approximation_app} with
\eqref{eq:physical_minus_fresnel_bound} gives
\begin{equation}\label{eq:NF_lifted_approx}
        a_{\rm NF}(r_i,\theta)[n]
        =
        \left\langle
        \boldsymbol\Phi_{n,{\rm Fr}}^{P_{\rm JA},Q_{\rm JA}},
        \mathbf A_i(\theta)
        \right\rangle_F
        +
        e_{n,i,{\rm NF}}^{P_{\rm JA},Q_{\rm JA}}(\theta),
\end{equation}
where
\begin{equation}\label{eq:Delta_NF_def}
        \max_{0\le n\le N_r-1}
        \max_{1\le i\le N_d}
        \sup_{\theta\in\Theta}
        |e_{n,i,{\rm NF}}^{P_{\rm JA},Q_{\rm JA}}(\theta)|
        \le
        \Delta_{P_{\rm JA},Q_{\rm JA}}^{\rm NF},
        ~
        \Delta_{P_{\rm JA},Q_{\rm JA}}^{\rm NF}
        :=
        \Delta_{P_{\rm JA},Q_{\rm JA}}^{\rm Fr}+\Delta_{\rm par}.
\end{equation}

\begin{rem}[What \(\Delta_{P_{\rm JA},Q_{\rm JA}}^{\rm NF}\) does and does not mean]
The quantity \(\Delta_{P_{\rm JA},Q_{\rm JA}}^{\rm NF}\) is the error of approximating the
physical spherical-wave atom by the \emph{Fresnel} Bessel-Vandermonde lift.
Thus
\(\limsup_{P_{\rm JA},Q_{\rm JA}\to\infty}\Delta_{P_{\rm JA},Q_{\rm JA}}^{\rm NF} \le \Delta_{\rm par}.\)
It does not generally converge to zero as \(P_{\rm JA},Q_{\rm JA}\to\infty\) unless the
paraxial approximation error \(\Delta_{\rm par}\) is also sent to zero.
\end{rem}

% ============================================================
\subsection{Operator error bounds for semi-discrete measures}
\label{app:operator-error-from-atom-error}
% ============================================================

Let a semi-discrete measure be written as
\(
        \nu
        =
        \sum_{i=1}^{N_d}
        \delta_{r_i}\otimes \mu_i,
\)
where each \(\mu_i\) is a finite complex measure on \(\Theta\).  Its total
variation is
\(\|\nu\|_{\TV,\Qset} = \sum_{i=1}^{N_d}\|\mu_i\|_{\TV}.\)
Define the lifted object
\begin{equation}\label{eq:Xnu_def_bv}
    \mathbf X_\nu^{P_{\rm JA},Q_{\rm JA}}
        :=
        \sum_{i=1}^{N_d}
        \int_{\Theta}
        \mathbf A_i(\theta)\,d\mu_i(\theta)
        \in \mathbb C^{N_d\times(2I+1)}.
\end{equation}
Define the finite-harmonic measurement operator
\(
        \mathcal B_{\rm Fr}^{P_{\rm JA},Q_{\rm JA}}:
        \mathbb C^{N_d\times(2I+1)}
        \to
        \mathbb C^{N_r}
\)
by
\begin{equation}\label{eq:Bop_def_bv}
        (\mathcal B_{\rm Fr}^{P_{\rm JA},Q_{\rm JA}}X)[n]
        :=
        \left\langle
        \boldsymbol\Phi_{n,{\rm Fr}}^{P_{\rm JA},Q_{\rm JA}},
        X
        \right\rangle_F,
        ~ n=0,\ldots,N_r-1.
\end{equation}

Recall that \(\FFR\) is the Fresnel measurement operator defined in
\eqref{eq:fresnel_forward_model_sec2}, while \(\FNF\) is the physical
near-field measurement operator defined in
\eqref{eq:physical_near_field_operator}.  From the atomwise bound
\eqref{eq:fresnel_trunc_error_bound},
\[
\begin{aligned}
        |(\FFR\nu)[n]-(\mathcal B_{\rm Fr}^{P_{\rm JA},Q_{\rm JA}}\mathbf X_\nu^{P_{\rm JA},Q_{\rm JA}})[n]|
        &\le
        \sum_{i=1}^{N_d}
        \int_{\Theta}
        |e_{n,i,{\rm Fr}}^{P_{\rm JA},Q_{\rm JA}}(\theta)|\,d|\mu_i|(\theta)\le
        \Delta_{P_{\rm JA},Q_{\rm JA}}^{\rm Fr}
        \|\nu\|_{\TV,\Qset}.
        \end{aligned}
\]
Therefore
\begin{equation}\label{eq:Fresnel_operator_error_bv}
        \|\FFR\nu-\mathcal B_{\rm Fr}^{P_{\rm JA},Q_{\rm JA}}\mathbf X_\nu^{P_{\rm JA},Q_{\rm JA}}\|_2
        \le
        \sqrt{N_r}\,
        \Delta_{P_{\rm JA},Q_{\rm JA}}^{\rm Fr}
        \|\nu\|_{\TV,\Qset}.
\end{equation}
Similarly, using \eqref{eq:Delta_NF_def},
\begin{equation}\label{eq:NF_operator_error_bv}
        \|\FNF\nu-\mathcal B_{\rm Fr}^{P_{\rm JA},Q_{\rm JA}}\mathbf X_\nu^{P_{\rm JA},Q_{\rm JA}}\|_2
        \le
        \sqrt{N_r}\,
        \Delta_{P_{\rm JA},Q_{\rm JA}}^{\rm NF}
        \|\nu\|_{\TV,\Qset}.
\end{equation}

These are the deterministic approximation bounds used in the lifted physical
model.  They separate the harmonic truncation error from the paraxial
spherical-wave error.

% This shows the relation between the exact Fresnel QPAC
% theory and the finite lifted physical reconstruction method.
% ============================================================
% ============================================================
\section{Lifted atomic norm duality and SDP representation}
\label{app:lifted-duality}
% ============================================================

This appendix gives the convex-analytic and semidefinite details for the
finite-harmonic lifted model used in \Cref{sec:lifted_model}.  The model is a
continuous-angle, finite-range atomic norm problem.  Its atoms are indexed by a
range bin \(i\in\{1,\dots,N_d\}\) and a continuous angle
\(\theta\in\Theta\).  We derive the primal and dual problems, the
trigonometric dual-polynomial constraint, the Toeplitz SDP representation, the
dual-contact condition used for support extraction, and the perturbation bound
connecting the finite-harmonic lift to the Fresnel QPAC certificate.

% ============================================================
\subsection{From the tensor Bessel lift to the harmonic lift}
\label{app:lifted-harmonic-aggregation}
% ============================================================

Let \(P_{\rm JA},Q_{\rm JA}\ge0\), and set
\(
        I:=P_{\rm JA}+2Q_{\rm JA},
        ~
        N_h:=2I+1.
\)
The finite Fresnel Bessel-Vandermonde expansion has the form
\[
        a_{\rm Fr}^{P_{\rm JA},Q_{\rm JA}}(r_i,\theta)[n]
        =
        \sum_{\substack{
        |p|\le P_{\rm JA}\\
        |q|\le Q_{\rm JA}}}
        c_{n,i,p,q}^{\rm Fr}e^{i(p+2q)\theta}.
\]
For each harmonic
\(m\in\{-I,\ldots,I\},\)
define the aggregated coefficient
\begin{equation}
\label{eq:agg_coeff_lifted_duality}
        C_{n,i,m}^{\rm Fr,P_{\rm JA},Q_{\rm JA}}
        :=
        \sum_{\substack{|p|\le P_{\rm JA},\ |q|\le Q_{\rm JA}\\ p+2q=m}}
        c_{n,i,p,q}^{\rm Fr},
\end{equation}
with the convention that an empty sum is zero.  Then
\begin{equation}
\label{eq:harmonic_lifted_atom_expansion}
        a_{\rm Fr}^{P_{\rm JA},Q_{\rm JA}}(r_i,\theta)[n]
        =
        \sum_{m=-I}^{I}
        C_{n,i,m}^{\rm Fr,P_{\rm JA},Q_{\rm JA}}e^{im\theta}.
\end{equation}

The centered harmonic vector is
\(
        \mathbf v_I(\theta)
        :=
        \bigl(e^{-iI\theta},e^{-i(I-1)\theta},
        \ldots,e^{iI\theta}\bigr)^T
        \in\mathbb C^{N_h}.
\)
For Toeplitz SDP formulas it is often more convenient to use the one-sided
vector
\(\mathbf b_I(\theta):= \bigl(1,e^{i\theta},\ldots,e^{i(2I)\theta}\bigr)^T.\)
The two conventions are related by
\(\mathbf v_I(\theta)=e^{-iI\theta}\mathbf b_I(\theta).\)
Multiplication by \(e^{-iI\theta}\) has unit modulus.  Hence passing between
the centered and one-sided conventions only changes the representing measure
by a unit-modulus density and does not change total variation.  The main text
uses the centered convention, while the Toeplitz SDP below uses the one-sided
convention.

% ============================================================
\subsection{Lifted atomic norm}
\label{app:lifted-atomic-norm}
% ============================================================

For each range bin \(i\), define the lifted atom
\(
        \mathbf A_i(\theta)
        :=
        \mathbf e_i\mathbf v_I(\theta)^H
        \in\mathbb C^{N_d\times N_h}.
\)
The atomic set is
\(
        \mathcal A
        :=
        \{\mathbf A_i(\theta):i=1,\ldots,N_d,\ \theta\in\Theta\}.
\)
The lifted atomic norm is the gauge
\[
        \|\mathbf X\|_{\mathcal A}
        :=
        \inf
        \left\{
        \sum_s |c_s|:
        \mathbf X=\sum_s c_s\mathbf A_{i_s}(\theta_s)
        \right\}.
\]
Equivalently, if
\(\nu=\sum_{i=1}^{N_d}\delta_{r_i}\otimes\mu_i,\)
then
\(
        \mathbf X_\nu
        =
        \sum_{i=1}^{N_d}
        \int_\Theta
        \mathbf A_i(\theta)\,d\mu_i(\theta),
\)
and
\(
        \|\mathbf X\|_{\mathcal A}
        =
        \inf_{\mu_1,\ldots,\mu_{N_d}}
        \sum_{i=1}^{N_d}\|\mu_i\|_{\rm TV},
\)
where the infimum is taken over all representing measures generating
\(\mathbf X\).
If the angular constraint is imposed on the full circle \(\mathbb T\), the
corresponding one-dimensional trigonometric moment norm has an exact Toeplitz
SDP representation.  For the physical angular set
\(\Theta\subset(0,\pi)\), exact sector-restricted constraints require
additional localizing Toeplitz constraints.  If these localizing constraints
are omitted, one obtains the conservative full-circle formulation, which is
the one stated explicitly below.

% ============================================================
\subsection{Primal and dual lifted problems}
\label{app:lifted-primal-dual}
% ============================================================

The lifted Fresnel operator is
\[
        \bigl(\mathcal B_{\rm Fr}^{P_{\rm JA},Q_{\rm JA}}\mathbf X\bigr)[n]
        :=
        \left\langle
        \boldsymbol\Phi_{n,{\rm Fr}}^{P_{\rm JA},Q_{\rm JA}},
        \mathbf X
        \right\rangle_F,
        ~ n=0,\ldots,N_r-1 .
\]
The noiseless lifted atomic norm program is
\[
        \min_{\mathbf X}
        \|\mathbf X\|_{\mathcal A}
        ~
        \text{subject to}
        ~
        \mathcal B_{\rm Fr}^{P_{\rm JA},Q_{\rm JA}}\mathbf X=\mathbf y.
\]
The robust lifted program is
\[
        \min_{\mathbf X}
        \|\mathbf X\|_{\mathcal A}
        ~
        \text{subject to}
        ~
        \|\mathcal B_{\rm Fr}^{P_{\rm JA},Q_{\rm JA}}\mathbf X-\mathbf y\|_2\le\varepsilon.
\]

For
\(\boldsymbol\lambda\in\mathbb C^{N_r},\)
the Hilbert adjoint of the lifted operator is
\[
        \mathbf Z_{\boldsymbol\lambda}
        :=
        (\mathcal B_{\rm Fr}^{P_{\rm JA},Q_{\rm JA}})^*\boldsymbol\lambda
        =
        \sum_{n=0}^{N_r-1}
        \lambda_n\boldsymbol\Phi_{n,{\rm Fr}}^{P_{\rm JA},Q_{\rm JA}}.
\]
The dual polynomial is
\[
        p_i^{\boldsymbol\lambda}(\theta)
        =
        \left\langle
        \mathbf A_i(\theta),
        \mathbf Z_{\boldsymbol\lambda}
        \right\rangle_F
        =
        \mathbf e_i^H
        \mathbf Z_{\boldsymbol\lambda}
        \mathbf v_I(\theta).
\]
The dual norm is therefore
\[
        \|\mathbf Z_{\boldsymbol\lambda}\|_{\mathcal A}^*
        =
        \sup_{1\le i\le N_d}\sup_{\theta\in\Theta}
        |p_i^{\boldsymbol\lambda}(\theta)|.
\]

The Lagrangian of the noiseless problem is
\[
        \mathcal L(\mathbf X,\boldsymbol\lambda)
        =
        \|\mathbf X\|_{\mathcal A}
        +
        \operatorname{Re}\left\langle
        \boldsymbol\lambda,
        \mathbf y-\mathcal B_{\rm Fr}^{P_{\rm JA},Q_{\rm JA}}\mathbf X
        \right\rangle .
\]
By the adjoint relation and the conjugate-linear-in-the-first convention,
\(
        \left\langle
        \mathcal B_{\rm Fr}^{P_{\rm JA},Q_{\rm JA}}\mathbf X,
        \boldsymbol\lambda
        \right\rangle
        =
        \left\langle
        \mathbf X,
        \mathbf Z_{\boldsymbol\lambda}
        \right\rangle_F .
\)
Equivalently, after conjugating both sides,
\(
        \left\langle
        \boldsymbol\lambda,
        \mathcal B_{\rm Fr}^{P_{\rm JA},Q_{\rm JA}}\mathbf X
        \right\rangle
        =
        \left\langle
        \mathbf Z_{\boldsymbol\lambda},
        \mathbf X
        \right\rangle_F .
\)
Thus the infimum over \(\mathbf X\) is finite if and only if
\(\|\mathbf Z_{\boldsymbol\lambda}\|_{\mathcal A}^*\le1.\)
Thus the noiseless dual is
\begin{equation}
\label{eq:lifted_dual_noiseless_app}
        \max_{\boldsymbol\lambda\in\mathbb C^{N_r}}
        \operatorname{Re}\langle \boldsymbol\lambda,\mathbf y\rangle
        ~
        \text{subject to}
        ~
        \sup_{1\le i\le N_d}\sup_{\theta\in\Theta}|p_i^{\boldsymbol\lambda}(\theta)|\le1.
\end{equation}

For the robust problem, Fenchel duality gives
\[
        \max_{\boldsymbol\lambda\in\mathbb C^{N_r}}
        \operatorname{Re}\langle \boldsymbol\lambda,\mathbf y\rangle
        -
        \varepsilon\|\boldsymbol\lambda\|_2
        ~
        \text{subject to}
        ~
        \sup_{1\le i\le N_d}\sup_{\theta\in\Theta}|p_i^{\boldsymbol\lambda}(\theta)|\le1.
\]
Indeed, the support function of the Euclidean ball
\(\varepsilon\mathbb B_2\) is \(\varepsilon\|\boldsymbol\lambda\|_2\).

% ============================================================
\subsection{Fej\'er-Riesz dual SDP}
\label{app:fejer-riesz-dual-sdp}
% ============================================================

Let \(\mx z\in\mathbb C^{N_h}\), and use the one-sided harmonic vector
\[
        \mathbf b_I(\theta)
        :=
        \bigl(1,e^{i\theta},\ldots,e^{i2I\theta}\bigr)^T
        \in\mathbb C^{N_h}.
\]
For a row polynomial originally written as
\[
        p_i(\theta)
        =
        \mathbf z_i^H\mathbf v_I(\theta),
\]
we multiply by the unit-modulus factor \(e^{iI\theta}\) and apply the
Toeplitz representation to the one-sided polynomial
\[
        e^{iI\theta}p_i(\theta)
        =
        \widetilde{\mathbf z}_i^H\mathbf b_I(\theta),
\]
for the corresponding shifted coefficient vector \(\widetilde{\mathbf z}_i\).
This transformation does not change the modulus constraint.
The full-circle dual constraint
\[
        \sup_{\theta\in\mathbb T}
        |\mx z^H\mathbf b_I(\theta)|\le1
\]
is equivalent to
\[
        1-|\mx z^H\mathbf b_I(\theta)|^2\ge0,
        ~ \theta\in\mathbb T .
\]
By the Fej\'er-Riesz theorem \cite{dumitrescu2017positive}, this nonnegative trigonometric polynomial admits
a positive semidefinite Gram representation.  Equivalently, the following LMI
is necessary and sufficient: there exists a Hermitian matrix
\(\mathbf R\in\mathbb C^{N_h\times N_h}\) such that
\begin{equation}
\label{eq:dual_poly_lmi_block}
        \begin{bmatrix}
       \mx  R&\mx z\\
        \mx z^H&1
        \end{bmatrix}
        \succeq0,
\end{equation}
and
\begin{equation}
\label{eq:dual_poly_lmi_diag_sums}
       \sum_{j=0}^{N_h-1-\ell}R_{j,j+\ell}
        =
        \begin{cases}
        1,& \ell=0,\\
        0,& \ell=1,\ldots,N_h-1.
        \end{cases}
\end{equation}

Indeed, if \eqref{eq:dual_poly_lmi_block} holds, the Schur complement gives
\(\mx R-\mx z\mx z^H\succeq0.\)
For every \(\theta\),
\(\mathbf b_I(\theta)^H(\mx R-\mx z\mx z^H)\mathbf b_I(\theta)\ge0.\)
The diagonal-sum constraints imply
\(\mathbf b_I(\theta)^HR\mathbf b_I(\theta)=1,\)
and therefore
\(|\mx z^H\mathbf b_I(\theta)|\le1.\)
Conversely, if \(|\mx z^H\mathbf b_I(\theta)|\le1\), then
\(1-|\mx z^H\mathbf b_I(\theta)|^2\) is a nonnegative trigonometric polynomial.
Fej\'er-Riesz gives a positive semidefinite matrix \(\mx S\) such that
\[
        1-|\mx z^H\mathbf b_I(\theta)|^2
        =
        \mathbf b_I(\theta)^H S \mathbf b_I(\theta).
\]
Setting
\(\mx R:=\mx S+\mx z\mx z^H\)
gives \eqref{eq:dual_poly_lmi_block} and the diagonal-sum constraints.

Applying this representation independently to each range row gives an exact
SDP representation of the dual constraint when the angular domain is the full
circle \(\mathbb T\).  When the physical angular domain satisfies
\(\Theta\subsetneq\mathbb T\), imposing the same full-circle constraints gives
a conservative SDP relaxation of
\eqref{eq:lifted_dual_noiseless_app}.  An exact SDP representation of the
restricted-angle constraint requires the corresponding localizing Toeplitz
constraints.  The robust full-circle SDP formulation is obtained by replacing
the objective by
\(
    \operatorname{Re}\langle\boldsymbol\lambda,\mathbf y\rangle
    -
    \varepsilon\|\boldsymbol\lambda\|_2.
\)

% ============================================================
\subsection{Proof of dual saturation}
\label{app:proof-dual-saturation}
% ============================================================

\begin{proof}[Proof of \Cref{prop:lifted_dual_saturation}]
Write
\[
        \mathcal B
        :=
        \mathcal B_{\rm Fr}^{P_{\rm JA},Q_{\rm JA}},
        \qquad
        \widehat{\mathbf r}
        :=
        \mathbf y-\mathcal B\widehat{\mathbf X}.
\]
Primal feasibility gives
\(
        \|\widehat{\mathbf r}\|_2\le\varepsilon.
\)
By strong duality,
\[
\begin{aligned}
\|\widehat{\mathbf X}\|_{\mathcal A}
&=
\operatorname{Re}
\langle\widehat{\boldsymbol\lambda},\mathbf y\rangle
-
\varepsilon\|\widehat{\boldsymbol\lambda}\|_2\\
&\le
\operatorname{Re}
\langle\widehat{\boldsymbol\lambda},
\mathbf y-\widehat{\mathbf r}\rangle\\
&=
\operatorname{Re}
\langle\widehat{\boldsymbol\lambda},
\mathcal B\widehat{\mathbf X}\rangle\\
&=
\operatorname{Re}
\left\langle
\mathcal B^*\widehat{\boldsymbol\lambda},
\widehat{\mathbf X}
\right\rangle_F\\
&\le
\|\widehat{\mathbf X}\|_{\mathcal A}.
\end{aligned}
\]
The first inequality follows from
\[
\operatorname{Re}
\langle\widehat{\boldsymbol\lambda},
\widehat{\mathbf r}\rangle
\le
\|\widehat{\boldsymbol\lambda}\|_2
\|\widehat{\mathbf r}\|_2
\le
\varepsilon\|\widehat{\boldsymbol\lambda}\|_2,
\]
and the last follows from dual feasibility. Hence equality holds throughout.

Using the norm-attaining decomposition and the definition of the dual
polynomial,
\[
\sum_{\ell=1}^{\widehat L}|\widehat c_\ell|
=
\sum_{\ell=1}^{\widehat L}
\operatorname{Re}\left(
\widehat c_\ell
\overline{
p_{\widehat i_\ell}^{\widehat{\boldsymbol\lambda}}
(\widehat\theta_\ell)}
\right)
\le
\sum_{\ell=1}^{\widehat L}|\widehat c_\ell|.
\]
Every summand must therefore attain equality. Consequently,
\[
p_{\widehat i_\ell}^{\widehat{\boldsymbol\lambda}}
(\widehat\theta_\ell)
=
\frac{\widehat c_\ell}{|\widehat c_\ell|},
\qquad
\ell=1,\ldots,\widehat L.
\]
The proof is unchanged when the angular atom domain is \(\mathbb T\).
\end{proof}
% ============================================================
\subsection{Extracting \texorpdfstring{\((r,\theta)\)}{(r, theta)} from a lifted solution}\label{app:coefficient-extraction}
% ============================================================

The lifted variable is block-structured by range.  Therefore the range
estimate is discrete:
\(\widehat r=r_i\)
whenever the \(i\)-th range block contains an active angular atom.

There are two standard extraction routes.

\paragraph{Dual peak extraction:}
Given a dual optimizer \(\widehat{\boldsymbol\lambda}\), form
\[
        \widehat{\mathbf Z}
        =
        (\mathcal B_{\rm Fr}^{P_{\rm JA},Q_{\rm JA}})^*
        \widehat{\boldsymbol\lambda},
        ~
                p_i^{\widehat{\boldsymbol\lambda}}(\theta)
        =
        \left\langle
        \mathbf A_i(\theta),
        \widehat{\mathbf Z}
        \right\rangle_F .
\]
Under the assumptions of
\Cref{prop:lifted_dual_saturation}, define the dual contact set
\[
\mathcal C_\star
:=
\left\{
(i,\theta)\in\{1,\ldots,N_d\}\times\Theta:
|p_i^{\widehat{\boldsymbol\lambda}}(\theta)|=1
\right\}.
\]
The support of every norm-attaining primal atomic decomposition is contained
in \(\mathcal C_\star\).
Numerically, one uses the near-contact set
\[
        \widehat{\mathcal C}_{\tau_{\rm sat}}
        =
        \{(i,\theta)\in\{1,\ldots,N_d\}\times\Theta:
        |p_i^{\widehat{\boldsymbol\lambda}}(\theta)|\ge1-\tau_{\rm sat}\}.
\]
The angles may be obtained either by high-resolution peak search with local
refinement or by rooting the trigonometric polynomial
\(1-|p_i^{\widehat{\boldsymbol\lambda}}(\theta)|^2.\)

\paragraph{Toeplitz decomposition route:}
If the standard primal full-circle moment SDP is used, let
\(
        \mathbf T_i\succeq0
\)
denote the Hermitian Toeplitz moment matrix associated with range bin \(i\).
If
\(
        \operatorname{rank}(\mathbf T_i)<N_h,
\)
its finite Vandermonde decomposition has the form
\[
        \mathbf T_i
        =
        \sum_s
        \varrho_{i,s}
        \mathbf b_I(\theta_{i,s})
        \mathbf b_I(\theta_{i,s})^H,
        ~
        \varrho_{i,s}>0.
\]
The angles \(\theta_{i,s}\) may be extracted using annihilating filters, matrix
pencil methods, or Prony-type methods.  Once the support
\(
        \widehat S
        =
        \{(\widehat i_\ell,\widehat\theta_\ell)\}_{\ell=1}^{\widehat L}
\)
has been estimated, define the finite lifted sensing matrix
\[
        \mathbf D_{\widehat S}
        :=
        \left[
        \mathcal B_{\rm Fr}^{P_{\rm JA},Q_{\rm JA}}
        \mathbf A_{\widehat i_1}(\widehat\theta_1)
        \ \cdots\
        \mathcal B_{\rm Fr}^{P_{\rm JA},Q_{\rm JA}}
        \mathbf A_{\widehat i_{\widehat L}}
        (\widehat\theta_{\widehat L})
        \right].
\]
In the noiseless case, if
\(\mathbf D_{\widehat S}\)
has full column rank, then
\(
        \widehat{\mathbf c}
        =
        \mathbf D_{\widehat S}^{\dagger}\mathbf y
        =
        \left(
        \mathbf D_{\widehat S}^{H}
        \mathbf D_{\widehat S}
        \right)^{-1}
        \mathbf D_{\widehat S}^{H}\mathbf y.
\)
Equivalently,
\(
        \mathbf D_{\widehat S}\widehat{\mathbf c}
        =
        \mathbf y.
\)
In the noisy or finite-truncation case,
\(
        \widehat{\mathbf c}
        =
        \mathbf D_{\widehat S}^{\dagger}\mathbf y
\)
is the minimum-norm least-squares amplitude estimate.  If
\(\mathbf D_{\widehat S}\)
has full column rank, then
\(
        \widehat{\mathbf c}
        =
        \left(
        \mathbf D_{\widehat S}^{H}
        \mathbf D_{\widehat S}
        \right)^{-1}
        \mathbf D_{\widehat S}^{H}\mathbf y.
\)

% ============================================================
\subsection{Relation to the Fresnel TV certificate}
\label{app:lifted-relation-fresnel-TV}
% ============================================================

The QPAC theorem proves exact recovery for the semi-discrete Fresnel TV
problem.  The finite lifted atomic norm problem is a finite-dimensional
harmonic approximation of that Fresnel model.  These two objects are connected
but not identical.
The approximation relation is
\(
         a_{\rm Fr}(r_i,\theta)[n]
        =
        \left(
        \mathcal B_{\rm Fr}^{P_{\rm JA},Q_{\rm JA}}
        \mathbf A_i(\theta)
        \right)[n]
        +
        e_{n,i,{\rm Fr}}^{P_{\rm JA},Q_{\rm JA}}(\theta),
\)
with
\[
        \max_{0\le n\le N_r-1}
        \max_{1\le i\le N_d}
        \sup_{\theta\in\Theta}
        |e_{n,i,{\rm Fr}}^{P_{\rm JA},Q_{\rm JA}}(\theta)|
        \le
        \Delta_{P_{\rm JA},Q_{\rm JA}}^{\rm Fr}.
\]
Thus, as \(P_{\rm JA},Q_{\rm JA}\to\infty\), the finite lifted Fresnel model converges uniformly
to the semi-discrete Fresnel model.  At finite \(P_{\rm JA},Q_{\rm JA}\), the lifted SDP should
be viewed as a computable finite-harmonic surrogate unless the discarded
harmonics are exactly zero.
The perturbation estimate in
\Cref{prop:lifted_fresnel_dual_perturbation} gives the precise bridge between
the ideal QPAC certificate and the finite lifted polynomial generated by the
same QPAC vector.  It does not assert that the finite SDP optimizer is equal to
the QPAC vector.  The practical extraction of estimates from the computed SDP
dual polynomial is instead justified by the lifted dual-saturation property in
\Cref{prop:lifted_dual_saturation}.

\end{document}